\newcommand{\argmax}{\mathop{\mathrm{argmax\,}}}
\newcommand{\boldA}{{\boldsymbol{A}}}
\newcommand{\boldI}{{\boldsymbol{I}}}
\newcommand{\boldS}{{\boldsymbol{S}}}
\newcommand{\bolda}{{\boldsymbol{a}}}
\newcommand{\boldb}{{\boldsymbol{b}}}
\newcommand{\boldc}{{\boldsymbol{c}}}
\newcommand{\bolde}{{\boldsymbol{e}}}
\newcommand{\boldv}{{\boldsymbol{v}}}
\newcommand{\boldx}{{\boldsymbol{x}}}
\newtheorem{theorem}{Theorem}
\newtheorem{proposition}[theorem]{Proposition}
\theoremstyle{definition}
\newtheorem{definition}[theorem]{Definition}
\newcommand{\cmark}{\ding{51}} 
\newcommand{\xmark}{\ding{55}} 
\title{User-Side Realization}
\author{Ryoma Sato}
\date{}
\begin{document}

\doublespacing

\maketitle

\begin{abstract}
Users are dissatisfied with services. Since the service is not tailor-made for a user, it is natural for dissatisfaction to arise. The problem is, that even if users are dissatisfied, they often do not have the means to resolve their dissatisfaction. The user cannot alter the source code of the service, nor can they force the service provider to change. The user has no choice but to remain dissatisfied or quit the service.

User-side realization offers proactive solutions to this problem by providing general algorithms to deal with common problems on the user’s side. These algorithms run on the user's side and solve the problems without having the service provider change the service itself.

The most prominent challenge in user-side realization is a severe asymmetry between the provider and users in user-side realization. The users cannot intervene in the service directly. Users neither know the source code, architecture, and specifications the service uses. User-side realization accepts the fact that many real-world systems do not adopt user-friendly protocols, and try to realize the desired function in an inconvenient environment. Besides, a user-side algorithm should be easy and lightweight enough for end-users to use. Complicated methods involving computationally demanding processes do not work on end-user devices. Nevertheless, user-side algorithms do their best with these constraints and sometimes achieve the same or almost the same performance compared to provider-side algorithms. Such user-side algorithms offer end-users significant functionalities they would not enjoy without them. 

In this thesis, we first elaborate on the definition, requirements, and challenges of user-side realization. We introduce two general approaches for user-side realization, namely, the wrapper method and the reverse method. These approaches have their own advantages and disadvantages, which we will also discuss in the following. We then propose many user-side algorithms, including ones turning unfair recommender systems into fair ones, ones that create tailor-made search engines, and one that turns insecure translators into secure ones. All of them run on the user's side and enable end-users to fulfill their desires by themselves. We also introduce an attractive application of a user-side algorithm that collects data for a machine learning model at hand to learn by utilizing a user-side search engine. The projects described in this thesis have two prominent characteristics. First, the algorithms are general. Compared to the existing implementations specific to the target services, our algorithms abstract the conditions and interfaces so that they can be applied to many services with no or few modifications. Second, our algorithms are tested in many real-world services including Twitter (X), IMDb, and ChatGPT. These features made these projects attractive both in terms of theory and practice.

This thesis is the first systematic document for user-side realization. Although there have been sporadic and ad-hoc attempts to realize user-side algorithms, there have been no solid systems on user-side realization to the best of our knowledge. We hope this thesis will be a milestone in user-side realization and facilitate further research in this field.

\end{abstract}

\newpage
\setcounter{page}{3}

\chapter*{Acknowledgements}

I would like to express my sincere gratitude to the many people who have supported me throughout my research. I would like to thank my supervisor, Professor Hisashi Kashima, for his insightful advice. I enjoyed pleasant discussions with Professor Makoto Yamada. Professor Hidetoshi Shimodaira and Professor Akihiro Yamamoto served as committees for this thesis. They provided thought-provoking comments throughout the review process. I would like to appreciate their support, which made this thesis much better. Finally, I would like to express my deepest gratitude to my family and friends for their warm support.

\newpage
\setcounter{page}{4}

\chapter*{Publications}

This thesis is based on the following papers.

\begin{itemize}
    \item (Chapter 2) \textbf{Ryoma Sato}. Private Recommender Systems: How Can Users Build Their Own Fair Recommender Systems without Log Data? In \textit{Proceedings of the 2022 SIAM International Conference on Data Mining, SDM}, 2022.
    \item (Chapter 3) \textbf{Ryoma Sato}. Towards Principled User-side Recommender Systems. In \textit{Proceedings of the 31st ACM International Conference on Information \& Knowledge Management, CIKM}, 2022.
    \item (Chapter 4) \textbf{Ryoma Sato}. Retrieving Black-box Optimal Images from External Databases. In \textit{Proceedings of the 15th ACM International Conference on Web Search and Data Mining, WSDM}, 2022.
    \item (Chapter 5) \textbf{Ryoma Sato}. CLEAR: A Fully User-side Image Search System. In \textit{Proceedings of the 31st ACM International Conference on Information \& Knowledge Management, CIKM}, 2022.
    \item (Chapter 6) \textbf{Ryoma Sato}. Active Learning from the Web. In \textit{Proceedings of the ACM Web Conference 2023, WWW}, 2023.
    \item (Chapter 7) \textbf{Ryoma Sato}. Making Translators Privacy-aware on the User's Side. In arXiv preprint \url{https://arxiv.org/abs/2312.04068}, 2023.
\end{itemize}

\tableofcontents
\listoffigures
\listoftables

\newpage

\chapter{Introduction}

User-side realization refers to the implementation or execution of a feature on the end-user's side of the service. Users want many features. Typically, users request such features via the request form and wait for them to be implemented on the service's side. However, such requests are rarely adopted, and users get frustrated. User-side realization shows its true value in such cases. Algorithms for user-side realization enable users to enjoy such features even if the service provider ignores or refuses them.

Most widely accepted examples of user-side realization are found in browser add-ons, which enable features that the vanilla browser does not have. There are many browser add-ons that enhance the functionalities of web applications. Some of these are officially released by the service, and others are developed by fan users. The latter cases are typical examples of user-side realization.

However, most of these attempts are sporadic and ad-hoc, and no systematic research on user-side realization has been made so far. In this thesis, we elaborate on the requirements, challenges, and general solutions for user-side realization for the first time.

In this thesis, a service refers to web services, such as Facebook, X (Twitter), Instagram, and Flickr, in most cases, but the theory and algorithms are not restricted to these cases, and they are applicable to offline services as well. Users refer to end-users of these services. Such users do not have any privileged access to the resources of the service, including the source code, database, specification, and internal documents. User-side realization must be done with such limited access to resources. What to realize depends on the case, and we do not limit the scope. In a typical example, the service is X (Twitter), and the feature we want to realize is a fair recommender system. Although many algorithms to realize fair recommender systems have been proposed \cite{kamishima2012enhancement, yao2017beyond, biega2018equity, milano2020recommender}, most of them are designed for service providers, and it is not obvious how end-users, who do not have access to the database, realize a fair recommender system on their side. We will provide the solutions in Chapters 2 and 3.

\section{Requirements and Challenges}

There are several requirements for user-side realization, which introduce challenges.

\subsection{Limited Resources}

end-users have very limited access to the internal states of the service. The source code of the service is not publicly available in most cases because it is an industrial secret. Users do not have direct access to the database. This is a significant challenge for creating recommender systems and search engines. All of the existing algorithms for recommender systems and search engines assume the developer has full access to the database. However, this basic assumption does not hold on the user's side. Algorithms for user-side realization must work with such limited information and resources. In general, user-side algorithms have disadvantages over provider-side algorithms because there are many things end-users and user-side algorithms cannot access or alter. Conversely, service providers can do all the things users can do. Therefore, the scope and performance of provider-side algorithms are the upper bound of those of user-side algorithms. The problem is how close user-side algorithms can reach to the provider-side algorithm.

\subsection{Generality}

Many browser add-ons are designed for specific services.  Such services have significant industrial values. However, such systems lack generality and are of little importance as research. More valuable algorithms for user-side realization abstract the conditions and interfaces so that they can be applied to many services with no or few modifications. The algorithms we propose in this thesis meet this requirement. The algorithm for user-side realization of fair recommender systems is applicable to not only X (Twitter) but also to other services, including ones that will appear in the future, only if the assumptions of the algorithms are met. This requirement introduces challenges of elucidating the conditions so that the scope of the algorithms is as wide as possible while the performance is retained.

\subsection{Lightweight} \label{sec: lightweight}

Algorithms for user-side realization run on user's devices, while provider-side algorithms typically run on commercial servers. Thus, user-side algorithms should be lightweight. This further imposes restrictions on user-side algorithms. Besides, user-side algorithms should not involve complex configurations. Taking much time for such configurations may be amortized and pay on the provider's side, but this is not the case on the user's side. Computationally demanding training processes, which are common in algorithms for many recommender systems and image search engines, are also prohibited for the same reason.

\section{Approaches}

\begin{figure*}[p]
\begin{center}
\includegraphics[width=\hsize]{./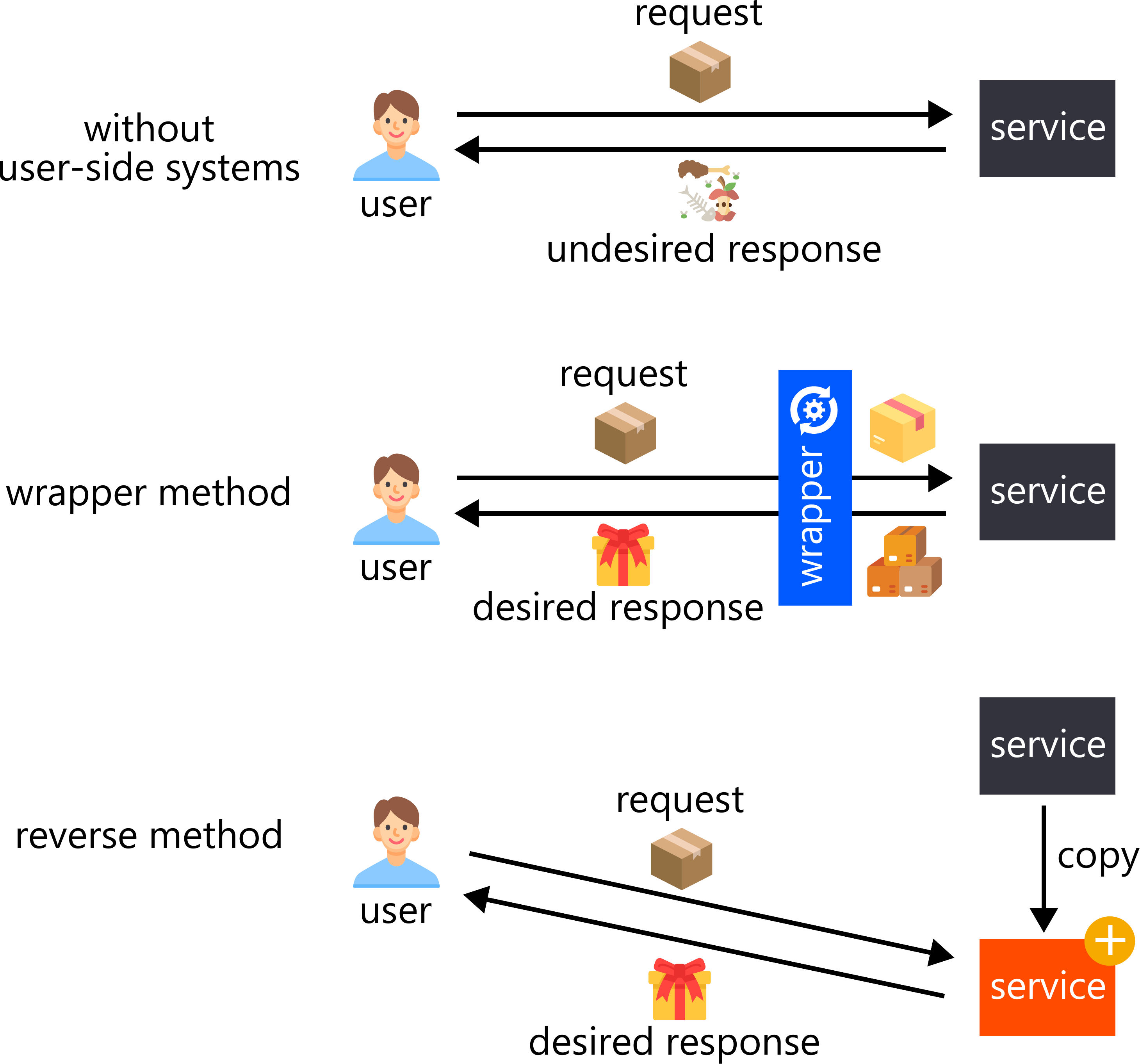}
\end{center}
\caption{Approaches of user-side realization.}
\label{fig: approaches}
\end{figure*}

We propose two general approaches for user-side realization (Figure \ref{fig: approaches}).

\subsection{Wrapper Method}

A wrapper method wraps the official functionality and adds features. It intervenes in the input, output, or both of them. A simple approach is post-processing, which modifies the output of the official service so that the user's requirement is automatically fulfilled. A more advanced method captures the input the user feeds, inputs proxy values to the service, and alters the output of the service accordingly. Typically, these functions are automatically and seamlessly realized, and the user can use the enhanced service as if such functionalities were implemented on the service side. An advantage of the wrapper approach is it tends to lead to simple algorithms. This is a huge plus for user-side realization because it should be lightweight as we discussed in Section \ref{sec: lightweight}, and the budget for user-side systems tends to be lower than provider-site systems. A disadvantage is that it tends to lead to ad-hoc algorithms. The resulting algorithm depends on the problem setting, and the details are largely deferred to algorithm designers.

\subsection{Reverse Method}

An reverse method ``reverse engineers'' the service and adds functionalities to the copied service. The most significant challenge of user-side realization is that end-users have limited access to the internal states of the service. An reverse method resolves this problem. Once the service is copied, it is under the control of the user, and the user has full access to the (copied) service and can apply ordinary provider-side algorithms to it. Note that it is rare to copy the entire service in practice, and a part of the service is inferred with some uncertainties. An advantage of the reverse method is it can be applied to any problem setting. Differences in problem setting are absorbed in the copying process and the method can be combined with any algorithm in the part of adding functionality. A disadvantage is that it tends to lead to computationally demanding and complex algorithms. Therefore, this approach is suitable for tasks that are difficult to accomplish or that are willing to pay a lot of cost to accomplish.

\section{Related Work and Concepts}

There have been many research topics related to user-side realization. We provide an overview of them. More detailed discussions are found in the related work sections in the following chapters.

\subsection{Dynamic and Client-Side Search Engines}

Decades ago, in the dawning of search engines, dynamic and client-side search engines \cite{bra1994information,chen1998smart,menczer2000adaptive} were studied to complement the static nature of index-based search engines. Some parts of the index of the traditional search engine may be outdated, and the crawlers of general-purpose search engines may not go as deep as some users want. Dynamic search engines have crawlers search the Web as soon as the user queries, and the crawlers retrieve relevant pages on the fly. The retrieved pages are fresh, and the crawlers go as deep as the user wants. The drawback of such systems is that it takes too much time, typically minutes, to complete a search. Some acceleration techniques such as cache \cite{bra1994information} and reinforcement learning \cite{menczer2000adaptive} are utilized in such systems to mitigate the problem. These systems can be seen as examples of user-side realization in a broad sense as they can be implemented and run on the user's side, but they are different from pure user-side realization because there is no clear distinction between providers and users.

\subsection{Decentralized and Modularized Systems}

\sloppy Decentralization systems, including peer-to-peer content distribution \cite{theotokis2004survey} and blockchains \cite{nakamoto2008bitcoin}, have been actively studied and utilized in many applications, such as Napster, Bitcoin, and Ethereum. Users of such systems can not only enjoy the system but also add functionalities, e.g., by implementing smart contracts \cite{buterin2013ethereum}. If not decentralized, some systems are composed of modules \cite{tanaka1996intelligent}, and users can add functionalities by implementing modules or add-ons. These systems can also be seen as examples of user-side realization in a broad sense, but they are different from pure user-side realization in two senses. First, there there is no clear distinction between providers and users. Second, these systems assume that all participants, including providers and users, follow the same (decentralized) protocol. This is not the case in many real-world systems. There is a severe asymmetry between the provider and users in user-side realization, and the users cannot intervene in the service directly. User-side realization rather accepts the fact that many real-world systems do not adopt user-friendly protocols, and try to realize the desired function in an inconvenient environment.

\subsection{Oracle Machines and Query Complexity}

In theoretical computer science, oracle machines have been studied to investigate the computational complexity and computability \cite{baker1975relativizations}. Oracles have also been utilized in optimization\cite{orlin2009faster,bertsimas2004solving}. This approach assumes a machine has access to some oracles and considers how many times the machine needs to query the oracles to solve a given problem. This approach can be seen as user-side realization by regarding the oracle as a service and the machine as a user. We also consider how many times a user needs to query the service to realize the desired function. Oracle machines provide the abstracted principles of user-side realization. The difference between these research realms and user-side realization lies in the fact that user-side realization is applied to real-world services such as Twitter (X) and ChatGPT.

\subsection{User-Side Realization}

The research of modern user-side realization has been initiated by Sato \cite{sato2022private} in 2021, and many pieces of research on user-side realization have been carried out, including recommender systems \cite{sato2022principled,shinden2023device} and search systems \cite{sato2021retrieving, sato2022clear}. WebGPT \cite{nakano2021webgpt} is a modern dynamic search engine utilizing a large language model, and can be seen as user-side realization to the same extent that a classical dynamic search engine is considered user-side realization. WebShop~\cite{yao2022webshop} enables automated shopping in ordinary e-commerce sites on the user's side by using an agent driven by a large language model. Shinden \cite{shinden2023device} proposed privacy-aware recommender systems running on users' devices by aggregating the recommendations to the representative accounts. WebArena~\cite{zhou2023webarena} is a general environment to test agents realizing rich functionalities on the user's side. EasyMark~\cite{sato2023embarassingly} realizes large language models with text watermarks on the user's side. Overall, the research field of user-side realization is growing. We hope this thesis becomes a milestone in user-side realization and facilitates further research in this field. 

\section{Organization of This Thesis}

In this chapter, we have introduced the requirements, challenges, and approaches of user-side realization.

In Chapter 2, we propose \textsc{PrivateRank} and \textsc{PrivateWalk} which realize recommender systems on the user's side. These algorithms can be used to realize many functionalities. For example, they can realize fair, diverse, and informative recommender systems even when the official recommender systems are not. The challenge is that the user does not have the log data of other users, which is the core assumption of standard collaborative filtering methods. \textsc{PrivateRank} and \textsc{PrivateWalk} solve this problem by utilizing the official recommender systems, i.e., they adopt the wrapper approach. In the experiments, we carry out case studies in the wild, in which we run our proposed method in real-world services, namely IMDb and Twitter (X), as well as offline evaluation on popular benchmarks. The case studies are conducted on the system in operation on the Internet. We emphasize that we do not have privileged access to IMDb and Twitter and have the same information as end-users. Nevertheless, \textsc{PrivateWalk} succeeds in building fair recommender systems while the official ones are not fair. These results demonstrate that the challenging problem of user-side realization of recommender systems is indeed feasible in the real-world environment.

In Chapter 3, we propose more principled methods for user-side recommender systems. This first approach is the first reverse method in the literature. We show that the inner states of the official recommender systems can be inferred only from the information users can see. End-users can realize many functionalities of recommender systems once the state is inferred. The drawback of this method is efficiency. To overcome this issue, we summarize the desiderata of user-side recommender systems and propose \textsc{Consul}, a user-side algorithm that meets all of these desiderata, while other methods including \textsc{PrivateRank} and \textsc{PrivateWalk} do not. \textsc{Consul} has preferable theoretical properties. Besides, in the experiments, we confirm that \textsc{Consul} is more effective than other user-side algorithms and almost matches the performance of the provider-side algorithm. This is remarkable as the information \textsc{Consul} can access is limited and has many disadvantages against the provider-side algorithm. We also carry out a user study involving many crowd workers and show \textsc{Consul} is preferable.

In Chapter 4, we propose \textsc{Tiara}, a user-side algorithm for realizing customized search engines. Many services are equipped with search engines, but most of them provide only vanilla functionalities and dissatisfy the users. \textsc{Tiara} realizes search engines with any score functions users want to use. It can also realize search engines that accept any form of query other than the official search engine supports. The challenge here is again users do not have access to the database, which is a crucial assumption of any search algorithm. \textsc{Tiara} realizes user-side search engines by wrapping the official engine. In the experiments, we carry out in-the-wild evaluations with the online Flickr environment as well as evaluations with popular benchmarks. Flickr supports searching images based on texts and tags but does not support image-to-image search. We demonstrate that \textsc{Tiara} realizes image-to-image search on the user's side.

In Chapter 5, we propose CLEAR, a lightweight algorithm for user-side search engines. CLEAR is so lightweight that it can realize real-time search while \textsc{Tiara} takes tens of seconds to minutes. Nevertheless, the quality of CLEAR matches that of \textsc{Tiara}. We built a system demonstration of CLEAR and any user can enjoy image-to-image search on Flickr. This system has additional privacy merit. As the query is processed completely on the client side, Flickr and we (i.e., the developer of this system) do not receive the original query.

In Chapter 6, we propose a non-trivial and attractive application of user-side search engines, Seafaring, which automatically collects effective data for machine learning from the Web. Seafaring uses techniques of pool-based active learning and user-side search engines. Pool-based active learning selects effective data from a pool of unlabelled data. A pool typically contains thousands to millions of data in the existing literature. Seafaring regards the Web as a huge pool of unlabelled data, which contains billions of data, and thereby realizes much more effective active learning than the existing methods. The main challenge is that the Web is too large to find effective data that meets the criterion of active learning. Seafaring solves this problem by a user-side search algorithm, namely Tiara. Tiara can search items based on any score function. Although normal search engines do not search items based on the criterion of active learning, Tiara can, and it collects many data that have high scores based on the criterion of active learning. We confirm that Seafaring provides a significant boost of performance for training machine learning models. This illustrates that the scope of applications of user-side realization is broad.

In Chapter 7, we propose PRISM, a user-side privacy protection for translation systems. Although several machine translation engines claim to prioritize privacy, the extent and specifics of such protection are largely ambiguous. First, there is often a lack of clarity on how and to what degree the data is protected. Even if the service providers believe they have sufficient safeguards in place, sophisticated adversaries might still extract sensitive information. Second, vulnerabilities may exist outside of these protective measures, such as within communication channels, potentially leading to data leakage. As a result, users are hesitant to utilize machine translation engines for data demanding high levels of privacy protection, thereby missing their benefits. PRISM resolves this problem. Instead of relying on the translation service to keep data safe, PRISM provides the means to protect data on the user's side. This approach ensures that even machine translation engines with inadequate privacy measures can be used securely. For platforms already equipped with privacy safeguards, PRISM acts as an additional protection layer, reinforcing their security further. PRISM adds these privacy features without significantly compromising translation accuracy. In the experiments, we demonstrate the effectiveness of PRISM using real-world translators, T5 and ChatGPT. PRISM effectively balances privacy protection with translation accuracy. This also illustrates that the scope of applications of user-side realization is not limited to information retrieval, but the framework is general and the application is broad.

Chapter 8 concludes the thesis and discusses future directions of user-side realization.

\chapter{Private Recommender Systems: How Can Users Build Their Own Fair Recommender Systems without Log Data?}


\section{Introduction}

Fair recommender systems have attracted much attention owing to their importance in society \citep{milano2020recommender}. A typical application is in the job market \citep{geyik2019fairness, feldman2015certifying, geyik2018building}.
The disparate impact theory prohibits a recruiting process that has an adverse impact on a protected group, even if the process appears neutral on its face.
Therefore, job recruiters must avoid using unfair talent recommender systems to remove (possibly unintended) biases in their recruiting process.

However, even if users of a service want to adopt fair recommender systems, they cannot utilize them if the service does not provide them. There are several difficulties facing the adoption of fair recommender systems. First, commercial services may be reluctant to implement fair systems because fairness and effectiveness are in a trade-off relation \citep{mehrotra2018towards, corbett2017algorithmic}, and fairness-aware systems are expensive for implementation and maintenance. Although some social networking services, such as LinkedIn \citep{geyik2019fairness, geyik2018building}, provide fair account recommendations, those of other services, such as Twitter and Facebook, are not necessarily fair with respect to gender or race.
Second, the fairness criterion required by a user differs from user to user. The fairness defined by the service may not match the criteria users call for. For example, even if the recommender system is fair with respect to gender, some users may require fairness with respect to race or the combination of gender and race. In general, different fairness criteria are required in different societies.
Third, companies may refuse to disclose the algorithms they use. This makes it difficult for users to assess the fairness of the system. Milano et al. \citep{milano2020recommender} pointed out that ``The details of RS currently in operation are treated as highly guarded industrial secrets. This makes it difficult for independent researchers to access information about their internal operations, and hence provide any evidence-based assessment.'' In summary, it is difficult for users eager to enjoy fair systems to ensure fairness if they use a recommender system provided by a service.

In this thesis, we propose a framework where each user builds their own fair recommender system by themselves. Such a system can provide recommendations in a fair manner each user calls for. In this framework, a user uses their own recommender system via a browser add-on instead of the recommender system provided by the service. We call this personal recommender system a \emph{private recommender system}.

In this work, we focus on item-to-item collaborative filtering, where a set of related items are recommended when a user visits an item page. Each item has a discrete sensitive attribute, such as gender and race. A recommender system must treat all sensitive groups equally.
Examples of this setting include:
\begin{itemize}
    \item \textbf{Job recruiting.} Here, a user is a recruiter, an item is a job seeker, and each job seeker has a sensitive attribute, such as gender and race.
    \item \textbf{Breaking the filter bubble \citep{pariser2011filter}.} Some recommender systems filter information too aggressively. For example, a news recommender system may recommend only conservative news to conservative users \citep{pariser2011filter}. Some users may want to receive unbiased recommendations with respect to ideology. In this case, the user is a reader, the item is a news article, and each news has an ideology label as the sensitive attribute.
    \item \textbf{Avoiding popular item bias.} Recommender systems tend to recommend popular items too much \citep{xiao2019beyond, mehrotra2018towards}. Some knowledgeable users need not receive ordinary items, and they may want to receive uncommon items. For example, IMDb recommends Forrest Gump, The Dark Knight, The Godfather, Inception, Pulp Fiction, and Fight Club in the Shawshank Redemption page\footnote{\url{https://www.imdb.com/title/tt0111161/}}. However, most cinema fans are already familiar with all of these titles, and these recommendations are not informative to them. In this case, we can set the sensitive attribute to be a popularity label (e.g., high, middle, and low, based on the number of reviews received), such that recommender systems must recommend uncommon (but related) items as well. Although readers may think of this problem as topic diversification \citep{ziegler2005improving}, we discuss these problems in a unified framework.
\end{itemize}
We assume that users have access to the sensitive attribute, but they do not necessarily have access to the content of items \citep{koren2009matrix}. Although there exist several methods for building fair recommender systems \citep{singh2018fairness, patro2020fairrec, liu2019personalized}, all of the existing fair recommendation methods are designed for service providers, who can access the entire log data, such as the rating matrix or browsing history of all users. In our setting, however, a user does not have access to the log data of other users nor latent representations of items. A clear distinction between this work and previous works is that our setting prohibits accessing such log data. This restriction makes the problem challenging. The key idea is that a user has access to unfair recommendations shown by the service provider. We propose methods to leverage the outputs of an unfair recommender system to construct a fair recommender system. We conduct several experiments and show that our proposed method can build much fairer recommender systems than provider recommender systems while keeping their performance. The contribution of this chapter is as follows:
\begin{itemize}
    \item We propose private recommender systems, where each user builds their own recommender system to ensure the fairness of recommendations. Private recommender systems enable fair recommendations in many situations where conventional recommendation algorithms cannot be deployed. Our proposed framework expands the application scope of fairness-aware recommender systems.
    \item We propose methods to develop private recommender systems without accessing log data or the contents of items. Although our methods are simple, they exhibit a positive trade-off between fairness and performance, even without accessing log data.
    \item We confirm that our proposed method works in real-world scenarios via qualitative case studies on IMDb and Twitter.
\end{itemize}

\noindent \uline{\textbf{Reproducibility:}} Our code is available at \url{https://github.com/joisino/private-recsys}.

\section{Notations}

For every positive integer $n \in \mathbb{Z}_+$, $[n]$ denotes the set $\{ 1, 2, \dots n \}$.
Let $K \in \mathbb{Z}_+$ be the length of a recommendation list. Let $\mathcal{U} = [m]$ denote the set of users and $\mathcal{I} = [n]$ denote the set of items, where $m$ and $n$ are the numbers of users and items, respectively. Without loss of generality, we assume that the users and items are numbered with $1, \dots, m$ and $1, \dots, n$, respectively.

\section{Problem Setting} \label{sec: setting-private}

We focus on item-to-item collaborative filtering. In this setting, when user $u$ accesses the page associated with item $i$, a recommender system aims to recommend a set of items that are relevant to item $i$. The recommendation may be personalized for user $u$ or solely determined by the currently displayed item $i$. We assume that the recommendation list does not contain any items that user $u$ has already interacted with. This is natural because already known items are not informative. Formally, a recommendation list is represented by a $K$-tuple of items, and a recommender system is represented by a function $\mathcal{F}\colon \mathcal{U} \times \mathcal{I} \to \mathcal{I}^K$ that takes a user and source item and returns a recommendation list when the user visits the source item. $\mathcal{F}(u, i)_k$ denotes the $k$-th item of $\mathcal{F}(u, i)$.

We assume that each item $i$ has a sensitive attribute $a_i \in \mathcal{A}$, and we can observe the sensitive attribute. $\mathcal{A}$ is a discrete set of sensitive groups. When we cannot directly observe $a_i$, we estimate it from auxiliary information. A recommender list is fair if the proportion of protected attributes in the list is approximately uniform. For example, ({\color[HTML]{FF4B00} man}, {\color[HTML]{FF4B00} man}, {\color[HTML]{FF4B00} man}, {\color[HTML]{FF4B00} man}, {\color[HTML]{005AFF} woman}, {\color[HTML]{FF4B00} man}) is not fair with respect to $\mathcal{A} = \{\text{\color[HTML]{FF4B00} man}, \text{\color[HTML]{005AFF} woman} \}$ because much more men accounts (items) are recommended than women accounts. We found that LinkedIn employs this kind of fairness in their recommendations. Specifically, at least two men and two women users were recommended out of five recommendation slots in a user's profile page as far as we observed. However, we observed that Twitter does not employ this kind of fairness. An example of our goal in this chapter is to develop a fair recommender system on Twitter, although they do not provide a fair recommender system.
In the experiments, we measure fairness quantitatively by least ratio (i.e., the minimum ratio of protected attributes in the list) and entropy. For example, least\_ratio({\color[HTML]{FF4B00} man}, {\color[HTML]{FF4B00} man}, {\color[HTML]{FF4B00} man}, {\color[HTML]{FF4B00} man}, {\color[HTML]{005AFF} woman}, {\color[HTML]{FF4B00} man}) $= 1/6$, and the entropy is $-1/6 \log 1/6 - 5/6 \log 5/6 \approx 0.65$. The higher these values are, the fairer the list is considered to be. In the LinkedIn example above, the least ratio was ensured to be at least $2/5 = 0.4$. If the numbers $n_a$ of men and women are the same, \begin{align*}
    &\text{least\_ratio} = 0.5 \\
    &\Longleftrightarrow \frac{K}{2} \text{ men and } \frac{K}{2} \text{ women are recommended} \\
    &\Longleftrightarrow \text{Pr}[Y \mid A = \text{man}] = \frac{K}{2n_a} = \text{Pr}[Y \mid A = \text{woman}] \\
    &\Longleftrightarrow \text{demographic parity}
\end{align*} holds, where $Y$ is a binary outcome variable that indicates whether an item is included in the recomendation list. When the numbers of men and women are different, we can slightly modify the definition and algorithm so that it includes demographic parity, yet we stick to the above definition in this chapter for ease of exposition. Note that these fairness scores do not take the order of recommendation lists into account, but only count the number of items. It is straightforward to incorporate order-aware fairness in our proposed methods by adjusting the post processing process (e.g., adopting \citep{zehlike2017fair}). We leave this direction for future work for ease of exposition.

We assume a recommender system $\mathcal{P}_{\text{prov}}$ in operation provided by a service is available. We call this recommender system a \emph{provider recommender system}. The provider recommender system is arbitrary and may utilize the contents of items by accessing exclusive catalog databases, as well as log data of all users. Although the provider recommender system suggests highly related items, it may be unfair to the protected group. Our goal is to enable each user to build their own fair recommender system. We call this new recommender system a private recommender system. Without loss of generality, we assume that the user who is constructing a private recommender system is user $1$. We assume that user $1$ does not have access to the details of the provider recommender system, including the algorithm, latent representations of items, and score function. The only information that user $1$ can obtain is the top-$K$ ranking of items when user $1$ visits each source item, which is the case in many real-world settings. Formally, let $\mathcal{P}_{\text{prov}, 1}\colon \mathcal{I} \to \mathcal{I}^K$ be the partial application of $\mathcal{P}_{\text{prov}}$ by fixing the first argument to be user $1$, i.e., $\mathcal{P}_{\text{prov}, 1}(i) = \mathcal{P}_{\text{prov}}(1, i) ~\forall i \in \mathcal{I}$. We assume that user $1$ can access function $\mathcal{P}_{\text{prov}, 1}$ by using the provider recommender system. The problem we tackle in this chapter is formalized as follows:

\vspace{0.05in}
\noindent \uline{\textbf{Private Recommender System Problem.}}\\
\textbf{Given:} Oracle access to the provider recommendations $\mathcal{P}_{\text{prov}, 1}$. Sensitive attribute $a_i$ of each item $i \in \mathcal{I}$.\\
\textbf{Output:} A private recommender system $\mathcal{Q}\colon \mathcal{I} \to \mathcal{I}^K$ that is fair with respect to $\mathcal{A}$.

\vspace{0.05in}
\noindent \uline{\textbf{Difference with posthoc processing.}} Note that this problem setting is different from posthoc fair recommendations because posthoc processing has access to the full ranking of items or scoring function, whereas we only have access to top-$K$ items presented by the service provider. Although it may be possible to apply posthoc processing to the top-$K$ list, it just alters the order of the list. Therefore, simply applying posthoc processing cannot improve the least ratio. For example, if the top-$K$ list contains only one group, posthoc processing does not recommend items in the other group. If $K$ is sufficiently large, reordering the list by a posthoc processing method and truncating the list may improve the least ratio. However, $K$ is typically small, and we focus on the challenging cases, where recommendation lists are short and may solely contain one group.

\section{Method}

The core idea of our proposed methods is regarding two items $(i, j)$ similar if item $j$ is recommended in item $i$'s page. They recommend items based on this similarity measure while maintaining fair recommendations. However, the similarities of only $K$ pairs are defined for each source item in this manner. The recommendation list may solely contain one protected group. In that case, we cannot retrieve similar items in different groups. To address this challenge, we propose to utilize the recommendation network to define a similarity measure.

\subsection{Recommendation Network}

Recommendation networks \citep{cano2006topology, celma2008new, seyerlehner2009limitation, seyerlehner2009browsing} have been utilized to investigate the property of recommender systems, such as the navigation of a recommender system and novelty of recommendations. The node set of a recommendation network is the set of items, i.e., $V = \mathcal{I}$, and a directed edge $(i, j) \in \mathcal{I} \times \mathcal{I}$ indicates that item $j$ is recommended in item $i$'s page, i.e., $E = \{ (i, j) \in \mathcal{I} \times \mathcal{I} \mid \exists k \text{ s.t. } \mathcal{P}_{\text{prov}, 1}(i)_k = j \}$. The key idea of our proposed methods is that two items can be considered to be similar if they are close in the recommender network. We use this to define the similarities between items. To the best of our knowledge, this is the first work to adopt the recommendation network for defining the similarities between items for constructing recommender systems.

\subsection{PrivateRank} \label{sec: PrivateRank}

In this section, we introduce \textsc{PrivateRank}. First, \textsc{PrivateRank} constructs a recommender network by querying the provider recommender system. We adopt a weighted graph to incorporate the information of item rankings. Inspired by the discounted cumulative gain, the weight function is inverse-logarithmically discounted. The adjacency matrix $\boldA$ is defined as:
\begin{align*}
    \boldA_{ij} = \begin{cases}
    \frac{1}{\log(k + 1)} & (\mathcal{P}_{\text{prov}, 1}(i)_k = j) \\
    0 & \text{(otherwise)}
    \end{cases}
\end{align*}
\textsc{PrivateRank} employs the personalized PageRank \citep{jeh2003scaling, page1999pagerank}, also called the random walk with restart, which is a classic yet powerful similarity measures between nodes on a graph. The personalized PageRank $\boldS_i \in \mathbb{R}^{n}$ of node $i$ measures similarities between node $i$ and other nodes. The personalized PageRank assumes a random surfer who follows a link incident to the current node with probability proportional to its weight or jumps to node $i$ with probability $1 - c$. $c > 0$ is a hyperparameter called a damping factor. The damping factor $c$ controls the spread of random walks. When $c$ is large, the surfer rarely jumps back to the source node, and it solely captures global structures. Therefore, a small $c$ is appropriate in capturing local structures around the source node. We empirically demonstrate it in the experiments. The personalized PageRank of node $j$ with respect to node $i$ is defined as the probability that the random surfer will arrive at node $j$. Formally,
\begin{align*}
    \boldS_{i} = c \tilde{\boldA}^\top \boldS_{i} + (1 - c) \bolde^{(i)}
\end{align*}
\noindent where $\tilde{\boldA}$ is the row-wise normalized adjacency matrix, i.e., $\tilde{\boldA}_i = \boldA_i / \sum_j \boldA_{ij}$, and $\bolde^{(i)}$ is the $i$-th standard basis.
We compute the personalized PageRank using the cumulative power iteration \citep{yoon2018tpa}:
\begin{align*}
    \hat{\boldS}_{i} = (1 - c) \sum_{k = 0}^L (c \tilde{\boldA}^\top)^k \bolde^{(i)},
\end{align*}
where $L$ is a hyperparameter. $L$ determines the trade-off between time consumption and accuracy. We find that a small $L$ is sufficient because recommendation networks are typically small-world. We use $L = 10$ in the experiments.

After we obtain the similarity matrix, \textsc{PrivateRank} ranks items in a fair manner. There are several existing fairness-aware ranking methods, including the optimization-based \citep{singh2018fairness}, learning-based \citep{beutel2019fairness}, and post processing-based approaches \citep{geyik2019fairness, liu2019personalized, zehlike2017fair}. We employ the post processing-based approach similar to \citep{geyik2019fairness, liu2019personalized}. We set the minimum number $\tau ~(0 \le \tau \le K/|\mathcal{A}|)$ of items of each group as a hyperparameter. \textsc{PrivateRank} greedily takes items as long as the constraint can be satisfied. Specifically, let $r$ be the number of items to be taken, and $c_a$ be the number of items in the list with protected attribute $a$. Then, if $\sum_{a \in \mathcal{A}} \max(0, \tau - c_a) \le r$ holds, we can satisfy the minimum requirement by completing the list. The pseudo code of \textsc{PrivateRank} is available in Section \ref{sec: PrivateRankcode} in the appendices. 

\textsc{PrivateRank} holds the following preferable properties. First, it ensures fairness if we increase the minimum requirement $\tau$.

\begin{theorem} \label{thm1}
If $\mathcal{I}$ contains at least $\tau$ items of each sensitive attribute, the least ratio of recommendation list generated by \textsc{PrivateRank} is at least $\tau/K$.
\end{theorem}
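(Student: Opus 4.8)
The plan is to analyze the greedy post-processing stage of \textsc{PrivateRank} as an iterative procedure that fills the length-$K$ list one slot at a time, and to track the aggregate group deficit $\Phi = \sum_{a \in \mathcal{A}} \max(0, \tau - c_a)$ against the number $r$ of slots that remain to be filled. The feasibility test hard-wired into the algorithm is exactly the inequality $\Phi \leq r$, so I would prove that $\Phi \leq r$ is a loop invariant preserved by every step, and then read the conclusion off the invariant at termination. This reduces the whole statement to a short induction on the number of placed items.

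For the base case, before any item is placed we have $c_a = 0$ for all $a$ and $r = K$, hence $\Phi = |\mathcal{A}|\,\tau$; the assumed range $\tau \leq K/|\mathcal{A}|$ then gives $\Phi = |\mathcal{A}|\,\tau \leq K = r$, so the invariant holds initially. For the inductive step I would distinguish two cases according to the item the greedy rule appends. If $\Phi < r$, the algorithm is free to take its most-similar candidate (from any group): a non-deficit item leaves $\Phi$ fixed while decreasing $r$ by one, which preserves $\Phi \leq r$ since $\Phi < r$ means $\Phi \leq r-1$ for integers; a deficit item decreases both $\Phi$ and $r$ by one, again preserving the inequality. If instead $\Phi = r$, every remaining slot is needed to cover a deficit, so the feasibility check forbids a non-deficit item and the algorithm must place an item from some group $a$ with $c_a < \tau$, decreasing both sides by one and keeping $\Phi \leq r$.

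The main obstacle, and the only place the hypothesis is genuinely used, is the availability guarantee inside the second case: when the invariant forces a deficit-group item, I must ensure such an item actually exists to be placed. Here I would invoke the assumption that $\mathcal{I}$ contains at least $\tau$ items of every sensitive attribute; since the current list uses only $c_a < \tau$ items of group $a$, at least one unused item of group $a$ remains, and because the personalized-PageRank similarity assigns a score to all of $\mathcal{I}$ every such item is a legal candidate. Thus the greedy rule never stalls and the invariant survives to the end. At termination $r = 0$, so $\Phi = \sum_{a} \max(0, \tau - c_a) \leq 0$; as each summand is nonnegative, every one vanishes, giving $c_a \geq \tau$ for all $a \in \mathcal{A}$. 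Consequently the minimum group count in the final list is at least $\tau$, and the least ratio, being this minimum divided by $K$, is at least $\tau/K$, which is the claim.
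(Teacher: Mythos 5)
Your proof is correct in substance but takes a genuinely different route from the paper's. You establish the aggregate-deficit invariant $\sum_{a\in\mathcal{A}}\max(0,\tau-c_a)\le K-|\mathcal{R}|$ and read the conclusion off it once the list is full; the paper instead proves a per-group invariant — for every attribute $a$, the group-$a$ items already in $\mathcal{R}$ together with the group-$a$ items not yet reached in the PageRank-sorted scan always number at least $\tau$ — and concludes directly once the scan has exhausted $\mathcal{I}$. Each choice buys something. Your invariant makes explicit \emph{why} the CanAdd test can never reject a deficit-group item (adding one decreases both sides of the inequality by one), a point the paper's proof asserts rather tersely; it is also exactly the invariant the paper itself uses later to prove the soundness of \textsc{Consul} in Chapter~3, so your argument transfers there verbatim. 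The paper's per-group invariant, by contrast, absorbs the one loose end in your write-up: since \textsc{PrivateRank} makes a single pass over a fixed ordering, ``at least one unused item of group $a$ remains'' does not by itself make that item a legal candidate — it could in principle have been scanned and skipped earlier. To close this, observe (using your own invariant) that a group-$a$ item is only ever skipped when $c_a\ge\tau$ at the moment it is considered; as $c_a$ is non-decreasing, a group with a terminal deficit would never have had any of its items skipped, so all of its at least $\tau$ items would have been added — a contradiction. With that sentence added your induction is complete, and arguably cleaner than the paper's.
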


The proof is available in Section \ref{sec: thm1} in the appendices.

Second, \textsc{PrivateRank} does not lose performance when $\tau = 0$.

\begin{theorem}
If we set $c < \frac{1}{(K + 1)^2 \log^2(K + 1)}$, $L \ge 1$, and $\tau = 0$, the recommendation list generated by \textsc{PrivateRank} is the same as that of the provider recommender system. Therefore, the recall and nDCG of \textsc{PrivateRank} are the same as those of the provider recommender system.
\end{theorem}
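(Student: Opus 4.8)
The plan is to use that with $\tau = 0$ there is no fairness constraint, so \textsc{PrivateRank} simply returns the $K$ candidate items of largest personalized-PageRank score $\hat{\boldS}_{i,j}$ (where candidates exclude the source item $i$ and already-interacted items, exactly as the provider list does), and to show this ranking coincides with $\mathcal{P}_{\text{prov},1}(i)$ as an ordered tuple. First I would expand the cumulative power iteration as $\hat{\boldS}_{i,j} = (1-c)\sum_{m=0}^{L} c^m ((\tilde{\boldA}^\top)^m)_{j,i}$ and observe that each vector $(\tilde{\boldA}^\top)^m \bolde^{(i)}$ is a probability distribution (since $\tilde{\boldA}$ is row-stochastic), so every entry lies in $[0,1]$. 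Restricting to candidates $j \neq i$ kills the $m=0$ term, leaving a leading first-order term $(1-c)c\,p_1(j)$ with $p_1(j) = \tilde{\boldA}_{i,j}$, plus a remainder $R(j) = (1-c)\sum_{m=2}^{L} c^m p_m(j)$, where $p_m(j) = ((\tilde{\boldA}^\top)^m)_{j,i}$.

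The two structural facts driving the proof are: (i) the first-order term already reproduces the provider ranking exactly, since for the item $j_k$ at position $k$ we have $p_1(j_k) = \frac{1/\log(k+1)}{S}$ with $S = \sum_{k'=1}^{K} 1/\log(k'+1)$, which is strictly decreasing in $k$, while every non-recommended item has $p_1 = 0$; and (ii) the remainder is uniformly small, $0 \le R(j) \le c^2$, using $\sum_{m\ge 2} c^m = c^2/(1-c)$ together with $p_m(j)\le 1$. Hence it suffices that the smallest gap between consecutive leading terms dominates the worst-case remainder swing $c^2$. The tightest comparison is between positions $K-1$ and $K$: I would require $(1-c)c\,(p_1(j_{K-1}) - p_1(j_K)) > c^2$, equivalently $p_1(j_{K-1}) - p_1(j_K) > c/(1-c)$, and separately the easier $p_1(j_K) > c/(1-c)$ to rank $j_K$ above every non-recommended item.

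The crux is verifying this gap inequality under the stated bound on $c$. I would lower-bound the numerator gap by $\frac{1}{\log K} - \frac{1}{\log(K+1)} = \frac{\log(1+1/K)}{\log K\,\log(K+1)} \ge \frac{1}{(K+1)\log K\,\log(K+1)}$ (from $\log(1+x)\ge x/(1+x)$), and upper-bound the normalizer by $S < K+1$, since $1/\log j < 1$ for $j \ge 3$ gives $S < 1/\log 2 + (K-1) < K+1$. These combine to $p_1(j_{K-1}) - p_1(j_K) \ge \frac{1}{(K+1)^2\log K\,\log(K+1)}$, while the hypothesis $c < \frac{1}{(K+1)^2\log^2(K+1)}$ yields $c/(1-c) < \frac{1}{(K+1)^2\log^2(K+1)-1}$. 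The desired inequality then reduces to $(K+1)^2\log(K+1)\log(1+1/K) \ge 1$, which holds for every $K\ge 1$ because its left side is at least $(K+1)\log(K+1) \ge 2\log 2 > 1$. The main obstacle is precisely this constant-chasing: the naive bound $S \le K/\log 2$ is too lossy and already fails near $K = 10$, so the argument genuinely needs the sharper $S < K+1$ and the tight estimate of the numerator gap. I would also dispatch the degenerate case $K=1$ directly (no consecutive pair exists, and the unique top candidate has $p_1 = 1$).

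Finally, once the two output lists are shown to be identical as ordered tuples, every list-based evaluation metric agrees on them; in particular the recall and nDCG of \textsc{PrivateRank} equal those of the provider system, which I would record as an immediate consequence requiring no further computation.
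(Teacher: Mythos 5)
Your proposal is correct and follows essentially the same route as the paper's proof: both expand the cumulative power iteration, observe that the first-order term $(1-c)c\,\tilde{\boldA}^\top\bolde^{(i)}$ alone reproduces the provider's ranking with gaps of at least $\frac{1}{D}\left(\frac{1}{\log K}-\frac{1}{\log(K+1)}\right)$, bound the order-$\ge 2$ remainder uniformly by $c^2$ via the geometric series $\frac{c}{1-c}$, and close the gap using $D<K+1$ together with the hypothesis on $c$. Your treatment is slightly more careful than the paper's at the margins (the explicit worst-case pairing of positions $K-1$ and $K$, the separate comparison against non-recommended items, and the degenerate case $K=1$ where $1/\log K$ is undefined), but these are refinements of the same argument rather than a different one.
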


The proof is available in Section \ref{sec: thm2} in the appendices.

This result is consistent with the intuition that small $c$ is good for \textsc{PrivateRank}. These theorems indicate that \textsc{PrivateRank} properly enhances the functionality of the provider recommender system. Specifically, it can recover the original system when $\tau = 0$, and in addition to that, it can control fairness by increasing $\tau$.

\vspace{0.1in}
\noindent \uline{\textbf{Time complexity:}} We analyze the time complexity of \textsc{PrivateRank}. Constructing the recommendation network issues $K$ queries to $\mathcal{P}_{\text{prov}, 1}$ for each item. Therefore, $Kn$ queries are issued in total. Computing an estimate $\hat{\boldS}_i$ of the personalized PageRank involves $L$ vector-matrix multiplications. A vector-matrix multiplication can be done in $O(K n)$ time because $Kn$ elements of $\tilde{\boldA}$ are non-zero. Therefore, computing $\hat{\boldS}_i$ runs in $O(KLn)$ time. The postprocessing runs in $O(K + |\mathcal{A}|)$ time. Hence, constructing the recommendation list takes $O(n (K + |\mathcal{A}|))$ time. In total, \textsc{PrivateRank} runs in $O(n (KL + |\mathcal{A}|))$ time if we assume that evaluating $\mathcal{P}_{\text{prov}, 1}$ runs in a constant time.

\subsection{PrivateWalk}

Although \textsc{PrivateRank} performs well in practice, the main limitation of this method is its scalability. Even if we use a faster approximation method for computing the personalized PageRank, constructing the recommendation network may be a bottleneck of the computation, which requires $Kn$ evaluations of $\mathcal{P}_{\text{prov}, 1}$. Many evaluations of $\mathcal{P}_{\text{prov}, 1}$ may hit the limitation of API, consume too much wall clock time for inserting appropriate intervals, or may be certified as a DOS attack. Therefore, batch methods that construct a full recommendation network are not suitable when many items are involved. We propose an algorithm to build a private recommender system that computes a recommendation list on demand when a user accesses an item page.

The central idea of \textsc{PrivateWalk} is common with that of \textsc{PrivateRank}: two items are similar if they are close in the recommendation networks of the provider recommender system. \textsc{PrivateWalk} utilizes random walks. Two items $(i, j)$ are considered similar if item $j$ can be reached from $i$ in short steps by a random walk. In contrast to \textsc{PrivateRank}, \textsc{PrivateWalk} runs random walks like a web crawler when a user visits a page, rather than building the recommendation networks beforehand. To achieve fairness, \textsc{PrivateWalk} also employs the post processing approach with the minimum requirement $\tau$. The pseudo code of \textsc{PrivateWalk} is available in Section \ref{sec: PrivateWalkcode} in the appendices.

\vspace{0.1in}
\noindent \uline{\textbf{Time complexity:}} The time complexity of \textsc{PrivateWalk} depends on the average length $L_{\text{ave}}$ of random walks, which is bounded by $L_{\text{max}}$. \textsc{PrivateWalk} runs the inner loop (Lines 7--14 in the pseudo code) $K L_{\text{ave}}$ time. In each loop, $\mathcal{P}_{\text{prov}, 1}$ is evaluated once, \texttt{CanAdd} is evaluated once, and $O(K)$ basic operations run. The number of loops of the fallback process (Lines 15--19) is a constant in expectation because the probability decays exponentially with respect to the number of iterations. In total, \textsc{PrivateWalk} runs in $O(K (K + |\mathcal{A}|) L_{\text{ave}})$ time if we assume evaluating $\mathcal{P}_{\text{prov}, 1}$ runs in constant time. The complexity is independent of the number $n$ of items in contrast to \textsc{PrivateRank}.

\section{Experiments}

We will answer the following questions via experiments.

\begin{itemize}
    \item (RQ1) How good trade-off between fairness and performance do our proposed methods strike?
    \item (RQ2) How sensitive are our proposed methods with respect to hyperparameters?
    \item (RQ3) Do our methods work in real-world scenarios?
\end{itemize}

\subsection{Experimental settings}

\begin{figure*}[tb]
\begin{center}
\begin{minipage}{0.21\hsize}
\includegraphics[width=\hsize]{./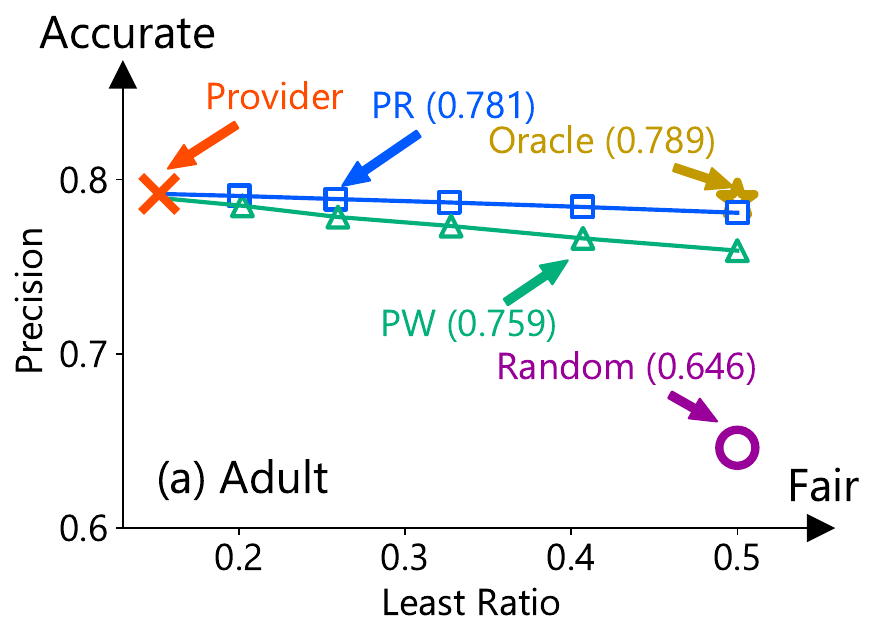}
\end{minipage}
\quad
\begin{minipage}{0.21\hsize}
\includegraphics[width=\hsize]{./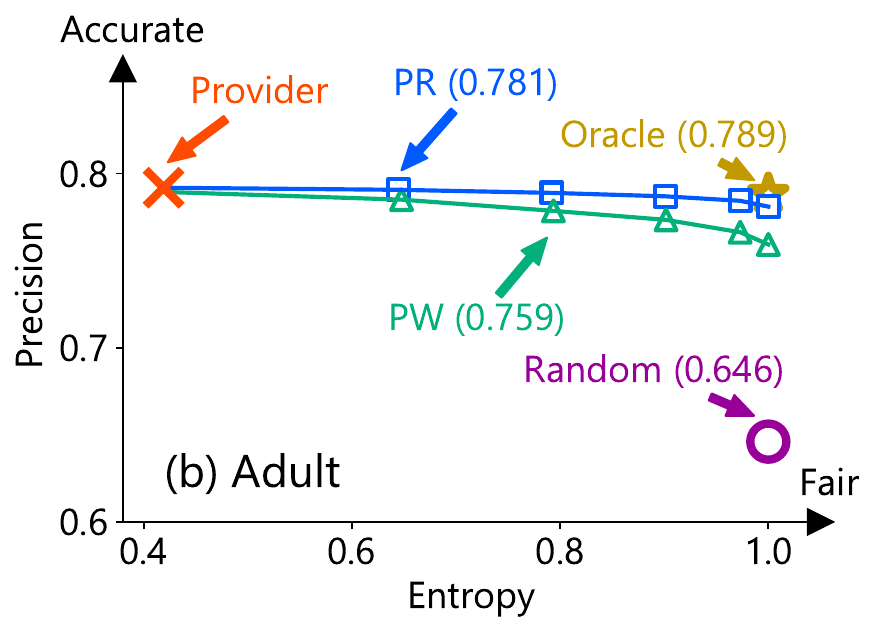}
\end{minipage}
\quad
\begin{minipage}{0.21\hsize}
\includegraphics[width=\hsize]{./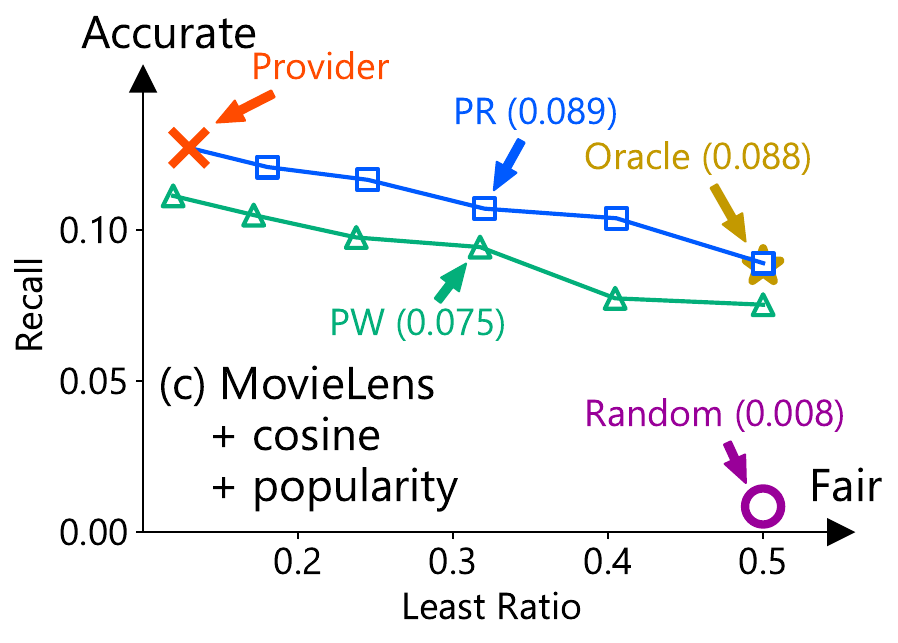}
\end{minipage}
\quad
\begin{minipage}{0.21\hsize}
\includegraphics[width=\hsize]{./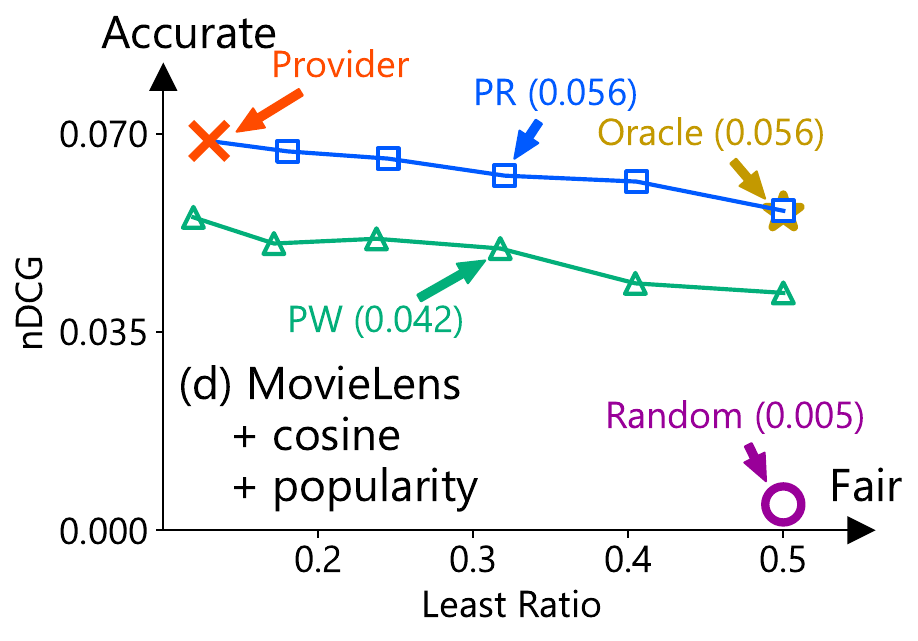}
\end{minipage}
\begin{minipage}{0.21\hsize}
\includegraphics[width=\hsize]{./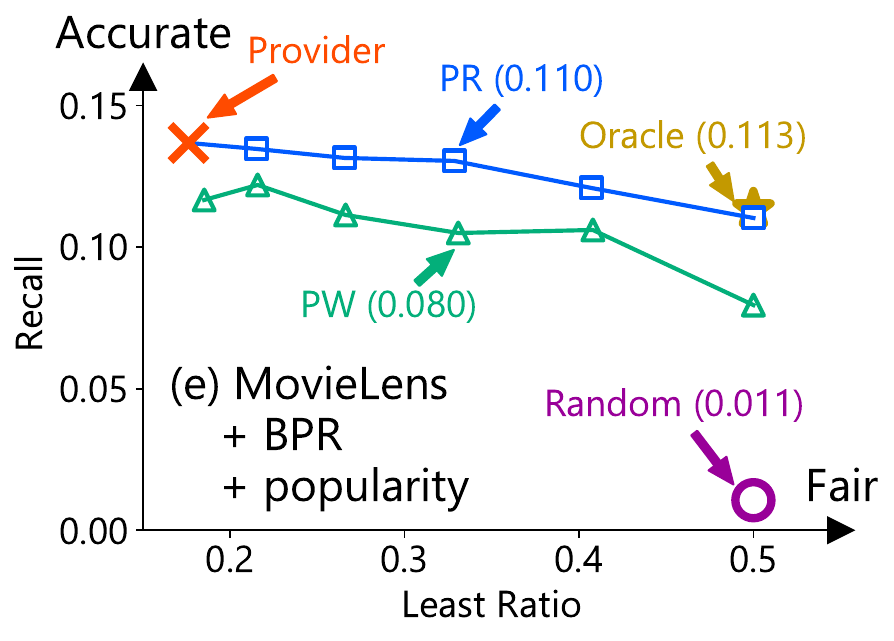}
\end{minipage}
\quad
\begin{minipage}{0.21\hsize}
\includegraphics[width=\hsize]{./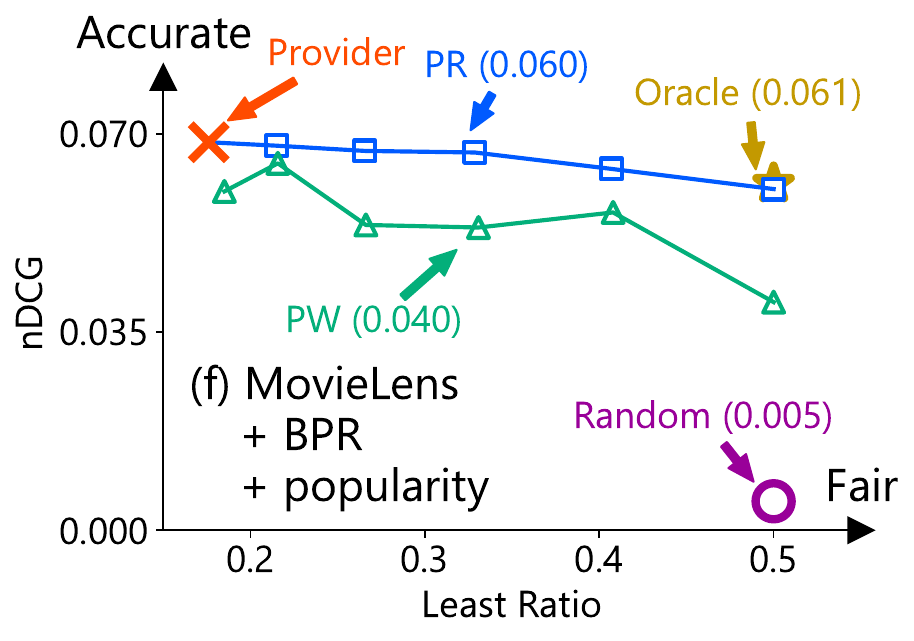}
\end{minipage}
\quad
\begin{minipage}{0.21\hsize}
\includegraphics[width=\hsize]{./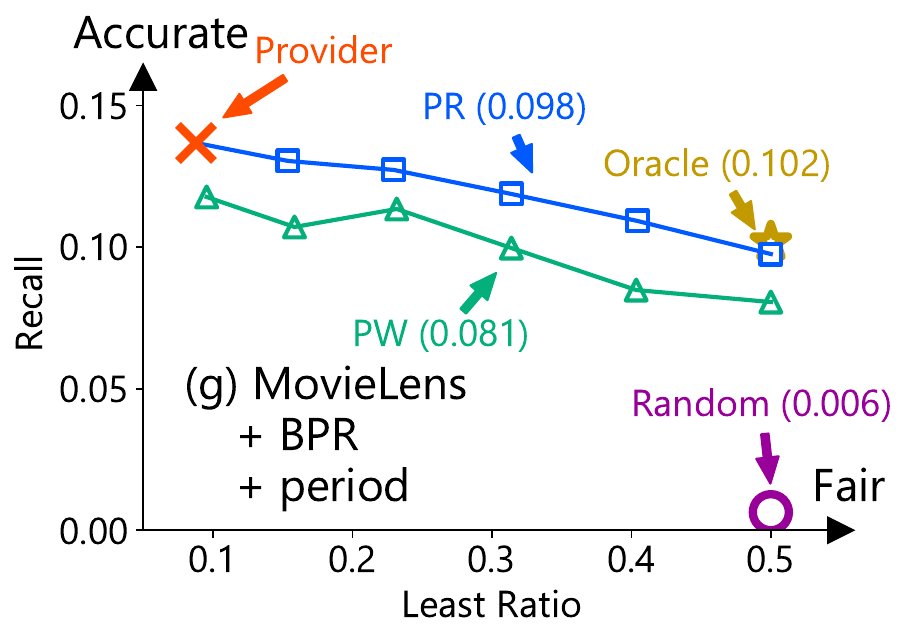}
\end{minipage}
\quad
\begin{minipage}{0.21\hsize}
\includegraphics[width=\hsize]{./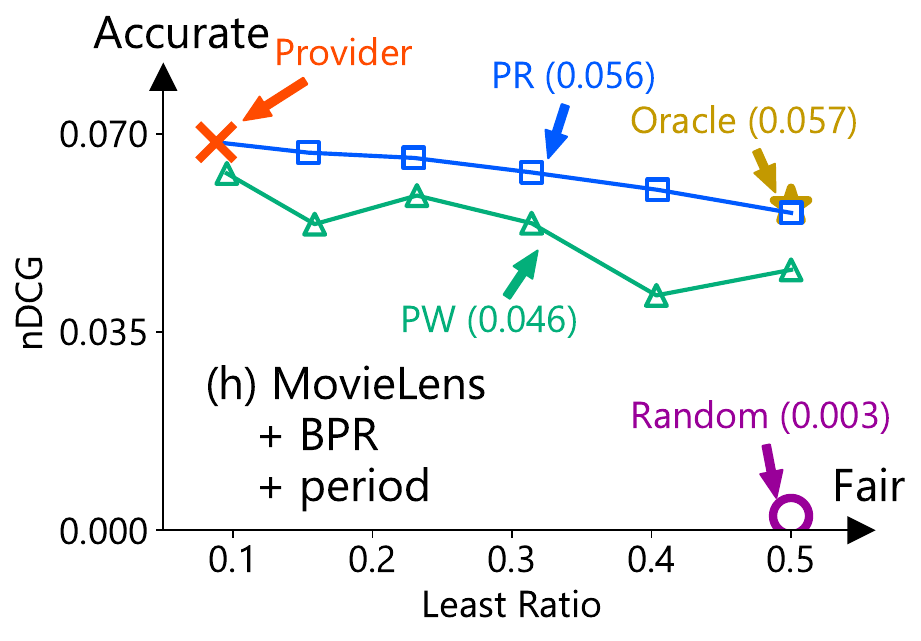}
\end{minipage}
\begin{minipage}{0.21\hsize}
\includegraphics[width=\hsize]{./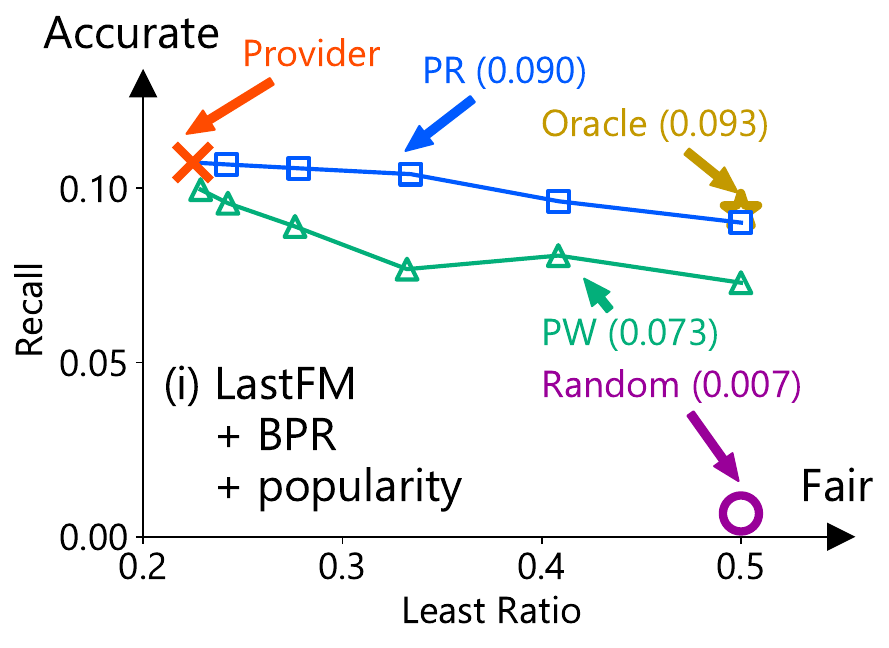}
\end{minipage}
\quad
\begin{minipage}{0.21\hsize}
\includegraphics[width=\hsize]{./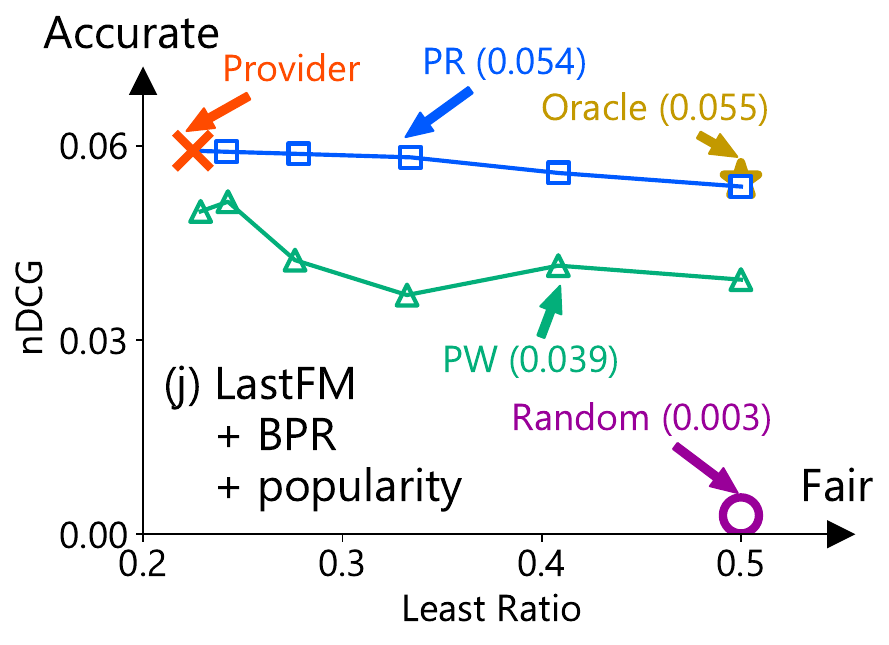}
\end{minipage}
\quad
\begin{minipage}{0.21\hsize}
\includegraphics[width=\hsize]{./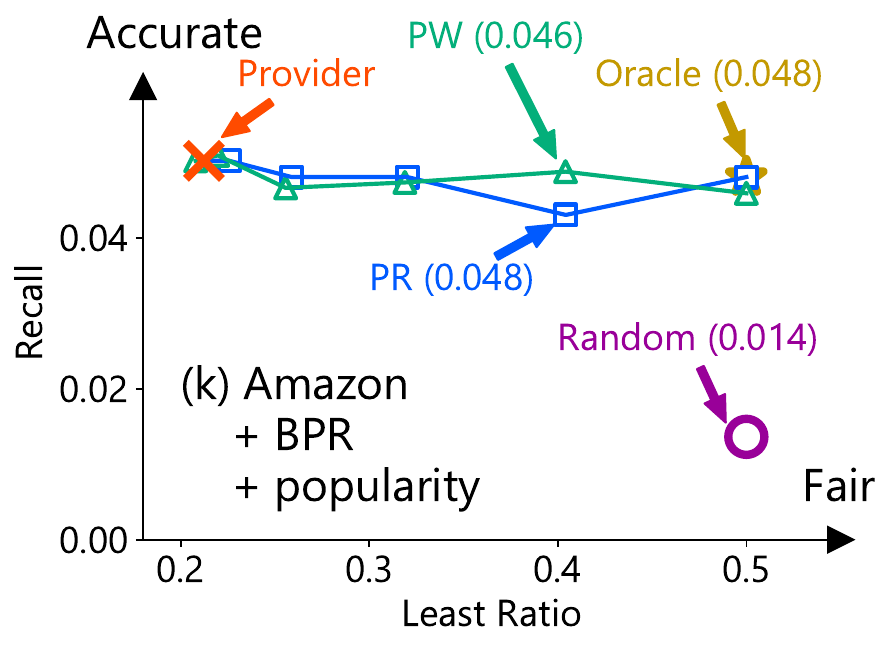}
\end{minipage}
\quad
\begin{minipage}{0.21\hsize}
\includegraphics[width=\hsize]{./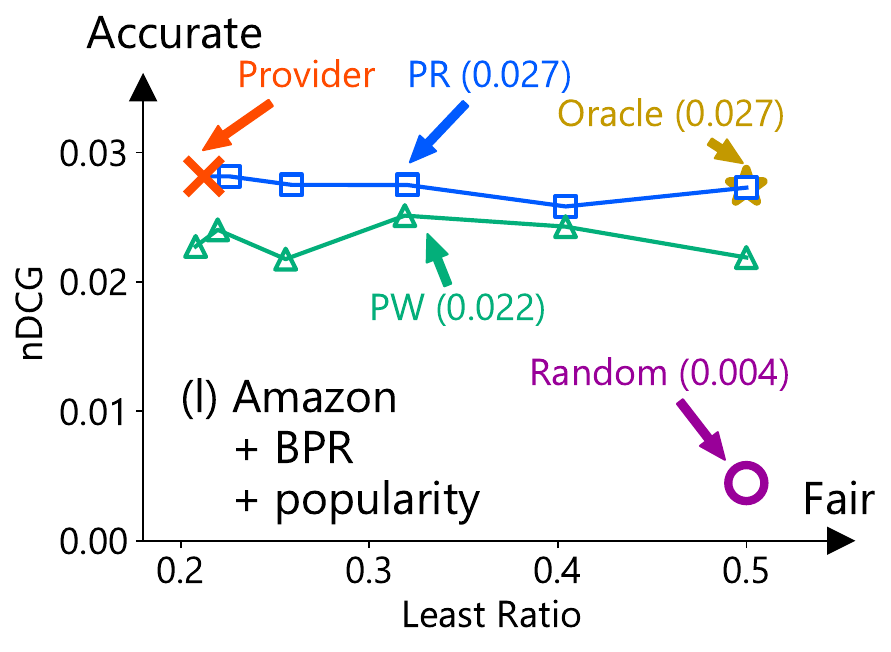}
\end{minipage}
\end{center}
\caption{Trade-off between fairness and performance: PR (blue curve) represents \textsc{PrivateRank}, and PW (green curve) represents \textsc{PrivateWalk}. The score reported in a parenthesis is the performance of each method \emph{when the recommendation is completely fair}. Even though \textsc{PrivateRank} does not use log data, its performance is comparable with the oracle method, which adopts prohibitive information. Although \textsc{PrivateWalk} performs worse than \textsc{PrivateRank} in exchange for fast evaluation, it remains significantly better than random guesses. }
\label{fig: result-private}
\end{figure*}

\subsubsection{Datasets}
We use four datasets, Adult \footnote{\url{https://archive.ics.uci.edu/ml/datasets/adult}}, MovieLens100k \citep{harper2016movielens}, LastFM \footnote{\url{https://grouplens.org/datasets/hetrec-2011/}}, and Amazon Home and Kitchen \citep{he2016ups, mcauley2015image}. In Adult, we recommend a set of people on each person's page, such that those recommended have the same label as the source person. A recommendation list is considered to be fair if it contains men and women in equal proportion. We use this dataset with talent searching in mind. In MovieLens, we consider two protected groups (i) movies released before 1990 and (ii) movies that received less than $50$ interactions. In LastFM and Amazon, items that received less than $50$ interactions are the protected groups. We adopt the implicit setting for MovieLens, LastFM, and Amazon datasets. We set the $ij$-th element of the interaction matrix to one if user $i$ has interacted with item $j$ and zero otherwise. More details about the datasets are available in the appendices.

\subsubsection{Provider Recommender System}
To construct private recommender systems, we first define the provider recommender system.

\noindent \textbf{Adult dataset.} We use a $K$-nearest neighbor recommendation for this dataset. It first standardizes features and recommends $K$-nearest people with respect to the Euclidean distance for each person.

\noindent \textbf{MovieLens, LastFM, and Amazon datasets.} We use nearest neighbor recommendation using rating vectors and Bayesian personalized ranking (BPR) \citep{rendle2009bpr}. In BPR, the similarity of items $i$ and $j$ is defined as the inner product of the latent vectors of items $i$ and $j$. The top-$K$ similar items with respect to these similarity measures are recommended in both methods. We use the Implicit package\footnote{\url{https://github.com/benfred/implicit}} to implement both methods.

\subsubsection{Evaluation Protocol} \hfill

\noindent \textbf{Adult dataset.} 
We evaluate methods using precision, i.e., the ratio of recommended items with the same class label as the source item.

\noindent \textbf{MovieLens and Amazon datasets.} We adopt leave-one-out evaluation following previous works \citep{he2017neural, rendle2009bpr}. We do not adopt negative sampling but full evaluation to avoid biased evaluation \citep{krichene2020sampled}. We evaluate methods using recall@$K$ and nDCG@$K$. These metrics are computed for the recommendation list for the second latest item that user $u$ interacted with. The positive sample is the latest interacted item, which is left out in the dataset.

\noindent \textbf{LastFM dataset.} This dataset does not contain timestamps. We randomly arrange the interactions and adopt leave-one-out evaluation as in MovieLens and Amazon datasets.

\subsubsection{Hyperparameters} Throughout the experiments, we set the length $K$ of recommendation lists to be $10$ for both provider and private recommender systems.
We set $c = 0.01$, $L = 10$ for \textsc{PrivateRank} and $L_\text{max} = 100$ for \textsc{PrivateWalk}. We inspect the trade-off between performance and fairness by varying $\tau$.
We use the default hyperparameters in the Implicit package for the BPR. Namely, the number of dimensions is $100$, the learning rate is $0.01$, the regularization parameter is $0.01$, and the number of iterations is $100$.

\subsection{Fairness and Performance Trade-off (RQ1)} \label{sec: result}

\subsubsection{Baselines} First, \textbf{Provider} is the provider recommender system. It serves as an upper bound of private recommender systems in terms of accuracy; however, it is unfair. \textbf{Random} shuffles items in random order and recommends items in a fair manner by the same post processing as \textsc{PrivateRank}. \textbf{Oracle} is a posthoc fair recommender system that adopts the same backbone algorithm as the provider recommender system and the same post processing as \textsc{PrivateRank}. This algorithm corresponds to the posthoc method in \citep{geyik2019fairness}. Note that this method is unrealistic and cannot be used in our setting because it does have access to log data. We use \textbf{Oracle} to investigate the performance deviations from idealistic settings. Note that existing methods for C-fairness, such as FATR \citep{zhu2018fairness} and Beyond Parity \citep{yao2017beyond}, cannot be used as baselines in this setting because we adopt fairness for \emph{items} in item-to-item recommendations in this work.

\subsubsection{Results}
Figure \ref{fig: result-private} shows the trade-off between the fairness measures and the performance of the recommender systems. The score reported in parenthesis is the performance of each method \emph{when the recommendation is completely fair} (i.e., least ratio is $0.5$ and entropy is $1.0$, not the score at the pointed position).

\begin{table}[tb]
    \caption{Case studies on IMDb and Twitter. \textsc{PrivateWalk} can retrieve relevant items in a fair manner, although it does not know the detail of the provider recommender system or have access to log data. Recall that we do not take orders into consideration in the fairness criterion. Once both groups are contained in a recommendation list, it is easy to adjust the order by any posthoc processing. (e.g., show the USA and non-USA movies alternatively in the first example if you want.)} 
    \centering
    \scalebox{0.65}{
    \begin{tabular}{lll} \toprule
    & Provider (IMDb) & \textsc{PrivateWalk} \\ \midrule
    Source & Toy Story & Toy Story \\ \midrule
    1st & Toy Story 3 ({\color[HTML]{FF4B00} USA})    & Toy Story 3 ({\color[HTML]{FF4B00} USA})              \\
    2nd & Toy Story 2 ({\color[HTML]{FF4B00} USA})    & Coco ({\color[HTML]{FF4B00} USA})                     \\
    3rd & Finding Nemo ({\color[HTML]{FF4B00} USA})   & The Incredibles ({\color[HTML]{FF4B00} USA})          \\
    4th & Monsters, Inc. ({\color[HTML]{FF4B00} USA}) & Spirited Away ({\color[HTML]{005AFF} non USA})        \\
    5th & Up ({\color[HTML]{FF4B00} USA})             & Castle in the Sky ({\color[HTML]{005AFF} non USA})    \\
    6th & WALL·E ({\color[HTML]{FF4B00} USA})         & Howl's Moving Castle ({\color[HTML]{005AFF} non USA}) \\ \bottomrule
    \end{tabular}
    }
    \newline
    \vspace{0.1in}
    \newline
    \scalebox{0.65}{
    \begin{tabular}{lllll} \toprule
    & Provider (Twitter) & \textsc{PrivateWalk} & Provider (Twitter) & \textsc{PrivateWalk} \\ \midrule
    Source & Tom Hanks & Tom Hanks & Tim Berners-Lee & Tim Berners-Lee \\ \midrule
    1st & Jim Carrey ({\color[HTML]{FF4B00} man})        & Jim Carrey ({\color[HTML]{FF4B00} man})        & Jimmy Wales ({\color[HTML]{FF4B00} man})      & Vinton Gray Cerf ({\color[HTML]{FF4B00} man}) \\
    2nd & Hugh Jackman ({\color[HTML]{FF4B00} man})      & Sarah Silverman ({\color[HTML]{005AFF} woman}) & Vinton Gray Cerf ({\color[HTML]{FF4B00} man}) & Lawrence Lessig ({\color[HTML]{FF4B00} man}) \\
    3rd & Samuel L. Jackson ({\color[HTML]{FF4B00} man}) & Hugh Jackman ({\color[HTML]{FF4B00} man})      & Nigel Shadbolt ({\color[HTML]{FF4B00} man})   & Nigel Shadbolt ({\color[HTML]{FF4B00} man}) \\
    4th & Dwayne Johnson ({\color[HTML]{FF4B00} man})    & Samuel L. Jackson ({\color[HTML]{FF4B00} man}) & Anil Dash ({\color[HTML]{FF4B00} man})        & Kara Swisher ({\color[HTML]{005AFF} woman}) \\
    5th & Seth MacFarlane ({\color[HTML]{FF4B00} man})   & Emma Watson ({\color[HTML]{005AFF} woman})     & Lawrence Lessig ({\color[HTML]{FF4B00} man})  & danah boyd ({\color[HTML]{005AFF} woman}) \\
    6th & Sarah Silverman ({\color[HTML]{005AFF} woman}) & Alyssa Milano ({\color[HTML]{005AFF} woman})   & Brendan Eich ({\color[HTML]{FF4B00} man})     & Adrienne Porter Felt ({\color[HTML]{005AFF} woman}) \\ \bottomrule
    \end{tabular}
    }
    \label{table: qualitative}
\end{table}

\noindent \textbf{Adult} (Figures \ref{fig: result-private} (a) and (b)): The least ratio of the provider recommendations is $0.152$, which indicates that the provider recommendations are not fair with respect to sex. In contrast, our proposed methods can increase fairness by increasing the threshold. In particular, \textsc{PrivateRank} strikes an excellent trade-off between fairness measures and precision. It achieves perfect fairness (i.e., least ratio is $0.5$ and entropy is $1.0$) while it drops precision by only one percent. \textsc{PrivateWalk} performs slightly worse than \textsc{PrivateRank} but much better than random access. Because the least ratio and entropy have a one-to-one correspondence, we will report only the least ratios in the following because of space limitation.

\noindent \textbf{MovieLens} (Figures \ref{fig: result-private} (c) to (h)): Cosine and BPR in the figure represent the type of the provider recommender system, and popularity and period represent the protected attributes. \textsc{PrivateRank} strikes an excellent trade-off between accuracy and fairness in all settings. In particular, \textsc{PrivateRank} is comparable with the oracle method, which has access to the unavailable information in our setting. It can also be observed that BPR increases the performance of the provider recommender system compared to the cosine similarity, and it also increases the performance of private recommender systems accordingly. This indicates that effective provider recommender systems induce effective private recommender systems. The overall tendencies are common in all settings. These results indicate that our proposed methods can generate effective recommendations in a fair manner regardless of the algorithms used in the provider recommender system and the protected attributes. Because we observe similar tendencies in Cosine and BPR for other datasets, we report only BPR in the following due to space limitation.

\noindent \textbf{LastFM and Amazon} (Figures \ref{fig: result-private} (i) to (l)): Our proposed methods exhibits good trade-off between accuracy and fairness here as well. These results indicate that our proposed methods are effective regardless of the domain of recommendations.

In summary, our proposed methods can achieve perfect fairness, even where the provider recommender systems are not fair, and perform well in various settings.

\subsection{Sensitivity of Hyperparameters (RQ2)} \label{sec: sensitivity}

We investigate the sensitivity of hyperparameters of \textsc{PrivateRank} and \textsc{PrivateWalk}. We fix the minimum requirement $\tau$ to be $5$ (i.e., perfect fairness) and evaluate performance with various hyperparameters. We evaluate performance by precision for the Adult dataset and by the recall for the other datasets. We normalize these performance measures such that the maximum value is one to illustrate relative drops of performance. Figure \ref{fig: sensitivity} (Left) reports the sensitivity of the number $L$ of iterations of the cumulative power iteration, and Figure \ref{fig: sensitivity} (Center) reports the sensitivity of the damping factor $c$. It can be observed that $L$ is not sensitive as long as $L \ge 6$, and $c$ is neither sensitive as long as $c \le 0.01$ in all settings. Figure \ref{fig: sensitivity} (Right) reports the sensitivity of the maximum length $L_\text{max}$ of random walks. It also shows that this hyperparameter is not sensitive as long as $L_\text{max} \ge 100$. 

\begin{figure}[tb]
\begin{minipage}{0.32\hsize}
\centering
\includegraphics[width=\hsize]{./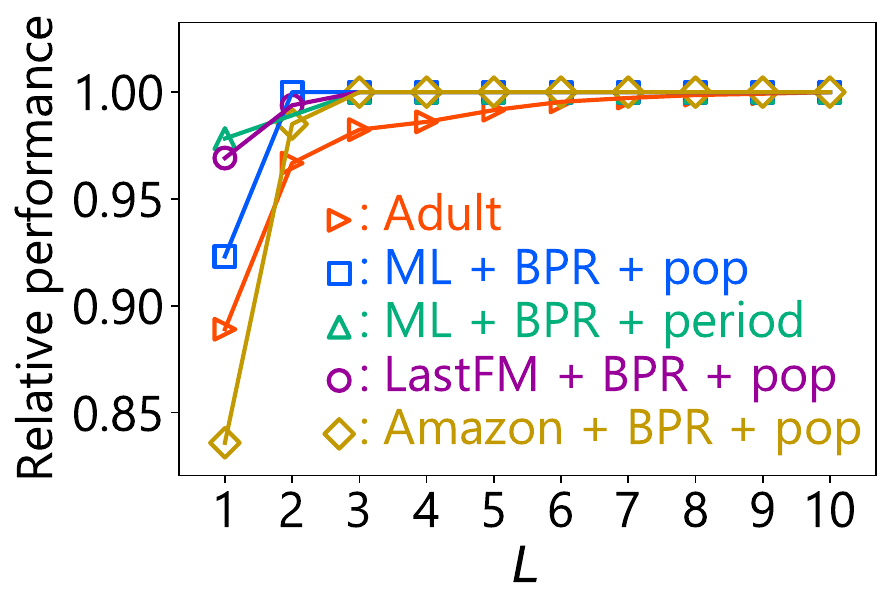}
\end{minipage}
\begin{minipage}{0.32\hsize}
\centering
\includegraphics[width=\hsize]{./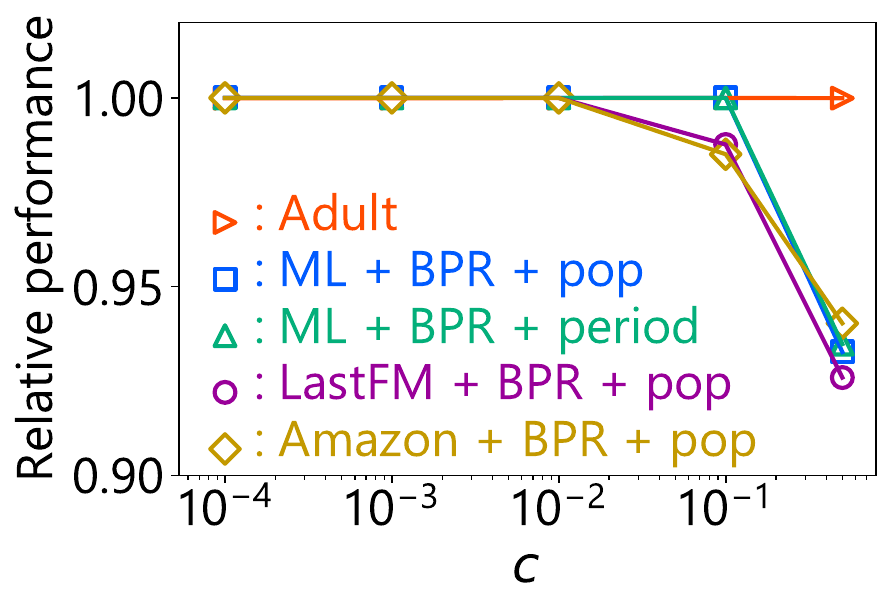}
\end{minipage}
\begin{minipage}{0.32\hsize}
\centering
\includegraphics[width=\hsize]{./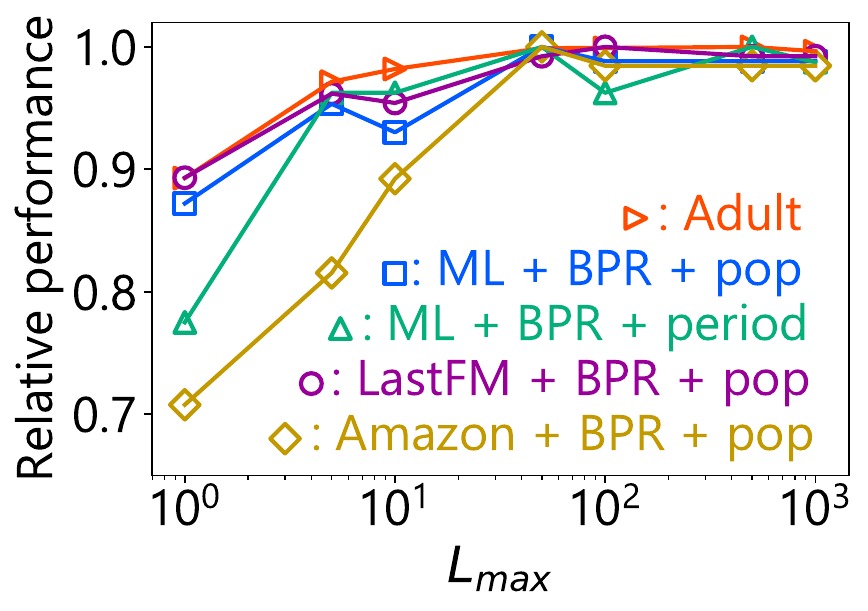}
\end{minipage}
\caption{Sensitivity of hyperparameters: Our methods are insensitive to hyperparameters.}
\label{fig: sensitivity}
\end{figure}

\subsection{Case Studies in the Wild (RQ3)} \label{sec: wild}

We run \textsc{PrivateWalk} with real recommender systems in operation in IMDb and Twitter for qualitative case studies. The main goal of this section is to show that our method is feasible even in real-world environments. To the best of our knowledge, there are no existing methods that enable Twitter users to use fair recommender systems. The experiments in this section provide a proof-of-concept that such a challenging task is feasible.

\noindent \textbf{IMDb:} We found that IMDb recommended only American movies in the Toy Story page\footnote{\url{https://www.imdb.com/title/tt0114709/}}. Although these recommendations are consistent, they are not informative to cinema fans. We consider whether the movie is from the USA as the protected attribute and run \textsc{PrivateWalk} on IMDb. Table \ref{table: qualitative} (Left) reports that \textsc{PrivateWalk} also recommends Spirited Away, Castle in the Sky, and Howl's Moving Castle, which are Japanese animation movies. This result indicates that \textsc{PrivateWalk} recommends in a fair manner with respect to the country attribute while it recommends relevant items.

\noindent \textbf{Twitter:} As we mentioned in Section \ref{sec: setting-private}, we found that Twitter's user recommendations were not fair with respect to gender, which is in contrast to LinkedIn. We set the protected attribute to be gender and run \textsc{PrivateWalk} on Twitter. We annotate labels and remove non-person accounts manually\footnote{As we mentioned in Section \ref{sec: setting-private}, this annotation process can be automated if there is a classifier that takes an account as input and estimates the label.}. Table \ref{table: qualitative} (Middle, Right) reports that \textsc{PrivateWalk} recommends three men users and three woman users in both examples, and all recommended users are celebrity actors for Tom Hanks, and tech-related people for Tim Berners-Lee.

We emphasize that we do not know the recommendation algorithms used in IMDb and Twitter nor have access to log data. Nonetheless, \textsc{PrivateWalk} generates fair recommendations by utilizing blackbox recommendation results. We also point out that \textsc{PrivateRank} is infeasible in this setting because there are too many items in IMDb and Twitter to hit the API limit. In contrast, \textsc{PrivateWalk} runs with few evaluations of the provider recommender systems in an on-demand manner. \textsc{PrivateRank} is beneficial when the number of items is small (e.g., when we retrieve items only from a specific category) or when we can crawl the website beforehand because it performs better when allowed.

\section{Discussion}
\noindent \uline{\textbf{Limitation.}} The main limitation of private recommender systems is that it is expensive for each user to develop a system. Although it may still be worthwhile for companies to build their own private recommender systems for a fair recruiting process, most individual users cannot afford to develop their own browser add-ons. We consider two specific scenarios in which individual users benefit from private recommender systems. First, some enthusiastic users of the service can develop browser add-ons and distribute them. For example, a cinema fan may develop a fair private recommender system for IMDb with respect to popularity as a hobby project, such that other IMDb users can enjoy it. Second, general purpose software may help build private recommender systems. In particular, the program we developed in Section \ref{sec: wild} computes random walks and rankings in common procedures, and we only have to specify in which elements of a web page recommendation lists and sensitive attributes are described. Although users have to write additional scripts for each service in the current version, more elaborated software may further reduce the burden of developing private recommender systems.

We have focused on item-to-item collaborative filtering, which is one of the popular choices in steerable recommender systems \citep{green2009generating, lamprecht2015improving}. Extending the concept of private recommender systems to other settings, including item-to-user setting, would be fruitful future direction.

\vspace{0.05in}
\noindent \uline{\textbf{Extending recommendation lists.}} So far, we have assumed that the length of the recommendation lists of the private recommender system is the same as that of the provider recommender system. However, some users may want to know more items than the service provider offers. Our proposed methods can drop this assumption and provide longer lists than the service provider.

\section{Related Work}

Burke \citep{bruke2017mltisided} classified fairness of recommender systems into three categories. C-fairness is the fairness of consumers or users of a service, P-fairness is the fairness of producers or items, and CP-fairness considers both sides. In this study, we focused on P-fairness. Note that existing methods for C-fairness such as FATR \citep{zhu2018fairness} cannot be applied in this setting. Examples of previous studies on P-fairness are as follows. Geyik et al. \citep{geyik2019fairness} proposed a fair ranking of job-seekers in LinkedIn Talent Search. Ekstrand et al. \citep{ekstrand2018exploring} studied book recommendations and found that some recommendation algorithms are biased toward books written by men authors. Beutel et al. \citep{beutel2019fairness} were concerned that unfair recommendations in social network services might under-rank posts by some demographic groups, which limits the groups' visibility. Mehrotra et al. \citep{mehrotra2018towards} and Patro et al. \citep{patro2020fairrec} proposed methods to equalize the visibility of items and realize a fair marketplace. The crucial difference between these studies and our research is that all of these previous methods are for service providers and require log data. In contrast, our methods can be used even without log data and even if a service provider does not offer a fair recommender system. To the best of our knowledge, this is the first work to address this challenging problem.

\section{Conclusion}

In this chapter, we investigated a situation where a service provider does not offer a fair recommender system for the first time. We proposed a novel framework, private recommender systems, where each user builds their own fair recommender system. We proposed \textsc{PrivateRank} and \textsc{PrivateWalk} to build a private recommender system without requiring access to log data. \textsc{PrivateRank} is effective and exhibits an excellent trade-off between performance and fairness; however, it requires many evaluations of the provider recommender system. Although \textsc{PrivateWalk} is less effective, it makes recommendations in an on-demand manner and requires few evaluations of the provider recommender system. The two proposed methods complement each other’s weaknesses. We empirically validated the effectiveness of the proposed methods via offline quantitative evaluations and qualitative experiments in the wild. Our approach realizes fair recommendations in many cases where conventional fair algorithms cannot be deployed, and it promotes the spread of fair systems.

\chapter*{Appendix}

\section*{Notations}

Notations are summarized in Table \ref{tab: notations-private}.

\begin{table}[tb]
    \centering
    \caption{Notations.}
    \begin{tabular}{ll} \toprule
        Notations & Descriptions \\ \midrule
        $[n]$ & The set $\{ 1, 2, \dots, n \}$. \\
        $a, \bolda, \boldA$ & A scalar, vector, and matrix. \\
        $\boldA^\top$ & The transpose of $\boldA$. \\
        $K$ & The length of a recommendation list. \\
        $\mathcal{U} = [m]$ & The set of users. \\
        $\mathcal{I} = [n]$ & The set of items. \\
        $\mathcal{A}$ & The set of protected groups. \\
        $a_i \in \mathcal{A}$ & The protected attribute of item $i \in \mathcal{I}$. \\
        $\mathcal{P}_{\text{prov}}$ & A provider recommender system. \\
        $\mathcal{Q}$ & A private recommender system. \\
        \bottomrule
    \end{tabular}
    \label{tab: notations-private}
\end{table}

\section*{Pseudo Code of PrivateRank} \label{sec: PrivateRankcode}

\setlength{\textfloatsep}{5pt}
\begin{algorithm2e}[t]
\caption{CanAdd$(\mathcal{R}, i)$}
\label{algo: can}
\DontPrintSemicolon 
\nl\KwData{List $\mathcal{R}$ of items, New item $i$.}
\nl\KwResult{Whether we can add item $i$.}
    \nl Initialize $\boldc_a \leftarrow 0 ~(\forall a \in \mathcal{A}$) \;
    \nl \For{$j \textup{ \textbf{in} } \mathcal{R} \cup \{i\}$}{
    \nl     $\boldc_{a_j} \leftarrow \boldc_{a_j} + 1$ \tcp*{Count attributes}
        }
    \nl \textbf{return} $\sum_{a \in \mathcal{A}} \max(0, \tau - c_a) \le K - \text{len}(\mathcal{R})$ \;
\end{algorithm2e}

Algorithm \ref{algo: PrivateRank} describes the pseudo-code of \textsc{PrivateRank}. We construct the recommendation network in Line 4 and compute the personalized PageRank in Line 5. In Lines 6--7, we iterate items in the descending order of the personalized PageRank. In Lines 8--9, we add item $i$ to the recommendation list if the user has not interacted with item $i$, and we can preserve the constraint when we insert item $i$.

\section*{Proof of Theorem 4.1} \label{sec: thm1}

\begin{proof}
Let $o_i \in [n]$ be the rank of item $i$ in $\hat{\boldS}_s$. Item $i$ appears in the $o_i$-th iteration, and item ord$_i$ appears in the $i$-th iteration of the loop in Lines 7--11 in Algorithm \ref{algo: PrivateRank}. We prove that for each iteration $j \in [n]$ and each sensitive attribute $a \in \mathcal{A}$, $|\{i \in \mathcal{R} \mid a_i = a\} \cup \{i \in \mathcal{I} \mid a_i = a \land o_i \ge j\}| \ge \tau$ holds at the start of the $j$-th iteration. We prove this assertion by mathematical induction. The $j = 1$ case holds because there are at least $\tau$ items of each attribute. Suppose the proposition holds for $j$. Let $\hat{a} = a_{\text{ord}_j}$. The proposition holds for sensitive attribute $a \neq \hat{a}$ because neither $\{i \in \mathcal{R} \mid a_i = a\}$ nor $\{i \in \mathcal{I} \mid a_i = a \land o_i \ge j\}$ changes in the $j$-th iteration. If $|\{i \in \mathcal{R} \mid a_i = \hat{a}\}| \ge \tau$ holds in the $j$-th iteration, the proposition holds for $\hat{a}$. Otherwise, item $\text{ord}_j$ is adopted in the $j$-th iteration because $\sum_{a \in \mathcal{A}} \max(0, \tau - c_a)$ decreases by one, and CanAdd returns \textbf{True}. Therefore, $|\{i \in \mathcal{R} \mid a_i = a\}|$ increases by one, and $|\{i \in \mathcal{I} \mid a_i = a \land o_i \ge j\}|$ decreases by one at the $j$-th iteration. From the inductive hypothesis, the proposition holds for $j + 1$. Therefore, $|\{i \in \mathcal{R} \mid a_i = a\}| \ge \tau$ holds true at the end of the procedure for each attribute $a \in \mathcal{A}$. 
\end{proof}

\section*{Proof of Theorem 4.2} \label{sec: thm2}

\begin{proof}
Let $D = \sum_{i=1}^K \frac{1}{\log (i + 1)}$. From the definition,
\begin{align*}
    \hat{\boldS}_i = (1-c) \bolde^{(i)} + (1-c)c \left( \tilde{\boldA}^\top \bolde^{(i)} + c\tilde{\boldA}^\top \sum_{k = 0}^{L-2} (c \tilde{\boldA}^\top)^k \bolde^{(i)} \right).
\end{align*}
Therefore, for the $k$-th item $j$ of $\mathcal{P}_{\text{prov}, 1}(i)$, 
\[\hat{\boldS}_{ij} \ge (1-c)c (\tilde{\boldA}^\top \bolde^{(i)})_j = (1-c)c \frac{1}{D \log (k + 1)}.\]
Suppose $c < \frac{1}{(K + 1)^2 \log^2(K + 1)}$ holds, then, 
\begin{align*}
    & \|c \tilde{\boldA}^\top \sum_{k = 0}^{L - 2} (c \tilde{\boldA}^\top)^k \bolde^{(i)}\|_\infty \\
    &\le c \|\tilde{\boldA}^\top\|_1 \|\sum_{k = 0}^{\infty} (c \tilde{\boldA}^\top)^k\|_1 \|\bolde^{(i)}\|_1 \\
    &\le c \|\tilde{\boldA}^\top\|_1 \frac{1}{1 - \|c \tilde{\boldA}^\top\|_1} \|\bolde^{(i)}\|_1 \\
    &\le \frac{c}{1 - c} = \frac{1}{(K + 1)^2 \log^2(K + 1) - 1} \\
    &< \frac{1}{D(K+1)\log^2(K + 1)} < \frac{\left( \frac{1}{\log K} - \frac{1}{\log (K + 1)} \right)}{D}, \\
\end{align*}
where $\|\bolda\|_1$ is the induce norm. Therefore, for $l \not \in \mathcal{P}_{\text{prov}, 1}(i) \cup \{i\}$, 
\[ \hat{\boldS}_{il} = (1-c)c (c\tilde{\boldA}^\top \sum_{k = 0}^{L-2} (c \tilde{\boldA}^\top)^k \bolde^{(i)}) < (1-c)c \frac{1}{D\log (K + 1)},\] and the $k$-th item $j$ of $\mathcal{P}_{\text{prov}, 1}(i)$, 
\[\hat{\boldS}_{ij} < (1-c)c \frac{1}{D\log (k)}.\]
Therefore, the top-$K$ ranking does not change.

\end{proof}

\section*{Pseudo Code of PrivateWalk} \label{sec: PrivateWalkcode}

\setlength{\textfloatsep}{5pt}
\begin{algorithm2e}[t]
\caption{\textsc{PrivateRank}}
\label{algo: PrivateRank}
\DontPrintSemicolon 
\nl\KwData{Oracle access to $\mathcal{P}_{\text{prov}, 1}$, Source item $i \in \mathcal{I}$, Protected attributes $a_i ~\forall i \in \mathcal{I}$, Damping factor $c$, Minimum requirement $\tau$, Set $\mathcal{H}$ of items that user $i$ has already interacted with.}
\nl\KwResult{Recommended items $\mathcal{R} = \{j_k\}_{1 \le k \le K}$.}
    \nl Initialize $\mathcal{R} \leftarrow []$ (empty) \;
    \nl $\boldA_{ij} = \begin{cases}
    \frac{1}{\log(k + 1)} & (\mathcal{P}_{\text{prov}, 1}(i)_k = j) \\
    0 & \text{(otherwise)}
    \end{cases}$\;
    \nl $\hat{\boldS}_{i} = (1 - c) \sum_{k = 0}^L (c \tilde{\boldA}^\top)^k \bolde^{(i)}$ \;
    \nl $l \leftarrow \text{argsort}(\hat{\boldS}_{i}, \text{descending})$ \;
    \nl \For{$i \textup{ \textbf{in} } l$}{
    \nl     \If{$i \textup{ \textbf{not in} } \mathcal{H} \textup{ \textbf{and} } \textup{CanAdd}(\mathcal{R}, i)$}{
    \nl         Push back $i$ to $\mathcal{R}$ \;
            }
        }
    \nl \textbf{return} $\mathcal{R}$ \;
\end{algorithm2e}

\setlength{\textfloatsep}{5pt}
\begin{algorithm2e}[t]
\caption{\textsc{PrivateWalk}}
\label{algo: PrivateWalk}
\DontPrintSemicolon 
\nl\KwData{Oracle access to $\mathcal{P}_{\text{prov}, 1}$, Source item $i \in \mathcal{I}$, Protected attributes $a_i ~\forall i \in \mathcal{I}$, Minimum requirement $\tau$, Set $\mathcal{H}$ of items that user $i$ has already interacted with, Maximum length $L_{\text{max}}$ of random walks.}
\nl\KwResult{Recommended items $\mathcal{R} = \{j_k\}_{1 \le k \le K}$.}
    \nl Initialize $\mathcal{R} \leftarrow []$ (empty) \;
    \nl \For{$k \gets 1$ \textbf{to} $K$}{
    \nl     \text{cur} $\leftarrow i$ \tcp*{start from the source node}
    \nl     \text{found} $\leftarrow$ \textbf{False} \;
    \nl     \For{\textup{iter} $\gets 1$ \textbf{to} $L_{\text{max}}$}{
    \nl         \text{next\_rank} $\sim$ $\text{Cat}(\text{normalize}([1 / \log (r + 1)]))$ \;
    \nl         \text{cur} $\leftarrow \mathcal{P}_{\text{prov}, 1}(\text{cur})_{\text{next\_rank}}$ \;
    \nl         \If{$\textup{cur} \textup{ \textbf{not} \textbf{in} } \mathcal{R} \cup \mathcal{H}$ \textup{\textbf{and}} $\textup{CanAdd}(\mathcal{R}, \textup{cur})$}{
    \nl             \tcc{The first encountered item that can be added. Avoid items in $\mathcal{R} \cup \mathcal{H}$.}
    \nl             Push back \text{cur} to $\mathcal{R}$. \;
    \nl             \text{found} $\leftarrow$ \textbf{True} \;
    \nl             \textbf{break} \;
                }
            }
    \nl     \While{\textbf{\textup{not}} \textup{found}}{
    \nl         $i \leftarrow$ \text{Uniform}($\mathcal{I}$) \tcp*{random item}
    \nl         \If{$i \textup{ \textbf{not} \textbf{in} } \mathcal{R} \cup \mathcal{H}$ \textup{\textbf{and}} $\textup{CanAdd}(\mathcal{R}, i)$}{
    \nl             Push back \text{cur} to $\mathcal{R}$. \;
    \nl             \textbf{break} \;
                }
            }
        }
    \nl \textbf{return} $\mathcal{R}$ \;
\end{algorithm2e}

Algorithm \ref{algo: PrivateWalk} presents the pseudo-code of \textsc{PrivateWalk}. $\text{Cat}(\theta)$ denotes a categorical distribution of parameter $\theta$, and $\text{normalize}(x)$ denotes the normalization of vector $x$, such that the sum is equal to one. We run a random walk in Lines 7--14. In Line 8, the next node is sampled with a probability proportional to edge weights. If we cannot find an appropriate item in $L_\text{max}$ steps, we add a random item in Lines 15--19. Note that we can determine an item in Lines 7--14 with high probability in practice. We add the fallback process in Lines 15--19 to make the algorithm well-defined.

\section*{Datasets} \label{sec: datasets}

\begin{itemize}
    \item \textbf{Adult} \citep{dua2017uci}. In this dataset, each record represents a person and contains demographic data such as age, sex, race, educational background, and income. This dataset has been mainly adopted in supervised learning \citep{kamishima2012fairness, xu2020algorithmic}, where the covariates are demographic attributes, and the label represents whether income exceeds \$50,000 per year. We use this dataset for individual recommendations. We regard an individual record as a talent and recommend a set of people on each person's page, such that those recommended have the same label as the source person. A recommendation list is considered to be fair if it contains men and women in equal proportion. We adopt the same preprocessing as \citep{xu2020algorithmic}. After preprocessing, this dataset contains $39190$ items and $112$ features. We set sex as the protected attribute. We use this dataset with talent searching in mind.
    \item \textbf{MovieLens100k} \citep{harper2016movielens}. In this dataset, an item represents a movie. It contains $943$ users, $1682$ items, and $100000$ interactions in total. Each user has at least $20$ interactions. We consider two protected attributes. The first is based on periods of movies. Some recommender systems may recommend only new movies, but some users may want to know old movies as well. We divide movies into two groups, such that the protected group contains movies that were released before 1990. The second protected attribute is based on popularity. Movies that received less than $50$ interactions are considered to be in the protected group. 
    \item \textbf{LastFM} \footnote{\url{https://grouplens.org/datasets/hetrec-2011/}}. In this dataset, an item represents a piece of music. To discard noisy items and users, we extract $10$-cores of the dataset, such that each user and item have at least $10$ interactions. Specifically, we iteratively discard items and users with less than $10$ interactions until all items and users have at least $10$ interactions. It contains $1797$ users, $1507$ items, and $62376$ interactions in total after preprocessing. We consider popularity as a protected attribute. The protected group contains music that received less than $50$ interactions.
    \item \textbf{Amazon Home and Kitchen} \citep{he2016ups, mcauley2015image}. In this dataset, an item represents a home and kitchen product on amazon.com. We extract $10$-cores of the dataset using the same preprocessing as in the LastFM dataset. It contains $1395$ users, $1171$ items, and $25445$ interactions in total after preprocessing. We consider popularity as a protected attribute. The protected group contains products that received less than $50$ interactions.
\end{itemize}

\chapter{Towards Principled User-side Recommender Systems}

\section{Introduction}

Recommender systems have been adopted in various web services \citep{linden2003amazon, geyik2019fairness}, and in particular trustworthy systems have been demanded, including fair \citep{kamishima2012enhancement, biega2018equity, milano2020recommender}, transparent \citep{sinha2002role, balog2019transparent}, and steerable \citep{green2009generating, balog2019transparent} recommender systems. However, the adoption of such trustworthy systems is still limited \citep{sato2022private}, and many users still receive black-box and possibly unfair recommendations. Even if these users are dissatisfied with the system, they have limited power to change recommendations. For example, suppose a Democrat supporter may acknowledge that she is biased towards Democrats, but she wants to get news about Republicans as well. However, a news recommender system may show her only news related to Democrats to maximize the click through rate. In other words, filter bubbles exist in recommender systems \citep{pariser2011filter}. The only recourse available to the user is to wait until the news feed service implements a fair/diverse recommendation engine. Green et al. \cite{green2009generating} also noted that ``If users are unsatisfied with the recommendations generated by a particular system, often their only way to change how recommendations are generated in the future is to provide thumbs-up or thumbs-down ratings to the system.'' Worse, there are many fairness/diversity criteria, and the service provider cannot address all the criteria. For example, even if the recommendation results are fair with respect to race, some other users may call for fairness with respect to gender. Or some users call for demographic parity when the service realizes the equalized odds.

User-side recommender systems \citep{sato2022private} provide a proactive solution to this problem. The users can build their own (i.e., private, personal, or user-side) recommender systems to ensure that recommendations are generated in a fair and transparent manner. As the system is build by the user, the system can be tailored to meet their own criteria with user-defined protected groups and user-defined criteria. Therefore, user-side recommender systems can be seen as ultimate personalization.

Although this concept is similar to that of steerable \citep{green2009generating} (or scrutable \citep{balog2019transparent}) recommender systems, the difference lies in the fact that steerable systems are implemented by the service provider, whereas user-side recommender systems are implemented by the users. If the recommender system in operation is not steerable, users need to wait for the service provider to implement a steerable system. By contrast, user-side recommender systems enable the users to make the recommender system steerable by their own efforts even if the service provider implemented an ordinary non-steerable system.

Although user-side recommender systems are desirable, the problem of building user-side recommender systems is challenging. In particular, end-users do not have access to the trade-secret data stored in the database of the system unlike the developers employed by the service provider. This introduces information-theoretic limitation when users build user-side recommender systems. Most modern recommender systems use users' log data and/or item features to make recommendations. At first glance, it seems impossible to build an effective recommender system without such data. Satp \cite{sato2022private} solved this problem by utilizing official recommender systems in the targeted web service, whose outputs are available yet are black-box and possibly unfair.
Specifically, Sato \cite{sato2022private} proposed two algorithms for user-side recommender systems, \textsc{PrivateRank} and \textsc{PrivateWalk} and achieved empirically promising results. However, they rely on ad-hoc heuristics, and it is still unclear whether building recommender systems without log data is a well-defined problem. In this study, we address this question by using the metric recovery theory for unweighted $k$-nearest neighbor graphs. This result provides a theoretically grounded way to construct user-side recommender systems. However, we found that this approach suffers from large communication costs, and thus is not practical.

Then, we formulate the desirable properties for user-side recommender systems, including minimal communication cost, based on which we propose a method, \textsc{Consul}, that satisfies these properties. This method is in contrast to the existing methods that lack at least one axioms.

The contributions of this study are as follows:
\begin{itemize}
    \item The metric recovery theory validates that curious users can reconstruct the original information of items without accessing log data or item databases and solely from the recommendation results shown in the web page. On the one hand, this result shows that the raw recommendation suggested to the user contains sufficient information to construct user-side recommender systems. This result indicates that user-side recommender systems are feasible in principle. One the other hand, it raises new concerns regarding privacy (Section \ref{sec: recover}).
    \item We list three desirable properties of user-side recommender systems, i.e., consistency, soundness, and locality, and show that existing user-side recommender systems lack at least one of them (Section \ref{sec: principles}).
    \item We propose an algorithm for user-side recommender systems, \textsc{Consul}, which satisfies all three properties, i.e., consistency, soundness, and locality (Section \ref{sec: proposed}).
    \item We empirically validate that personal information can be reconstructed solely from raw recommendation results (Section \ref{sec: experiments-reverse}).
    \item We empirically demonstrate that our proposed method strikes excellent trade-off between effectiveness and communication efficiency (Section \ref{sec: experiments-performance}).
    \item We conduct case studies employing crowd-sourcing workers and show that user-side recommender systems can retrieve novel information that the provider's official recommender system cannot (Section \ref{sec: experiments-user}).
    \item We deploy \textsc{Consul} in the real-world Twitter environment and confirmed that \textsc{Consul} realizes an effective and efficient user-side recommender system. (Section \ref{sec: twitter}).
\end{itemize}

\begin{tcolorbox}[colframe=gray!20,colback=gray!20,sharp corners]
\textbf{Reproducibility}: Our code is available at \url{https://github.com/joisino/consul}.
\end{tcolorbox}

\section{Notations}

For every positive integer $n \in \mathbb{Z}_+$, $[n]$ denotes the set $\{ 1, 2, \dots n \}$.
A lowercase letter, such as $a$, $b$, and $c$, denotes a scalar, and a bold lower letter, such as $\bolda$, $\boldb$, and $\boldc$, denotes a vector. 
Let $\mathcal{I} = [n]$ denote the set of items, where $n$ is the number of items. Without loss of generality, we assume that the items are numbered with $1, \dots, n$. $K \in \mathbb{Z}_+$ denotes the length of a recommendation list. 

\section{Problem Setting} \label{sec: setting-consul}

\begin{table}[tb]
    \centering
    \caption{Notations.}
    \begin{tabular}{ll} \toprule
        Notations & Descriptions \\ \midrule
        $[n]$ & The set $\{ 1, 2, \dots, n \}$. \\
        $a, \bolda$ & A scalar and vector. \\
        $G = (V, E)$ & A graph. \\
        $\mathcal{I} = [n]$ & The set of items. \\
        $\mathcal{A}$ & The set of protected groups. \\
        $a_i \in \mathcal{A}$ & The protected attribute of item $i \in \mathcal{I}$. \\
        $\mathcal{H} \subset \mathcal{I}$ & The set of items that have been interacted with. \\
        $\mathcal{P}_{\text{prov}}$ & The provider's official recommender system. \\
        $K \in \mathbb{Z}_+$ & The length of a recommendation list. \\
        $\tau \in \mathbb{Z}_{\ge 0}$ & The minimal requirement of fairness. \\
        $d \in \mathbb{Z}_+$ & The number of dimensions of the embeddings. \\
        \bottomrule
    \end{tabular}
    \label{tab: notations-consul}
\end{table}

\textbf{Provider Recommender System.} In this study, we focus on item-to-item recommendations, following \cite{sato2022private}\footnote{Note that item-to-user recommender systems can also be built based on on item-to-item recommender systems by, e.g., gathering items recommended by an item-to-item recommender system for the items that the user has purchased.}. Specifically, when we visit the page associated with item $i$, $K$ items $\mathcal{P}_\text{prov}(i) \in \mathcal{I}^K$ are presented, i.e., for $k = 1, 2, \cdots, K$, $\mathcal{P}_\text{prov}(i)_k \in \mathcal{I}$ is the $k$-th relevant item to item $i$. For example, $\mathcal{P}_\text{prov}$ is realized in the ``Customers who liked this also liked'' form in e-commerce platforms. We call $\mathcal{P}_\text{prov}$ the \emph{service provider's official recommender system}. We assume that $\mathcal{P}_\text{prov}$ provides relevant items but is unfair and is a black-box system. The goal is to build a fair and white-box recommender system by leveraging the provider's recommender system.

\vspace{0.1in}
\noindent \textbf{Embedding Assumption.} We assume for the provider's recommender system that there exists a representation vector $\boldx_i \in \mathbb{R}^d$ of each item $i$, and that $\mathcal{P}_\text{prov}$ retrieves the top-$k$ nearest items from $\boldx_i$. We do not assume the origin of $\boldx$, and it can be a raw feature vector of an item, a column of user-item interaction matrix, hidden embedding estimated by matrix factorization, or hidden representation computed by neural networks. The embeddings can even be personalized. This kind of recommender system is one of the standard methods \citep{barkan2016item2vec, yao2018judging, sato2022private}. It should be noted that \emph{we assume that $\boldx_i \in \mathbb{R}^d$ cannot be observed} because such data are typically stored in confidential databases of the service provider, and the algorithm for embeddings is typically industrial secrets \citep{milano2020recommender}. For example, $\boldx_i$ may be a column of user-item interaction matrix, but a user cannot observe the interaction history of other users or how many times each pair of items is co-purchased. One of the main argument shown below is that even if $\{\boldx_i\}$ is not directly observable, we can ``reverse-engineer'' $\{\boldx_i\}$ solely from the outputs of the official recommender system $\mathcal{P}_\text{prov}$, which is observable by an end-user. Once $\{\boldx_i\}$ is recovered, we can easily construct user-side recommender systems by ourselves using standard techniques \citep{geyik2019fairness, liu2019personalized, zehlike2017fair}.

\vspace{0.1in}
\noindent \textbf{Sensitive Attributes.} We assume that each item $i$ has a discrete sensitive attribute $a_i \in \mathcal{A}$, where $\mathcal{A}$ is the set of sensitive groups. For example, in a talent market service, each item represents a person, and $\mathcal{A}$ can be gender or race. In a news recommender system, each item represents a news article, and $\mathcal{A}$ can be $\{$Republican, Democrat$\}$. 
We want a user-side recommender system that offers items from each group in equal proportions. Note that our algorithm can be extended to any specified proportion (e.g., so that demographic parity holds), but we focus on the equal proportion for simplicity. We assume that sensitive attribute $a_i$ \emph{can be observed}, which is the common assumption in \citep{sato2022private}. Admittedly, this assumption does not necessarily hold in practice. However, when this assumption violates, one can estimate $a_i$ from auxiliary information or recovered embedding $\hat{\boldx}_i$, and the estimation of the attribute is an ordinary supervised learning task and can be solved by off-the-shelf methods, such as neural networks and random forests. As the attribute estimation process is not relevant to the core of user-side recommender system algorithms, this study focuses on the setting where the true $a_i$ can be observed.

\vspace{0.1in}
\noindent \textbf{Other Miscellaneous Assumptions.} We assume that the user knows the set $\mathcal{H} \subset \mathcal{I}$ of items that he/she has already interacted with. This is realized by, for instance, accessing the purchase history page. In addition, we assume that $\mathcal{P}_\text{prov}$ does not recommend items in $\mathcal{H}$ or duplicated items. These assumptions are optional and used solely for technical reasons. If $\mathcal{H}$ is not available, our algorithm works by setting $\mathcal{H}$ as the empty set.

\vspace{0.1in}
\noindent The problem setting can be summarized as follows:

\begin{tcolorbox}[colframe=gray!20,colback=gray!20,sharp corners]
\noindent \uline{\textbf{User-side Recommender System Problem.}}\\
\textbf{Given:} Oracle access to the official recommendations $\mathcal{P}_{\text{prov}}$. Sensitive attribute $a_i$ of each item $i \in \mathcal{I}$.\\
\textbf{Output:} A user-side recommender system $\mathcal{Q}\colon \mathcal{I} \to \mathcal{I}^K$ that is fair with respect to $\mathcal{A}$.
\end{tcolorbox}

\section{Interested Users Can Recover Item Embeddings} \label{sec: recover}

We first see that a user without access to the database can recover the item embeddings solely from the recommendation results $\mathcal{P}_\text{prov}$.

\vspace{0.1in}
\noindent \textbf{Recommendation Networks.} A recommendation network is a graph where nodes represent items and edges represent recommendation relations. Recommendation networks have been traditionally utilized to investigate the properties of recommender systems \citep{cano2006topology, celma2008new, seyerlehner2009limitation}. They were also used to construct user-side recommender systems \citep{sato2022private}. Recommendation network $G = (V, E)$ we use in this study is defined as follows:
\begin{itemize}
    \item Node set $V$ is the item set $\mathcal{I}$.
    \item Edge set $E$ is defined by the recommendation results of the provider's recommender system. There exists a directed edge from $i \in V$ to $j \in V$ if item $j$ is included in the recommendation list in item $i$, i.e., $\exists k \in [K] \text{ s.t. } \mathcal{P}_\text{prov}(i)_k = j$.
    \item We do not consider edge weights. 
\end{itemize}
It should be noted that $G$ can be constructed solely by accessing $\mathcal{P}_\text{prov}$. In other words, an end-user can observe $G$. In practice, one can crawl the web pages to retrieve top-$k$ recommendation lists. We use unweighted recommendation networks because weights between two items, i.e., the similarity score of two items, are typically \emph{not} presented to users but only a list of recommended items is available in many web services. This setting makes it difficult to estimate the hidden relationship from the recommendation network. Although the original data that generate the recommendation results, e.g., how many times items $i$ and $j$ are co-purchased, cannot be observed, the recommendation results provide useful information on the similarity of items. A critical observation is that, from the embedding assumption we introduce in Section \ref{sec: setting-consul}, graph $G$ can be seen as the $k$-nearest neighbor ($k$-NN) graph of hidden item embeddings $\{\boldx_i\}$.

\vspace{0.1in}
\noindent \textbf{Embedding Recovery.} We discuss that the original embeddings can be recovered from the unweighted recommendation network based on the metric recovery theory \citep{hashimoto2015metric,luxburg2013density,terada2014local}.

First, it is impossible to exactly recover the original embeddings, but there exist some degrees of freedom such as translation, rotation, reflections, and scaling because such similarity transformations do not change the $k$-NN graph. However, similarity transformations are out of our interest when we construct recommender systems, and fortunately, the original embeddings can be recovered up to similarity transformations.

Hashimoto et al. \cite{hashimoto2015metric} considered metric recovery from the $k$-NN graph of a general point cloud $\{\boldx_i\} \subset \mathbb{R}^d$ assuming $\{\boldx_i\}$ is sampled from an unknown distribution $p(x)$ and showed that under mild assumptions of the embedding distribution, the stationary distribution of the simple random walk on the $k$-NN graph converges to $p(x)$ with appropriate scaling (Corollary 2.3 in \citep{hashimoto2015metric}) when $k = \omega(n^{\frac{2}{d+2}} \log^{\frac{d}{d+2}} n)$. Alamgir et al. \cite{alamgir2012shortest} showed that by setting the weight of the $k$-NN graph $G$ to the scaled density function, the shortest path distance on $G$ converges to the distance between the embeddings. Therefore, distance matrix $D \in \mathbb{R}^{n \times n}$ of $\{\boldx_1, \cdots, \boldx_n\}$ can be estimated from the $k$-NN graph (Theorem S4.5 in \citep{hashimoto2015metric}). From the standard argument on the multi-dimensional scaling \citep{sibson1979studies}, we can estimate the original embeddings $\{\boldx_1, \cdots, \boldx_n\}$ from the distance matrix up to similarity transformations with sufficient number of samples $n$. 

Terada et al. \cite{terada2014local} also considered embedding recovery from unweighted $k$-NN graphs\footnote{Note that the use of the these methods require knowledge of the dimension $d$ of the embeddings. However, when the dimensionality is not available, it can be estimated from the unweighted $k$-NN graph using the estimator proposed in \citep{kleindessner2015dimensionality}. For the sake of simplicity, we assume that the true number of dimensions $d$ is known.}. Theorem 3 in \citep{terada2014local} shows that their proposed method, local ordinal embeddings (LOE), recovers the original embeddings $\{\boldx_i\}$ from the $k$-NN graph up to similarity transformations. Although the original proof of Theorem 3 in \citep{terada2014local} relied on the result of \citep{luxburg2013density}, which is valid only for $1$-dimensional cases, the use of Corollary 2.3 in \citep{hashimoto2015metric} validates the LOE for general dimensions. The strength of LOE is that it can estimate the embeddings in a single step while the density estimation approach \citep{luxburg2013density, hashimoto2015metric} requires intermediate steps of estimating the density function and distance matrix. We use the LOE in our experiments.

In summary, the item embeddings $\{\boldx_i\}$ can be recovered solely from the recommendation network. Once embeddings $\{\boldx_i\}$ are recovered, the user has access to the same information as the service provider, after which it is straightforward to construct user-side recommender systems, e.g., by adopting a post-processing fair recommendation algorithm \citep{geyik2019fairness, liu2019personalized, zehlike2017fair}. We call this approach estimate-then-postprocessing (ETP).

These results demonstrate that the $k$-NN recommendation network, which is observable by an end-user, contains sufficient information for constructing user-side recommender systems. Therefore, from the perspective of information limit, the user-side recommender system problem is feasible.

To the best of our knowledge, this work is the first to adopt the embedding recovery theory to user-side recommender systems.

\vspace{0.1in}
\noindent \textbf{Security of Confidential Data.} Although it was not our original intent, the discussion above reveals a new concern regarding data privacy, i.e., confidential information about items may be unintentionally leaked from the recommender systems. For example, in a talent search service, a recommender system may use personal information, such as educational level, salary, and age, to recommend similar talents. This indicates that we can recover such data solely from the recommendation results, which we numerically confirm in the experiments. Therefore, service providers need to develop rigorous protection techniques to prevent leaks of personal information from the recommender system. This is beyond the scope of this work, and we leave this interesting problem as future work.
 
\vspace{0.1in}
\noindent \textbf{Limitation of ETP.} Although ETP is conceptually sound and provides a theoretical foundation of user-side recommender systems, it is impractical because it incurs high communication costs. LOE requires observing the entire $k$-NN graph, and so do other methods, including the path-based approach \citep{luxburg2013density} and global random walk approach \citep{hashimoto2015metric}. Indeed, von Luxburg et al. \cite{luxburg2013density} noted that ``It is impossible to estimate the density in an unweighted $k$-NN graph by local quantities alone.'' As the density estimation can be reduced to embedding recovery, it is also impossible to recover the embedding by local quantities of an unweighted $k$-NN graph. This means that ETP requires to download all item pages from the web service to construct user-side recommender systems, which incurs prohibitive communication costs when the number of items is large. From a practical point of view, it is not necessary to exactly recover the ground truth embeddings. It may be possible to derive useful user-side recommender systems by bypassing the reverse engineering. We will formulate the three desiderata and propose a method that provably satisfies the desiderata in the following sections.

\section{Design Principles} \label{sec: principles}

We formulate three design principles for user-side recommender systems in this section. We consider user-side recommender systems that take a parameter $\tau \in \mathbb{Z}_{\ge 0}$ to control the trade-off between fairness and performance. $\tau = 0$ indicates that the recommender system does not care about fairness, and an increase in $\tau$ should lead to a corresponding increase in fairness.

\vspace{0.1in}
\noindent \textbf{Consistency.} A user-side recommender system $\mathcal{Q}$ is consistent if nDCG of $\mathcal{Q}$ with $\tau = 0$ is guaranteed to be the same as that of the official recommender system. In other words, a consistent user-side recommender system does not degrade the performance if we do not impose the fairness constraint. Sato \cite{sato2022private} showed that \textsc{PrivateRank} is consistent.

\vspace{0.1in}
\noindent \textbf{Soundness.} We say a user-side recommender system is sound if the minimum number of items from each sensitive group is guaranteed to be at least $\tau$ provided $0 \le \tau \le K/|\mathcal{A}|$ and there exist at least $\tau$ items for each group $a \in \mathcal{A}$\footnote{Note that if $\tau > K/|\mathcal{A}|$ or there exist less than $\tau$ items for some group $a$, it is impossible to achieve this condition.}. In other words, we can provably guarantee the fairness of sound user-side recommender system by adjusting $\tau$. Sato \cite{sato2022private} showed that \textsc{PrivateRank} is sound.

\vspace{0.1in}
\noindent \textbf{Locality.} We say a user-side recommender system is local if it generates recommendations without loading the entire recommendation network. This property is crucial for communication costs. As we mentioned above, the ETP approach does not satisfy this property.

\vspace{0.1in}
\noindent \textbf{Inconsistency of \textsc{PrivateWalk}.} The following proposition shows that \textsc{PrivateWalk} is not consistent \footnote{This fact is also observed from the experimental results in \citep{sato2022private}.}

\begin{proposition}
\textsc{PrivateWalk} is not consistent.
\end{proposition}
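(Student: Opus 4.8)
The plan is to refute consistency by exhibiting a single explicit counterexample rather than arguing in general, since consistency is a universally quantified guarantee and one instance suffices to break it. First I would unpack the definition: \textsc{PrivateWalk} is consistent iff, when $\tau = 0$, the nDCG of its output equals that of $\mathcal{P}_{\text{prov},1}$ for every source item. The key simplification is that with $\tau = 0$ the predicate $\textup{CanAdd}(\mathcal{R}, \cdot)$ in Algorithm~\ref{algo: PrivateWalk} is always true, because $\sum_{a \in \mathcal{A}} \max(0, \tau - c_a) = 0 \le K - \text{len}(\mathcal{R})$ holds throughout the run; hence fairness never constrains the walk, and the behaviour is determined solely by which items the random walk visits and in which order.

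The structural observation I would exploit is that each of the $K$ outer iterations starts a fresh walk at the source $i$ and in its first inner step sets $\text{cur} \leftarrow \mathcal{P}_{\text{prov},1}(i)_{\text{next\_rank}}$, with $\text{next\_rank}$ drawn from $\text{Cat}(\text{normalize}([1/\log(r+1)]))$. Since $1/\log(r+1) > 0$ for every rank $r \in [K]$, every direct neighbor of $i$ — i.e., every provider-recommended item — is selected with strictly positive probability, so the item landing in the first slot of $\mathcal{R}$ is a \emph{random} neighbor, not necessarily the provider's rank-$1$ item. I would then take the smallest possible instance: $K = 2$, a source $s$ with $\mathcal{P}_{\text{prov},1}(s) = (a, b)$, and empty history $\mathcal{H} = \emptyset$. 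With positive probability the first walk steps to $b$ and adds it (as $\tau=0$ disables every rejection), after which the second walk steps to $a$ and adds it, yielding the output $(b, a)$: the provider's list with its order reversed.

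To turn this reordering into an nDCG discrepancy, I would fix the evaluation instance so that the held-out relevant item is $a$, the provider's top item. Then $\mathcal{P}_{\text{prov},1}$ places the relevant item at rank $1$, giving $\text{nDCG} = 1$, whereas on the event $(b,a)$ it sits at rank $2$, giving $\text{nDCG} = 1/\log_2 3 < 1$. Because this event has strictly positive probability, the nDCG of \textsc{PrivateWalk} is a genuinely random quantity that is not almost surely equal to the provider's value, so it is not \emph{guaranteed} to coincide — exactly the negation required. I do not anticipate a real obstacle; the only point needing care is to confirm that the reversed list arises with positive probability, which is immediate from the positivity of the categorical weights together with $\tau = 0$ trivializing $\textup{CanAdd}$. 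As a secondary remark I would note that the same instance admits a second, independent route to inconsistency — multi-step wandering can insert items outside $\mathcal{P}_{\text{prov},1}(s)$ — but the order-reversal argument is cleaner and self-contained, so I would present it as the main line.
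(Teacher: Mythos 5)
Your proof is correct, but it reaches the conclusion by a different counterexample than the paper's. You exploit the \emph{order} randomization in the very first step of each restarted walk: with $\tau=0$ the predicate $\textup{CanAdd}$ is vacuous, every rank $r\in[K]$ has strictly positive categorical weight $1/\log(r+1)$, so for any source with provider list $(a,b)$ the output $(b,a)$ occurs with positive probability, and placing the relevant item at $a$ drops nDCG from $1$ to $1/\log_2 3$. The paper instead builds a specific $5$-item cyclic instance in which the walk accepts item $2$, restarts, re-encounters item $2$, rejects it because it is already in $\mathcal{R}$, and then wanders one step further to accept item $1$ --- an item \emph{not} in the provider's list for the source --- so the recommended \emph{set} itself changes, not merely its order. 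Both arguments are valid refutations of the universally quantified consistency guarantee. Yours is more minimal and more general (it needs no bespoke recommendation graph, only $K\ge 2$ and a positive-probability rank draw), while the paper's exhibits the stronger and arguably more instructive failure mode that \textsc{PrivateWalk} can surface items outside the provider's top-$K$ even with $\tau=0$; you correctly identify this second mechanism as an independent route in your closing remark. The only point worth making explicit in a final write-up is the one you already flag: that "guaranteed to be the same" is violated by any positive-probability deviation, so no expectation computation is needed.
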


\begin{proof}
We construct a counterexample where \textsc{PrivateWalk} with $\tau = 0$ generates different recommendations from the provider. Let $\mathcal{I} = [5]$ and $K = 2$,
\begin{align*}
\mathcal{P}_\text{prov}(i)_1 &= ((i + 3) \text{ mod } 5) + 1, \\
\mathcal{P}_\text{prov}(i)_2 &= (i \text{ mod } 5) + 1.
\end{align*}
The recommendations of the provider in item $3$ is $(2, 4)$. Let \textsc{PrivateWalk} select the first item (item $2$) as a result of randomness. Then, \textsc{PrivateWalk} accepts item $2$ and restarts the random walk. Let \textsc{PrivateWalk} select the first item (item $2$) again. \textsc{PrivateWalk} rejects item $2$ and continues the random walk. Let \textsc{PrivateWalk} select the first item (item $1$). Then, \textsc{PrivateWalk} accepts item $1$ and terminates the process. The final recommendation result is $(2, 1)$, which is different from the recommendations of the provider. If item $4$ is the ground truth item, the nDCG of \textsc{PrivateWalk} is less than the official recommender system.
\end{proof}

\begin{table}[tb]
\small
    \caption{Properties of user-side recommender systems. Postprocessing (PP) applies postprocessing directly to the official recommender system, which is not sound when the list does not contain some sensitive groups \citep{sato2022private}.}
    \centering
    \begin{tabular}{lccccc} \toprule
    & PP & \textsc{PrivateRank} & \textsc{PrivateWalk} & ETP & \textsc{CONSUL} \\ \midrule
    Consistent & \cmark & \cmark & \xmark & \cmark & \cmark \\
    Sound & \xmark & \cmark & \cmark & \cmark & \cmark \\
    Local & \cmark & \xmark & \cmark & \xmark & \cmark \\ \bottomrule
    \end{tabular}
    \label{tab: prop}
\end{table}

The properties of user-side recommender systems are summarized in Table \ref{tab: prop}. The existing methods lack at least one properties. We propose a method that satisfies all properties in the next section.

\section{Proposed Method} \label{sec: proposed}

\setlength{\textfloatsep}{5pt}
\begin{algorithm2e}[t]
\caption{\textsc{Consul}}
\label{algo: consul}
\DontPrintSemicolon 
\nl\KwData{Oracle access to $\mathcal{P}_{\text{prov}}$, Source item $i \in \mathcal{I}$, Protected attributes $a_i ~\forall i \in \mathcal{I}$, Minimum requirement $\tau$, Set $\mathcal{H}$ of items that user $i$ has already interacted with, Maximum length $L_{\text{max}}$ of search.}
\nl\KwResult{Recommended items $\mathcal{R} = \{j_k\}_{1 \le k \le K}$.}
    \nl Initialize $\mathcal{R} \leftarrow []$ (empty), $p \leftarrow i$ \;
    \nl $c[a] \leftarrow 0 \quad \forall a \in \mathcal{A}$ \tcp*{counter of sensitive groups}
    \nl $\mathcal{S} \leftarrow \text{Stack}([])$ \tcp*{empty stack}
    \nl \For{\textup{iter} $\gets 1$ \textbf{to} $L_{\text{max}}$}{
    \nl     \While{$p$ \textup{is already visited}}{

    \nl     \If{$|\mathcal{S}| = 0$}{
    \nl         \textbf{goto} line 21 \tcp*{no further items}
            }
    \nl     $p \leftarrow \mathcal{S} \text{.pop\_top}()$  \tcp*{next search node}
            }
    \nl     \For{$k \gets 1$ \textbf{to} $K$}{
    \nl         $j \leftarrow \mathcal{P}_{\text{prov}}(p)_{k}$ \;
    \nl         \If{$j \textup{ \textbf{not} \textbf{in} } \mathcal{R} \cup \mathcal{H}$ \textup{\textbf{and}} $\sum_{a \neq a_j} \max(0, \tau - c[a]) \le K - |\mathcal{R}| - 1$}{
    \nl             \tcc{$j$ can be safely added keeping fairness. Avoid items in $\mathcal{R} \cup \mathcal{H}$.}
    \nl             Push back $j$ to $\mathcal{R}$. \;
    \nl             $c[a_j] \leftarrow c[a_j] + 1$ \;
                }
    \nl         \If{$|\mathcal{R}| = K$}{
    \nl             \textbf{return} $\mathcal{R}$ \tcp*{list is full}
                }
            }
    \nl     \For{$k \gets K$ \textbf{to} $1$}{
    \nl         $\mathcal{S} \text{.push\_top}(\mathcal{P}_{\text{prov}}(p)_{k})$ \tcp*{insert candidates}
            }
        }
    \nl \While{$|\mathcal{R}| < K$}{
    \nl     $j \leftarrow$ \text{Uniform}($\mathcal{I}$) \tcp*{random item}
    \nl     \If{$j \textup{ \textbf{not} \textbf{in} } \mathcal{R} \cup \mathcal{H}$ \textup{\textbf{and}} $\sum_{a \neq a_j} \max(0, \tau - c[a]) \le K - |\mathcal{R}| - 1$}{
    \nl         Push back $j$ to $\mathcal{R}$. \;
    \nl         $c[a_j] \leftarrow c[a_j] + 1$ \;
            }
        }
    \nl \textbf{return} $\mathcal{R}$ \;
\end{algorithm2e}

In this section, we propose \textbf{CON}sistent, \textbf{S}o\textbf{U}nd, and \textbf{L}ocal user-side recommender system, \textsc{Consul}. \textsc{Consul} inherits the basic idea from \textsc{PrivateWalk}, i.e., close items on the recommendation network of the official recommender system should be similar, and \textsc{Consul} retrieves similar items using a walk-based algorithm. However, there are several critical differences between \textsc{Consul} and \textsc{PrivateWalk}. First, \textsc{Consul} runs a deterministic depth-first search instead of random walks. This reduces the variance of the algorithm by removing the stochasticity from the algorithm. Second, \textsc{Consul} uses all recommended items when it visits an item page, whereas \textsc{PrivateWalk} chooses only one of them. This feature improves efficiency by filling the recommendation list with short walk length. Third, \textsc{Consul} continues the search after it finds a new item, whereas \textsc{PrivateWalk} restarts a random walk. When items of the protected group are distant from the source node, \textsc{PrivateWalk} needs to travel long distances many times. By contrast, \textsc{Consul} saves such effort through continuation. Although each modification is small, the combined improvements make a considerable difference to the theoretical properties, as we show in Theorem \ref{thm: consul}, and empirical performances, as we will show in the experiments. We stress that the simplicity of \textsc{Consul} is one of the strengths of our proposed method.

\vspace{0.1in}
\noindent \textbf{Pseudo Code.} Algorithm \ref{algo: consul} shows the pseudo code. Lines 3--5 initialize the variables. In lines 11--18, items included in the recommendation list in item $p$ are added to the list as long as the insertion is safe. The condition $\sum_{a \neq a_j} \max(0, \tau - c[a]) \le K - |\mathcal{R}| - 1$ ensures soundness as we will show in the theorem below. In lines 19--20, the adjacent items are included in the search stack in the descending order. This process continues until $K$ items are found, $L_{\text{max}}$ nodes are visited, or no further items are found. In lines 21--25, the fallback process ensures that $K$ items are recommended. Note that this fallback process is usually skipped because $K$ items are found in the main loop.

\begin{theorem} \label{thm: consul}
\textsc{Consul} is consistent, sound, and local.
\end{theorem}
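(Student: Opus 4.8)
The plan is to treat the three properties in increasing order of difficulty, with soundness the only substantive part. For \textbf{locality}, I would read off the structure of Algorithm \ref{algo: consul} directly: the main loop executes at most $L_{\text{max}}$ times, each pass evaluates $\mathcal{P}_{\text{prov}}$ only at the single current node $p$ (lines 12 and 20, at most $K$ accesses of $\mathcal{P}_{\text{prov}}(p)$), and the fallback loop in lines 21--25 issues no queries to $\mathcal{P}_{\text{prov}}$ whatsoever. Hence the total number of evaluations of $\mathcal{P}_{\text{prov}}$ is at most $K L_{\text{max}}$, a bound independent of $n$, so \textsc{Consul} never loads the whole recommendation network. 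This is immediate.

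For \textbf{consistency}, I would set $\tau = 0$. Then for every candidate the gate $\sum_{a \neq a_j} \max(0, \tau - c[a]) \le K - |\mathcal{R}| - 1$ collapses to $0 \le K - |\mathcal{R}| - 1$, which holds precisely while $|\mathcal{R}| < K$. Starting from $p = i$, the first pass of the main loop therefore inserts $\mathcal{P}_{\text{prov}}(i)_1, \dots, \mathcal{P}_{\text{prov}}(i)_K$ in this exact order; by the standing assumption that $\mathcal{P}_{\text{prov}}$ returns $K$ distinct items, none lying in $\mathcal{H}$, all $K$ insertions succeed and the list fills before any backtracking is triggered. Thus \textsc{Consul} returns $\mathcal{P}_{\text{prov}}(i)$ verbatim, so its ranking, and hence its nDCG, coincides with that of the official system.

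The core of the argument is \textbf{soundness}, which I would establish through the invariant
\[
\textstyle\sum_{a \in \mathcal{A}} \max(0, \tau - c[a]) \;\le\; K - |\mathcal{R}|
\]
maintained throughout execution. It holds at initialization since $c \equiv 0$ and $\tau |\mathcal{A}| \le K$ under the hypothesis $\tau \le K/|\mathcal{A}|$. The only operations changing $c$ or $\mathcal{R}$ are the insertions in lines 14--15 and 23--24, and for each I would check preservation by splitting on whether the inserted item's group $a_j$ is already saturated. If $c[a_j] \ge \tau$, the insertion gate $\sum_{a \neq a_j}\max(0,\tau-c[a]) \le K - |\mathcal{R}| - 1$ is exactly the new invariant; if $c[a_j] < \tau$, the previous invariant combined with the identity $\max(0,\tau-c[a_j]-1) = \max(0,\tau-c[a_j]) - 1$ delivers it. At termination the fallback loop forces $|\mathcal{R}| = K$, at which point the invariant collapses to $\sum_a \max(0,\tau-c[a]) \le 0$, i.e.\ $c[a] \ge \tau$ for every group, which is exactly soundness.

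The hard part will not be the invariant algebra but verifying that the fallback loop genuinely terminates with a full list, i.e.\ that whenever $|\mathcal{R}| < K$ an admissible item still exists. I would close this gap as follows: if some group $a_0$ is still deficient, the invariant forces $\sum_{a \neq a_0}\max(0,\tau-c[a]) \le K - |\mathcal{R}| - 1$, so any not-yet-selected item of group $a_0$ passes the gate, and since there are at least $\tau$ items of group $a_0$ while only $c[a_0] < \tau$ of them lie in $\mathcal{R}$, such an item exists outside $\mathcal{R} \cup \mathcal{H}$. If instead no group is deficient, the gate is vacuous and any remaining item of $\mathcal{I} \setminus (\mathcal{R} \cup \mathcal{H})$ is admissible. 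This guarantees progress until $|\mathcal{R}| = K$ and completes the proof.
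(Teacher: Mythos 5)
Your proof is correct and follows essentially the same route as the paper's: the identical invariant $\sum_{a \in \mathcal{A}}\max(0,\tau-c[a]) \le K - |\mathcal{R}|$ with the same two-case preservation argument for soundness, the same collapse of the admission gate at $\tau=0$ for consistency, and the same $L_{\text{max}}$ counting for locality. The only addition is your explicit verification that the fallback loop always finds an admissible item and hence terminates with $|\mathcal{R}|=K$, a point the paper's proof takes for granted.
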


\begin{figure*}[tb]
\centering
\includegraphics[width=\hsize]{./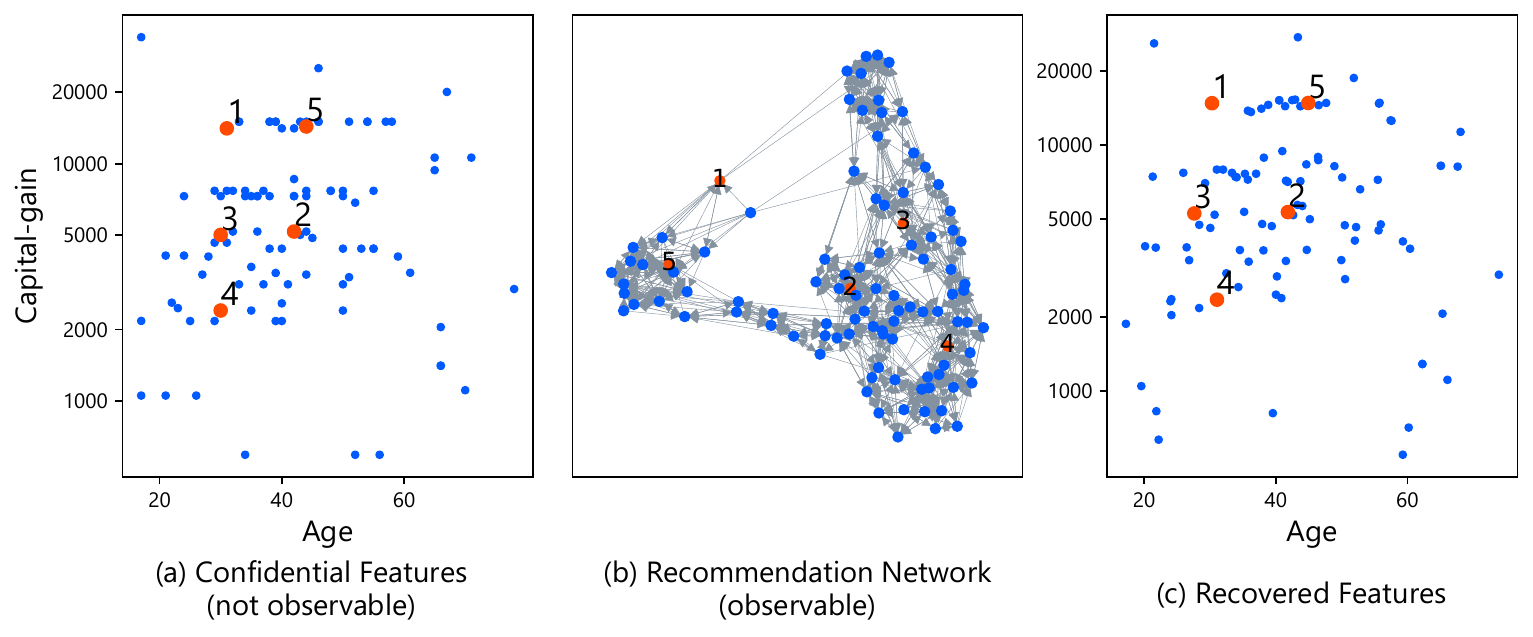}
\caption{\textbf{Feature Reverse Engineering.} \textbf{(Left)} The original features which are confidential. \textbf{(Middle)} The $k$-NN recommendation network revealed to the user. We visualize the graph with \texttt{networkx} package. \textbf{(Right)} The recovered feature solely from the recommendation results. Items $1$ to $5$ are colored red for visibility. All features are accurately recovered. These results show that the raw recommendation results contain sufficient information to build user-side recommender systems.}
\label{fig: reverse}
\end{figure*}

\begin{proof}
\noindent \textbf{Consistency.} If $\tau = 0$, $\sum_{a \neq a_j} \max(0, \tau - c[a]) = 0$ holds in line 13. Therefore, the condition in line 13 passes in all $K$ iterations in the initial node, and the condition in line 17 passes at the $K$-th iteration. The final output is $\mathcal{R} = \mathcal{P}_{\text{prov}}(i)$.

\vspace{0.1in}
\noindent \textbf{Soundness.} We prove by mathematical induction that \begin{align}\sum_{a \in \mathcal{A}} \max(0, \tau - c[a]) \le K - |\mathcal{R}| \label{eq: consul-proof-eq}\end{align} holds in every step in Algorithm \ref{algo: consul}. At the initial state, $|\mathcal{R}| = 0$, $c[a] = 0 ~\forall a \in \mathcal{A}$, and $\sum_{a \in \mathcal{A}} \max(0, \tau - c[a]) = \tau |\mathcal{A}|$. Thus, the inequality holds by the assumption $\tau |\mathcal{A}| \le K$. The only steps that alter the inequality are lines 15--16 and lines 24-25. Let $c, \mathcal{R}$ be the states before the execution of these steps, and $c', \mathcal{R}'$ be the states after the execution. We prove that $\sum_{a \in \mathcal{A}} \max(0, \tau - c'[a]) \le K - |\mathcal{R}'|$ holds assuming \begin{align}\sum_{a \in \mathcal{A}} \max(0, \tau - c[a]) \le K - |\mathcal{R}|, \label{eq: consul-proof-inductive}\end{align} i.e., the inductive hypothesis.  When these steps are executed, the condition \begin{align}\sum_{a \neq a_j} \max(0, \tau - c[a]) \le K - |\mathcal{R}| - 1 \label{eq: consul-proof-if}\end{align} holds by the conditions in lines 13 and 23. We consider two cases. (i) If $c[a_j] \ge \tau$ holds,
\begin{align*}
\sum_{a \in \mathcal{A}} \max(0, \tau - c'[a])
&\stackrel{\text{(a)}}{=} \sum_{a \neq a_j} \max(0, \tau - c'[a]) \\
&\stackrel{\text{(b)}}{=} \sum_{a \neq a_j} \max(0, \tau - c[a]) \\
&\stackrel{\text{(c)}}{\le} K - |\mathcal{R}| - 1 \\
&= K - |\mathcal{R}'|,
\end{align*}
where (a) follows $c[a_j] \ge \tau$, (b) follows $c'[a] = c[a] ~(\forall a \neq a_j)$, and (c) follows eq. \eqref{eq: consul-proof-if}. (ii) If $c[a_j] < \tau$ holds,
\begin{align*}
\sum_{a \in \mathcal{A}} \max(0, \tau - c'[a])
&\stackrel{\text{(a)}}{=} \sum_{a \in \mathcal{A}} \max(0, \tau - c[a]) - 1 \\
&\stackrel{\text{(b)}}{\le} K - |\mathcal{R}| - 1 \\
&= K - |\mathcal{R}'|,
\end{align*}
where (a) follows $c'[a_j] = c[a_j] + 1$ and $c[a_j] + 1 \le \tau$, and (b) follows eq. \eqref{eq: consul-proof-inductive}. In sum, eq. \eqref{eq: consul-proof-eq} holds by mathematical induction. When Algorithm \ref{algo: consul} terminates, $|\mathcal{R}| = K$. As the left hand side of eq. \eqref{eq: consul-proof-eq} is non-negative, each term should be zero. Thus, $c[a] = |\{i \in \mathcal{R} \mid a_i = a\}| \ge \tau$ holds for all $a \in \mathcal{A}$.

\vspace{0.1in}
\noindent \textbf{Locality.} \textsc{Consul} accesses the official recommender system in lines 12 and 20. As the query item $p$ changes at most $L_\text{max}$ times in line 10, \textsc{Consul} accesses at most $L_\text{max}$ items, which is a constant, among $n$ items.
\end{proof}

\vspace{0.1in}
\noindent \textbf{Time complexity.} The time complexity of \textsc{Consul} depends on the average length $L_{\text{ave}}$ of random walks, which is bounded by $L_{\text{max}}$. The number of loops in lines 11--18 is $O(K L_{\text{ave}})$. Although the condition in line 13 involves $|\mathcal{A}|$ terms, it can be evaluated in constant time by storing \[s \stackrel{\text{def}}{=} \sum_{a \in \mathcal{A}} \max(0, \tau - c[a])\] because \[\sum_{a \neq a_j} \max(0, \tau - c[a]) = s - \max(0, \tau - c[a_j]),\] the right hand side of which can be computed in constant time. The number of loops in lines 19-20 is also $O(K L_{\text{ave}})$. Therefore, the main loop in lines 6--20 runs in $O(K L_{\text{ave}})$ time if we assume evaluating $\mathcal{P}_{\text{prov}}$ runs in constant time. Assuming the proportion of each sensitive group is within a constant, the number of loops of the fallback process (lines 21--25) is $O(K |\mathcal{A}|)$ in expectation because the number of trials until the condition in line 23 passes is $O(|\mathcal{A}|)$. Therefore, the overall time complexity is $O(K (|\mathcal{A}| + L_{\text{ave}}))$ and is independent of the number $n$ of items. This is $K$ times faster than \textsc{PrivateWalk}. Besides, in practice, the communication cost is more important than the time complexity. The communication cost of \textsc{Consul} is $L_\text{ave}$, which is $K$ times faster than $O(K L_\text{ave})$ of \textsc{PrivateWalk}. We confirm the efficiency of \textsc{Consul} in experiments.

\section{Experiments} \label{sec: experiments}

We answer the following questions through the experiments.

\begin{itemize}
    \item (RQ1) Can users recover the confidential information solely from the recommendation network?
    \item (RQ2) How good a trade-off between performance and efficiency does \textsc{Consul} strike?
    \item (RQ3) Can user-side recommender systems retrieve novel information?
    \item (RQ4) Does \textsc{Consul} work in the real world?
\end{itemize}

\begin{table*}[t]
    \caption{Performance Comparison. Access denotes the average number of times each method accesses item pages, i.e., the number of queries to the official recommender systems. The less this value is, the more communication-efficient the method is. The best score is highlighted with bold. \textsc{Consul} is extremely more efficient than other methods while it achieves on par or slightly worse performances than Oracle and \textsc{PrivateRank}.}
    \centering
\begin{tabular}{lcc} \toprule
    & \multicolumn{2}{c}{Adult}\\
    \cmidrule(lr{1.0em}){2-3} 
    & Accuracy $\uparrow$ & Access $\downarrow$ \\ \midrule
    Oracle & \textbf{0.788} & $\infty$  \\
    \textsc{PrivateRank} & 0.781 & 39190 \\
    \textsc{PrivateWalk} & 0.762 & 270.5  \\
    \textsc{Consul} & 0.765 & \textbf{34.5} \\ \bottomrule
\end{tabular}
\newline
\vspace*{0.1 in}
\newline
\begin{tabular}{lcccccc} \toprule
    & \multicolumn{3}{c}{MovieLens (oldness)} & \multicolumn{3}{c}{MovieLens (popularity)} \\
    \cmidrule(lr{1.0em}){2-4} \cmidrule(lr{1.0em}){5-7}
    & Accuracy $\uparrow$ & Recall $\uparrow$ & Access $\downarrow$ & nDCG $\uparrow$ & Recall $\uparrow$ & Access $\downarrow$ \\ \midrule
    Oracle & \textbf{0.057} & $\infty$ & \textbf{0.034} & \textbf{0.064} & $\infty$ \\
    \textsc{PrivateRank} & 0.0314 & 0.055 & 1682 & \textbf{0.034} & 0.062 & 1682 \\
    \textsc{PrivateWalk} & 0.0273 & 0.049 & 154.2 & 0.029 & 0.054 & 44.0 \\
    \textsc{Consul} & \textbf{0.0321} & \textbf{0.057} & \textbf{19.6} & 0.033 & 0.060 & \textbf{4.6} \\ \bottomrule
\end{tabular}
\newline
\vspace*{0.1 in}
\newline
\begin{tabular}{lcccccc} \toprule
    & \multicolumn{3}{c}{Amazon} & \multicolumn{3}{c}{LastFM} \\
    \cmidrule(lr{1.0em}){2-4} \cmidrule(lr{1.0em}){5-7}
    & nDCG $\uparrow$ & Recall $\uparrow$ & Access $\downarrow$ & nDCG $\uparrow$ & Recall $\uparrow$ & Access $\downarrow$ \\ \midrule
    Oracle & \textbf{0.0326} & \textbf{0.057} & $\infty$ & \textbf{0.0652} & \textbf{0.111} & $\infty$ \\
    \textsc{PrivateRank} & 0.0325 & \textbf{0.057} & 1171 & 0.0641 & 0.107 & 1507 \\
    \textsc{PrivateWalk} & 0.0217 & 0.048 & 135.7 & 0.0424 & 0.080 & 74.1 \\
    \textsc{Consul} & 0.0310 & 0.052 & \textbf{7.3} & 0.0639 & 0.107 & \textbf{6.5} \\ \bottomrule
\end{tabular}
    \label{tab: performance}
\end{table*}

\subsection{(RQ1) Feature Reverse Engineering} \label{sec: experiments-reverse}

\noindent \textbf{Setup.} We use Adult dataset \citep{dua2017uci}\footnote{\url{https://archive.ics.uci.edu/ml/datasets/adult}} in this experiment. This dataset contains demographic data such as age, sex, race, and income. Although this is originally a census dataset, we use this with talent search in mind. Specifically, we regard a record of a person as an item and construct a provider's recommender system that recommends similar people on each person's page. We use age and capital-gain, which can be confidential information, as features. We remove the highest and lowest values as outliers because they are clipped in the dataset. We take a logarithm for capital-gain because of its high dynamic range. We normalize the features and recommend $K$-nearest neighbor people with respect to the Euclid distance of the features. We recover the features solely from the recommendation results using LOF \citep{terada2014local}.

\vspace{0.1in}
\noindent \textbf{Results.} The left panel of Figure \ref{fig: reverse} visualizes the original data, which are not revealed to us. The middle panel of Figure \ref{fig: reverse} is the observed $k$-NN recommendation network, which is the only information we can use in this experiment. The right panel of Figure \ref{fig: reverse} is the recovered features. We apply the optimal similarity transformation that minimizes the L2 distance to the original embedding for visibility. We also highlight items $1, 2, 3, 4,$ and $5$ in red. These results show that the confidential information is accurately recovered. These results validate that the recommendation network, which is observable by an end-user, contains sufficient information for building user-side recommender systems.

\vspace{0.1in}
\noindent \textbf{Discussion on the Similarity Transformation.} As we discussed in Section \ref{sec: recover}, the similarity transformation factor (e.g., rotation) cannot be determined solely from the $k$-NN graph. When we need to recover exact information, we may be able to determine the similarity transformation factor by registering a few items with known features to the item database, e.g., signing up at the talent search service. As the degree of freedom is at most $6$ dimensions in the case of two features, a few items suffice. As the exact recover is not the original aim of this study, we leave exploring more realistic approaches for recovering the exact information as future work.

\subsection{(RQ2) Performance} \label{sec: experiments-performance}

\noindent \textbf{Setup.} We use Adult dataset, MovieLens100k \citep{harper2016movielens}, Amazon Home and Kitchen \citep{he2016ups, mcauley2015image}, and LastFM \footnote{\url{https://grouplens.org/datasets/hetrec-2011/}} datasets following \citep{sato2022private}.

\noindent \textbf{Adult dataset.} In this dataset, an item represents a person, and the sensitive attribute is defined by sex. We use the nearest neighbor recommendations with demographic features, including age, education, and capital-gain, as the provider's official recommender system. The label of an item (person) represents whether the income exceeds \$50\,000 per year. The accuracy for item $i$ represents the ratio of the recommended items for item $i$ that have the same label as item $i$. The overall accuracy is the average of the accuracy of all items.

\noindent \textbf{MovieLens dataset.} In this dataset, an item represents a movie. We consider two ways of creating protected groups, (i) oldness: We regard movies released before 1990 as a protected group, and (ii) popularity: We regard movies with less than $50$ reviews as the protected group. We use Bayesian personalized ranking (BPR) \citep{rendle2009bpr} for the provider's recommender system, where the similarity of items is defined by the inner product of the latent vectors of the items, and the top-$K$ similar items are recommended. We use the default parameters of Implicit package\footnote{\url{https://github.com/benfred/implicit}} for BPR. We measure nDCG@$K$ and recall@$k$ as performance metrics following previous works \citep{krichene2020sampled, he2017neural, rendle2009bpr, sato2022private}. Note that we use the full datasets to compute nDCG and recall instead of employing negative samples to avoid biased evaluations \citep{krichene2020sampled}.

\noindent \textbf{LastFM and Amazon dataset.} In these datasets, an item represents a music and a product, respectively. We regard items that received less than $50$ interactions as a protected group. We extract $10$-cores for these datasets by iteratively discarding items and users with less that $10$ interactions. We use BPR for the provider's official recommender system, as on the MovieLens dataset. We use nDCG@$K$ and recall@$k$ as performance metrics.

In all datasets, we set $K = 10$ and $\tau = 5$, i.e., recommend $5$ protected items and $5$ other items. Note that all methods, \textsc{Consul} and the baselines, are guaranteed to generate completely balanced recommendations, i.e., they recommend $5$ protected items and $5$ other items, when we set $K = 10$ and $\tau = 5$. We checked that the results were completely balanced. So we do not report the fairness scores but only report performance metrics in this experiment. 

\begin{figure}[tb]
\centering
\includegraphics[width=\hsize]{./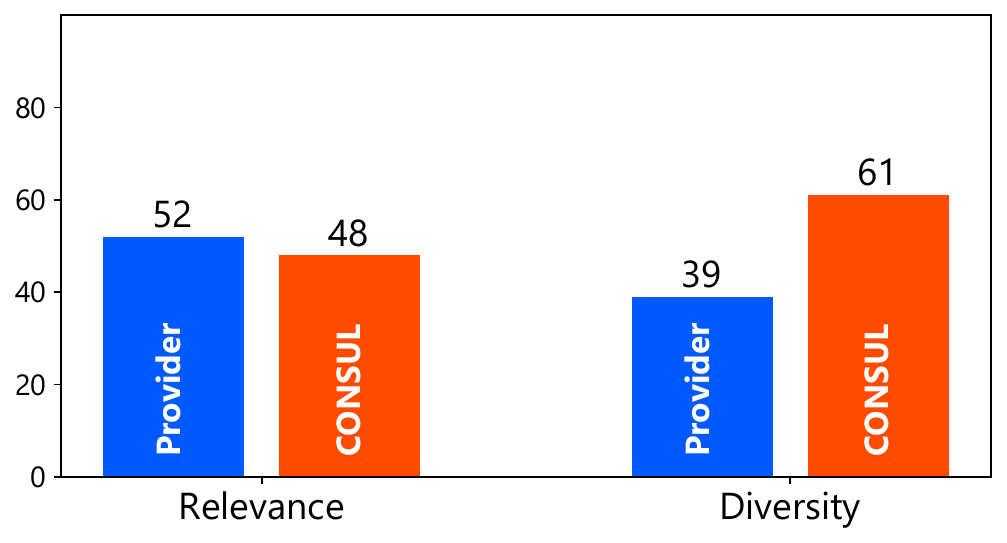}
\caption{\textbf{User Study}. Each value shows the number of times the method is chosen by the crowd workers in $100$ comparisons. Relevance: The crowd workers are asked which recommendation is more relevant to the source item. Diversity: The crowd workers are asked which recommendation is more diverse. \textsc{Consul} is on par or slightly worse than the provider in terms of relevance, while it clearly has more diversity.}
\label{fig: user}
\end{figure}

\vspace{0.1in}
\noindent \textbf{Methods.} We use \textsc{PrivateRank} and \textsc{PrivateWalk} \citep{sato2022private} as baseline methods. We use the default hyperparameters of \textsc{PrivateRank} and \textsc{PrivateWalk} reported in \citep{sato2022private}. Note that Sato \cite{sato2022private} reported that \textsc{PrivateRank} and \textsc{PrivateWalk} were insensitive to the choice of these hyperparameters. In addition, we use the oracle method \citep{sato2022private} that uses the complete similarity scores used in the provider's recommender system and adopts the same fair post-processing as \textsc{PrivateRank}. The oracle method uses hidden information, which is not observable by end-users, and can be seen as an ideal upper bound of performance. We tried to apply ETP but found that LOE \citep{terada2014local} did not finish within a week even on the smallest dataset, MovieLens 100k. As ETP is at most as communication efficient as \textsc{PrivateRank}, we did not pursue the use of ETP further in this experiment.

\vspace{0.1in}
\noindent \textbf{Results.} Table \ref{tab: performance} reports performances and efficiency of each method. First, \textsc{Consul} is the most efficient method and requires $10$ times fewer queries to the official recommender system compared with \textsc{PrivateWalk}. Note that \textsc{PrivateWalk} was initially proposed as an efficient method in \citep{sato2022private}. Nevertheless, \textsc{Consul} further improves communication cost by a large margin. Second, \textsc{Consul} performs on par or slightly worse than Oracle and \textsc{PrivateRank}, whereas \textsc{PrivateWalk} degrades performance in exchange for its efficiency. In sum, \textsc{Consul} strikes an excellent trade-off between performance and communication costs.

\vspace{0.1in}
\noindent \textbf{Discussion on the Real-time Inference.} It is noteworthy that \textsc{Consul} makes recommendations with only $6$ to $7$ accesses to item pages in Amazon and FastFM. This efficiency enables it to recommend items in real-time when the user visits an item page. This property of \textsc{Consul} is important because nowadays many recommendation engines are very frequently updated, if not in real time, and maintaining the recommendation graph is costly. \textsc{Consul} can make a prediction on the fly without managing the recommendation graph offline. By contrast, \textsc{PrivateWalk} requires close to 100 accesses, which prohibit real-time inference and thus requires to manage the recommendation graph offline. This is a clear advantage of \textsc{Consul} compared to the existing methods.

\begin{table}[t]
    \caption{Recommendations for ``Terminator 2: Judgment Day.'' The common items are shown in black, and the differences are shown in blue and red for visibility. In this example, \textsc{Consul} was chosen by the crowd workers in terms of \emph{both} relevance and diversity. \textsc{Consul} retrieves classic sci-fi movies in addition to contemporary sci-fi movies while Provider recommends only contemporary movies.}
    \centering
    \scalebox{0.95}{
    \begin{tabular}{cc} \toprule
        \multicolumn{2}{c}{Terminator 2: Judgment Day (1991)} \\ \midrule
        Provider & \textsc{Consul} \\ \midrule
        Total Recall (1990) & Total Recall (1990) \\
        The Matrix (1990) & The Matrix (1990)  \\
        The Terminator (1990) & The Terminator (1990) \\
        {\color[HTML]{005AFF} Jurassic Park (1993)} & {\color[HTML]{FF4B00} Alien (1979)} \\
        {\color[HTML]{005AFF} Men in Black (1997)} & {\color[HTML]{FF4B00} Star Wars: Episode IV (1977)} \\
        {\color[HTML]{005AFF} The Fugitive (1993)} & {\color[HTML]{FF4B00} Star Trek: The Motion Picture (1979)} \\ \bottomrule
    \end{tabular}
    }
    \label{tab: example}
\end{table}

\subsection{(RQ3) User Study} \label{sec: experiments-user}

We conducted user studies of \textsc{Consul} on Amazon Mechanical Turk.

\vspace{0.1in}
\noindent \textbf{Setup.} We used the MovieLens 1M dataset in this experiment. We chose a movie dataset because crowd workers can easily interpret the results without much technical knowledge \citep{serbos2017fairness}. We choose $100$ movies with the most reviews, i.e., $100$ most popular items, as source items. We use BPR for the provider's recommender system as in the previous experiment. A movie is in a protected group if its release date is more that $10$ years earlier or later than the source items. We set $K = 6$ and $\tau = 3$ in this experiment. We conduct two user studies. We show the crowd workers the two recommendation lists side by side as in Table \ref{tab: example}, and in the first study, we ask the workers which list is more relevant to the source item, and in the second study, we ask them which list is more divergent in terms of release dates. We show the list without any colors or method names, and thus, the workers do not know which method offers which list. To mitigate the position bias, we show the provider's recommender system on the left panel for $50$ trials, and \textsc{Consul} on the left panel for $50$ trials. The two experiments are run in different batches so that one result does not affect the other result.

\vspace{0.1in}
\noindent \textbf{Results.} Figure \ref{fig: user} reports the number of times each method was chosen. This shows that \textsc{Consul} is on par or slightly worse than the provider's recommender system in terms of relevance. Meanwhile, the crowd workers found \textsc{Consul} to be more divergent. This shows that \textsc{Consul} indeed improves fairness (here in terms of release dates) and recommends divergent items. We stress that \textsc{Consul} does not necessarily compensate the relevance to improve fairness. Table \ref{tab: example} shows an actual example where the crowd workers found \textsc{Consul} to be \emph{both} relevant and divergent. \textsc{Consul} succeeded in recommending \emph{classic} sci-fi movies, whereas the provider's recommended only \emph{contemporary} movies, which have indeed high scores from the perspective of collaborative filtering. This result shows that \textsc{Consul} can retrieve information that the provider recommender system cannot.

We note that the official and user-side recommender systems are not alternatives to each other. Rather, a user can consult both official and user-side recommender systems. The diversity of \textsc{Consul} is preferable even if it degrades the relevance to some extent because official and user-side recommender systems can complement the defects of one another.

\subsection{(RQ4) Case Study in the Real World} \label{sec: twitter}

\begin{table}[t]
    \caption{Recommendations for ``Hugh Jackman'' on the real-world Twitter environment. The provider's official system recommends only male users and is not fair with respect to gender. Both \textsc{PrivateWalk} and \textsc{Consul} succeed completely balanced recommendations. All the recommended users by \textsc{Consul} are actors/actresses and related to the source user, Hugh Jackman. Besides, the number of accesses to the official system is much less than in \textsc{PrivateWalk}.}
    \centering
    \scalebox{0.8}{
    \begin{tabular}{ccc} \toprule
        \multicolumn{3}{c}{Hugh Jackman} \\ \midrule
        Provider & \textsc{PrivateWalk} & \textsc{Consul} \\
        N/A & 23 accesses & \textbf{5 accesses} \\ \midrule
        Chris Hemsworth ({\color[HTML]{FF4B00} man}) & Ian McKellen ({\color[HTML]{FF4B00} man}) & Chris Hemsworth ({\color[HTML]{FF4B00} man}) \\
        Chris Pratt ({\color[HTML]{FF4B00} man}) & Zac Efron ({\color[HTML]{FF4B00} man}) & Chris Pratt ({\color[HTML]{FF4B00} man})  \\
        Ian McKellen ({\color[HTML]{FF4B00} man}) & Seth Green ({\color[HTML]{FF4B00} man}) & Ian McKellen ({\color[HTML]{FF4B00} man}) \\
        Zac Efron ({\color[HTML]{FF4B00} man}) & Dana Bash ({\color[HTML]{005AFF} woman}) & Brie Larson ({\color[HTML]{005AFF} woman}) \\
        Patrick Stewart ({\color[HTML]{FF4B00} man}) & Lena Dunham ({\color[HTML]{005AFF} woman}) & Danai Gurira ({\color[HTML]{005AFF} woman}) \\
        Seth Rogen ({\color[HTML]{FF4B00} man}) & Jena Malone ({\color[HTML]{005AFF} woman}) & Kat Dennings ({\color[HTML]{005AFF} woman}) \\ \bottomrule
    \end{tabular}
    }
    \label{tab: twitter}
\end{table}

Finally, we show that \textsc{Consul} is applicable to real-world services via a case study. We use the user recommender system on Twitter, which is in operation in the real-world. We stress that we are not employees of Twitter and do not have any access to the hidden data stored in Twitter. The results below show that we can build a recommender system for Twitter even though we are not employees but ordinary users. The sensitive attribute is defined by gender in this experiment. Table \ref{tab: twitter} shows the results for the account of ``Hugh Jackman.'' While the official recommender system is not fair with respect to gender, \textsc{Consul}'s recommendations are completely balanced, and \textsc{Consul} consumes only $5$ queries, which is $4$ times more efficient than \textsc{PrivateWalk}. Note that ETP and \textsc{PrivateRank} require to download all users in Twitter, and thus they are infeasible for this setting. These results highlight the utility of \textsc{Consul} in real-world environments.

\section{Related Work}

\noindent \textbf{User-side Information Retrieval.} User-side (or client-side) information retrieval \citep{sato2022private, sato2022retrieving, sato2022clear} aims to help users of web services construct their own information retrieval and recommender systems, whereas traditional methods are designed for the developers of the web services. This setting has many unique challenges including limited resources and limited access to databases. Our study focuses on user-side recommender systems, a special case of user-side information retrieval problems. User-side information retrieval systems are closely related to focused crawling  \citep{chakrabarti1999focused, mccallum2000automating, johnson2003evolving, baezayates2005crawling, guan2008guide}, which aims to retrieve information of specific topics \citep{chakrabarti1999focused, mccallum2000automating}, popular pages \citep{baezayates2005crawling}, structured data \citep{meusel2014focused}, and hidden pages \citep{barbosa2007adaptive}. The main difference between focused crawling and user-side information retrieval is that focused crawling consumes a great deal of time, typically several hours to several weeks, whereas user-side information retrieval aims to conduct real-time inference. Our proposed method is highly efficient as shown in the experiments and enables inference even when costly crawling is prohibitive.

\vspace{0.1in}
\noindent \textbf{Fairness in Recommender Systems.} As fairness has become a major concern in society \citep{united2014big, executive2016big}, many fairness-aware machine learning algorithms have been proposed \citep{hardt2016equality, kamishima2012fairness, zafar2017fairness}. In particular, fairness with respect to gender \citep{zehlike2017fair, singh2018fairness, xu2020algorithmic}, race \citep{zehlike2017fair, xu2020algorithmic}, financial status \citep{fu2020fairness}, and popularity \citep{mehrotra2018towards, xiao2019beyond} is of great concern. In light of this, many fairness-aware recommendation algorithms have been proposed \citep{kamishima2012enhancement, yao2017beyond, biega2018equity, milano2020recommender, sato2022enumerating}. Fair recommender systems can be categorized into three groups \citep{bruke2017mltisided}. C-fairness concerns fairness for users; it, for example, ensures that female and male users should be treated equally. P-fairness concerns fairness for items; it, for example, ensures that news related to Republicans and Democrats are treated equally. CP-fairness is the combination of the two. In this study, we focus on P-fairness following \citep{sato2022private}. A notable application of P-fair recommender systems is in the job-market \citep{geyik2019fairness, geyik2018building}, where items correspond to job-seekers, and the sensitivity attributes correspond to gender and race. In contrast to traditional P-fair recommender systems \citep{ekstrand2018exploring, beutel2019fairness, mehrotra2018towards, liu2019personalized}, which are designed for the developers of the services, our proposed method is special in that it is designed for the users of the services. Note that fairness is closely related to topic diversification \citep{ziegler2005improving} by regarding the topic as the sensitive attribute, and we consider the diversity of recommended items in this study as well.

\vspace{0.1in}
\noindent \textbf{Steerable Recommender Systems.} The reliability of recommender systems has attracted a lot of attention \citep{tintarev2007survey, balog2019transparent}, and steerable recommender systems that let the users modify the behavior of the system have been proposed \citep{green2009generating, balog2019transparent}. User-side recommender systems also allow the users modify the recommendation results. However, the crucial difference between steerable and user-side recommender systems are that steerable recommender systems must be implemented by a service provider, whereas user-side recommender systems can be built by arbitrary users even if the official system is an ordinary (non-steerable) one. Therefore, user-side recommender systems can expand the scope of steerable recommender systems by a considerable margin \citep{sato2022private}.

\vspace{0.1in}
\noindent \textbf{Metric Recovery.} We utilize the results on metric recovery when we derive the feasibility of user-side recommender systems. Recovering hidden representations has a long history, including the well-known multi-dimensional scaling \citep{kruskal1964multidimensional, agarwal2007generalized} and Laplacian Eigenmaps \citep{belkin2003laplacian}. In particular, our discussions rely on the metric recovery theory on unweighted $k$-NN graphs \citep{alamgir2012shortest, hashimoto2015metric, luxburg2013density, terada2014local}. To the best of our knowledge, this work is the first to connect the theory to security, fairness, or user-side aspects of recommender systems.

\section{Conclusion}

We first saw that interested users can estimate hidden features of items solely from recommendation results. This theory indicates that the raw recommendation results contain sufficient information to build user-side recommender systems and elucidates the rigorous feasibility of building user-side recommender systems without using log data. However, this approach is not practical due to its high communication costs. To design practical user-side recommender systems, we proposed three desirable properties of user-side recommender systems: consistency, soundness, and locality. We found that existing user-side recommender systems lack at least one property. We, therefore, proposed \textsc{Consul}, a user-side recommender system that satisfies the three desired properties. We then demonstrated empirically that confidential features can indeed be recovered solely from recommendation results. We also confirmed that our proposed method is much more communication efficient than the existing methods while retaining a high degree of accuracy. Finally, we conducted user studies with crowd workers and confirmed that our proposed method provided diverse items compared with provider's official recommender systems.

\chapter{Retrieving Black-box Optimal Images from External Databases}


\begin{table*}[tb]
    \centering
    \caption{Symbols, Definitions, and Examples.}
    \scalebox{0.6}{
    \begin{tabular}{lll} \toprule
        \textbf{Symbol} & \textbf{Definition} & \textbf{Example} \\ \midrule
        $\mathcal{X}$ & The set of images in the image database & The set of all images uploaded on Flickr \\
        $\mathcal{T}$ & The set of tags & The set of all tags on Flickr \\
        $f\colon \mathcal{X} \to \mathbb{R}$ & The objective black-box function & A deep neural network that computes a preference score of an image \\
        $\mathcal{O}\colon \mathcal{T} \to \mathcal{X} \times 2^{\mathcal{T}}$ & The oracle for image search & A wrapper of \texttt{flickr.photos.search} API \\
        $\mathcal{T}_\text{ini} \subseteq \mathcal{T}$ & Initially known tags & $100$ popular tags obtained by \texttt{flickr.tags.getHotList} API \\
        $B \in \mathbb{Z}_+$ & A budget of oracle calls & $B = 500$ \\ \bottomrule
    \end{tabular}
    }
    \label{tab: symbol}
\end{table*}

\section{Introduction}

As the amount of information on the Internet continues to drastically increase, information retrieval algorithms are playing more important roles. In a typical situation, a user of a system issues a query by specifying keywords, and an information retrieval algorithm retrieves the optimal items with respect to the query words. Here, the retrieval algorithm is designed by the service provider, not by the user. The uses of information systems have become divergent, and various retrieval algorithms have therefore been proposed, e.g., a cross-modal image search \cite{cao2016deep, kordan2018deep}, complex query retrieval \cite{nie2012harvesting}, and conversational recommender systems \citep{sun2018conversational, christakopoulou2016towards}.

Deep neural networks have achieved state-of-the-art performances in computer vision tasks \cite{krizhevsky2012imagenet, he2016deep}, notably image retrieval \cite{babenko2015aggregating, gordo2016deep, bell2015learning, niu2018neural}. In a conventional setting of image retrieval, algorithms assume that they have full access to the image database. A straightforward method under this setting is to evaluate all images in the database and return the one with the maximum score. When the image database is extremely large, two-step methods are used, i.e., a handful of images are retrieved through fast retrieval algorithms such as a nearest neighbor search, and the results are ranked using sophisticated algorithms. Hash coding further improves the effectiveness and efficiency \cite{lai2015simultaneous, liu2016deep}.

In this study, we consider information retrieval under a different scenario. Whereas most existing studies have focused on how a service provider can improve the search algorithms, we focus on how a user of a service can effectively exploit the search results. Specifically, we consider a user of a service builds their own scoring function. The examples of the scoring functions are as follows.

\noindent \textbf{Example 1 - Favorite Image Retrieval:} A user trains deep neural networks using a collection of favorite images found in different services and wants to retrieve images with similar properties in a new service.

\noindent \textbf{Example 2 - Similar Image Retrieval:} Deep convolutional neural networks are known to have a superior ability to extract useful image features \cite{babenko2015aggregating, gatys2016image, bell2015learning}. Although some online services provide similar image search engines, users do not have full control of the search. For example, even if the service provides a similar texture-based image search engine, some users may want to retrieve similar images based on the semantics. The user-defined score function allows image searches at a finer granularity.

\noindent \textbf{Example 3 - Fair Image Retrieval:} A search engine can be unfair to some protected attributes. For example, when we search for images through a query ``president,'' an image search engine may retrieve only male president images \cite{singh2018fairness}. Some users may want to use their own scoring function, e.g., scoring male and female images equally.

Recent advancements in deep learning, such as self-supervised contrastive learning \cite{chen2020simple, he2020momentum} and meta-learning \cite{finn2017model}, enable the training of deep neural networks with a few labeled samples. In addition, many pre-trained models for various vision tasks have been released on the Internet. Machine learning as a service platform, such as Microsoft Azure Machine Learning Studio and AWS Machine Learning, has also made it easy to build a machine learning model. These techniques and services enable an individual to easily build their own black-box scoring function. Suppose a user has already built a black-box function she wants to optimize. How can she retrieve optimal images with respect to this function from an image database on the Internet? 

Under this setting, an individual cannot evaluate all images in the image database because it contains a significant number of images. It is also impossible for an individual to build a search index (e.g., a hash index) because of both limited access to the database and the insufficient computational resources of the individual. These limitations prohibit the use of standard image retrieval algorithms.

In this chapter, we assume that a user can access the database through a search query alone and that a tight query budget exists. In addition, we assume little knowledge about the database to apply our method to a new environment. We formulate the problem through the lens of the multi-armed bandit problem and propose a query efficient algorithm, \textsc{Tiara}, with the aid of pre-trained representative word embeddings. We confirm the effectiveness of the proposed method under various settings, including the online Flickr environment.

It is noteworthy that Bayesian optimization and optimization on deep neural networks \cite{erhan2009visualizing, simonyan2014deep, nguyen2016synthesizing} also aim to optimize black-box or complex functions. However, these methods assume a continuous and simple optimization domain, typically the entire Euclidean space $\mathbb{R}^d$, a hypercube $[0, 1]^d$, or a unit ball $\{x \mid \|x\| \le 1\}$. By contrast, we aim to \emph{retrieve} optimal images from the \emph{fixed database}. Therefore, the optimization domain is a discrete set of images. Although there are several methods for discrete black-box optimization \cite{baptista2018baysian, oh2019combinatorial}, they also assume that the optimization domain is simple, e.g., $\{0, 1\}^d$, or at least they have full access to the optimization domain. Under our setting, an algorithm does not even know the entire optimization domain, i.e., the image database. This limitation causes numerous challenges, as we describe in the following sections.

The contributions of this chapter are summarized as follows:

\begin{itemize}
    \item We formulate the black-box optimal image retrieval problem. This problem examines how a user of an online service can effectively exploit a search engine, whereas most existing studies focus on how a service provider can improve a search engine.
    \item We propose \textsc{Tiara}, an effective method for this problem for the first time. \textsc{Tiara} is a general algorithm that works in various situations and retrieves optimal black-box images with few API queries.
    \item We investigate the effectiveness of \textsc{Tiara} using many real-world data. Notably, we conduct ``in the wild'' experiments on the real-world Flickr environment and confirm that \textsc{Tiara} can be readily used in real-world applications.
\end{itemize}

\noindent \textbf{Reproducibility:} Our code is available at \url{https://github.com/joisino/tiara}.

\section{Background}

\subsection{Notations}

Let $\mathcal{X}$ be a set of images in the image database. We do not know the exact set of $\mathcal{X}$. Let $f: \mathcal{X} \to \mathbb{R}$ be a black-box function that evaluates the value of an image. For example, $f$ measures the preference of the user in favorite image retrieval and measures the similarity in similar image retrieval. The notations are summarized in Table \ref{tab: symbol}.

\subsection{Problem Formulation}

In this section, we formulate the problem of black-box optimal image retrieval. We observe that many image databases, such as Flickr and Instagram, support (hash) tag-based search. We assume that we can search for images by specifying a tag through an API. For example, in the Flickr case, we use the \texttt{flickr.photos.search} API.

For a formal discussion, we generalize the function of tag search APIs. Let $\mathcal{T} \subset \Sigma^*$ be the set of tags, where $\Sigma^*$ is the set of strings. We formalize a tag search API as a (randomized) oracle $\mathcal{O}$ that takes a tag as input and returns an image with the tag. We assume that $\mathcal{O}$ always returns different images even when we query the same tag twice. In addition to the image itself, we assume that the oracle $\mathcal{O}$ returns the set of tags of the image. Therefore, for any tag $t \in \mathcal{T}$, $\mathcal{O}(t) \in \mathcal{X} \times 2^\mathcal{T}$. Let $\mathcal{O}(t).\texttt{image} \in \mathcal{X}$ denote the returned image and $\mathcal{O}(t).\texttt{tags} \in 2^\mathcal{T}$ denote the returned tags. As the oracle returns an image with the query tag, $t \in \mathcal{O}(t).\texttt{tags}$ always holds.

We find that myriad tags exist in real-world services, and we cannot know the entire tag set. Therefore, we assume that we know only a fraction $\mathcal{T}_\text{ini}$ of the tag set. Here, $\mathcal{T}_\text{ini}$ can be constructed by browsing the online service or retrieving popular tags through an API, e.g., \texttt{flickr.tags.getHotList}, in the Flickr example.

We assume that there is a budget $B \in \mathbb{Z}_+$ for the oracle call. For example, in the Flickr case, There is an API rate limit of $3600$ calls per hour. Thus, it is natural to assume that we can use at most $3600$ API calls in one task. If we retrieve many images within a short period of time, the API limits become tighter for each retrieval. To summarize, the black-box optimal image retrieval problem is formalized as follows.

\begin{tcolorbox}[colframe=gray!20,colback=gray!20,sharp corners]
\textbf{Black-box optimal image retrieval.}

\textbf{Given:}
\begin{itemize}
    \item Black box function $f\colon \mathcal{X} \to \mathbb{R}$
    \item Oracle $\mathcal{O}\colon \mathcal{T} \to \mathcal{X} \times 2^{\mathcal{T}}$
    \item Known tags $\mathcal{T}_\text{ini} \subseteq \mathcal{T}$
    \item Budget $B \in \mathbb{Z}_+$
\end{itemize}

\textbf{Goal:} Find an image $x \in \mathcal{X}$ in the image database with as high an $f(x)$ as possible within $B$ accesses to the oracle.
\end{tcolorbox}

\section{Proposed Method}

In this section, We introduce our proposed method, \textsc{Tiara} (\underline{T}ag-based \underline{i}m\underline{a}ge \underline{r}etriev\underline{a}l).

\subsection{Bandit Formulation} \label{sec: bandit}

We first propose regarding the black-box optimal image retrieval problem as a multi-armed bandit problem \cite{lattimore2020bandit, slivkins2019introduction}. Specifically, we regard a tag as an arm, the budget $B$ as the time horizon, and the objective value $f(\mathcal{O}(t).\texttt{image})$ as the reward when we choose arm $t$. This formulation enables us to use off-the-shelf multi-armed bandit algorithms, such as UCB \cite{auer2002finite}, $\varepsilon$-greedy, and Thompson sampling \cite{thompson1933likelihood}. This formulation is the basics of our proposed algorithm. However, there are several challenges to a black-box optimal image retrieval problem. First, we do not initially know all arms but only a fraction $\mathcal{T}_\text{ini}$ of arms. Considering $\mathcal{T}_\text{ini}$ alone leads to suboptimal results because $\mathcal{T}_\text{ini}$ does not contain the best arm in general. Second, myriad arms exist, and the number $|\mathcal{T}|$ of arms is larger than budget $B$ in practice. Standard multi-armed bandit algorithms first explore all arms once. However, they are unable to even finish this initial exploration phase under a tight budget constraint.

The first problem is relatively easy to solve. We obtain tags $\mathcal{O}(t).\texttt{tags}$ when we choose arm $t$. These tags may contain new tags. We can add such tags into the known tag set and gradually grow the known set. A good bandit algorithm will choose relevant tags, and the returned tags will contain relevant tags. For example, suppose that the black-box function $f$ prefers cat images. A good bandit algorithm will choose ``cat'' and ``animal'' tags and obtain cat images accompanied by many cat-related tags. Even if the budget is so tight that we cannot know all tags within the budget, a good bandit algorithm ignores irrelevant tags and prioritizes the collection of many relevant tags. The second problem is essential and difficult to solve. We investigate how to improve the query efficiency using tag embeddings in the following sections.

\vspace{0.1in}
\noindent \textbf{Discussion on the max-bandit problem.} It should be noted that standard bandit algorithms aim to maximize a cumulative reward or regret, whereas the black-box optimal image retrieval problem aims to find an image with the maximum value. These goals are not exactly the same. The problem of obtaining a maximum reward is known as a max $K$-armed bandit \cite{cicirello2005max} or extreme bandit \cite{carpentier2014extreme}. However, we adhere to the standard bandit setting in this study for the following reasons: First, we assume that the reward distribution is Gaussian-like, whereas existing max-$K$-armed bandit algorithms \cite{cicirello2005max, carpentier2014extreme} mainly assume that rewards are drawn from extreme value distributions, such as the GEV distribution. We find that this is not the case in the applications considered herein, i.e., image retrieval. Second, under our setting, an arm with a high expected value often leads to a high maximum value because such an arm usually represents the concept captured by the black-box function $f$. By contrast, max-bandit algorithms focus on detecting heavy tail arms with possibly low average rewards. We find that this is not necessary for our setting. Finally, we experimentally confirm that our proposed method already shows a superior empirical performance even though it does not directly maximize the maximum objective value. We leave the exploration of the max-bandit algorithms in our setting for future work.

\subsection{Tag Embedding} \label{sec: embedding}

\setlength{\textfloatsep}{5pt}
\begin{algorithm2e}[t]
\caption{\textsc{Tiara}}
\label{algo}
\DontPrintSemicolon 
\nl\KwData{Known tags $\mathcal{T}_\text{ini} \subseteq \mathcal{T}$, Black box function $f\colon \mathcal{X} \to \mathbb{R}$, Budget $B \in \mathbb{Z}_+$, Oracle $\mathcal{O}\colon \mathcal{T} \to \mathcal{X} \times 2^{\mathcal{T}}$, Tag Embedding $\{\boldv_t \in \mathbb{R}^d \mid t \in \mathcal{T}\}$, Regularization coefficient $\lambda \in \mathbb{R}_+$, Exploration coefficient $\alpha \in \mathbb{R}_+$.}
\nl\KwResult{An image $x \in \mathcal{X}$ in the image database with as high $f(x)$ as possible.}
    \nl $\boldA \leftarrow \lambda \boldI_{d \times d}$ \tcp*{Initialize $\boldA$}
    \nl $\boldb \leftarrow \mathbf{0}_{d}$  \tcp*{Initialize $\boldb$}
    \nl $\mathcal{T}_\text{known} \leftarrow \mathcal{T}_\text{ini}$  \tcp*{Initialize with the initial tags}
    \nl \For{$i \gets 1$ \KwTo $B$}{
    \nl     $s_t \leftarrow \boldv_t^\top \boldA^{-1} \boldb + \alpha \sqrt{\boldv_t^\top \boldA^{-1} \boldv_t} \quad \forall t \in \mathcal{T}_\text{known}$ \tcp*{Scores}
    \nl     $t_i \leftarrow \argmax_{t \in \mathcal{T}_\text{known}} s_t$ \tcp*{Choose the best arm}
    \nl     $r_i \leftarrow \mathcal{O}(t_i)$ \tcp*{Issue an query}
    \nl     \For{$t \in r_i.\texttt{tags}$}{
    \nl         $\boldA \leftarrow \boldA + \boldv_t \boldv_t^\top$ \tcp*{Update $\boldA$}
    \nl         $\boldb \leftarrow \boldb + f(r_i.\texttt{image}) \cdot \boldv_t$ \tcp*{Update $\boldb$}
    \nl         $\mathcal{T}_\text{known} \leftarrow \mathcal{T}_\text{known} \cup \{t\}$ \tcp*{Insert $t$ to $\mathcal{T}_\text{known}$}
            }
        }
    \nl \textbf{return} $\argmax_{x \in \{r_i.\texttt{image} \mid i = 1, \cdots, B\}} f(x)$ \tcp*{Best image}
\end{algorithm2e}

Because we cannot choose each arm even once on average, we need to estimate the value of each arm without observing the reward from it. We estimate the value of one arm from the rewards obtained from the other arms. The key is how to define the similarities of the arms. As the challenge here, we assume a new image database, and owing to the tight API limit, it is difficult to learn the similarities from the current environment in an online manner. To tackle this problem, we utilize external resources. We find that a tag in an image database is usually described through natural language and typically is a word or composition of words. We use a pre-trained word embedding \cite{mikolov2013efficient, pennington2014glove} to define the similarities between tags. Specifically, we first decompose a tag into a bag of words by non-alphabetical symbols, e.g., a white space.
We use the mean of the word embeddings in the bag of words as the tag embedding.

Owing to the tag embeddings, we can infer the value of an arm from similar arms. For example, if the reward from a ``cat'' tag is high (resp. low), we can assume the black-box function is highly (resp. rarely) related to cats, and we can infer that similar tags with similar embeddings, such as ``Aegean cat'' and ``animal,'' are also valuable (resp. irrelevant) without actually querying them.

\subsection{Tiara}

Our proposed method combines the aforementioned tag embeddings with a bandit algorithm. Specifically, we utilize LinUCB \cite{li2010contextual}. Because there are no contextual features, we use only features of arms. We define the feature of an arm as the tag embedding introduced in Section \ref{sec: embedding} and apply LinUCB to this feature. We call this variant \textsc{Tiara}-S, where S stands for ``single'' and ``simple.'' However, we found that there are too many tags, and LinUCB is still inefficient for learning a reward because of the tight query budget and limited training samples. To improve the query efficiency, we use another signal from the oracle. The oracle $\mathcal{O}$ returns not only an image but also the tags of the returned image. We assume that these tags have similar average rewards and add these tags into the training dataset. Specifically, when we query tag $t \in \mathcal{T}$, we use $\{(s, f(\mathcal{O}(t).\texttt{image})) \mid s \in \mathcal{O}(t).\texttt{tags}\}$ as training data, whereas \textsc{Tiara}-S uses only $\{(t, f(\mathcal{O}(t).\texttt{image}))\}$. As we will see in the experiment, this technique significantly improves query efficiency and performance. Algorithm \ref{algo} shows the pseudo-code of \textsc{Tiara}.

\vspace{0.1in}
\noindent \textbf{Time Complexity.} The bottleneck of the computation is in Line 7. The inverse matrix $\boldA^{-1}$ can be efficiently computed in an iterative manner using the Sherman–Morrison formula, i.e.,
\[ (\boldA + \boldv_t \boldv_t^\top)^{-1} = \boldA^{-1} - \frac{\boldA^{-1} \boldv_t \boldv_t^\top \boldA^{-1}}{1 + \boldv_t^\top \boldA^{-1} \boldv_t}. \]
This technique reduces the cubic dependence on the number of dimensions to the quadratic dependence. Therefore, the computation is in Line 7 takes $O(|\mathcal{T}_\text{known}| d^2)$ time. Let $T_\text{max}$ be the maximum number of tags of an image. The value of $T_\text{max}$ is typically $10$ to $100$. Because $|\mathcal{T}_\text{known}|$ increases by at most $T_\text{max}$ in an iteration, the total time complexity is $O((|\mathcal{T}_\text{ini}| + B T_\text{max}) B d^2)$ in the worst case. In our problem setting, $B$ is small (e.g., in the hundreds) because of the tight API limitation, and $|\mathcal{T}_\text{ini}| \approx 100$ and $d \approx 300$ are also small during the experiments. Therefore, the oracle calls in Line 9 become a bottleneck in the wall-clock time because it requires communication to the Internet. As the oracle calls are common in all methods, \textsc{Tiara} is sufficiently efficient. When the black-box function is complex, its evaluation can be a computational bottleneck as well. \textsc{Tiara} evaluates $f$ as few as $B$ times, which is the same as with other baseline methods. In addition, when the efficiency is insufficient, we can speed up \textsc{Tiara} through a lazy variance update \cite{desautels2012parallelizing} and by applying sub-sampling heuristics.\footnote{We do not adopt such techniques in the experiments.}

\vspace{0.1in}
\noindent \textbf{Discussion on graph-feedback bandits.} There are several multi-armed bandit algorithms with a so-called feedback graph setting \cite{mannor2011from, alon2013from, kocak2014efficient}. This is an intermediate setting between the bandit and full feedback. Specifically, it assumes that there is an underlying graph, where a node represents an arm, and when we choose arm $t$, we can also observe the rewards of the neighboring arms. The underlying graph can be directed and time-varying. Therefore, if we define the neighbor of $t$ as $\mathcal{O}(t).\texttt{tags}$, our problem is seen as a variant of the feedback graph setting. However, we do not employ the feedback graph framework for the following reasons: First, existing graph feedback bandit algorithms require the number of neighbors of each neighbor \cite{mannor2011from, alon2013from, kocak2014efficient}, i.e., $|f(\mathcal{O}(s).\texttt{tags})|, \forall s \in f(\mathcal{O}(t).\texttt{tags})$ in our setting. We need additional API queries to compute these values. Such additional queries are prohibitive because the API limit is tight under our setting. Second, graph feedback bandit algorithms assume that the reward feedback of neighboring arms is an unbiased estimate of the true rewards \cite{mannor2011from, alon2013from, kocak2014efficient}. However, $f(\mathcal{O}(t).\texttt{image})$ is not unbiased for arm $s \in \mathcal{O}(t).\texttt{tags}$. Therefore, we cannot enjoy the theoretical guarantees of the graph feedback bandits. Third, the improvement obtained by the feedback graph framework is on the order of $O(\sqrt{\alpha/n})$ \cite{kocak2014efficient}, where $n$ is the number of arms, and $\alpha$ is the size of the maximum independent set of the feedback graph. Under our setting, the sizes of the feedback graphs and $\alpha$ are still large in practice. How to leverage the existing graph-feedback bandit algorithms for our setting is an interesting area of future study.

\vspace{0.1in}
\noindent \textbf{Discussion on the reward estimation model.} \textsc{Tiara} uses LinUCB and a linear model for estimating the reward from the feature vector. In general, we can use any model for prediction. We use a linear model owing to the following reasons: First, complex models are difficult to train with a tight sample budget. Second, LinUCB empirically performs quite well under various tasks \cite{li2010contextual, krishnamurthy2016contextual} regardless of its suboptimal theoretical regret. Third, we use the average word embeddings as the feature vector of an arm. It has been confirmed that word embeddings are representative and that the weighted average of the word embeddings with a cosine similarity or liner models produce superior performances in many NLP tasks \cite{arora2017simple, shen2018baseline} and sometimes outperform even neural network-based methods.

\subsection{Visualization and Interpretation}

Although not our original goal, \textsc{Tiara} provides an interpretation of the black-box function $f$ as a byproduct. 

Deep neural networks suffer from interpretability issues for reliable decision making. There have been many interpretation methods developed for deep neural networks \cite{simonyan2014deep, sundararajan2017axiomatic}. We consider model-level interpretability \cite{simonyan2014deep, nguyen2016synthesizing, yuan2020xgnn}, i.e., interpreting what function each model represents. It is unclear what function each model represents by simply looking at the model parameters of deep models. Even if the model is prepared by the user, deep models sometimes behave in unexpected ways \cite{goodfellow2015explaining, madry2018towards}. Input instances that produce high values can be regarded as representations of the model \cite{simonyan2014deep, nguyen2016synthesizing, yuan2020xgnn}. \textsc{Tiara} can retrieve such images from external image databases. Compared to methods that use a fixed dataset, the search space of \textsc{Tiara} is extremely large. Therefore, there are more chances that relevant images will be found. In addition, tag scores of \textsc{Tiara} also provide another interpretation of the black-box function through words. We show that these tags are also beneficial for interpretability by visualizing word clouds in the experiments. The tag-based interpretability is beneficial for further exploration as well. When the result is unsatisfactory, the user can continue manual exploration from the tags with high scores.

\section{Experiments}

\begin{figure*}[t]
\centering
\includegraphics[width=\hsize]{./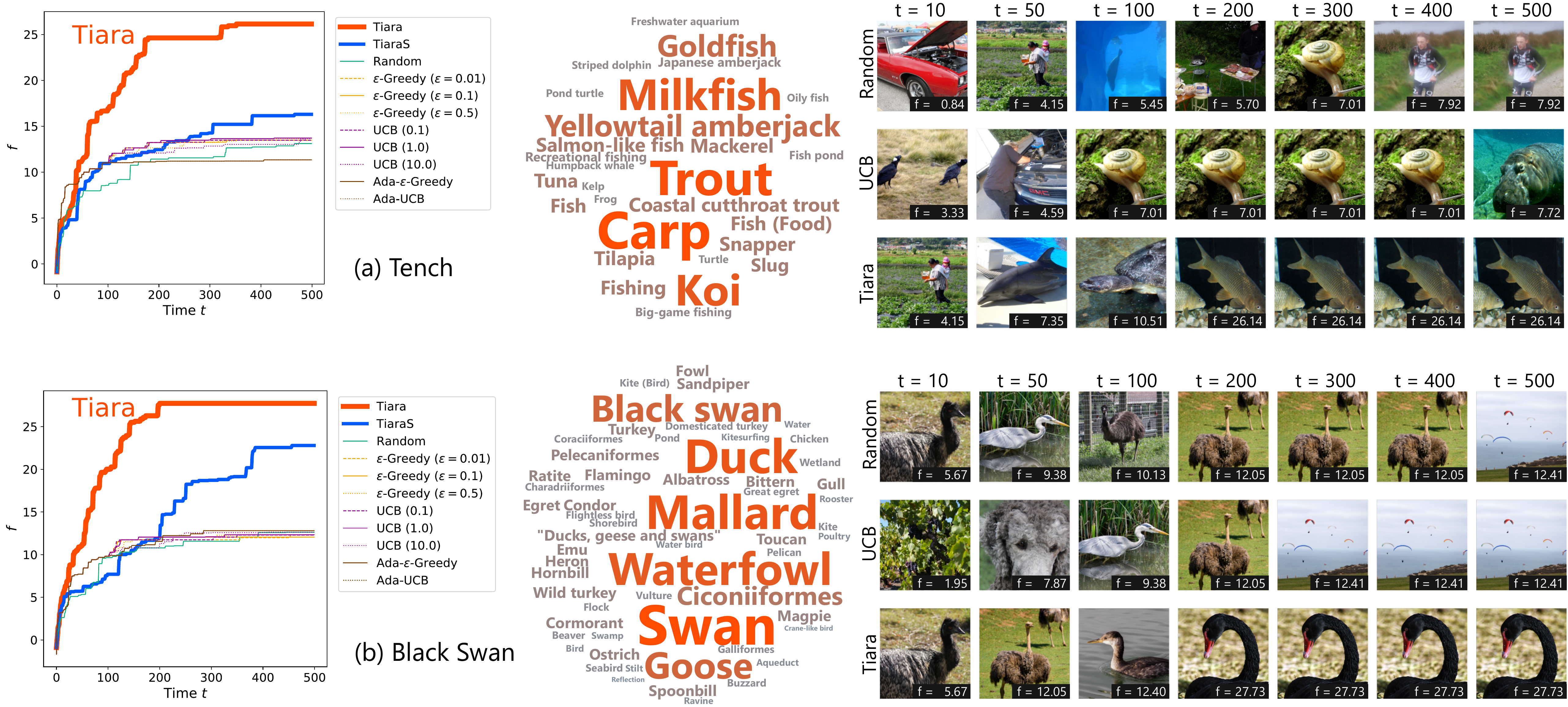}
\caption{Open Image Environment. (Top) Tench, (Bottom) Black Swan. (Left) Learning curves averaged over $10$ independent runs. (Mid) Word clouds that represent the scores computed by \textsc{Tiara} at the end of the last iteration. These visualizations provide an interprtability of the black-box function. (Right) Best images found at each iteration.}
\label{fig: openimage-tiara}
\end{figure*}

\begin{table*}[t]
    \centering
    \caption{Open Image Environment. Each score is the highest $f(x)$ found by an algorithm. The scores are averaged over $10$ independent runs, and the standard deviations are also reported. The highest score is highlighted by bold in each column.}
    \scalebox{0.65}{
\begin{tabular}{lccccc} \toprule
Method $\backslash$ Class & Tench & Black Swan & Tibetan Terrier & Tiger Beetle & Academic Gown \\ \midrule
Random & 13.11 $\pm$ 5.95 & 12.62 $\pm$ 4.83 & 13.16 $\pm$ 1.92 & 8.21 $\pm$ 1.85 & 12.63 $\pm$ 4.88 \\ 
$\epsilon$-Greedy ($\epsilon = 0.01$) & 13.48 $\pm$ 6.27 & 12.30 $\pm$ 4.94 & 13.76 $\pm$ 2.16 & 7.47 $\pm$ 2.53 & 14.83 $\pm$ 5.80 \\
$\epsilon$-Greedy ($\epsilon = 0.1$) & 13.48 $\pm$ 6.27 & 12.06 $\pm$ 4.99 & 13.76 $\pm$ 2.16 & 8.54 $\pm$ 3.19 & 14.83 $\pm$ 5.80  \\
$\epsilon$-Greedy ($\epsilon = 0.5$) & 13.64 $\pm$ 6.07 & 12.30 $\pm$ 4.96 & 13.76 $\pm$ 2.16 & 9.88 $\pm$ 3.39 & 13.10 $\pm$ 4.77 \\
UCB ($\alpha = 0.1$) & 13.48 $\pm$ 6.27 & 12.30 $\pm$ 4.94 & 13.76 $\pm$ 2.16 & 7.47 $\pm$ 2.53 & 14.83 $\pm$ 5.80 \\
UCB ($\alpha = 1.0$) & 13.71 $\pm$ 5.99 & 12.33 $\pm$ 4.94 & 13.76 $\pm$ 2.16 & 9.27 $\pm$ 3.73 & 14.83 $\pm$ 5.80 \\
UCB ($\alpha = 10.0$) & 13.68 $\pm$ 5.98 & 12.62 $\pm$ 4.83 & 13.76 $\pm$ 2.16 & 8.21 $\pm$ 1.85 & 13.05 $\pm$ 4.58 \\
Ada-$\epsilon$-Greedy ($\epsilon = 0.1$) & 11.34 $\pm$ 5.84 & 12.81 $\pm$ 4.97 & 13.28 $\pm$ 1.97 & 8.70 $\pm$ 1.36 & 15.62 $\pm$ 3.38 \\
Ada-UCB ($\alpha = 1.0$) & 11.34 $\pm$ 5.84 & 12.81 $\pm$ 4.97 & 13.28 $\pm$ 1.97 & 8.70 $\pm$ 1.36 & 15.62 $\pm$ 3.38 \\
TiaraS & 16.30 $\pm$ 8.05 & 22.79 $\pm$ 7.73 & 12.67 $\pm$ 4.23 & 15.13 $\pm$ 2.88 & 25.40 $\pm$ 1.11 \\
Tiara & \textbf{26.14 $\pm$ 0.00} & \textbf{27.73 $\pm$ 0.00} & \textbf{15.92 $\pm$ 0.00} & \textbf{16.91 $\pm$ 0.00} & \textbf{25.77 $\pm$ 0.00} \\ \bottomrule
    \end{tabular}
    }
\newline
\vspace*{0.1 in}
\newline
    \scalebox{0.65}{
\begin{tabular}{lccccccccccc} \toprule
Method $\backslash$ Class & Cliff Dwelling & Hook & Paper Towel & Slot Machine & Water Tower & Average \\ \midrule
Random & 12.63 $\pm$ 5.31 & 11.58 $\pm$ 1.51 & 10.40 $\pm$ 1.01 & 9.29 $\pm$ 1.68 & 12.09 $\pm$ 1.64 & 11.57 $\pm$ 0.89 \\ 
$\epsilon$-Greedy ($\epsilon = 0.01$) & 13.32 $\pm$ 6.07 & 11.17 $\pm$ 1.85 & 9.79 $\pm$ 0.84 & 8.55 $\pm$ 1.02 & 13.25 $\pm$ 1.41 & 11.79 $\pm$ 0.99 \\
$\epsilon$-Greedy ($\epsilon = 0.1$) & 13.06 $\pm$ 6.26 & 11.17 $\pm$ 1.85 & 9.66 $\pm$ 0.83 & 8.84 $\pm$ 0.88 & 13.25 $\pm$ 1.41 & 11.86 $\pm$ 1.08 \\
$\epsilon$-Greedy ($\epsilon = 0.5$) & 13.36 $\pm$ 6.02 & 11.64 $\pm$ 1.96 & 9.74 $\pm$ 0.95 & 8.42 $\pm$ 1.16 & 13.06 $\pm$ 1.63 & 11.89 $\pm$ 1.07 \\
UCB ($\alpha = 0.1$) & 13.32 $\pm$ 6.07 & 11.17 $\pm$ 1.85 & 9.79 $\pm$ 0.84 & 8.55 $\pm$ 1.02 & 13.25 $\pm$ 1.41 & 11.79 $\pm$ 0.99 \\
UCB ($\alpha = 1.0$) & 13.32 $\pm$ 6.07 & 11.17 $\pm$ 1.85 & 10.14 $\pm$ 0.98 & 8.42 $\pm$ 1.17 & 13.25 $\pm$ 1.41 & 12.02 $\pm$ 1.14 \\
UCB ($\alpha = 10.0$) & 13.36 $\pm$ 6.02 & 11.61 $\pm$ 2.00 & 9.77 $\pm$ 1.25 & 8.70 $\pm$ 1.50 & 13.25 $\pm$ 1.41 & 11.80 $\pm$ 0.87 \\
Ada-$\epsilon$-Greedy ($\epsilon = 0.1$) & 10.74 $\pm$ 2.13 & 11.09 $\pm$ 1.72 & 9.93 $\pm$ 0.81 & 8.43 $\pm$ 0.88 & 10.25 $\pm$ 1.34 & 11.22 $\pm$ 1.08 \\
Ada-UCB ($\alpha = 1.0$) & 10.74 $\pm$ 2.13 & 11.09 $\pm$ 1.72 & 9.93 $\pm$ 0.81 & 8.43 $\pm$ 0.88 & 10.25 $\pm$ 1.34 & 11.22 $\pm$ 1.08 \\
TiaraS & \textbf{22.21 $\pm$ 0.00} & 12.21 $\pm$ 1.65 & 9.73 $\pm$ 1.69 & 12.60 $\pm$ 2.56 & 14.81 $\pm$ 1.08 & 16.39 $\pm$ 1.21 \\
Tiara &\textbf{22.21 $\pm$ 0.00} & \textbf{14.06 $\pm$ 0.16} & \textbf{22.13 $\pm$ 0.00} & \textbf{16.77 $\pm$ 0.30} & \textbf{15.69 $\pm$ 0.00} & \textbf{20.33 $\pm$ 0.03} \\ \bottomrule
    \end{tabular}
    }
    \label{tab: openimage}
\end{table*}

We investigate the performance of \textsc{Tiara} using various real-world datasets.

\subsection{Experimental Setups}

\subsubsection{Baselines}

We use the following baselines.

\begin{itemize}
    \item \textbf{Random} queries random known tags.
    \item \textbf{$\varepsilon$-Greedy} is a bandit algorithm. This algorithm chooses the best tag with the highest empirical mean reward with a probability of $1 - \varepsilon$ and chooses a random tag with probability $\varepsilon$. The candidate pool is set to $\mathcal{T}_\text{ini}$.
    \item \textbf{UCB} is a bandit algorithm. The score of tag $t$ is the empirical mean reward plus $\alpha \sqrt{1/n_t}$, where $n_t$ is the number of observations from tag $t$, and $\alpha$ is a hyperparameter. UCB chooses the tag with the highest score. The candidate pool is set to $\mathcal{T}_\text{ini}$.
    \item \textbf{Ada-$\varepsilon$-Greedy} is a variant of $\varepsilon$-Greedy. This algorithm inserts new tags to the candidate pool $\mathcal{T}_\text{known}$ when new tags are found and chooses a tag from $\mathcal{T}_\text{known}$.
    \item \textbf{AdaUCB} is a variant of UCB. This algorithm inserts new tags to the candidate pool $\mathcal{T}_\text{known}$ when new tags are found and chooses a tag from $\mathcal{T}_\text{known}$.
    \item \textbf{\textsc{Tiara}-S} is a variant of \textsc{Tiara} that uses only the query tag for training.
\end{itemize}

\subsubsection{Hyperparameters}

We use $\lambda = 1$ and $\alpha = 0.01$ for \textsc{Tiara} and \textsc{Tiara}-S across all settings without further tuning. We will show that \textsc{Tiara} is insensitive to the choice of these hyperparameters over orders of magnitude in Section \ref{exp: sensitivity}. We use $300$-dimensional GloVe\footnote{\url{https://nlp.stanford.edu/projects/glove/}} trained using six billion Wikipedia 2014 + Gigaword 5 tokens for the word embeddings.

We report the performance with various hyperparameters for the baseline methods. Note that hyperparameter tuning is prohibitive in practice because of the tight API limit. If we apply hyperparameter tuning, we should use these query budgets for the main task instead. Therefore, this setting is slightly advantageous for the baseline methods.

\subsection{Open Images dataset}

\begin{figure}[t]
\centering
\includegraphics[width=\hsize]{./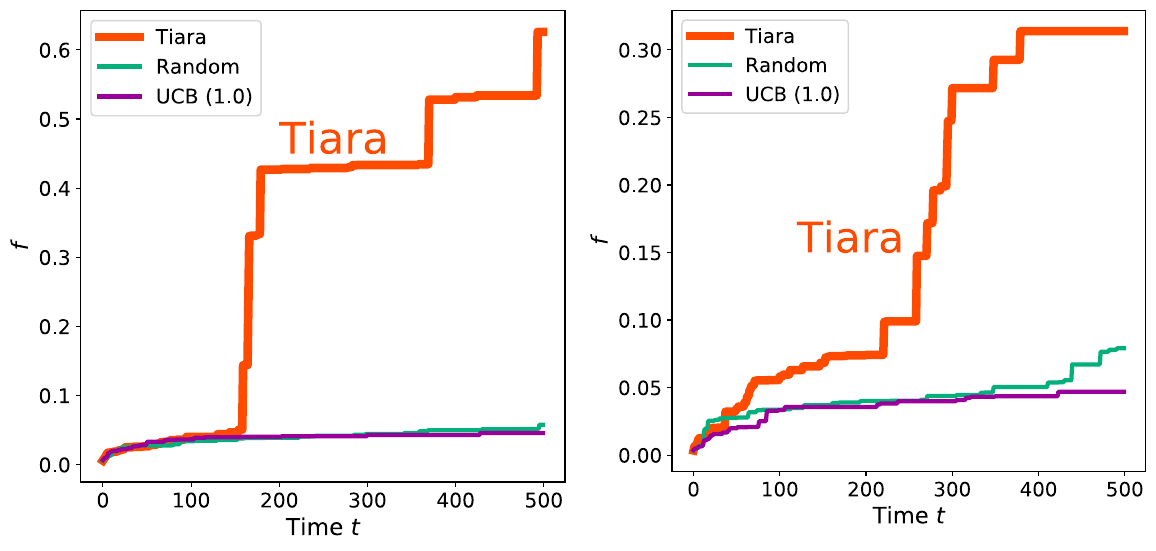}
\caption{Safebooru Environment. Learning curves averaged over $10$ independent runs for two source images. Visual results are available at the technical report \cite{sato2021retrieving} \url{http://arxiv.org/abs/2112.14921}.}
\label{fig: miku}
\end{figure}

We use the Open Images Dataset V6 \cite{kuznetsova2020open} to construct the environment for the first testbed. Each image in this dataset has multi-class annotations, such as ``cat,'' ``Aegean cat,'' and ``Pumpkin pie.'' An image has $8.8$ classes on average. We use these classes as tags. We utilize ResNet18 \cite{he2016deep} trained with ImageNet for the black-box functions. Specifically, for each class $c$ of ImageNet, we use the pre-softmax logit for $c$ as the back-box function. Among $1000$ classes of ImageNet, we use $10$ classes $c = 1, 101, \cdots, 901$, i.e., Tench, Black Swan, Tibetan Terrier, Tiger Beetle, Academic Gown, Cliff Dwelling, Hook, Paper Towel, Slot Machine, and Water Tower. Note that the retrieval algorithms do not know these class names, i.e., the function is a black-box. We use these class names for only evaluations and interpretations of the results. Note also that the set of tags (i.e., classes of the Open Image Dataset) differs from the set of classes of ImageNet.

We subsample $10{,}000$ test images from the Open Images Dataset and construct an environment. When we query tag (class) $t$, this environment returns a random image with class $t$ and the set of classes this image belongs to. We choose $100$ random tags as the initial known tags $\mathcal{T}_\text{ini}$ and set the budget to $B = 500$.

We run ten trials with different seeds. Table \ref{tab: openimage} reports the means and standard deviations of the best $f$ for each method within $B$ queries. The last column reports the average of ten classes. These results show that \textsc{Tiara} performs the best under all settings, and \textsc{Tiara}-S performs second best on average.

The middle panels in Figures \ref{fig: openimage-tiara} show the word clouds \footnote{\url{https://github.com/amueller/word_cloud}} generated by \textsc{Tiara}. The size and color of a tag represent the score of the tag at the final iteration. These scores provide interpretations for the black-box functions. For example, in the case of tench, fish-related tags, such as Trout, Carp, and Milkfish, have high scores. In the black swan case, bird-related tags, such as Swan, Waterfowl, and Duck, have high scores. We stress again that \textsc{Tiara} does not use the ground truth class name but instead treats $f$ as a black-box function. Even when the ground truth class name is unavailable to us, the word cloud generated by \textsc{Tiara} and the retrieved images indicate that $f$ is fish-related in the first example and bird-related in the second example with significant interpretability.

\begin{figure*}[t]
\centering
\includegraphics[width=\hsize]{./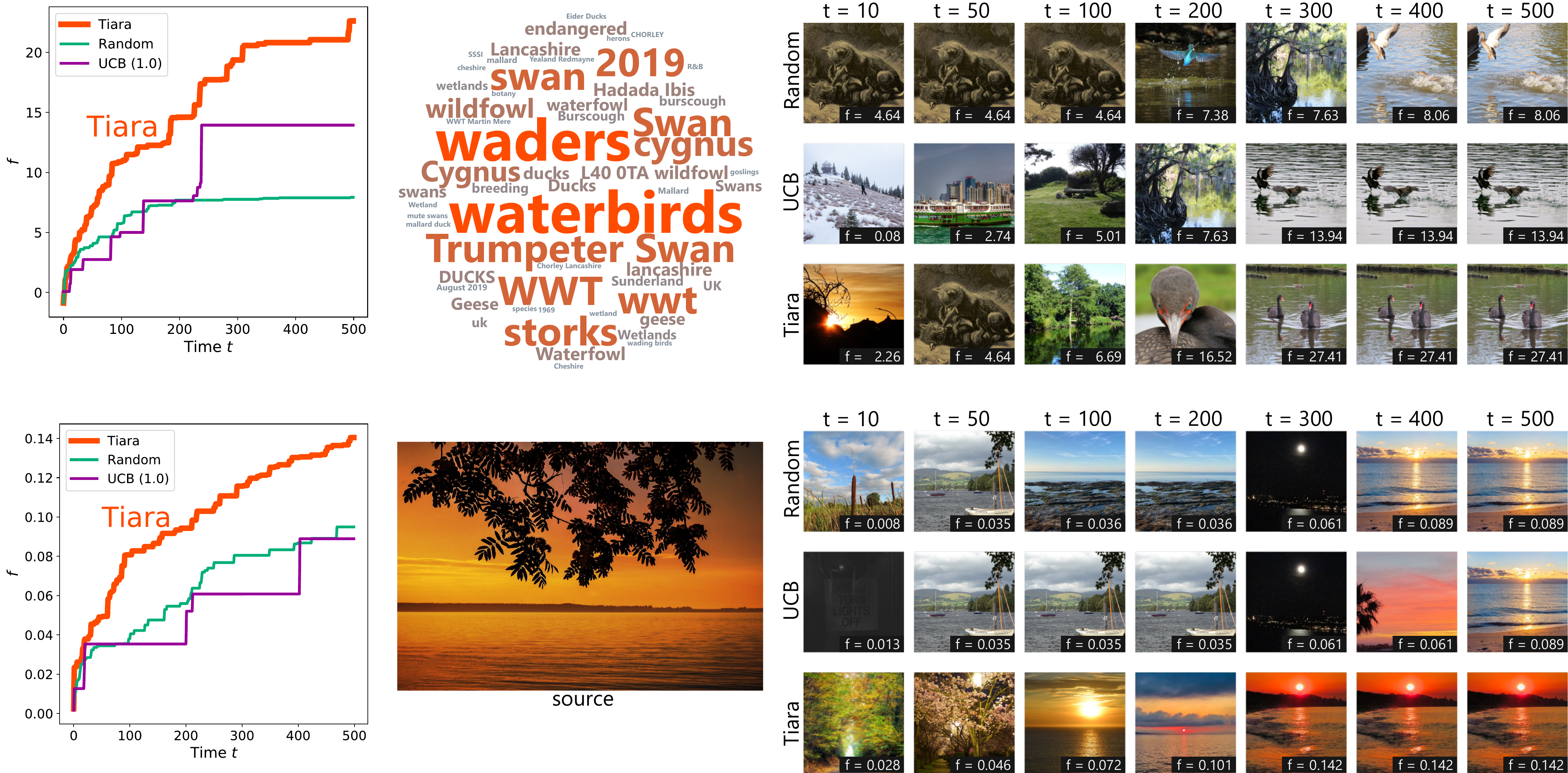}
\caption{Flickr Environment. (Top) Black swan. (Bottom) Similar image retrieval. (Left) Learning curves averaged over ten independent runs. (Top middle) Word cloud generated by \textsc{Tiara}. (Bottom middle) Source image. (Right) Best images found duing each iteration.}
\label{fig: flickr-tiara}
\end{figure*}

\subsection{Safebooru Environment}

The aim of this section is to confirm that \textsc{Tiara} is effective even in a completely different domain. We use Safebooru as a testbed. We use a dump retrieved on June 7, 2019 \footnote{\url{https://www.kaggle.com/alamson/safebooru}}. Each image has $34$ tags on average, such as ``smile,'' ``long hair,'' and ``blonde hair.'' We use illustration2vec \cite{saito2015illustration} to construct the black-box functions.

We use the latest $100{,}000$ images from this dataset and construct an environment. As a result of the broken links, this environment contains $81{,}517$ images in total. When we query tag $t$, this environment returns a random image with tag $t$ and the set of tags this image belongs to. We choose $100$ random tags as the initial known tags $\mathcal{T}_\text{ini}$ and set the budget to $B = 500$.

We conduct semantically similar image search experiments. We emphasize that although this environment does not provide an official image-based search, our algorithm enables such an image-based search. Given a source image $s$, we compute $4096$-dimensional embedding $\boldv_s \in \mathbb{R}^{4096}$ using the pre-trained illustration2vec model \footnote{\url{https://github.com/rezoo/illustration2vec}}. We use the Gaussian kernel between the embeddings of an input image $x$ and the source image as the black-box function, i.e., $f(x) = \exp(-\|\boldv_s - \boldv_x\|^2/\sigma^2)$, where $\boldv_x \in \mathbb{R}^{4096}$ is the embedding of input $x$ computed using the illustration2vec model, and $\sigma$ is the bandwidth of the Gaussian kernel. We set $\sigma = 100$ during this experiment. The more semantically similar the input image is to the source, the higher the value taken by this function.

We use two illustrations (visually shown in the technical report \cite{sato2021retrieving}) as the source images. The first source image is in the image database, and \textsc{Tiara} succeeds in retrieving the same image from the database. Even the second best image retrieved by \textsc{Tiara} is semantically similar to the source image, i.e., it depicts the same character with cherry blossom motifs, and has a higher objective value than the images retrieved by the baseline methods. For the second case, the source image is not in the database. Although not exactly the same, \textsc{Tiara} succeeds in retrieving a semantically similar image with the same characters and motifs, i.e., horn, rabbit, and costume.

These results show the flexibility of our framework such that \textsc{Tiara} can be applied to not only photo-like image databases but also illustration-like image databases using appropriate black-box functions. e.g., the illustration2vec model.

\subsection{Flickr Environment}

The aim of this section is to confirm that \textsc{Tiara} is effective and readily applicable to real-world environments. We use the online Flickr environment in operation as a testbed. We also use ResNet18 trained with ImageNet to construct the black-box functions. We implement oracle $\mathcal{O}$ by combining \texttt{flickr.photos.search} and \texttt{flickr.tags.getListPhoto} APIs. We use the \texttt{license='9,10'} option such that it returns only public domain or CC0 images. We also set the budget to $B = 500$. Because Flickr contains as many as \emph{ten billion} images, it is challenging to find relevant images within $B = 500$ queries. Each image has $15$ tags on average, including ``summer,'' ``water,'' and ``sea.'' 

We conduct two experiments in this environment. First, we conduct the same experiment as in the open image environment. The top row of Figure \ref{fig: flickr-tiara} shows the result for the black swan class. Compared to Figure \ref{fig: openimage-tiara}, \textsc{Tiara} in this environment learns more slowly than in the open image environment. We hypothesis that this is because the Flickr environment contains noisy tags annotated by users, which are occasionally described in foreign languages, whereas the open image environment contains only clean tags that indicate solid categories judged by annotators. Nevertheless, \textsc{Tiara} succeeds in retrieving black swan images in several hundred queries.

Second, we conduct similar image retrieval experiments as in the safebooru environment. We use the output of the penultimate layer of the pre-trained ResNet18 for the image embedding. We use the Gaussian kernel between the embeddings of an input image $x$ and the source image $s$ as the black-box function, i.e., $f(x) = \exp(-\|\boldv_s - \boldv_x\|^2/\sigma^2)$, where $\boldv_x \in \mathbb{R}^{512}$ is the embedding of input $x$ computed using ResNet18, and $\sigma$ is the bandwidth of the Gaussian kernel. We set $\sigma = 10$ in this experiment. Figure \ref{fig: flickr-tiara} (bottom, center) shows the source image. As the right panels show, \textsc{Tiara} succeeds in retrieving semantically similar images depicting a sunrise over the sea.

\begin{figure*}[t]
\centering
\includegraphics[width=0.32\hsize]{./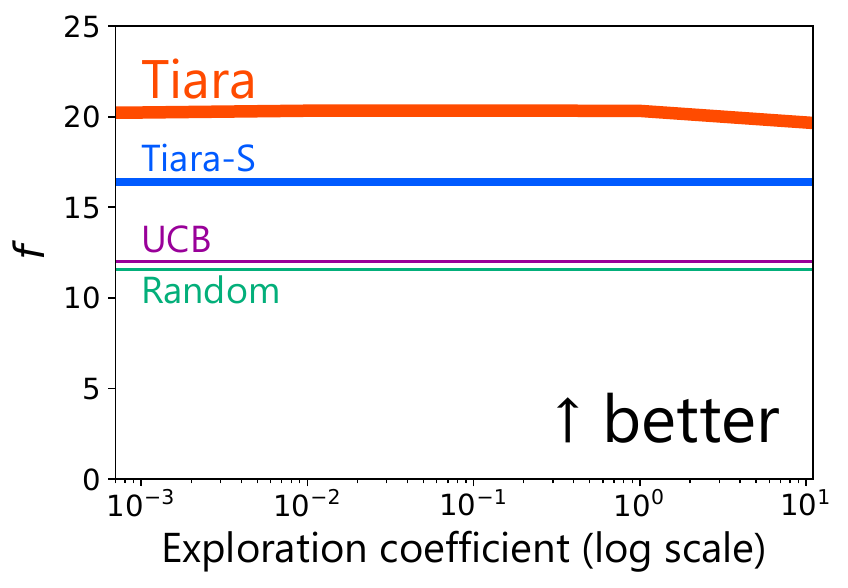}
\includegraphics[width=0.32\hsize]{./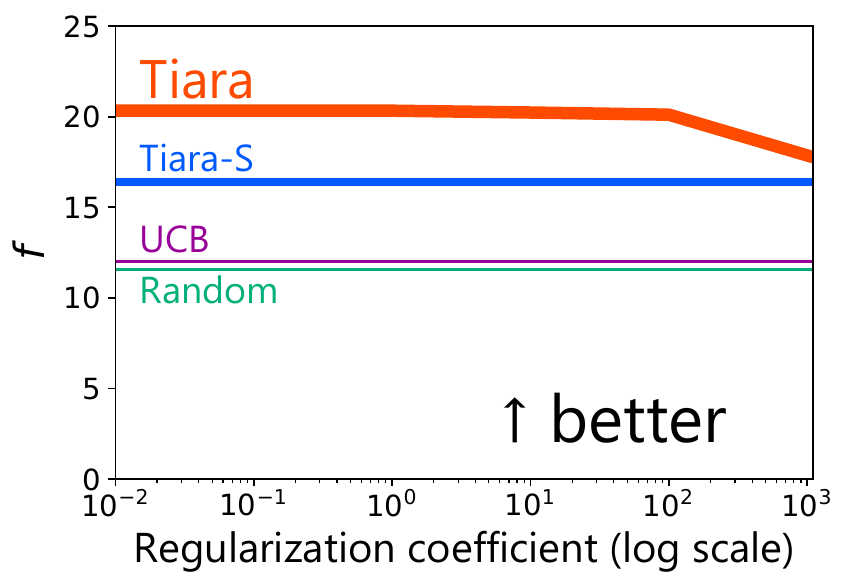}
\includegraphics[width=0.32\hsize]{./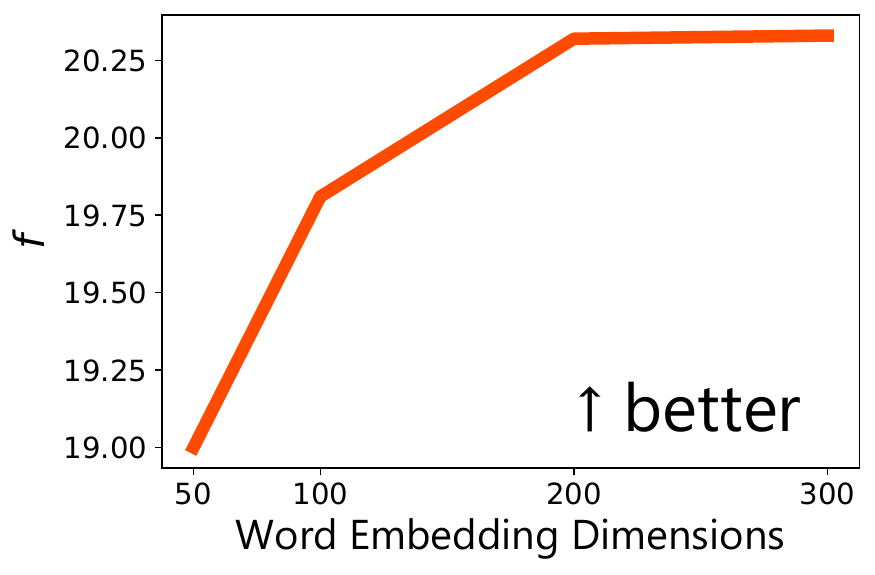}
\caption{Hyperparameter sensitivity. As the default settings, we set the exploration coefficient as $\alpha = 0.01$, regularization coefficient as $\lambda = 1.0$, and word embedding as $300$-dimensional Glove and change one configuration in each panel. These results show that \textsc{Tiara} is stable with respect to the hyperparameter choices across several orders of magnitude.}
\label{fig: hyperparameter}
\end{figure*}

\subsection{Hyperparameter Sensitivity} \label{exp: sensitivity}

We investigate the hyperparameter sensitivity of \textsc{Tiara} in this section. \textsc{Tiara} has two hyperparameters, i.e., exploration coefficient $\alpha$ and regularization coefficient $\lambda$. \textsc{Tiara} also has a choice of word embeddings. It is crucial to ensure stability with respect to the hyperparameter choices because tuning the hyperparameters in a new environment is difficult with tight API budgets. The left panel of Figure \ref{fig: hyperparameter} shows the average performance of \textsc{Tiara} in the Open Image Dataset environment with various values of $\alpha$ maintaining $\lambda = 1.0$ (default value). The $x$-axis is plotted on a log scale. We show the performances of \textsc{Tiara}-S, UCB, and Random for reference. In this plot, we do not tune the hyperparameters of \textsc{Tiara}-S accordingly to maintain the conciseness of the plot. This plot shows that \textsc{Tiara} is stable with respect to $\alpha$ over several orders of magnitude. The middle panel of Figure \ref{fig: hyperparameter} shows the average performance with various values of $\lambda$ maintaining $\alpha = 0.01$ (default value). The $x$-axis is plotted on a log scale. The result shows that \textsc{Tiara} is stable with respect to $\lambda$ over several orders of magnitude. The right panel of Figure \ref{fig: hyperparameter} shows the average performance with various dimensions of the GloVe word embedding maintaining $\alpha = 0.01$ and $\lambda = 1.0$ (default values). This result shows that \textsc{Tiara} performs better with higher dimensional embeddings, i.e., more expressive embeddings. It also indicates that \textsc{Tiara} indeed utilizes the word embedding geometry. We recommend using high-dimensional and strong word embeddings for \textsc{Tiara}.

\section{Related Work}

\subsection{Image Retrieval}

Image retrieval has been studied for decades. The main concern is how to retrieve relevant images \cite{babenko2015aggregating, kato2008can, leiva2011query} with efficiency \cite{lai2015simultaneous, liu2016deep}. Deep neural networks have been preferred for image retrieval in recent years because they can extract rich features including texture \cite{cimpoi2015deep}, style \cite{gatys2016image}, and semantics \cite{babenko2015aggregating}. Hash-based methods have also been extensively studied for their efficiency \cite{lai2015simultaneous, liu2016deep}. In addition, many extensions have been studied, such as multi-modal image search \cite{cao2016deep, kordan2018deep} and contextual retrieval \cite{xie2019improving}. We investigated the image retrieval problem from a different perspective. Specifically, we consider how an outsider can retrieve desirable images from an \emph{external} image database with as few queries as possible.

\subsection{Web Crawling}

Our problem setting can be seen as a crawling problem. Developing efficient web crawlers is a long-standing problem in the literature \cite{cho1998efficient, chakrabarti1999focused, diligenti2000focused, cho2003effective, castillo2005effective, pham2019bootstrapping}. In particular, focused crawling \cite{chakrabarti1999focused, mccallum2000automating, johnson2003evolving, baezayates2005crawling, guan2008guide} is relevant to our problem setting. Focused crawling aims to efficiently gather relevant pages by skipping irrelevant pages.

However, there are several differences between our study and existing focused crawling. First, focused crawling is used to search web pages by following WWW hyperlinks, whereas we search for \emph{images} from an \emph{external database} utilizing the tag search oracle. Thus, the existing crawling methods are not directly applicable to our setting. Second, we assume the API limit is extremely tight, e.g., $500$, whereas existing focused crawlers typically visit hundreds of thousands of pages. Third, existing focused crawlers focus on retrieving pages of a specific topic \cite{chakrabarti1999focused, mccallum2000automating}, popular pages \cite{baezayates2005crawling}, pages with structured data \cite{meusel2014focused}, or hidden web pages \cite{barbosa2007adaptive}. By contrast, deep convolutional neural networks in our setting realize rich vision applications, as we show in the experiments. These applications are qualitatively different from the existing focused crawlers and are valuable in their own right.

\subsection{Private Recommender Systems}

Private recommender systems \cite{sato2022private} aim to build a fair recommender system when the service provider does not offer a fair system. Our problem can also be seen as constructing a private recommender system when the black-box function is a recommendation score. There are several differences between our approach and private recommender systems. First, the existing methods, \textsc{PrivateRank} and \textsc{PrivateWalk}, assume the use of an item recommendation oracle, which is unavailable under our setting. Second, \textsc{Tiara} considers content-based retrieval, whereas \textsc{PrivateRank} and \textsc{PrivateWalk} mainly focus on a collaborative recommendation scenario. Third, our application is not limited to fairness, and we showed promising applications, including semantically similar image retrieval.

\section{Discussion and Limitations}

Although we use deep neural networks as a black-box function, our framework and the proposed method are not limited to deep neural networks. For example, we can use human judgment as the black-box function, i.e., when we evaluate $f(x)$, we actually ask a human viewer how much he/she likes image $x$. Because our proposed method requires several hundred evaluations for a single run, the current method is too inefficient for human-in-the-loop experiments. More efficient methods are important for developing such intriguing applications.

In this work, we have assumed that $f$ is a well-behaved function. It may be interesting to use buggy $f$ functions. For example, when we debug a deep neural network model $f$, applying \textsc{Tiara} to $f$ may reveal what $f$ has already learned and has not yet. We hypothesize that there are much room for further applications of \textsc{Tiara}.

We have assumed that the only way to access the image database is the tag search oracle $\mathcal{O}$. Although our method is general owing to this formulation, many other APIs are available in real applications, such as user-based searches, popularity ranking, and collaborative filtering recommendations. Utilizing richer information to enhance performance and query efficiency is important in practice. At the other extreme, dropping the assumptions of tag affinity and tag search APIs to make the method applicable to broader databases is also an intriguing direction.

Although we have focused on image retrieval problems, our formulation is not limited to such problems and can be applied to other domains such as music, document, and video retrieval problems. Exploring further applications of our framework is left as future work.

\section{Conclusion}

In this chapter, we formulated the problem of optimal image retrieval with respect to a given black-box function from an external image database. This problem enables each user to retrieve their preferable images from the Internet, even if the image database does not provide such features. We combined a bandit formulation with pre-trained word embeddings and proposed an effective retrieval algorithm called \textsc{Tiara}. Finally, we confirmed the effectiveness of \textsc{Tiara} using three environments, including an online Flickr environment.

\chapter{CLEAR: A Fully User-side Image Search System}


\section{Introduction}

A massive amount of information is uploaded on the Internet every day, and it becomes more important yet difficult to search for desired information from the flood of information. Thus, many functionalities of search engines have been called for, including multi-modal search \cite{cheng2009sketch, zha2009visual, cao2016deep, kordan2018deep} and fairness-aware systems \cite{singh2018fairness, biega2018equity, sato2022enumerating}. However, many information retrieval systems on the Internet have not adopted rich functionalities, and they usually accept simple text queries only. Even if a user of the service is unsatisfied with a search engine and is eager to enjoy additional functionalities, what he/she can do is limited. In many cases, he/she continues to use the unsatisfactory system or leaves the service.

\begin{figure}[p]
  \centering
    \includegraphics[width=\hsize]{./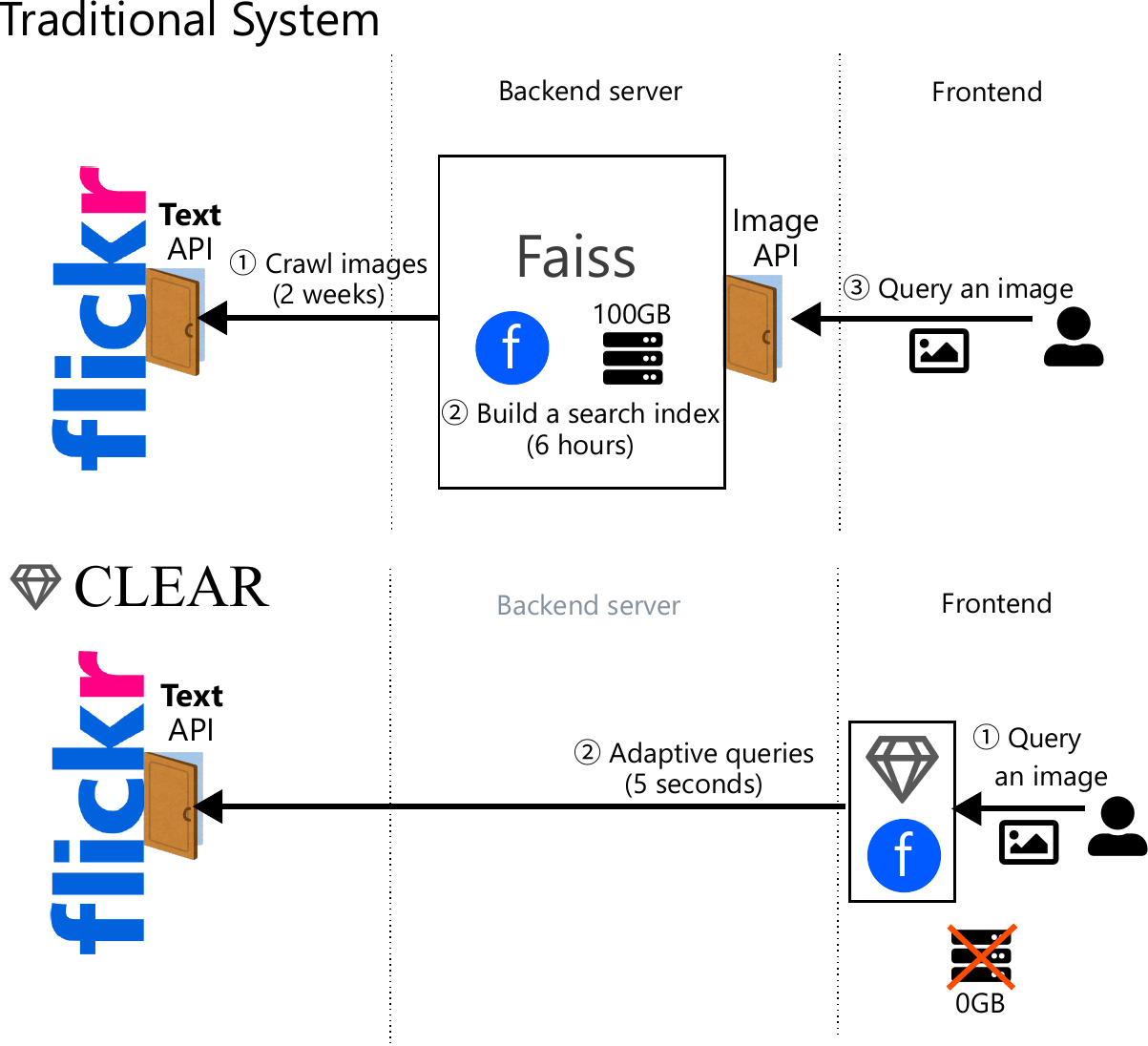}
\caption{Flickr does not provide an official similar image search engine or corresponding API. It is not straightforward for an end-user to build their own search system. (Top) A traditional system would first crawl images on Flickr and build a search index. It typically takes a few weeks. (Bottom) Our proposed approach does not require initial crawling or indexing. CLEAR issues adaptive queries to the text-based image search API, which is available. It realizes an image-based search combining several effective text queries. The f mark represents the score function. In the traditional system, the score function is tightly connected to the search index. It is costly to change it. By contrast, the score function is on the frontend in CLEAR. It can be seamlessly redefined.}
 \label{fig: illust}
\end{figure}

Let us consider Flickr\footnote{\url{https://www.flickr.com/}}, an image hosting service, as an example. Suppose we are enthusiastic users of Flickr. Although there are many amazing photos and communities on Flickr, we are not satisfied with the search engine of Flickr. Specifically, it accepts only text queries, and we cannot conduct a similar image search on Flickr. One possible yet rare solution to realize a similar image search is to join SmugMug, the owner company of Flickr, and implement the feature as an official software engineer. However, this choice is not realistic for an ordinary user.

Another possible solution is to build a search system by ourselves. We can crawl images from Flickr, build a search index by faiss \cite{johnson2019billion}, and host the search engine by ourselves. However, there are more than five billion photos on Flickr, and it typically takes a few weeks to crawl Flickr even if we selectively crawl pages with focused crawling techniques \cite{chakrabarti1999focused, mccallum2000automating, johnson2003evolving, pham2019bootstrapping}. Besides, it also requires a costly backend server to host a search engine. Therefore, this choice is neither practical for many users.

Recently, more practical user-side information retrieval algorithms have been proposed \cite{sato2022private, sato2022retrieving}. They run on a browser and do not need a backend server. However, the evaluations of existing works are conducted offline, and it is still not clear if this concept works in practice. Indeed, the official implementation of Tiara \cite{sato2022retrieving}\footnote{\url{https://github.com/joisino/tiara}}, a user-side image search algorithm, requires a few minutes for a single search, which significantly degrades user experiences and is not practical for a real-time search system. In this demonstration, we simplify and optimize Tiara and thereby show a practical user-side image search system online. \emph{This is the first practical implementation of a fully-user side search system}. We named our system CLEAR, which stands for \underbar{CL}ient-side s\underbar{EAR}ch. The features of CLEAR are as follows.
\begin{description}
\item[Lightweight.] As mentioned above and illustrated in Figure 1, traditional search systems require crawling the Web and building a search index, which takes many weeks and consumes much computational and network resources. By contrast, CLEAR does not require initial crawling or indexing, i.e., CLEAR incurs zero overhead. Each user can fork, deploy, and use CLEAR instantly. It is also easy to redefine their own scoring function. Thus, end-users can easily enjoy their own search systems.
\item[Fast Iteration.] When we update the feature extractor or scoring function, an ordinary system would need to rebuild the index. By contrast, as CLEAR does not use any search indices in the first place, we can seamlessly update the scoring function just by rewriting the JavaScript snippet. It accelerates the development of the desirable scoring function.
\item[Privacy-aware.] If we adopted a traditional system configuration with a backend server, we could snoop on the uploaded images in theory. By contrast, as CLEAR runs totally on the user-side, users do not have to worry about privacy issues.
\end{description}
In addition, we stress that user-side systems do not need to be used privately. Rather, each user can publish their own search engines with their own scoring function and their own interface. As CLEAR does not need a backend server, hosting a user-side system is much easier. For example, their own search system can be deployed on a static page hosting service such as Amazon S3 and GitHub Pages. We expect many characteristic user-side search systems will appear on the Internet, and it will become easy to find favorite engines. We hope CLEAR facilitates this trend.

Another use case of CLEAR is prototyping by official/unofficial developers. Even if their goal is to build a traditional system with a backend server, CLEAR accelerates the design of the score function. Besides, as CLEAR can be instantly deployed, interface designers can try and develop the system, and they can carry out user studies \emph{before} the backend engineers complete the system.

The online demo is available at \url{https://clear.joisino.net}. The source code is available at \url{https://github.com/joisino/clear}.

\section{Related Work}

User-side information retrieval systems \cite{sato2022private, sato2022retrieving, sato2022principled} enable each user to build their own system, whereas traditional systems are developed by service developers. For example, private recommender systems \cite{sato2022private} and Consul \cite{sato2022principled} turn recommendation results into fair and/or diverse ones even if the official recommender systems are not fair or diverse. Tiara \cite{sato2022retrieving} is the most relevant work to this demonstration. It realizes image retrieval based on user-defined score functions. The critical difference between Tiara and this demonstration is that the original paper of Tiara conducted only batch evaluations. Indeed, the official implementation of Tiara requires a few minutes for a single search on Flickr, which is too slow for real-time demonstration. We optimize the system and realize a real-time user-side search system.

A steerable system \cite{green2009generating, balog2019transparent} also allows users to customize the system. However, the critical difference with user-side systems is that steerable systems are implemented by official developers, and users cannot enjoy this feature if the official system is not steerable in the first place. By contrast, user-side systems turn ordinary systems into steerable ones on the user-side.

Another relevant realm is focused crawling \cite{chakrabarti1999focused, baezayates2005crawling, barbosa2007adaptive, guan2008guide, meusel2014focused}. In contrast to exhaustive crawling, focused crawling aims to retrieve only relevant information. Although these techniques accelerate crawling, they still require several hours to several weeks. Therefore, most users cannot afford to adopt such systems due to time, network, or computational resources. In stark contrast, our system does not require initial crawling at all.

\section{CLEAR}

\begin{figure}[t]
  \centering
    \includegraphics[width=\hsize]{./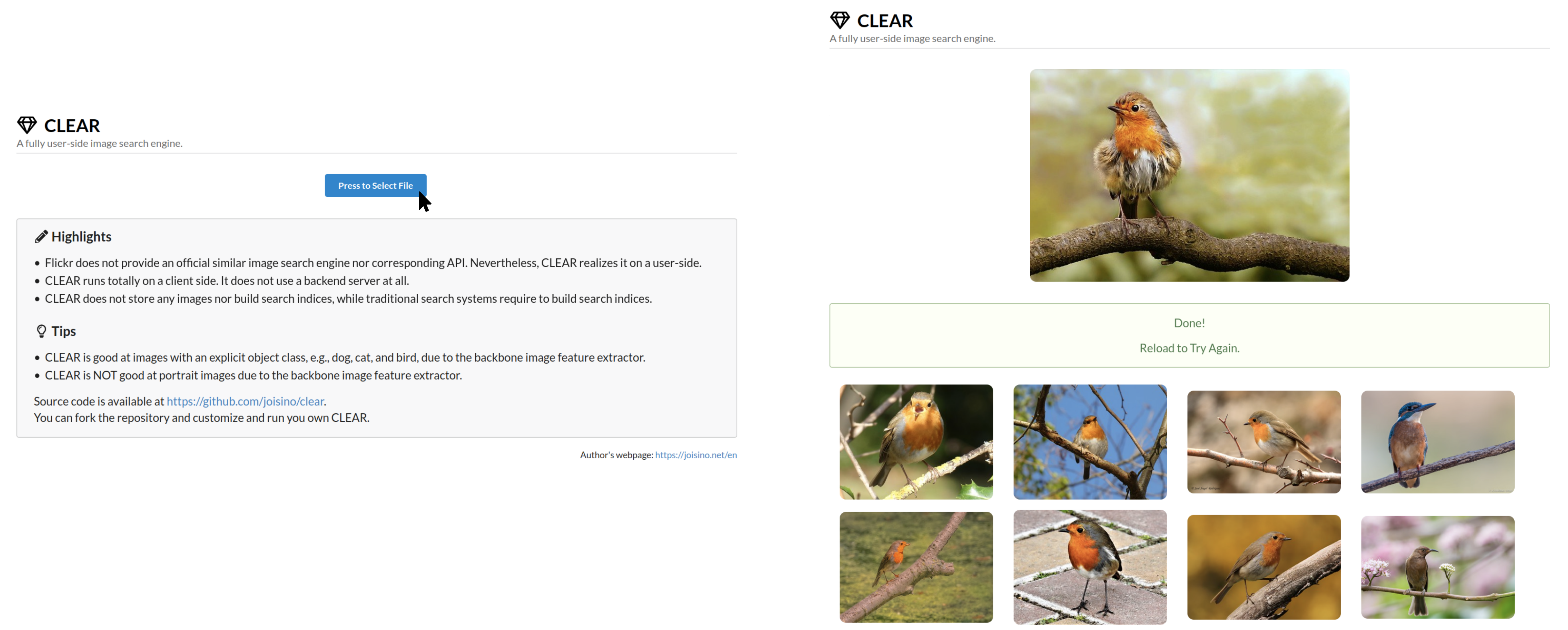}
\caption{Interface of CLEAR. Upload an image, and then CLEAR retrieves similar images from Flickr. The functionality of CLEAR is ordinary. The highlight lies rather in how it is realized and how easy deployment is. (Left) User interface. (Right) Search results.}
 \label{fig: interface}
\end{figure}

\subsection{User Interface}

The interface of CLEAR is simple (Figure \ref{fig: interface}). A user clicks the blue button and uploads an image; then CLEAR retrieves similar images from Flickr. The functionality is an ordinal similar image search. We stress that the highlight of CLEAR lies rather in how it is realized. It is remarkable that CLEAR works similarly to an ordinary system with a backend server.

\subsection{Algorithm}

\begin{figure}[t]
  \centering
    \includegraphics[width=\hsize]{./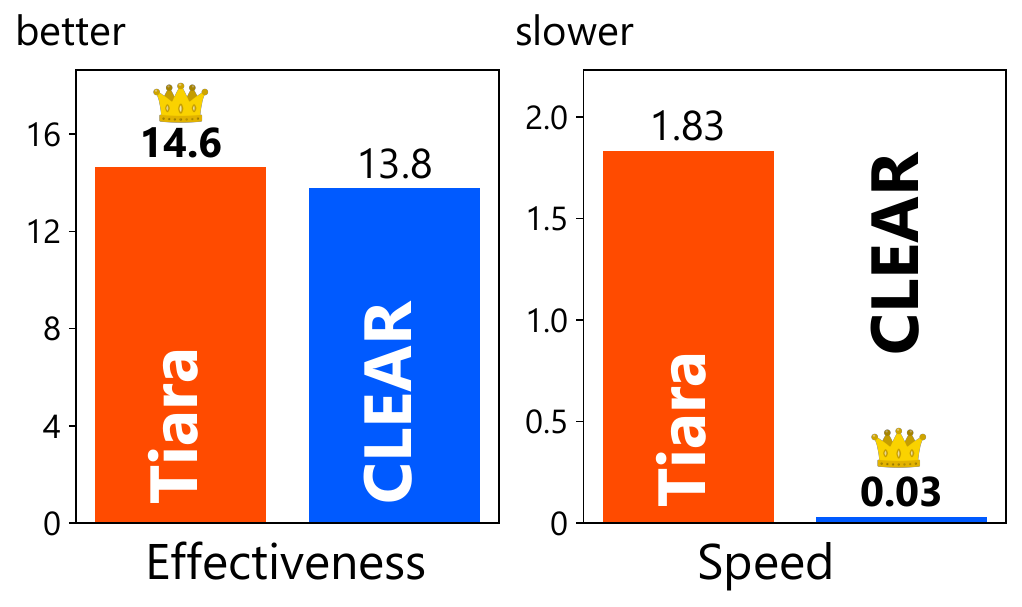}
\caption{The greedy algorithm is fast while the degradation of performance is slight. The better one is highlighted with bold style and a crown mark on each panel. (Left) Each value is the average score of the retrieved images. The higher, the better the retrieved images are. This shows that the greedy algorithm slightly degrades the performance. We confirmed that the retrieved images by both methods were visually comparable. (Right) The average running time in seconds. The greedy method is $60$ times more efficient.}
 \label{fig: experiments}
\end{figure}

\noindent \textbf{Preliminary: Tiara.} We first introduce Tiara \cite{sato2022retrieving}, on which CLEAR is based. Tiara assumes that we have a black-box score function $f$ that takes an image as input and outputs a real-valued score. This score function is typically realized by an off-the-shelf feature extractor such as ResNet and MobileNet. The aim of Tiara is to find images with high scores from an external image database, such as Flickr. The critical assumption is that we do not have direct access to the external database, but we can access it via a tag-based search API only. The core idea of Tiara is that it regards a tag as an arm and the score function $f$ as a reward and formulates this problem as a multi-armed bandit algorithm. Tiara uses the linear bandit algorithm \cite{li2010contextual} with a UCB-style acquisition function \cite{auer2002using}. Tiara utilizes the GloVe word embeddings \cite{pennington2014glove} for tag features to cope with the cold-start problem. Tiara realized an effective similar image search on Flickr with a similarity function $f$.

\vspace{0.1in}
\noindent \textbf{CLEAR.} We design the algorithm for CLEAR based on Tiara. Although the use of the bandit algorithm adopted in Tiara strikes a good tradeoff between effectiveness and efficiency, it consumes too much time, e.g., a few minutes, for a real-time search. Besides, the GloVe embeddings consume about $1$ GB, which makes it difficult to download the program on the client-side. We find that the greedy algorithm that always draws the current best arm is still effective while it's much more efficient. The greedy algorithm corresponds to setting $\lambda = 0.0$ in Tiara, i.e., no exploration. The greedy algorithm is efficient because (i) it does not need to compute the matrix inverse for computing the UCB score, (ii) it does not consume the query budget for exploration, and (iii) it does not require downloading word embeddings. Indeed, exploration may lead to better search results in the long run, but the real-time search in CLEAR runs only several iterations, and thus the greedy approach is sufficient. Besides, we found that the response of an API call was stochastic, and therefore, the greedy algorithm also did some exploration rather than querying only one arm.

To quantify the effect of using the greedy algorithm, we conducted an offline evaluation using the Open Image Dataset \cite{OpenImages}. The experimental setup is the same as in Tiara \cite{sato2022retrieving}. Specifically, we use a subset of the Open Image Dataset with $100\,000$ images as a virtual image repository and use the labels of an image as a set of tags. We use the same score function as Tiara, i.e., the logit of ResNet. We set the query budget as $100$ in this experiment. Figure \ref{fig: experiments} shows that the greedy algorithm slightly degrades the performance while it is $60$ times more efficient. Tiara is effective when there is no tight time constraint. As we call for real-time inference, we adopt the greedy algorithm in the demonstration.

\subsection{Implementation}

\sloppy CLEAR is implemented with the React framework. It adaptively calls \texttt{flickr.photos.search} API via the Axios library. This API takes text as input and returns a set of images and their attached tags. CLEAR manages the average score of each tag and queries the tag with the highest average score iteratively. CLEAR shows the returned images in the order of scores. CLEAR's score function is \begin{align}
f(x) \stackrel{\text{def}}{=} \exp(g(x)^\top g(s) / 1000),
\end{align} where $g$ is the MobileNetv2 \cite{sandler2018mobilenetv2} feature extractor and $s$ is the source image uploaded by the user. The higher the inner product similarity to the source image, the higher this score is. The feature extractor is implemented with the TensorFlow.js library \cite{smilkov2019tensorflow}. To improve the effectiveness and user experience, CLEAR adopts the following strategies.
\begin{description}
\item[Initial queries.] At first, CLEAR does not have any candidate tags. To generate initial queries, CLEAR first obtains class names of the input image using the MobileNetv2 classifier and uses top-$k$ class names for initial queries. This is in contrast to Tiara, which uses a fixed set of initial tags.
\item[Parallelization.] As an API call is the bottleneck of wall-time consumption, CLEAR issues several API calls simultaneously and aggregates results. Specifically, CLEAR selects top-$10$ tags and issues $10$ queries at once. The score evaluation is also parallelized. This is in contrast to Tiara, which issues a single query at once.
\item[Date Specification.] The \texttt{flickr.photos.search} API can specify the dates of images, and the default setting is the latest images. We found that the default setting was not effective and reduced diversity. For example, we upload a cat image, and ``cat'' and ``calico cat'' tags are the identified best tags. CLEAR queries both ``cat'' and ``calico cat'' tags, and the set of returned images would largely overlap each other. To overcome this issue, we randomly set the dates for each tag so that the returned images do not overlap one another.
\item[Real-time display.] Although CLEAR is efficient, it still takes tens of seconds to complete a single search, which degrades the user experience. To mitigate this, CLEAR shows the intermediate results, and thereby, users can see the first results in a few seconds. This improves the user experience much.
\end{description}

We also note that the search results of CLEAR are stochastic. First, the Flickr API is stochastic because many images are being uploaded and some images are being deleted. Second, CLEAR's algorithm itself is stochastic in specifying the dates and selecting tags. As the former one is difficult to remove, we keep the entire system stochastic. The demerit of this choice would be the lack of reproducibility. However, the merit is that a user can try the same query many times and obtain fresh information every time. This improves the chance of serendipitous findings. 

\subsection{Bespoke System}

The source code of CLEAR is available at \url{https://github.com/joisino/clear}. An interested user can fork the repository and build their own system. As mentioned earlier, CLEAR is easy to deploy and host because it is a fully-user side system. The score function is defined in \texttt{src/score.js}. A user can rewrite the function and instantly tries their own scoring function. A small change in the score function may change the behavior of the system, and a user may be able to find a search engine that fits his/her preference.
\begin{itemize}
    \item \texttt{getFeature} function defined in \texttt{src/score.js} computes feature vectors for both the source image and retrieved images. The embedding layer is defined in \texttt{embeddingName}. Users can use other embedding layers for feature extraction.
    \item \texttt{embs2score} function defined in \texttt{src/score.js} computes scores. Higher is better. This demonstration uses \\ \texttt{Math.exp(emb1.mul(emb2).sum().dataSync()[0] / 1000)}. One can try other functions, e.g., the Gaussian kernel \\ \texttt{Math.exp(- emb1.squaredDifference(emb2).sum() \\ .dataSync()[0] / 1000)}.
\end{itemize}
Although we focused on a similar image search system in this demonstration, the score function needs not to measure similarity. If one trains a neural network using his/her favorite images so that preferable images have high scores and uses it as the score function, then the resulting system searches for preferable images. If a fairness-aware score function is used, a fairness-aware search system is realized. Another interesting setting is to use a black-box (possibly buggy) neural network as the score function as proposed in Tiara \cite{sato2022retrieving}. CLEAR finds a set of images that activate the neural network. Such instances visually show what the neural network at hand represents \cite{simonyan2014deep, nguyen2016synthesizing, yuan2020xgnn}. Traditional interpretation methods rely on a fixed set of datasets such as ImageNet, but CLEAR can retrieve images from Flickr, which hosts as many as five billion images. In sum, the creativity for the design of the score function is open to each user.

Users can change the target services on which the search system is built by writing a wrapper in \texttt{src/flickr.js}. The requirement for the target service is that it accepts text and returns a set of images. Users can search images from the Internet using their own score functions if Google Image Search is used instead of Flickr.

As CLEAR does not rely on any backend servers or search indices, one can seamlessly use the system after one changes the score function. One can also change the search target from Flickr to other services by writing a wrapper in \texttt{src/flickr.js}.

Note also that the online demo we provide limits the number of queries because many users may use it simultaneously. A bespoke system allows using more query budgets if needed.

\section{Demonstration}

\begin{figure}[p]
  \centering
    \includegraphics[width=\hsize]{./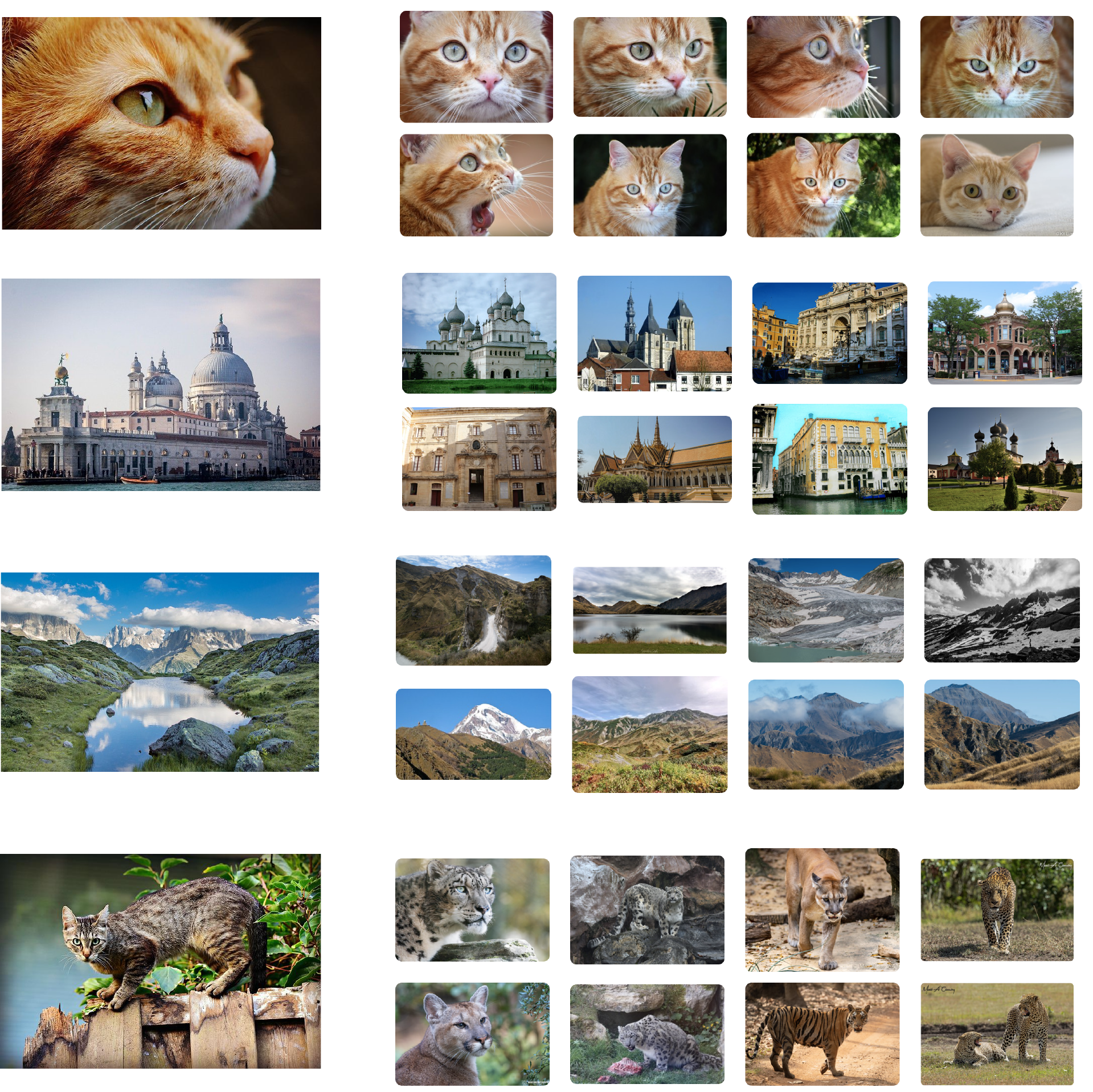}
\caption{Results. (Left) Source images. (Right) Top-$8$ retrieved images by CLEAR. (Bottom) A failure example. Although the input image is a cat, the retrieved images are snow leopards.}
 \label{fig: result-clear}
\end{figure}

The online demo is available at \url{https://clear.joisino.net}. Users can upload images, and the system retrieves similar images from Flickr. The functionality is simple. We stress again that Flickr does not provide an official similar image search system, and therefore users could not enjoy this feature so far. The highlight is that although the author is not employed by Flickr nor has privileged access to Flickr's server, CLEAR realizes the new feature of Flickr on a user-side.

\vspace{0.1in}
\noindent \textbf{Results.} Figure \ref{fig: result-clear} shows results of the demo program. It can be observed that CLEAR retrieves visually similar images to the source images. The bottom row shows a failure example. Although the input image is a cat, CLEAR confused this image with a snow leopard. Once CLEAR becomes confident to some extent, it is difficult to escape from the failure mode because it adopts the greedy algorithm without explicit exploration. Improving stability while keeping efficiency is an important future direction.

\section{Conclusion}

This demonstration provides the first practical user-side image search system, CLEAR. The proposed system is implemented on the client-side and retrieves similar images from Flickr, although Flickr does not provide an official similar image search engine or corresponding API. CLEAR adaptively generates effective text queries and realizes a similar image search. CLEAR does not require initial crawling, indexing, or any backend servers. Therefore, an end-user can deploy their own search engine easily.

\chapter{Active Learning from the Web}

\section{Introduction}

The revolution of deep learning has greatly extended the scope of machine learning and enabled many real-world applications such as computer-aided diagnosis \cite{chen2016mitosis, du2018breast,yu2018recurrent}, anomaly detection, \cite{akcay2018ganomaly,ruff2020deep} and information retrieval \cite{mcauley2015image,song2016deep}. However, the high cost of annotating data still hinders more applications of machine learning. Worse, as the number of parameters of deep models is larger than traditional models, it consumes more labels and, thus, more costs \cite{hestness2017deep,kaplan2020scaling,rosenfeld2020constructive,zhai2022scaling,geirhos2021partial}. Many solutions have been proposed for alleviating the cost of labeling.

First, many unsupervised and self-supervised methods that do not require manual labeling have been proposed \cite{chen2020simple,he2020momentum,baevski2020wav2vec}. However, the applications of pure unsupervised methods are still limited, and many other applications still call for human annotations to some extent. Specifically, unsupervised representation learning offers task-agnostic representations, and we need to extract task-specific information from them using labeled data.

Another popular approach is pre-training. Large scale supervised datasets such as ImageNet \cite{russakovsky2015imagenet} and JFT-3B \cite{zhai2022scaling} and/or unsupervised and self-supervised methods \cite{chen2020simple,devlin2019bert} can be used for pre-training. These methods significantly reduce the number of required labels. However, we still need several labels, even with such techniques. These techniques are orthogonal to the problem setting considered in this chapter, and our approach can be used with such pre-training techniques. In particular, we focus on labeling data after applying such techniques that reduce the sample complexity.

The most direct approach to reducing the burden of labeling would be active learning \cite{settles2009active,lewis1994sequential,tong2001support,gal2017deep}. Among many approaches to active learning \cite{angluin1987queries,atlas1989training}, we focus on pool-based active learning, which gathers unsupervised data and iteratively selects data to be labeled so that the number of required labels is minimized. Many pool-based active learning methods have been proposed for decades \cite{lewis1994sequential,tong2001support,gal2017deep,raj2022convergence}. Most researches on active learning focus on the procedure after collecting the pool of data; in particular, most works focus on the criterion of selecting data, i.e., the acquisition function of active learning \cite{gal2017deep,joshi2009multi,hoi2006batch}. However, how to define the pool of data has been less explored. Collecting unsupervised data is not straightforward. Naive crawling on the Web may collect only negative or irrelevant samples and harms the performance of active learning greatly. Making a very large pool on a local server may consume a lot of communication costs and time, and it is not economical or infeasible to build a very large pool in the traditional approach. For example, we build a middle-size pool with $10^5$ images in the experiments, and it consumes about 20 GB of storage. Building a ten or hundred times larger pool on a local server or on cloud storage may not be economical, especially when the target task is small. The cost scales linearly with the size of the pool. It is infeasible to host a pool with $10^{10}$ images in any case.

\begin{figure*}[tb]
\centering
\includegraphics[width=\hsize]{./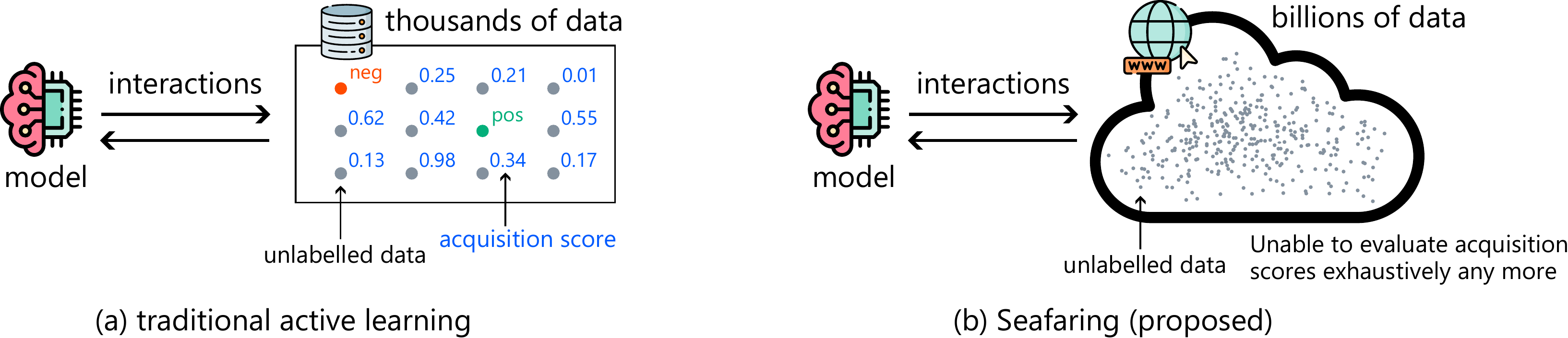}
\caption{\textbf{Illustrations of traditional active learning and Seafaring.} Traditional methods use a local database as the pool of active learning. Seafaring uses a huge database on the Internet as the pool of active learning. As the pool of Seafaring is extremely large, it is more likely that informative data exist, but it is challenging to find it efficiently.}
\label{fig: illustration}
\end{figure*}

\begin{table}[tb]
    \centering
    \caption{Notations.}
    \begin{tabular}{ll} \toprule
        Notations & Descriptions \\ \midrule
        $\mathcal{X}$ & The input space, i.e., images. \\
        $\mathcal{Y}$ & The label space, i.e., $\{0, 1\}$. \\
        $\mathcal{U} \subset \mathcal{X}$ & The pool of active learning. \\
        $\mathcal{U}_{\text{Flickr}} \subset \mathcal{X}$ & The set of all images on Flickr. \\
        $\mathcal{L} \subset \mathcal{X}$ & The set of initial labelled data. \\
        $\mathcal{O}\colon \mathcal{X} \to \mathcal{Y}$ & The labeling oracle. \\
        $f_{\theta}\colon \mathcal{X} \to [0, 1]^\mathcal{Y}$ & The classification model. \\
        $a\colon \mathcal{X} \to \mathbb{R}$ & The acquisition function. \\
        $B \in \mathbb{Z}_+$ & The labeling budget of active learning. \\
        \bottomrule
    \end{tabular}
    \label{tab: notations}
\end{table}

In this chapter, we regard items on the Web as a very large pool of unlabelled data and advocate the use of this large pool for active learning (Figure \ref{fig: illustration}). In the experiments, we use all the images on Flickr as the pool. This pool contains more than ten billion ($10^{10}$) images, which is several orders of magnitude larger than the pools that have ever been seen in the literature. As the pool is extremely large, it is likely that informative items exist in the pool. Besides, our approach does not require the user to explicitly build the pool on a local server before the application of active learning, but it retrieves informative items on the fly during active learning. Therefore, it further reduces the cost of preparation of data in addition to the reduction of the number of required labels.

However, they are many challenges to realizing this goal.
\begin{itemize}
\item As there are too many items in the pool, it is infeasible to compute the scores of all items exhaustively.
\item We do not even have the list of items on the Web or access to the database directly. The only way we access items is via the use of limited search queries.
\end{itemize}
We propose Seafaring (\underbar{Sea}rching \underbar{f}or \underbar{a}ctive lea\underbar{r}n\underbar{ing}), which uses a user-side information retrieval algorithm to overcome these problems. User-side search algorithms were originally proposed for building customized search engines on the user's side \cite{bra1994information,sato2022retrieving}. In a nutshell, our proposed method uses the acquisition function of active learning as the score function of the user-side engine and retrieves informative items to be labeled efficiently without direct access to external databases.

The contributions of this chapter are summarized as follows.
\begin{itemize}
\item We advocate the use of items on the Web as the pool of active learning. This realizes several orders of magnitude larger pools than existing approaches and enables us to find informative examples to be labeled.
\item We propose Seafaring, an active learning method that retrieves informative items from the Web.
\item We show the effectiveness of our proposed methods with a synthetic environment and real-world environment on Flickr, comparing with the existing approach that uses a small pool.
\item We show a supervising effect of our simple approach, i.e., Seafaring retrieves positive labels from a myriad of irrelevant data.
\end{itemize}

\begin{tcolorbox}[colframe=gray!20,colback=gray!20,sharp corners]
\textbf{Reproducibility}: Our code is publicly available at \url{https://github.com/joisino/seafaring}.
\end{tcolorbox}

\section{Problem Setting} \label{sec: setting}

\begin{table}[tb]
    \centering
    \caption{Size of the pools of unlabelled data. The pools of existing works contain hundreds of thousands of data. The pool of Seafaring is several orders of magnitude larger than pools that have ever been seen in the literature.}
    \begin{tabular}{ll} \toprule
        Dataset & Pool Size \\ \midrule
        Reuters \cite{tong2001support} & 1,000 \\
        NewsGroup \cite{tong2001support} & 500 \\
        ImageCLEF \cite{hoi2006batch} & 2,700 \\
        Pendigits \cite{joshi2009multi} & 5,000 \\
        USPS \cite{joshi2009multi} & 7,000 \\
        Letter \cite{joshi2009multi} & 7,000 \\
        Caltech-101 \cite{joshi2009multi} & 1,500 \\
        CACD \cite{wang2017cost} & 40,000 \\
        Caltech-256 \cite{wang2017cost} & 24,000 \\
        MNIST \cite{gal2017deep} \cite{kirsch2019batchbald} & 50,000 \\
        ISBI 2016 \cite{gal2017deep} & 600 \\
        EMNIST \cite{kirsch2019batchbald} & 94,000 \\
        CINIC-10 \cite{kirsch2019batchbald} & 160,000 \\
        MNIST \cite{beluch2018power} & 2,000 \\
        CIFAR-10 \cite{beluch2018power} & 4,000 \\
        CIFAR-10 \cite{beluch2018power} & 20,000 \\
        Diabetic R. \cite{beluch2018power} & 30,000 \\
        ImageNet \cite{beluch2018power} & 400,000 \\
        Synthetic (ours) & 100,000 \\
        Flickr (ours) & $\ge$ 10,000,000,000 \\
        \bottomrule
    \end{tabular}
    \label{tab: size}
\end{table}

\begin{figure*}[tb]
\centering
\includegraphics[width=\hsize]{./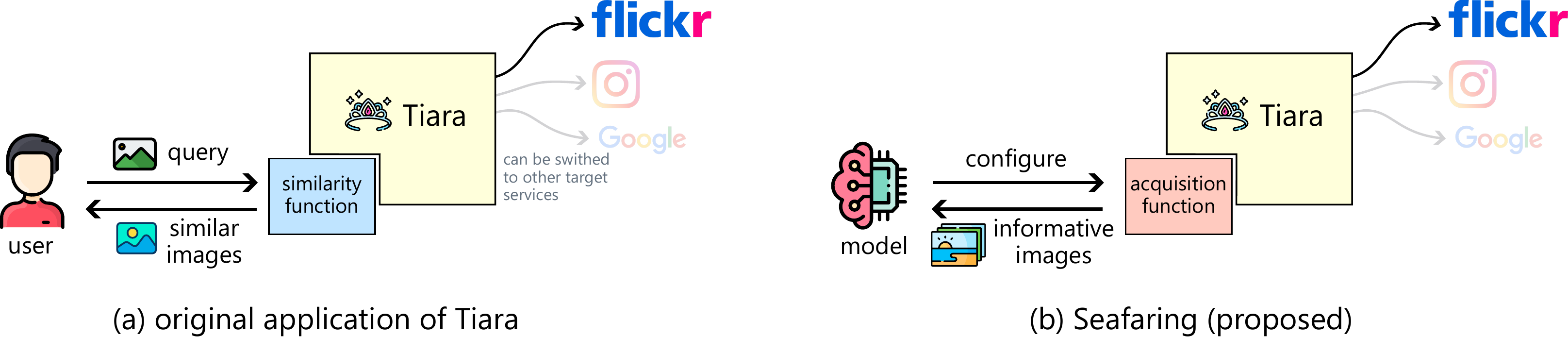}
\caption{\textbf{Illustrations of Tiara and Seafaring.} Originally, user-side search engines and Tiara were proposed to build similar image search engines on the user's side for the Web services without official similar image search engines. The advantages of these methods are that they work with user-defined score functions and they work without any privileged access to the database. Seafaring uses Tiara in active learning settings. Seafaring uses an acquisition function of active learning as the score function of the user-side search engine and retrieves informative images for the current model.}
\label{fig: tiara}
\end{figure*}

Let $\mathcal{X}$ and $\mathcal{Y}$ be the input and label domains, respectively. In this chapter, we focus on the binary classification of image data, though our approach can be extended to other tasks. Therefore, $\mathcal{X}$ is the set of images and $\mathcal{Y} = \{0, 1\}$. Let $\mathcal{U} = \{x_i\} \subset \mathcal{X}$ denote the set of unlabelled data, let $\mathcal{L} = \{(x_i, y_i)\} \subset \mathcal{X} \times \mathcal{Y}$ denote the set of initial labelled data. Let $\mathcal{O}\colon \mathcal{X} \to \mathcal{Y}$ be the labeling oracle function, which takes an image as input and outputs the ground truth label. In reality, the oracle $\mathcal{O}$ asks human annotators to reveal the ground truth label. The evaluation of the oracle $\mathcal{O}$ is costly because it requires human resources. We want to evaluate the oracle function as few times as possible. We limit the number of the evaluation by $B \in \mathbb{Z}_+$. $B$ is called the budget of active learning.

Active learning involves $B$ iterations. Let $\mathcal{L}_0 = \emptyset$ be the initial set of selected data. In $i$-th iteration, the model $f_{\theta}$ is trained with the current set of labelled images $\mathcal{L} \cup \mathcal{L}_{i-1}$, and an active learning method selects a unlabelled sample $x_i \in \mathcal{U}$ from the pool. Then, the oracle (e.g., human annotators) reveals the label $y_i = \mathcal{O}(x_i)$. The set of labelled data is updated by setting $\mathcal{L}_i = \mathcal{L}_{i-1} \cup \{(x_i, y_i)\}$. After $B$ iterations, the performance of the model is evaluated with the obtained training data $\mathcal{L} \cup \mathcal{L}_{B}$. The aim of active learning is to obtain a good model with a small $B$.

Typical active learning methods select data based on acquisition function $a\colon \mathcal{X} \to \mathbb{R}$ and set $x_i = \argmax_{x \in \mathcal{U}} a(x)$. The design of the acquisition function depends on the method, and many acquisition functions have been proposed, such as the entropy function \cite{settles2009active}, least confidence \cite{settles2009active}, margin sampling \cite{settles2009active}, expected error reduction \cite{roy2001toward}, expected model change \cite{settles2007multuple}, and ensemble-based scores \cite{beluch2018power}. The family of acquisition functions, including entropy, least confidence, and margin sampling, is sometimes referred to as uncertainty sampling, and they are all equivalent in the binary classification setting \cite{settles2009active,nguyen2022how}. In this chapter, we do not pursue the design of the acquisition function but use the existing one. Instead, we focus on the design of the pool $\mathcal{U}$.

\section{Proposed Method}

In this section, we introduce our proposed method, Seafaring, which explores the sea of data on the Internet in search of informative data.

Seafaring uses all the images in an image database on the Internet as the pool $\mathcal{U}$. In the experiments, we use the image database $\mathcal{U}_{\text{Flickr}}$ of Flickr as the pool. It should be noted that our method is general, and other databases such as Instagram $\mathcal{U}_{\text{Instagram}}$ and DeviantArt $\mathcal{U}_{\text{DeviantArt}}$ can be used as the target database depending on the target task. We focus on Flick because it contains sufficiently many images for general tasks, and its API is easy to use in practice. As $\mathcal{U}_{\text{Flickr}}$ contains more than $10$ billion images of various motifs and styles, it is likely that relevant and informative images exist in them. Existing methods use smaller pools, which contain thousands of items (Table \ref{tab: size}). As these pools are task-specific, existing active learning methods need to design the pool for each task. Existing methods for active learning assume that a task-specific pool is available for free, but we emphasize that this is not the case in many cases, and designing and building task-specific pools is costly. For example, suppose we build an image recommender system and train a model that classifies images that a user likes for each user. It is not obvious what kind of images each user likes beforehand, and preparing an appropriate pool of unlabelled data is a difficult task. The pool becomes very large if we gather a pile of images from the Web without any care, and traditional active learning is infeasible in such a case. By contrast, Seafaring handles an extremely large pool, and we can use it for many general tasks without any additional effort. Thus Seafaring further reduces the burden of data preparation. In addition, if we update the model regularly with continual or life-long learning methods \cite{li2016learning,zenke2017continual,wu2019large}, existing active learning methods need to maintain the pool. By contrast, Seafaring does not require the users to maintain the pool because the users of Flickr add new images to the Flickr database $\mathcal{U}_{\text{Flickr}}$ every day, and the pool is maintained up-to-date automatically and autonomously. Thus Seafaring always uses an up-to-date database without any special effort.

In each iteration of active learning, Seafaring searches the best image with respect to the acquisition function $a$ from $\mathcal{U}_{\text{Flickr}}$ and sets $x_i \leftarrow \argmax_{x \in \mathcal{U}_{\text{Flickr}}} a(x)$. However, as $\mathcal{U}_{\text{Flickr}}$ is extremely large, we cannot evaluate $a(x)$ for each image $x \in \mathcal{U}_{\text{Flickr}}$. It is impossible to even download all images in $\mathcal{U}_{\text{Flickr}}$ or even list or enumerate $\mathcal{U}_{\text{Flickr}}$. Employees of Flickr could build bespoke search indices and databases for active learning purposes, but as we and most readers and users of Seafaring are not employees of Flickr, the approach of building an auxiliary index of the Flickr database is infeasible.

To overcome this issue, Seafaring uses a user-side search algorithm, namely, Tiara \cite{sato2022retrieving}. Tiara retrieves items from external databases on the Internet. The original motivation of Tiara is that most users of Web services are not satisfied with the official search engine in terms of its score function or its interface, and Tiara offers a way for users to build their own search engines for existing Web services. Seafaring uses Tiara differently. Seafaring sets the acquisition function $a$ of active learning as the score function of Tiara and thereby retrieves informative images for model training (not for users). Figure \ref{fig: tiara} shows an illustration of Seafaring and the difference between the original application of Tiara and Seafaring. As Tiara works without privileged access to external databases (i.e., $\mathcal{U}_{\text{Flickr}}$), and ordinary users can run it, it suits our situation well.

We introduce the algorithm of Tiara briefly for the completeness of this chapter. Tiara takes a score function $s\colon \mathcal{X} \to \mathbb{R}$ as input. The aim of Tiara is to retrieve an image $x^*$ that maximizes $s$ from an external database. Tiara assumes that a tag or text-based search system that takes text as input and returns a set of images is available for the external database. In Flickr, Tiara and Seafaring use \texttt{flickr.photos.search} API\footnote{\url{https://www.flickr.com/services/api/flickr.photos.search.html}} for the tag-based search system. Tiara formulates the search problem as a multi-armed bandit problem where a tag is an arm, and the score function is the reward. Tiara uses LinUCB algorithm \cite{li2010contextual} to solve the multi-armed bandied problem with pre-trained tag features based on GloVe \cite{pennington2014glove}. Intuitively, Tiara iteratively queries tags (exploration), evaluates images, finds promising tags, and finds out (sub-)optimal images by querying promising tags (exploitation). Please refer to the original paper \cite{sato2022retrieving} for more details.

\begin{figure*}[tb]
\centering
\includegraphics[width=\hsize]{./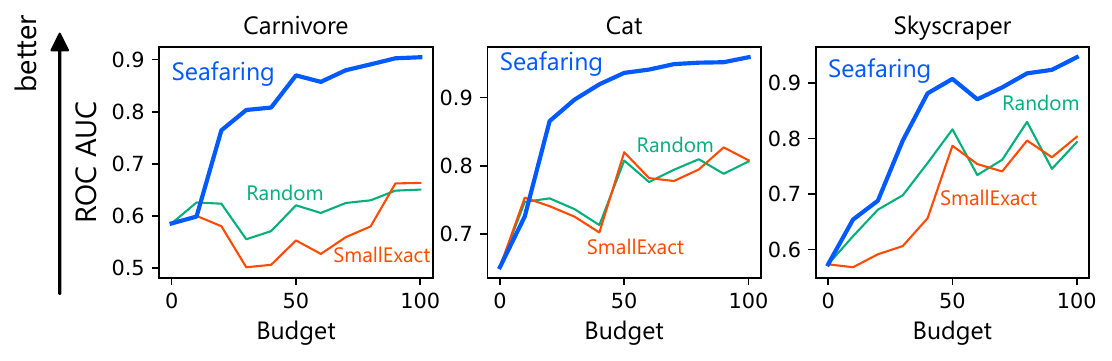}
\caption{\textbf{Results of OpenImage dataset.} Each curve reports the performance of each method. Point $(x, y)$ means that the method achieves $y$ AUC within $x$ access to the label oracle. Higher is better. These results show that Seafaring is more effective than the baseline methods.}
\label{fig: openimage-seafaring}
\end{figure*}

Seafaring uses the exponential of the entropy function \begin{align} \label{eq: acquisition}
    a(x) = s(x) = \exp\left(- \gamma \sum_{i \in \mathcal{Y}} f_\theta(x)_i \log f_\theta(x)_i\right)
\end{align}
as the acquisition function and the score function of Tiara. $f_\theta(x)_i \in [0, 1]$ is the output probability of class $i$ by the model $f_{\theta}$. $\gamma \in \mathbb{R}$ is a hyperparameter representing the temperature. We confirmed that the algorithm was insensitive to the choice of this parameter and set $\gamma = 4$ throughout the experiments. The entropy function is one of the most popular acquisition functions of active learning \cite{holub2008entropy,nguyen2022how}. Intuitively, if the entropy function is high, the model $f_\theta$ is not certain about the label of this data, so labeling this data is likely to provide much information for the model. Our acquisition function $a$ is monotone with respect to the entropy function. We apply the exponential function because data with high scores are more important than data with low scores, e.g., whether the entropy is $0.8$ or $0.9$ is more important than whether the entropy is $0.2$ or $0.3$ for searching high entropy data. The exponential function magnifies the difference in the high score regime. It should be noted that monotone transformation does not change the results if we evaluate the scores of all data and select the exact maximum data, but the results of Seafaring change by a monotone transformation because of the exploration-exploitation trade-off of the bandit algorithm and the inexact search scheme. Seafaring is general and can be combined with any other acquisition functions such as the ensemble score \cite{beluch2018power} and BALD \cite{gal2017deep}. As the design of the acquisition function is not the primal interest of this chapter, we leave further exploration of acquisition functions for future work.

Seafaring is indeed a simple application of Tiara to active learning. We emphasize that the formulation of Web-scale active learning is the core interest of this chapter, and the optimization method is secondary. Note also that we tried a more complicated method in preliminary experiments, but we found out that the vanilla method was already effective and had preferable features, as we analyze in the experimental section. So we keep our proposed method as simple as possible for usability and customizability.

\section{Experiments}

We show that Seafaring is more effective than the traditional approach that uses a small and fixed pool of unlabelled data. We conduct control experiments where the baseline method uses the same settings to contrast a huge pool with a small pool. Specifically, we let the baseline method use the same acquisition function and samples from the same database as those of Seafaring. In the analysis, we also show that Seafaring automatically balances the ratio of labels without any special mechanism.

\subsection{Experimental Setups}

We aim at training ResNet-18 \cite{he2016deep} for various tasks. One shot of an experiment is conducted as follows: Initially, one positive sample and one negative sample are given, and they form the initial training set $\mathcal{L}$. Through the experiments, an active learning method can evaluate the oracle $\mathcal{O}$ at most $B$ times. We set $B = 100$ throughout the experiments. In a loop of an experiment, the ResNet model $f_{\theta}$ is initialized by the ImageNet pre-trained weights\footnote{\texttt{ResNet18\_Weights.IMAGENET1K\_V2} in \url{https://pytorch.org/vision/stable/models.html}.}, and then trained with $\mathcal{L}$. The training procedure uses the stochastic gradient descent (SGD) with learning rate $0.0001$ and momentum $0.9$ and with $100$ epochs. Then, the evaluation program measures the accuracy of the model using test data. The accuracy of the model is measured by ROC-AUC. Note that the results of the evaluation are not fed back to the active learning method, but only the evaluator knows the results so as not to leak the test data. After the evaluation, an active learning method selects one unlabeled sampled $x_i$ from the pool using the trained model and an acquisition function. After the selection of the sample, the label oracle $\mathcal{O}(x_i)$ is evaluated and the label $y_i$ is notified to the active learning method, and $(x_i, y_i)$ is inserted to $\mathcal{L}$. This loop is continued until the method exhausts the budget, i.e., $B$ times. The goal of an active learning method is to maximize the accuracy of the model $f_{\theta}$ during the loop and at the end of the loop. 

It should be noted that the ResNet-18 model is relatively larger than the models used in existing works \cite{wang2017cost,gal2017deep}, where shallow multi-layered perceptron and convolutional neural networks are used, and it further introduces the challenge in addition to the huge pool because each evaluation of the acquisition function costs more.

We use two baseline methods. The \textbf{SmallExact} method is an active learning method with the traditional approach. Specifically, it first samples $1000$ unlabelled data from the database that Seafaring uses, i.e., $\mathcal{U}_{\text{Flickr}}$ or $\mathcal{U}_{\text{OpenImage}}$, where $\mathcal{U}_{\text{OpenImage}}$ is the pool of the synthetic database we will use in the next subsection. SmallExact uses these data for the pool of active learning. It uses the same acquisition function (Eq. \eqref{eq: acquisition}) as Seafaring to make the conditions the same and to clarify the essential difference between our approach and the existing approach. The only differences between Seafaring and SmallExact are (i) the size of the unlabelled pool and (ii) Seafaring carries out an inexact search, while SmallExact evaluates the scores of the data in the pool exhaustively and exactly selects the best data from the pool. The exact search approach of SmallExact is infeasible if the pool or the model is large. In the following, we show that Seafaring performs better than SmallExact even though Seafaring does not carry out an exact search. The \textbf{Random} method selects a random sample from the same pool of Seafaring uses, i.e., $\mathcal{U}_{\text{Flickr}}$ or $\mathcal{U}_{\text{OpenImage}}$, in each loop. Both Seafaring and Random use the same pool, but Random does not use active learning. The difference between the performances of Random and Seafaring shows the benefit of active learning.

We conduct the experiments on NVIDIA DGX-1 servers, and each experiment runs on a V100 GPU.

\subsection{OpenImage Dataset}

\begin{figure*}[tb]
\centering
\includegraphics[width=\hsize]{./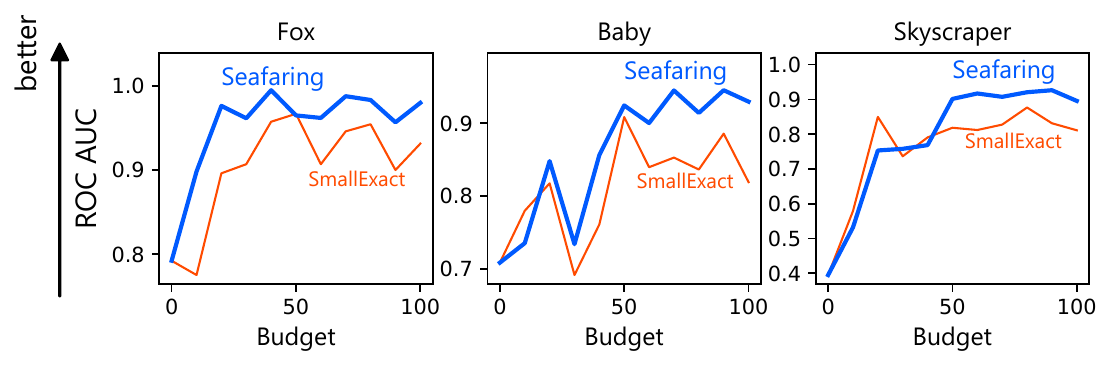}
\caption{\textbf{Results of Flickr dataset.} Each curve reports the performance of each method. Higher is better. These results show that Seafaring is more effective than the traditional approach with a small pool in the real-world Flickr environment.}
\label{fig: flickr-seafaring}
\end{figure*}

We first construct a synthetic and middle-size image database using the OpenImage Dataset \cite{kuznetsova2020open} to quantitatively evaluate the methods, confirm the general trends, and analyze the results. We also aim to make the experiments easy to reproduce by keeping the size of the environment not so large. We first sample $100\,000$ images from the OpenImage Dataset. Let $\mathcal{U}_{\text{OpenImage}}$ denote the set of these images. It should be noted that even $\mathcal{U}_{\text{OpenImage}}$ is larger than many of the pools used in existing work (Table \ref{tab: size}). We use the tag annotations of the OpenImage Dataset to construct a virtual search engine for Tiara, following the original paper \cite{sato2022retrieving}. We define a classification task with a tag $t$. In each task, the images with the tag $t$ are positive samples, and the other images are negative samples. The goal of the task is to train a model that classifies positive and negative samples accurately.

We set the number of iterations of LinUCB in Tiara to $1000$ in this experiment. The number indicates the number of evaluations of the model $f_{\theta}$ and the number of search queries in each loop of active learning. Although the labeling oracle $\mathcal{O}$ is the most costly operation, the evaluation of the model $f_{\theta}$ is also costly when the model is large. So it is desirable to keep this number small. As the SmallExact method evaluates the acquisition scores of all of the $1000$ samples in the pool in each iteration, the number of evaluations of the model is the same for both methods in this experiment.

We run each task $5$ times with different random seeds and report the average performances. Figure \ref{fig: openimage-seafaring} shows the accuracies of Seafaring and the baseline methods for three tasks, t = Carnivore, t = Cat, and t = Skyscraper. Seafaring outperforms both baselines in all tasks. In t = Carnivore, both baselines fail to find relevant images at all, and the performance does not increase with more labels. By contrast, Seafaring efficiently finds out informative data, and the performance quickly grows with small labels in all tasks.

\subsection{Flickr Environment}

We use the real-world Flickr database $\mathcal{U}_{\text{Flickr}}$ as the pool in this experiment. As we have mentioned before, $\mathcal{U}_{\text{Flickr}}$ contains more than ten billion images, and this is several orders of magnitude larger than the largest pool of active learning in the literature (Table \ref{tab: size}). We emphasize that we are not employees of Flickr nor have privileged access to the Flickr database. Nevertheless, we show that Seafaring retrieves informative images from the Flickr database.

We define a classification task with a tag $t$ but in a different way from the OpenImage experiment. Specifically, we gather $10$ images  $\mathcal{X}_t$ with tag $t$ from the OpenImage dataset, extract features of images using pre-trained ResNet18, and define images whose maximum cosine similarity between $\mathcal{X}_t$ is larger than a threshold as positive samples, and other images as negative samples. Note that $\mathcal{X}$ is not revealed to us and the learner, but only the evaluator knows it. Intuitively, images with features of tag $t$ are positive samples, and other images are negative samples. It should be noted that Flickr also provides tags for images, but we found that they are sometimes missing and noisy. For example, some skyscraper images are not tagged as a skyscraper, i.e., positive samples are wrongly labeled as negatives, and the evaluation becomes not reliable. So we do not use the tags in Flickr to define the task but adopt the similarity-based tasks. 

In this experiment, we set the number of iterations of LinUCB to $100$ to reduce the number of API queries to the Flicker server. As the SmallExact method cannot reduce the number of evaluations but conducts an exact search, we keep the number of evaluations of the SmallExact method $1000$. Therefore, in this experiment, the SmallExact method evaluates the model $10$ times many times in each loop of active learning, and the setting is slightly advantageous to the SmallExact method. We show that Seafaring nevertheless performs better than SmallExact.

Figure \ref{fig: flickr-seafaring} shows the accuracies of Seafaring and SmallExact for three tasks, t = Skyscraper, t = Fox, and t = Baby. Seafaring outperforms SmallExact in all tasks.

\begin{figure}[tb]
\centering
\includegraphics[width=\hsize]{./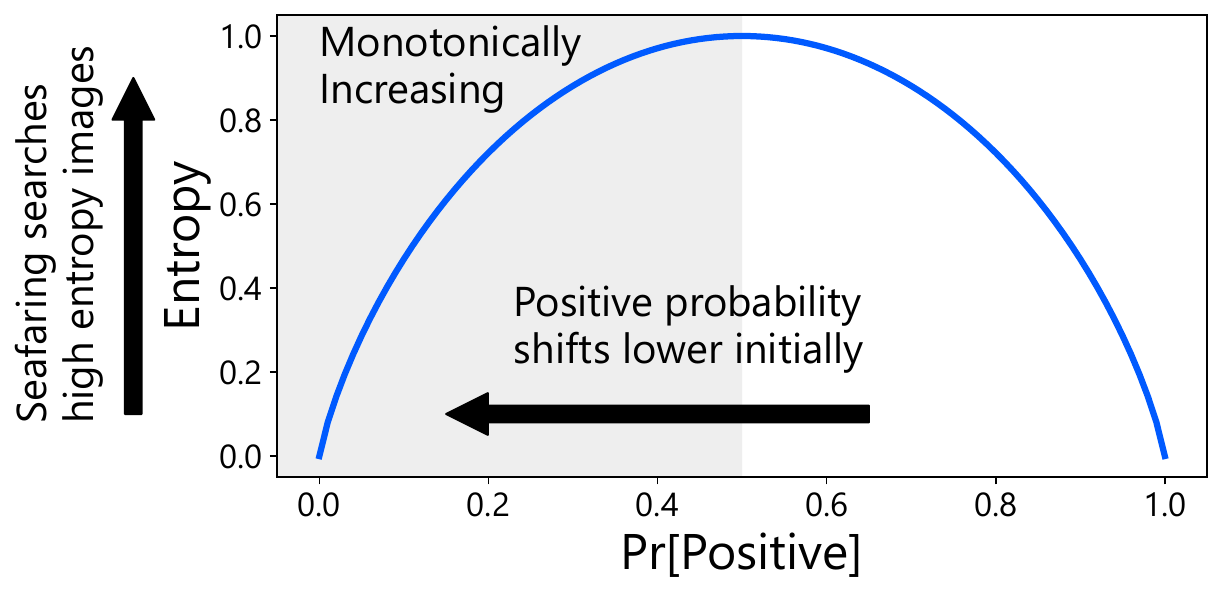}
\caption{\textbf{Illustration of the entropy function.} In the region of $\text{Pr}[\text{Positive}] \le 0.5$, the entropy function is monotonically increasing. In the initial phase of active learning, the model's output probability shifts lower, and the positive probability is less than $0.5$ for most data. Maximizing the entropy is equivalent to maximizing the positive probability in this regime.}
\label{fig: entillust}
\end{figure}

\subsection{Analysis: Seafaring Retrieves Positive Data When They Lack}

As the pool is huge and diverse, most data in the pool are irrelevant to the task, and only a fraction of the data are positive or difficult negative cases. If we randomly sample data from the pool, the obtained labels are negative with high probability, and we obtain little information. For example, in the classification of Carnivore images, most data, such as airplane, car, and building images, are irrelevant to the task. We want to gather relevant images such as cat images (positive samples) and horse images (difficult negative samples). However, it is difficult to distinguish relevant samples from other data until the model's performance is high. Gathering relevant samples and training good models are like a chicken and egg problem. Besides, as the pool contains much more irrelevant images than relevant images, it is all the more difficult to filter out irrelevant images. Let us clarify the difficulty. Suppose we have an auxiliary model that classifies whether a sample is relevant to the task, and this model's accuracy is $0.99$. As there are $10^{10}$ irrelevant images in the pool, it classifies $10^{10} \times (1 - 0.99) = 10^8$ irrelevant data as relevant. If there are $10^4$ relevant images in the pool, the morel classifies $10^4 \times 0.99$ relevant data as relevant. Thus, the majority of the selected data are irrelevant regardless of the high accuracy of the model. It indicates that finding out relevant data is crucial for active learning methods when the pool is huge and unbalanced.

Seafaring does not have an explicit mechanism to cope with this challenge. Nevertheless, we found that Seafaring implicitly solves this problem. Seafaring indeed struggles with finding relevant data in the initial iterations, and the ratio of negative data to positive data grows drastically. After several iterations, most data Seafaring has are negative. The target labels of the training data with which the model $f_{\theta}$ is trained are negative, and the model's output drastically tends to be negative. In other words, the model becomes very confident that most data are negative. Recall that the acquisition function is high when the model's output is not confident, and the entropy function and the acquisition function monotonically increase as the positive probability increases in the regime of the low positive probability (Figure \ref{fig: entillust}). Therefore, in this regime, low confidence and high positive probability are equivalent, and Seafaring searches for data that are likely to be positive. We observe that Seafaring once enters this phase in the initial phase and then finds some positive samples by maximizing the positive probability, the data become balanced, the model becomes accurate, and after that, Seafaring searches informative data with which the accurate model is confused.

\begin{figure*}[tb]
\centering
\includegraphics[width=\hsize]{./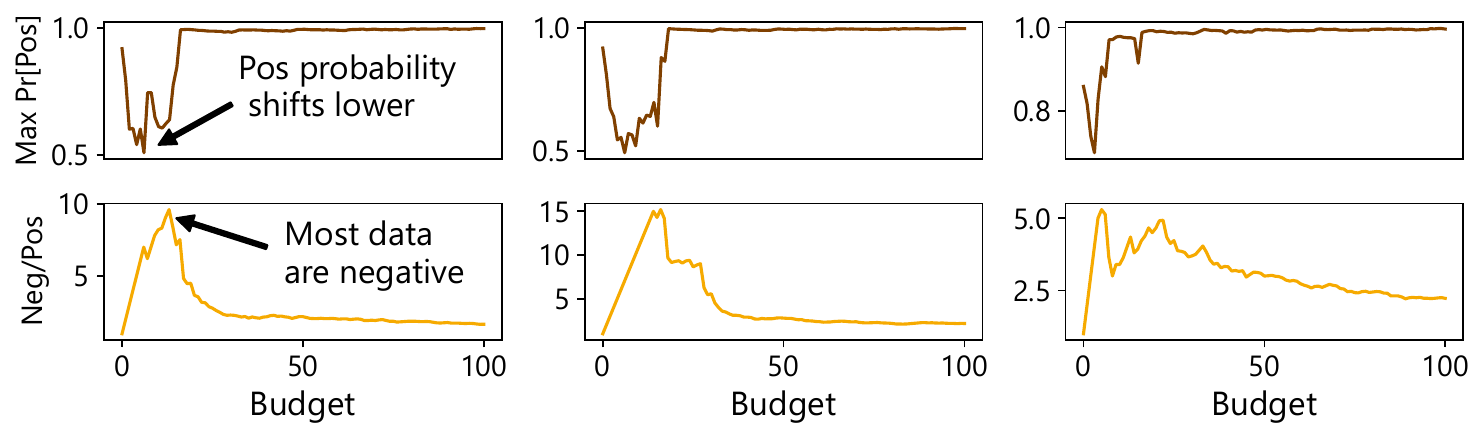}
\caption{\textbf{Transition of maximum positive probability and the ratio of positive and negative samples of Seafaring.} (Top) The maximum positive probability of the data Seafaring evaluates for each loop. In most iterations, the highest probability is around $1.0$, and the curve sticks to the top of the plot. But in the initial phase, even the highest positive probability is around $0.5$ because the model's outputs stick to negative. (Bottom) The ratio of the number of negative samples to the number of positive samples in the labeled data $\mathcal{L}$. Seafaring finds only negative samples at the beginning of iterations, and the ratio grows rapidly. Seafaring searches images with the highest positive probability in this regime. After some positive samples are found, the ratio drops and stabilizes. }
\label{fig: entropy}
\end{figure*}

Figure \ref{fig: entropy} shows the transition of statistics in the course of active learning. In the first to $30$-th iterations, most collected data are negative. The model becomes certain that most data are negative, and the top panels show that even the most likely image has a positive probability of around $0.5$. It indicates that most samples have positive probabilities of less than $0.5$. As Figure \ref{fig: entillust} shows, in this regime, maximizing the acquisition function is equivalent to finding the most likely image to be positive. Seafaring maximizes the positive probability of the data being labeled. After some trials, Seafaring finds out some positive data, and the ratio of negative and positive samples decreases. After the $30$-th iteration, the model's output probability gets back to normal, and Seafaring focuses on retrieving the most uncertain data.

We did not expect this phenomenon. We considered that we needed an explicit mechanism to retrieve positive data in the initial phase. However, it turned out that such a mechanism was unnecessary. This finding let us keep Seafaring simple and easy to implement.

\section{Discussions} \label{sec: discussion}

In the experiments, we used the Flickr database as the pool. As we mentioned earlier, our algorithm is general and can be combined with other search engines and Web services, such as Instagram, DeviantArt, and Google Image Search. The most attractive choice would be general-purpose image search engines, such as Google Image Search and Bing Image Search, because they retrieve images from the entire Web, and the pool becomes even larger. We did not use them because of their cost. For example, Google Search API charges 5 dollars per 1000 queries\footnote{\url{https://developers.google.com/custom-search/v1/overview}}, and Bing Image Search API charges 4 to 7 dollars per 1000 queries\footnote{\url{https://azure.microsoft.com/ja-jp/pricing/details/cognitive-services/search-api/}}. As we use $100$ to $1000$ queries per iteration and $100$ iterations for each run of the experiments, each run would take 40 to 500 dollars for the search API. These API costs are much higher than labeling costs in most applications, and it makes these APIs not attractive in most situations. By contrast, Flickr API offers $3600$ queries per hour for free\footnote{\url{https://www.flickr.com/services/developer/api/}}. Therefore, the entire process of active learning takes $3$ to $30$ hours and is for free, and the bottleneck of the computational cost is training deep models, and the bottleneck of the expense is labeling, which is inevitable in all tasks. Besides, Flickr API can control the licenses of the result images. For example, specifying ``\texttt{license=9, 10}'' restricts the results to public domain images, which resolves license issues. These are why we used Flickr in the experiments and why Flickr API is attractive in most applications. However, although Flickr hosts more than ten billion images, it lacks some types of images, such as illustrations and celebrity portraits. If the task at hand is relevant to such data, other databases, such as Instagram and DeviantArt, should be used. One of the interesting future works is to combine many search engines and services to make Seafaring more robust while keeping the API cost low. An attractive future avenue is a cost-sensitive approach that basically uses ``free'' databases such as Flickr and uses ``costly'' targets such as Google Image Search only when the costly APIs are promising, which alleviates the API cost issue.

We emphasize that Seafaring is applicable to any domain only if an API takes a tag as input and outputs data samples. For example, \texttt{/2/tweets/search/recent} API of Twitter can be used for NLP tasks, such as sentiment analysis and tweet recommendation tasks, where a hashtag is a tag, and a tweet is a data sample.

Another type of interesting target ``database'' is deep generative models (Figure \ref{fig: future}). Recently, high-quality text-to-image models have been developed and released \cite{rombach2022high,dhariwal2021diffusion}. They generate photo-realistic images and high-quality illustrations based on input text. Tiara and Seafaring can search images from any type of system only if they receive text and provide images, and Seafaring can be combined with such generative models. If we host these models locally, we save communication costs to collect unlabelled images and may be able to boost the performance of Seafaring further.

\section{Related Work}

\subsection{Active Learning}

The aim of active learning is to reduce the burden of labeling data \cite{settles2009active}. Active learning methods try to find out informative unlabelled data so that the total number of required labels is minimized while the performance of the trained model is maximized. There are many problem settings for active learning. In the stream-based tasks \cite{zliobaite2014active,attenberg2011online}, unlabelled data are shown one by one, and the active learning method decides whether we should label each data online. The membership query synthesis problem \cite{angluin1987queries,zhu2017generative} asks active learning methods to generate data from scratch to be labeled. Pool-based active learning is the most popular setting in the literature \cite{tong2001support,gal2017deep,wang2017cost,kirsch2019batchbald,beluch2018power}, and we have focused on it in this chapter. In pool-based active learning, a set of unlabelled data is shown, and active learning methods select some of them at once or iteratively. Although Seafaring is classified as a pool-based method, it is interesting to regard Seafaring as the middle of query synthesis and pool-based methods. The pool of Seafaring is extremely large and dense in the input space $\mathcal{X}$, and many new images are added to the pool every day. So the pool contains many images that we do not expect in advance, and searching from the pool shares preferable properties with query synthesis in terms of novelty and diversity. The important difference between Seafaring and query synthesis methods is that all of the images in the pool are indeed real-world data, while query synthesis may generate non-realistic data or corrupted data. If we use generative models as the target database, as we discuss in Section \ref{sec: discussion}, this difference becomes more nuanced.

The criterion on which active learning selects data is important, and there have been many studies on this topic. Typically, the criterion is represented by acquisition function $a\colon \mathcal{X} \to \mathbb{R}$, and active learning methods select the item with the highest acquisition value. The most popular acquisition functions are based on the uncertainty of the model \cite{gal2017deep,wang2017cost} or the amount of disagreement between ensemble models \cite{beluch2018power}. Some methods adaptively design the policy of active learning \cite{fang2017learning,haussmann2019deep}. In either case, the core idea is that uncertain data should be prioritized.

It is important to accelerate an iteration of active learning and reduce the number of iterations to make active learning efficient. A popular approach is batch active learning \cite{hoi2006batch,sener2018active,ash2020deep,shui2020deep,citovsky2021batch}, which allows learners to select multiple instances per iteration and reduces the number of iterations. Batch active learning has been successfully applied to large-scale problems and scales up to a large pool of ten million images and one million labels \cite{citovsky2021batch}. However, they require many labels, and labeling millions of data is not economical or infeasible in many tasks. Besides, this approach does not scale to billion-size pools so far.

The most significant difference between our proposed method and existing active learning lies in the size of the pool of unlabelled data. In typical settings, the pool contains thousands to millions of data, while we consider the pool with billions of data. If the pool does not contain informative data at all, active learning fails. Our huge pool avoids such cases.

\begin{figure}[tb]
\centering
\includegraphics[width=\hsize]{./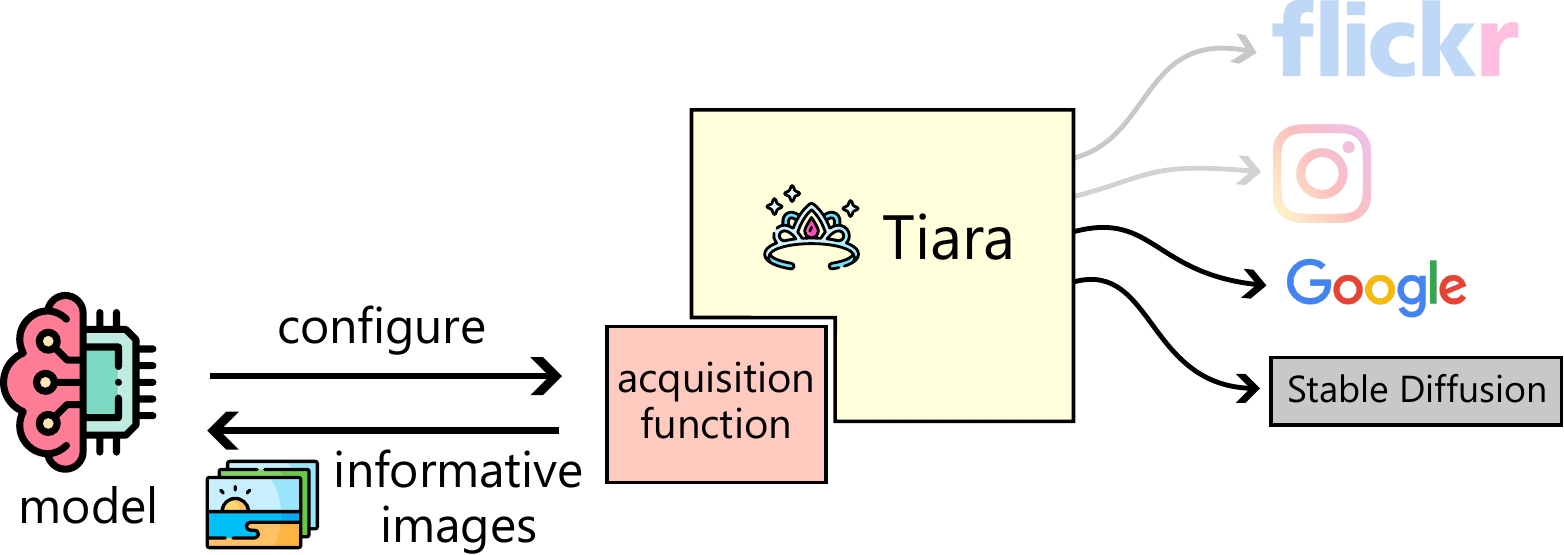}
\caption{\textbf{Future work.} Combining Seafaring with general-purpose search engines such as Google Image Search is interesting because it makes the pool even larger, say trillions of data. However, the API is costly, and they are not economical for many tasks currently. Seafaring can also be combined with generative models such as Stable Diffusion. Integrating Seafaring with such systems is an interesting future direction.}
\label{fig: future}
\end{figure}

\subsection{User-side algorithms}

User-side algorithms, such as user-side recommender systems \cite{sato2022private,sato2022principled} and search systems \cite{bra1994information,sato2022retrieving,sato2022clear}, are designed for users of Web services and software. This is in contrast to typical algorithms, which are designed for official developers of Web services and software. User-side information retrieval algorithms enable us to retrieve data from external databases with respect to user-designed score functions without privileged access to the databases.

Our proposed method uses an off-the-shelf user-side algorithm as a building block. The difference between our proposed method and existing works is its usage. The original motivation for the study of user-side algorithms is to create customized search engines for every user when the official system is not satisfactory. By contrast, we use them for gathering informative data for active learning.

\subsection{Web Mining}

Another relevant realm is Web mining. As Seafaring retrieves data from the Web, it can be seen as a Web mining method. The most relevant technique is focused crawling \cite{chakrabarti1999focused, mccallum2000automating, johnson2003evolving, baezayates2005crawling, guan2008guide,pham2019bootstrapping}, which aims at finding Web pages that meet some criteria by crawling the Web. There are various tasks in focused crawling depending on the criteria. For example, the target can be pages of specific topics \cite{chakrabarti1999focused, mccallum2000automating}, popular pages \cite{baezayates2005crawling}, structured data \cite{meusel2014focused}, and hidden pages \cite{barbosa2007adaptive}. To the best of our knowledge, there are no focused crawling methods to retrieve data for active learning.

We adopted an approach using a user-side algorithm instead of focused crawling. The advantage of user-side algorithms is that it works online and retrieves data in a few seconds to a few minutes, whereas focused crawling is typically realized by a resident program and takes a few hours to a few weeks, in which case the crawling procedure is a bottleneck of the time consumption. As active learning involves hundreds of iterations of data acquisition, it would be too expensive if each iteration took hours to weeks. With that being said, when the time constraint is not severe, the focused crawling approach is an appealing candidate because it does not rely on off-the-shelf APIs but can be applied to any Web pages. 

Another relevant realm is dataset mining from the Web \cite{zhang2021dsdd,castelo2021auctus}. These methods retrieve sets of compiled data, such as zip files and JSON files, while Seafaring retrieves unlabeled data from databases that are not designed for training models. The dataset mining methods and engines are easier to use and attractive if the target task is clear, while Seafaring can be used even when the task is not solid.

\section{Conclusion}

In this chapter, we have proposed Seafaring, an active learning method that uses a huge pool of unlabelled data. Compared with existing active learning research, Seafaring uses a several orders of magnitude larger pool. As the pool is extremely large, it is more likely that relevant images exist in the pool. Besides, the pool is general and contains diverse images, we use the same pool for many tasks, and we do not need to design the unlabelled pool for each task. So Seafaring reduces the cost of active learning further. The pool is, in addition, always up-to-date without any cost of maintenance because new images are added to the pool by the users of Flickr. In the experiments, we show that Seafaring is more effective than the counterpart of Seafaring with a smaller database and show the importance of the size of the pool. We also show that Seafaring automatically balances the ratio of positive and negative data without any explicit measures, while random sampling tends to have negative samples.

\chapter{Making Translators Privacy-aware on the User's Side}

\section{Introduction}

Machine translation systems are now essential in sectors including business and government for translating materials such as e-mails and documents~\citep{way2013emerging,brynjolfsson2019does,vieira2023machine}. Their rise in popularity can be attributed to recent advancements in language models~\citep{bahdanau2015neural,vaswani2017attention,openai2023gpt4} that have significantly improved translation accuracy, enhancing their overall utility. There is a growing demand to use these tools for private and sensitive information. For instance, office workers often need to translate e-mails from clients in other countries, but they want to keep these e-mails secret. Many are worried about using machine translation because there's a chance the information might get leaked. This means that, even with these helpful tools around, people often end up translating documents by themselves to keep the information safe.

Although many machine translation platforms claim they value privacy, the details and depth of this protection are not always clear. First, it's often uncertain how and to what level the data is kept safe. The details of the system are often an industrial secret of the service provider, and the source code is rarely disclosed. Even if providers are confident in their security, sophisticated attackers might still access private information. Also, there could be risks outside of these safeguards, including during data transfer, leading to potential leaks. Because of these concerns, users are cautious about using translation tools for sensitive data, missing out on their benefits.

In response to the prevalent concerns regarding data security in machine translation, we present PRISM (\underline{PRI}vacy \underline{S}elf \underline{M}anagement), which empowers users to actively manage and ensure the protection of their data. Instead of placing complete trust in the inherent security protocols of translation platforms, PRISM provides users with mechanisms for personal data safeguarding. This proactive strategy allows users to confidently use even translation engines that may not offer privacy measures. For platforms already equipped with privacy safeguards, PRISM acts as an additional protection layer, reinforcing their security mechanisms. PRISM adds these privacy features without much degradation of translation accuracy.

We propose two variants of PRISM. PRISM-R is a simple method with a theoretical guarantee of differential privacy. PRISM* (PRISM-Star) is a more sophisticated method that can achieve better translation accuracy than PRISM-R at the price of losing the theoretical guarantee. In practice, we recommend using PRISM* for most use cases and PRISM-R for cases where the theoretical guarantee is required.

In the experiments, we use real-world translators, namely T5~\cite{raffel2020exploring} and ChatGPT (GPT-3.5-turbo)~\cite{openai2023chatgpt,jiao2023chatgpt}, and the English $\to$ French and English $\to$ German translation. We confirm that PRISM can effectively balance privacy protection with translation accuracy.

The contributions of this chapter are as follows:

\begin{itemize}
  \item We formulate the problem of user-side realization of data privacy for machine translation systems.
  \item We propose PRISM, which enables users to preserve the privacy of data on their own initiative.
  \item We formally show that PRISM can preserve the privacy of data in terms of differential privacy.
  \item We propose an evaluation protocol for user-side privacy protection for machine translation systems.
  \item We confirm that PRISM can effectively balance privacy protection with translation accuracy using the real-world ChatGPT translator.
\end{itemize}

\section{Problem Formulation}

We assume that we have access to a black-box machine translation system $T$ that takes a source text $x$ and outputs a target text $y$. In practice, $T$ can be ChatGPT \cite{openai2023chatgpt}, DeepL \cite{deepl}, or Google Translate \cite{googletranslate}. We assume that the quality of the translation $T(x)$ is satisfactory, but $T$ may leak information or be unreliable in terms of privacy protection. Therefore, it is crucial to avoid feeding sensitive text $x$ directly into $T$. We have a sensitive source text $x_{\text{pri}}$, and our goal is to safely translate $x_{\text{pri}}$. We also assume that we have a dataset of non-sensitive source texts $\mathcal{D} = \{x_1, \dots, x_n\}$. $\mathcal{D}$ is unlabeled and need not be relevant to $x_{\text{pri}}$. Therefore, it is cheap to collect $\mathcal{D}$. In practice, $\mathcal{D}$ can be public news texts, and $x_{\text{pri}}$ can be an e-mail.

When considering user-side realization, the method should be simple enough to be executed on the user's side. For example, it is difficult for users to run a large language model or to train a machine learning model on their own because it requires a lot of computing resources and advanced programming skills. Therefore, we stick to simple and accessible methods.

In summary, our goal is to safely translate $x_{\text{pri}}$ using $T$ and $\mathcal{D}$, and the desiderata of the method are summarized as follows:
\begin{description}
  \item[Accurate] The final output should be a good translation of the input text $x_{\text{pri}}$.
  \item[Secure] The information passed to $T$ should not contain much information of the input text $x_{\text{pri}}$.
  \item[Simple] The method should be lightweight enough for end-users to use.
\end{description}

\section{Proposed Method (PRISM)}

\subsection{Overview}

\begin{figure}[tb]
  \centering
  \includegraphics[width=\hsize]{./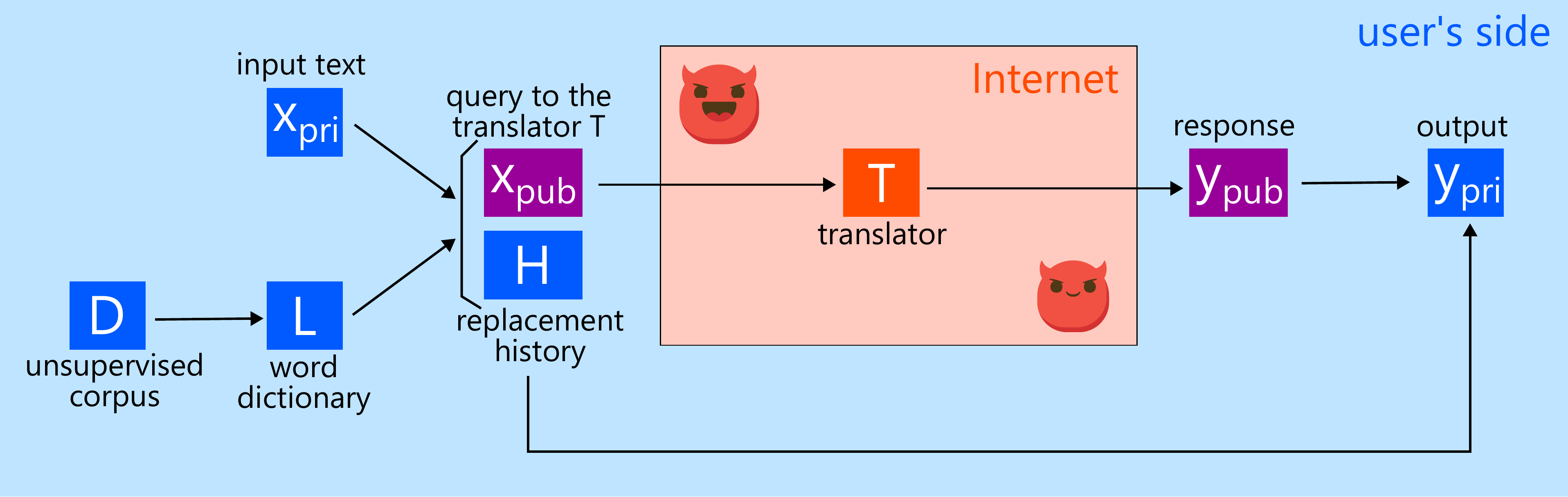}
  \caption{Overview of PRISM. The blue boxes indicate information kept on the user's side, the purple boxes indicate information exposed to the Internet, and the red region indicates the Internet. The purple boxes should not contain much information about the input text $x_{\text{pri}}$.}
  \label{fig: overview}
\end{figure}

PRISM has four steps as shown in Figure \ref{fig: overview}. (i) PRISM creates a word translation dictionary using $T$ and $\mathcal{D}$. This step should be done only once, and the dictionary can be used for other texts and users. (ii) PRISM converts the source text $x_{\text{pri}}$ to a non-sensitive text $x_{\text{pub}}$. (iii) PRISM translates $x_{\text{pub}}$ to $y_{\text{pub}}$ using $T$. (iv) PRISM converts $y_{\text{pub}}$ to $y_{\text{pri}}$ using the replacement history $\mathcal{H}$. We explain each step in detail in the following.

Let us first illustrate the behavior of PRISM with an example. let $x_{\text{pri}}$ be ``Alice is heading to the hideout.'' and $T$ be a machine translation system from English to French. PRISM converts $x_{\text{pri}}$ to $x_{\text{pub}}$ = ``Bob is heading to the store,'' which is not sensitive and can be translated with $T$. PRISM temporarily stores the substitutions (Alice $\to$ Bob) and (base $\to$ restaurant). Note that this substitution information is kept on the user's side and is not passed to $T$. Then, PRISM translates $x_{\text{pub}}$ to $y_{\text{pub}}$ = ``Bob se dirige vers la boutique.'' using the translator $T$. Finally, PRISM converts $y_{\text{pub}}$ to $y_{\text{pri}}$ = ``Alice se dirige vers la cachette.'' using the word translation dictionary, Alice (En) $\to$ Alice (Fr), Bob (En) $\to$ Bob (Fr), store (En) $\to$ boutique (Fr), and hideout (En) $\to$ cachette (Fr). The final output $y_{\text{pri}}$ is the translation of $x_{\text{pri}}$, and PRISM did not pass the information that Alice is heading to the hideout to $T$.

\subsection{Word Translation Dictionary} \label{sec: word_translation_dictionary}

We assume that a user does not have a word translation dictionary for the target language. We propose to create a word translation dictionary using the unsupervised text dataset $\mathcal{D}$. The desideratum is that the dictionary should be robust. Some words have multiple meanings, and we want to avoid incorrect substitutions in PRISM. Let $\mathcal{V}$ be the vocabulary of the source language. Let $S$ be a random variable that takes a random sentence from $\mathcal{D}$, and let $S_w$ be the result of replacing a random word in $S$ with $w \in \mathcal{V}$. We translate $S$ to the target language and obtain $R$ and translate $S_w$ to obtain $R_w$. Let \begin{align}
  p_{w, v} \stackrel{\text{def}}{=} \frac{\text{Pr}[v \in R_w]}{\text{Pr}[v \in R]}
\end{align} be the ratio of the probability of $v$ appearing in $R_w$ to the probability of $v$ appearing in $R$. The higher $p_{w, v}$ is, the more likely $v$ is the correct translation of $w$ since $v$ appears in the translation if and only if $w$ appears in the source sentence. Note that if we used only the numerator, article words such as ``la'' and ``le'' would have high scores, and therefore we use the ratio instead. Let $L(w)$ be the list of words in the decreasing order of $p_{w, v}$. $L(w, 1)$ is the most likely translation of $w$, and $L(w, 2)$ is the second most likely translation of $w$, and so on.

It should be noted that the translation engine used here is not necessarily the same as the one $T$ we use in the test phase. As we need to translate many texts here, we can use a cheaper translation engine. We also note that once we create the word translation dictionary, we can use it for other texts and users. We will distribute the word translation dictionaries for English $\to$ French and English $\to$ German, and users can skip this step if they use these dictionaries.

\subsection{PRISM-R} \label{sec: prismr}

PRISM-R is a simple method to protect data privacy on the user's side. PRISM-R randomly selects words $w_1, \ldots, w_k$ in the source text $x_{\text{pri}}$ and randomly selects substitution words $u_1, \ldots, u_k$ from the word translation dictionary. $x_{\text{pub}}$ is the result of replacing $w_1$ with $u_1$, \ldots, and $w_k$ with $u_k$. PRISM-R then translates $x_{\text{pub}}$ to $y_{\text{pub}}$ using $T$. Finally, PRISM-R converts $y_{\text{pub}}$ to $y_{\text{pri}}$ as follows. Possible translation words of $u_i$ are $L(u_i)$. PRISM-R first searches for $L(u_i, 1)$, the most likely translation of $u_i$, in $y_{\text{pub}}$. There should be $L(u_i, 1)$ in $y_{\text{pub}}$ if $T$ translated $u_i$ to $L(u_i, 1)$. If $L(u_i, 1)$ is found, PRISM-R replaces $L(u_i, 1)$ with $L(w_i, 1)$. However, if $u_i$ has many translation candidates, $T$ may not have translated $u_i$ to $L(u_i, 1)$. If $L(u_i, 1)$ is not found, it proceeds to $L(u_i, 2)$, the second most likely translation of $u_i$, and replaces $L(u_i, 2)$ with $L(w_i, 1)$, and so on.

The pseudo code is shown in Algorithm \ref{alg: prismr}.  

\begin{algorithm2e}[tb]
  \DontPrintSemicolon
  \caption{PRISM-R} \label{alg: prismr}
  \KwData{Source text $x_{\text{pri}}$; Word translation dictionary $L$; Ratio $r \in (0, 1)$.}
  \KwResult{Translated text $y_{\text{pri}}$.}
  \BlankLine
  $t_1, \ldots, t_n \gets \text{Tokenize}(x_{\text{pri}})$ \tcp*{Tokenize $x_{\text{pri}}$}
  $\mathcal{H} \gets \emptyset$ \tcp*{The history of substitutions}
  \For{$i \gets 1$ \textup{\textbf{to}} $n$}{
      $p \sim \text{Unif}(0, 1)$ \;
      \If{$p < r$}{
          $u_i \gets$ a random source word in $L$ \tcp*{Choose a substitution word}
          $\mathcal{H} \gets \mathcal{H} \cup \{(t_i, u_i)\}$ \tcp*{Update the history}
          $t_i \gets u_i$ 
      }
  }
  $x_{\text{pub}} \gets \text{Detokenize}(t_1, \ldots, t_n)$ \tcp*{Detokenize $t_1, \ldots, t_n$}
  $y_{\text{pub}} \gets T(x_{\text{pub}})$ \tcp*{Translate $x_{\text{pub}}$}
  $y_{\text{pri}} \gets y_{\text{pub}}$ \tcp*{Copy $y_{\text{pub}}$}
  \For{$(w, u) \in \mathcal{H}$}{
      \For{$v \in L(u)$}{
          \If{$v \in y_{\text{pri}}$}{
              $y_{\text{pri}} \gets$ replace $v$ with $L(w, 1)$ in $y_{\text{pri}}$
              \textbf{break}
          }
      }
  }
  \Return{$y_{\text{pri}}$}
\end{algorithm2e}

\subsection{Differential Privacy of PRISM-R}

Differential privacy~\citep{dwork2006differential} provides a formal guarantee of data privacy. We show that PRISM-R satisfies differential privacy. This result not only provides a privacy guarantee but also shows PRISM-R can be combined with other mechanisms due to the inherent composability and post-processing resilience of differential privacy~\citep{mcsherry2009privacy}.

We first define differential privacy. We say texts $x = w_1, \ldots, w_n$ and $x' = w'_1, \ldots, w'_n$ are neighbors if $w_i = w'_i$ for all $i \in \{1, \ldots, n\}$ except for one $i \in \{1, \ldots, n\}$. Let $x \sim x'$ denote that $x$ and $x'$ are neighbors. Let $A$ be a randomized mechanism that takes a text $x$ and outputs a text $y$. Differential privacy is defined as follows.
\begin{definition}[Differential Privacy]
  $A$ satisfies $\epsilon$-differential privacy if for all $x \sim x'$ and $S \subseteq \text{Im}(A)$, \begin{align}
    \frac{\text{Pr}[A(x) \in S]}{\text{Pr}[A(x') \in S]} \leq e^\epsilon.
  \end{align} 
\end{definition}
We show that the encoder of PRISM-R $A_{\text{PRISM-R}}\colon x_{\text{pri}} \mapsto x_{\text{pub}}$ is differential private, and therefore, $x_{\text{pri}}$ cannot be inferred from $x_{\text{pub}}$, which is the only information that $T$ can access. 

\begin{theorem} \label{thm: prismr}
  $A_{\textup{PRISM-R}}$ is $\ln \left(\frac{r + |\mathcal{V}| (1 - r)}{r}\right)$-differential private.
\end{theorem}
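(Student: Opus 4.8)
The plan is to exploit the fact that the encoder $A_{\text{PRISM-R}}$ treats each token position independently, so that the distribution of the output sequence $x_{\text{pub}}$ is a product of identical per-position channels. First I would adopt the token-sequence view: because \text{Detokenize} is a deterministic map, the post-processing resilience of differential privacy lets me reduce the claim to the randomized map sending the input token sequence $w_1,\dots,w_n$ to the output token sequence $o_1,\dots,o_n$. For a single position with input token $w$, the loop in Algorithm \ref{alg: prismr} keeps the token with probability $1-r$ and otherwise overwrites it with a word drawn uniformly from the $|\mathcal{V}|$ source entries of the dictionary, giving the channel
\[
P(v \mid w) = (1-r)\,\mathbbm{1}[v = w] + \frac{r}{|\mathcal{V}|},
\]
so that $P(w\mid w) = (1-r) + r/|\mathcal{V}|$ and $P(v\mid w) = r/|\mathcal{V}|$ for $v \neq w$.

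Next, for neighbors $x \sim x'$ differing only at position $j$, I would write the probability of any fixed output sequence $y = o_1\cdots o_n$ as a product over positions and cancel the $n-1$ factors at positions $i \neq j$, where the inputs agree. This collapses the likelihood ratio to a single coordinate:
\[
\frac{\Pr[A_{\text{PRISM-R}}(x) = y]}{\Pr[A_{\text{PRISM-R}}(x') = y]} = \frac{P(o_j \mid w_j)}{P(o_j \mid w'_j)}.
\]
The ratio is largest when the numerator is maximal and the denominator minimal; since $w_j \neq w'_j$, this happens at $o_j = w_j$, where the numerator is $(1-r) + r/|\mathcal{V}|$ and the denominator is $r/|\mathcal{V}|$. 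A short case check over $o_j = w_j$, $o_j = w'_j$, and the remaining values confirms no output yields a larger ratio, giving the pointwise bound
\[
\frac{\Pr[A_{\text{PRISM-R}}(x) = y]}{\Pr[A_{\text{PRISM-R}}(x') = y]} \le \frac{(1-r) + r/|\mathcal{V}|}{r/|\mathcal{V}|} = \frac{r + |\mathcal{V}|(1-r)}{r} = e^{\epsilon}.
\]

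Finally I would promote this pointwise density-ratio bound to the event bound in the definition: since $\Pr[A(x)=y] \le e^\epsilon \Pr[A(x')=y]$ holds for every output $y$, summing over $y \in S$ gives $\Pr[A(x)\in S] \le e^\epsilon \Pr[A(x')\in S]$ for all $S \subseteq \text{Im}(A)$, which is exactly $\epsilon$-differential privacy with $\epsilon = \ln\!\big((r + |\mathcal{V}|(1-r))/r\big)$. The main subtlety to pin down is the exact specification of the substitution channel — in particular whether the uniformly drawn replacement is permitted to coincide with the original token, since this fixes the constant in $P(w\mid w)$ and hence the precise value of $\epsilon$; I would state this modeling choice explicitly so that the maximizing case analysis is unambiguous. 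The rest is routine product-measure bookkeeping.
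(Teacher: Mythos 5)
Your proposal is correct and establishes exactly the bound in the theorem; the only difference from the paper's proof is in how the output distribution is computed. The paper evaluates $\Pr[A_{\text{PRISM-R}}(x)=s]$ by summing over the number $i$ of replaced positions and collapsing the sum with the binomial theorem, arriving at $\left(\frac{r}{|\mathcal{V}|}\right)^{c}\left((1-r)+\frac{r}{|\mathcal{V}|}\right)^{n-c}$ where $c$ is the Hamming distance between $x$ and $s$; you obtain the same closed form directly from the per-position product channel $P(v\mid w)=(1-r)\mathbbm{1}[v=w]+r/|\mathcal{V}|$, which is arguably cleaner bookkeeping. From there both arguments are identical: the likelihood ratio for neighbors collapses to a single coordinate, the same three-way case analysis gives the maximum $\frac{(1-r)+r/|\mathcal{V}|}{r/|\mathcal{V}|}=\frac{r+|\mathcal{V}|(1-r)}{r}$, and summing the pointwise bound over $S$ yields the event-level definition. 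The modeling subtlety you flag --- whether the uniform replacement may coincide with the original token --- is resolved in the paper's favorable direction (it may coincide), which is precisely what your channel assumes, so your constant matches. No gap.
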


We emphasize that the additive constant $\delta$ is zero, i.e., PRISM-R is $(\epsilon, 0)$-differential private, which provides a strong guarantee of data privacy. 

\begin{proof}
  Let $x = w_1, \ldots, w_n$ and $x' = w'_1, \ldots, w'_n$ be any two neighboring texts. Without loss of generality, we assume that $w_1 = w'_1$, \ldots, $w_{n - 1} = w'_{n - 1}$ and $w_n \neq w'_n$. Let $s = s_1, \ldots, s_n$ be any text, and let \begin{align}
    c \stackrel{\text{def}}{=} \sum_{i = 1}^n \mathbbm{1}[s_i \neq w_i]
  \end{align} be the number of different words in $x$ and $s$. The probability that PRISM-R convert $x$ to $s$ is \begin{align}
    \text{Pr}[A_{\text{PRISM-R}}(x) = s] = \sum_{i = c}^n \binom{n - c}{i - c} r^i (1 - r)^{n - i} \left(\frac{1}{|\mathcal{V}|}\right)^i, 
  \end{align} where $i$ is the number of replaced words because all of the different words must be replaced, and the number of ways to choose the remaining words is $\binom{n - c}{i - c}$. This probability can be simplified as follows: \begin{align}
    \text{Pr}[A_{\text{PRISM-R}}(x) = s] &= \sum_{i = c}^n \binom{n - c}{i - c} r^i (1 - r)^{n - i} \left(\frac{1}{|\mathcal{V}|}\right)^i \\
    &= \sum_{i = 0}^{n - c} \binom{n - c}{i} r^{i + c} (1 - r)^{n - c - i} \left(\frac{1}{|\mathcal{V}|}\right)^{i + c} \\
    &= r^c (1 - r)^{n - c} \left(\frac{1}{|\mathcal{V}|}\right)^c \sum_{i = 0}^{n - c} \binom{n - c}{i} \left(\frac{r}{|\mathcal{V}| (1 - r)}\right)^i \\
    &= r^c (1 - r)^{n - c} \left(\frac{1}{|\mathcal{V}|}\right)^c \left(1 + \frac{r}{|\mathcal{V}| (1 - r)}\right)^{n - c}, \label{eq: prismr-x}
  \end{align} where we used the binomial theorem in the last equality. Similarly, let \begin{align}
    c' &\stackrel{\text{def}}{=} \sum_{i = 1}^n \mathbbm{1}[s_i \neq w'_i] \\
    &= \begin{cases}
      c & \text{if } \mathbbm{1}[s_n \neq w_n] = \mathbbm{1}[s_n \neq w'_n] = 1 \\
      c + 1 & \text{if } \mathbbm{1}[s_n \neq w_n] = 0 \text{ and } \mathbbm{1}[s_n \neq w'_n] = 1 \\
      c - 1 & \text{if } \mathbbm{1}[s_n \neq w_n] = 1 \text{ and } \mathbbm{1}[s_n \neq w'_n] = 0 \\
    \end{cases}.
  \end{align} be the number of different words in $x'$ and $s$. Then, \begin{align}
    \text{Pr}[A_{\text{PRISM-R}}(x') = s] = r^{c'} (1 - r)^{n - c'} \left(\frac{1}{|\mathcal{V}|}\right)^{c'} \left(1 + \frac{r}{|\mathcal{V}| (1 - r)}\right)^{n - c'}. \label{eq: prismr-xp}
  \end{align} Combining Eqs. (\ref{eq: prismr-x}) and (\ref{eq: prismr-xp}), \begin{align}
    \frac{\text{Pr}[A_{\text{PRISM-R}}(x) = s]}{\text{Pr}[A_{\text{PRISM-R}}(x') = s]} &= \begin{cases}
      1 & \text{if } \mathbbm{1}[s_n \neq w_n] = \mathbbm{1}[s_n \neq w'_n] = 1 \\
      \frac{r + |\mathcal{V}| (1 - r)}{r} & \text{if } \mathbbm{1}[s_n \neq w_n] = 0 \text{ and } \mathbbm{1}[s_n \neq w'_n] = 1 \\
      \frac{r}{r + |\mathcal{V}| (1 - r)} & \text{if } \mathbbm{1}[s_n \neq w_n] = 1 \text{ and } \mathbbm{1}[s_n \neq w'_n] = 0 \\
    \end{cases} \\
    &\le \frac{r + |\mathcal{V}| (1 - r)}{r}.
  \end{align}
\end{proof}

An interesting part of PRISM is that PRISM is resilient against the purturbation due to the final substitution step. Many of differential private algorithms add purturbation to the data \cite{feyisetan2020privacy,bo2021er, weggenmann2022dpvae,abadi2016deep, andrew2021differentially} and therefore, their final output becomes unreliable when the privacy constraint is severe. By contrast, PRISM enjoys both of the privacy guarantee and the reliability of the final output thanks to the purturbation step and the recovery step. The information $x_{\text{pub}}$ passed to $T$ has little information due ot the purturbation step. This, however, makes the intermediate result $y_{\text{pub}}$ an unreliable translation of $x_{\text{pri}}$. PRISM recovers a good translation $y_{\text{pri}}$ by the final substitution step.

\subsection{PRISM*} \label{sec: prismstar}

PRISM* is a more sophisticated method and achieves better accuracy than PRISM-R. PRISM* chooses words $w_1, \ldots, w_k$ in the source text $x_{\text{pri}}$ and substitution words $u_1, \ldots, u_k$ from the word translation dictionary more carefully while PRISM-R chooses them randomly to achieve differential privacy. PRISM* has two mechanisms to choose words. The first mechanism is to choose words so that the part-of-speech tags match. The second mechanism is to choose words that can be translated accurately by the word dictionary. We explain each mechanism in detail in the following.

\begin{algorithm2e}[tb]
  \DontPrintSemicolon
  \caption{PRISM*} \label{alg: prismstar}
  \KwData{Source text $x_{\text{pri}}$; Word translation dictionary $L$; Confidence Scores $c$, Ratio $r \in (0, 1)$.}
  \KwResult{Translated text $y_{\text{pri}}$.}
  \BlankLine
  $t_1, \ldots, t_n \gets \text{Tokenize}(x_{\text{pri}})$ \tcp*{Tokenize $x_{\text{pri}}$}
  $s_1, \ldots, s_n \gets \text{Part-of-Speech}(t_1, \ldots, t_n)$ \;
  $k \gets 0$ \tcp*{The number of substitutions}
  $\mathcal{H} \gets \emptyset$ \tcp*{The history of substitutions}
  \For{$i \in \{1, \ldots, n\}$ \textup{in the decreasing order of }$c(t_i, s_i)$}{
    $u_i \gets$ the unused source word $u_i$ with the highest confidence score $c(u_i, s_i)$ in $L$ \tcp*{Choose a substitution word}
    $\mathcal{H} \gets \mathcal{H} \cup \{(t_i, u_i, s_i)\}$ \tcp*{Update the history}
    $t_i \gets u_i$ \;
    $k \gets k + 1$ \;
    \If{$k \geq r n$}{
      \textbf{break}
    }
  }
  $x_{\text{pub}} \gets \text{Detokenize}(t_1, \ldots, t_n)$ \tcp*{Detokenize $t_1, \ldots, t_n$}
  $y_{\text{pub}} \gets T(x_{\text{pub}})$ \tcp*{Translate $x_{\text{pub}}$}
  $y_{\text{pri}} \gets y_{\text{pub}}$ \tcp*{Copy $y_{\text{pub}}$}
  \For{$(w, u, s) \in \mathcal{H}$}{
    \For{$v \in L(u, s)$}{
      \If{$v \in y_{\text{pri}}$}{
        $y_{\text{pri}} \gets$ replace $v$ with $L(w, s, 1)$ in $y_{\text{pri}}$
        \textbf{break}
      }
    }
  }
  \Return{$y_{\text{pri}}$}
\end{algorithm2e}

PRISM* creates a word translation dictionary with a part of speech tag. The procedure is the same as Section \ref{sec: word_translation_dictionary} except that we use the part-of-speech tag of the source word as the key of the dictionary. Let $(w, s)$ be a pair of a source word $w$ and its part-of-speech tag $s$. PRISM* replaces a random word with part-of-speech tag $s$ with $w$ to create $S_{w, s}$, obtains $R_{w, s}$ by translating $S_{w, s}$, and defines \begin{align}
  p_{w, s, v} \stackrel{\text{def}}{=} \frac{\text{Pr}[v \in R_{w, s}]}{\text{Pr}[v \in R]}.
\end{align} $L(w, s)$ is the list of words $v$ in the decreasing order of $p_{w, s, v}$. 

In the test time, PRISM* chooses substitute words so that the part-of-speech tags match, and use the word translation dictionary with part-of-speech tags to determine the translated word.

PRISM* also uses the confidence score \begin{align}
  c(w, s) \stackrel{\text{def}}{=} \max_v p_{w, s, v},
\end{align} which indicates the reliability of word translation $w \to L(w, 1)$, to choose words. Multiple-meaning words should not be substituted in PRISM because a word-to-word translation may fail. PRISM* chooses words to be substituted in the decreasing order of the confidence score, which results in selecting single-meaning and reliable words that can be translated accurately by the word dictionary. If a word $(w, s)$ has two possible translations $v_1$ and $v_2$ that are equally likely, $\text{Pr}[v \in R_{w, s}]$ is lower than $0.5$ for any $v$, even for $v = v_1$ and $v = v_2$, the confidence score $c(w, s)$ tends to be low, and PRISM* avoids selecting such $(w, s)$. The selected words $w$ can be reliably translated by $L(w, s, 1)$. PRISM* also chooses substitute words with high confidence scores so that PRISM* can robustly find the corresponding word $L(w, s, 1)$ in the translated text $y_{\text{pub}}$ in the final substitution step.

The pseudo code of PRISM* is shown in Algorithm \ref{alg: prismstar}.

Note that PRISM* does not enjoy the differential privacy guarantee of PRISM-R as (i) PRISM* replaces words with the same part-of-speech tag so that two texts with different part-of-speech templates have zero probability of transition, and (ii) PRISM* chooses words with high confidence scores so that the probability of transition is biased. Nevertheless, PRISM* empirically strikes a better trade-off between privacy and accuracy than PRISM-R as we will show in the experiments. Note that PRISM* can be combined with PRISM-R to guarantee differential privacy. For example, one can apply PRISM-R and PRISM* in a nested manner, which guarantees differential privacy due to the differential privacy of PRISM-R (Theorem \ref{thm: prismr}) and the post processing resilience of differential privacy~\citep{mcsherry2009privacy}. One can also apply PRISM-R with probability $(1 - \beta)$ and PRISM* with probability $\beta$, which also guarantees differential privacy because the minimum probability of transition is bounded from below due to the PRISM-R component.

\section{Experiments}

We confirm the effectiveness of our proposed methods through experiments.

\subsection{Evaluation Protocol}

\begin{figure}[tb]
  \centering
  \includegraphics[width=\hsize]{./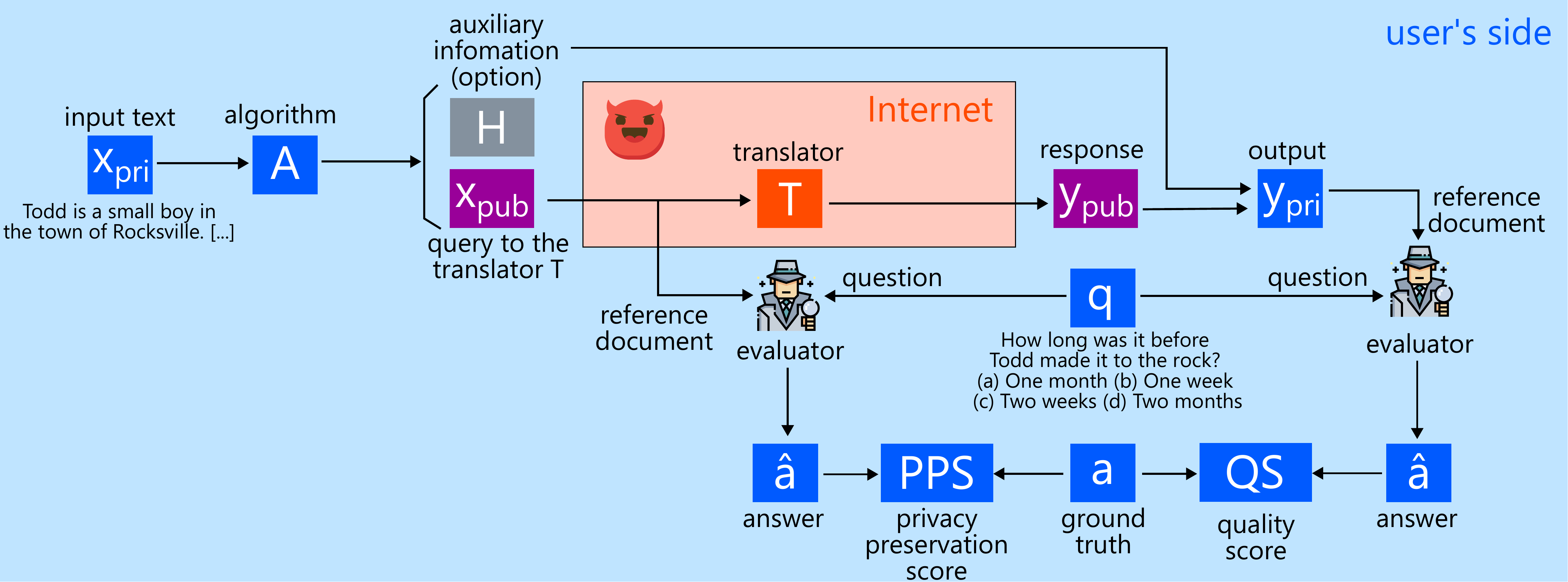}
  \caption{Overview of the evaluation protocol.}
  \label{fig: eval}
\end{figure}

As our problem setting is novel, we first propose an evaluation protocol for the user-side realization of privacy-aware machine translation systems. We evaluate the translation accuracy and privacy protection as follows.

Let $\mathcal{X} = \{x_1, x_2, \ldots, x_N\}$ be a set of test documents to be translated. Our aim is to read $\mathcal{X}$ in the target language without leaking information of $\mathcal{X}$.

For evaluation purposes, we introduce a question-answering (QA) dataset $\mathcal{Q} = \{(q_{ij}, a_{ij})\}$, where $q_{ij}$ and $a_{ij}$ are a multiple-choice question and answer regarding the document $x_i$, respectively. $\mathcal{Q}$ is shown only to the evaluator, and not to the translation algorithm.

\textbf{Privacy-preserving Score.} The idea of our privacy score is based on an adversarial evaluation where adversaries try to extract information from the query sent by the user. Let $x^{\text{pub}}_i$ be the query sent to the translator $T$. An evaluator is given $x^{\text{pub}}_i$ and $q_{ij}$, and asked to answer the question. The privacy-preserving score of the translation algorithm is defined as $\text{PPS} = (1 - \text{acc})$, where acc is the accuracy of the evaluator. The higher the privacy-preserving score is, the better the privacy protection is. Intuitively, if the accuracy is low, the evaluator cannot draw any information from $x^{\text{pub}}_i$ to answer the question. Conversely, if the accuracy is high, the evaluator can infer the answer solely from $x^{\text{pub}}_i$, which means that $x^{\text{pub}}_i$ leaks information. We note that the translation algorithm does not know the question $q_{ij}$, and therefore, the translation algorithm needs to protect all information to achieve a high privacy-preserving score so that any answer on the document cannot be drawn from $x^{\text{pub}}_i$. The rationale behind this score is that we cannot predict what form information leaks will take in advance. Even if $x_\text{pub}$ does not look like $x_\text{pri}$ at a glance, sophisticated adversaries might extract information that can be used to infer $x_\text{pri}$. Therefore, we employ an outside evaluator and adopt an adversarial evaluation.

\textbf{Quality Score.} Let $y^{\text{pri}}_i$ be the final output of the translation algorithm. We use the same QA dataset and ask an evaluator to answer the question $q_{ij}$ using $y^{\text{pri}}_i$. The quality score of the translation algorithm is defined as $\text{QS} = \text{acc}$, where acc is the accuracy of the evaluator. The higher the quality score is, the better the translation quality is. Intuitively, if the accuracy is high, the evaluator can answer the question correctly using $y^{\text{pri}}_i$, which means that $y^{\text{pri}}_i$ contains sufficient information on $x_i$. We note again that the translation algorithm does not know the question $q_{ij}$, and therefore, the translation algorithm needs to preserve all information to achieve a high quality score so that any answer on the document can be drawn from $y^{\text{pri}}_i$.

The protocol is illustrated in Figure \ref{fig: eval}.

\begin{algorithm2e}[tb]
  \DontPrintSemicolon
  \caption{AUPQC} \label{alg: aupqc}
  \BlankLine
  $\mathcal{P} \gets $ the set of the trade-off parameters in the increasing order of the privacy-preserving score. \;
  $s \gets 0$ \tcp*{The area under the curve}
  $(\text{PPS}_{\text{prev}}, \text{QS}_{\text{prev}}) \gets (\text{None}, \text{None})$ \;
  \For{$\alpha \gets \mathcal{P}$}{
    $(\text{PPS}, \text{QS}) \gets \text{Evaluate}(\alpha)$ \tcp*{Evaluate the privacy-preserving score and the quality score}
    \If{$\text{PPS}_{\text{prev}} \textup{ is None}$}{
      $s \gets s + \text{PPS} \times \text{QS}$ \tcp*{Add the area of the first rectangle}
    }
    \Else{
      $s \gets s + (\text{PPS} - \text{PPS}_{\text{prev}}) \times (\text{QS}_{\text{prev}} + \text{QS}) / 2$ \tcp*{Add the area of the trapezoid}
    }
    $(\text{PPS}_{\text{prev}}, \text{QS}_{\text{prev}}) \gets (\text{PPS}, \text{QS})$
  }
  \Return{$s$}
\end{algorithm2e}

We introduce the area-under-privacy-quality curve (AUPQC) to measure the effectiveness of methods. The privacy-preserving score and the quality score are in a trade-off relationship. Most methods, including PRISM-R and PRISM*, have a parameter to control the trade-off. An effective method should have a high privacy-preserving score and a high quality score at the same time. We use the AUPQC to measure the trade-off. Specifically, we scan the trade-off parameter and plot the privacy-preserving score and the quality score in the two-dimensional space. The AUPQC is the area under the curve. The larger the AUPQC is, the better the method is. The pseudo code is shown in Algorithm \ref{alg: aupqc}.

We also introduce QS@p, a metric indicating the quality score at a specific privacy-preserving score. The higher QS@p is, the better the method is.  In realistic scenarios, we may have a severe security budget $p$ which represents the threshold of information leakage we can tolerate. QS@p is particularly useful under such constraints as it provides a direct measure of the quality we can enjoy under the security budget. It is noteworthy that the privacy-preserving score can be evaluated before we send information to the translator $T$. Therefore, we can tune the trade-off parameter and ensure that we enjoy the privacy-preserving score = $p$ and the quality score = QS@p.

\subsection{Experimental Setups}

We use the MCTest dataset~\cite{richardson2013mctest} for the documents $x_i$, question $q_{ij}$, and answer $a_{ij}$. Each document in the MCTest dataset is a short story with four questions and answers. The reason behind this choice is that the documents of the MCTest dataset were original ones created by crowdworkers. This is in contrast to other reading comprehension datasets such as NarrativeQA~\cite{kocisky2018narrativeqa} and CBT~\cite{hill2016goldilocks} datasets, which are based on existing books and stories, where the evaluator can infer the answers without relying on the input document $x^{\text{pub}}_i$.

We use T5~\cite{raffel2020exploring} and GPT-3.5-turbo~\cite{openai2023chatgpt} as the translation algorithm $T$. We use the prompt ``Directly translate English to [Language]: [Source Text]'' to use GPT-3.5-turbo for translation.

We also use GPT-3.5-turbo as the evaluator. Specifically, the prompt is composed of four parts. The first part of the prompt is the instruction ``Read the following message and solve the following four questions.'' The second part is the document to be evaluated, which is the query document $x^{\text{pub}}_i$ for the privacy-preserving score and the final output $y^{\text{pri}}_i$ for the quality score. The third part is the four questions. The last part is the instruction ``Output only four characters representing the answers, e.g.,\textbackslash n1. A\textbackslash n2. B\textbackslash n3. A\textbackslash n4. D.'' We parse the output of GPT-3.5-turbo to extract the answers and evaluate the accuracy.

We use the following four methods.

\textbf{Privacy- and Utility-Preserving Textual Analysis (PUP)}~\cite{feyisetan2020privacy} is a differential private algorithm to convert a document to a non-sensitive document without changing the meaning of $x$. PUP has a trade-off parameter $\lambda$ for privacy and utility. We convert the source text $x_{\text{pri}}$ to $x_{\text{pub}}$ using PUP and translate $x_{\text{pub}}$ to obtain the final output $y_{\text{pri}}$.

\textbf{NoDecode} translates the encoded text $x_{\text{pub}}$ of PRISM* to obtain the final output $y_{\text{pri}}$. NoDecode does not decode the output of the translator $T$. This method has the same privacy-preserving property as PRISM* but the accuracy should be lower. The improvements from NoDecode are the contribution of our framework.

\textbf{PRISM-R} is our method proposed in Section \ref{sec: prismr}.

\textbf{PRISM*} is our method proposed in Section \ref{sec: prismstar}.

We change the ratio $r$ of NoDecode, PRISM-R, and PRISM* and the parameter $\lambda$ of PUP to control the trade-off between privacy-preserving score and the quality score.

\subsection{Results}

Figure \ref{fig: privacy_quality} shows the trade-off, where the x-axis is the privacy-preserving score and the y-axis is the quality score. PRISM* clearly strikes the best trade-off, and the results of PRISM-R are also better than those of NoDecode and PUP, especially when the privacy-preserving score is high. 

The maximum privacy-preserving score is around $0.5$ for all methods, even though there are four choices in each question. Intuitively, the accuracy of the evaluator should be $0.25$ when the reference document is random, so the maximum PPR should be $0.75$. We found that this is because some questions can be inferred solely from the question text. For example, there is a question ``How did the girl hurt her knee? (a) she was in the street (b) she had no friends (c) she fell down, and (d) the old lady's bike hit her.'' We can infer the answer is (c) or (d) as (a) and (b) do not make sense (the answer is (c)). To verify this hypothesis, we had GPT-3.5-turbo answer the questions using only the question text. The accuracy was $0.492$. Therefore, PPS $\approx 0.5$ indicates that the query has no more information than the empty text. This experiment also shows that the GPT-3.5-turbo evaluator is so powerful that it can infer the answer from the question text only, and it is an effective adversarial evaluator.

Table \ref{table: translation} shows the quantitative results. We report QS@0.5, i.e., the quality score when PPS is $0.5$, which roughly means the quality we can enjoy when no information is leaked based on the above analysis. PRISM* consistently achieves the best scores across all the metrics and settings, and PRISM-R achieves the second-best results in most of the metrics and settings. Notably, PRISM* achieves QS $\approx 0.8$ when no information is leaked. This result shows that PRISM* can accurately translate the texts while protecting the privacy of the texts.

Table \ref{tab: example1} shows sample translations of PRISM*. The leaked information $x_{\text{pub}}$ does not make sense and reveals little about the secret text $x_{\text{pri}}$. Although it contains some grammatical errors, the output $y_{\text{pri}}$ is generally a correct translation of the input text $x_{\text{pri}}$, which is useful for native speakers to grasp the content.

\begin{figure}
  \centering
  \includegraphics[width=\hsize]{./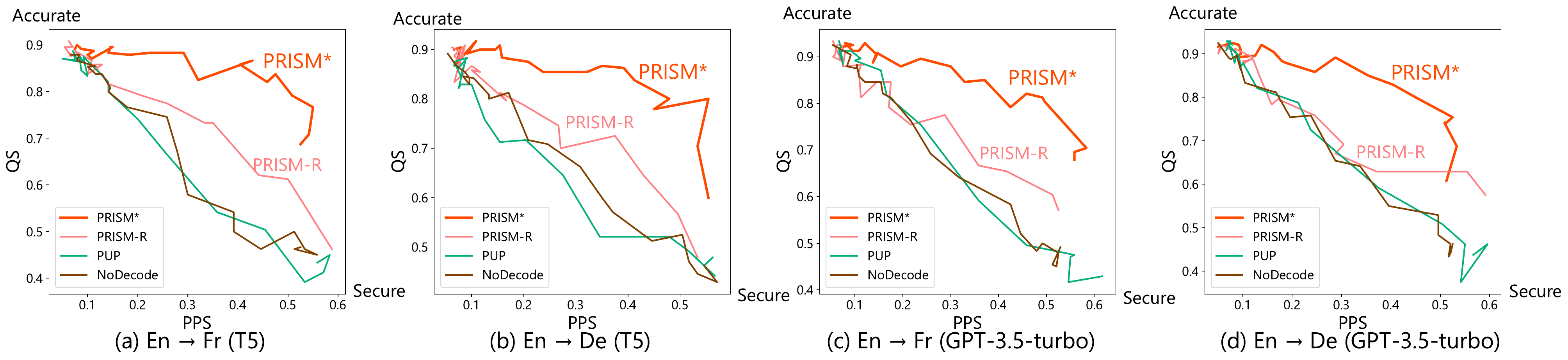}
  \caption{Trade-off between the privacy-preserving score and the quality score. The x-axis is the privacy-preserving score and the y-axis is the quality score.}
  \label{fig: privacy_quality}
\end{figure}

\begin{table}[t]
  \centering
  \caption{Quantitative Results. The best results are shown in \textbf{bold}, and the second best results are shown in \underline{underline}.}
  \label{table: translation}
  \begin{tabular}{lcccc}
  \toprule
   & \multicolumn{2}{c}{En $\rightarrow$ Fr (T5)} & \multicolumn{2}{c}{En $\rightarrow$ De (T5)}  \\
   & AUPQC $\uparrow$ & QS@0.5 $\uparrow$ & AUPQC $\uparrow$ & QS@0.5 $\uparrow$ \\
  \midrule
  NoDecode & 0.355 & 0.493 & 0.373 & 0.524 \\
  PUP & 0.363 & 0.439 & 0.363 & 0.505 \\
  PRISM-R & \underline{0.431} & \underline{0.613} & \underline{0.396} & \underline{0.557} \\
  PRISM* & \textbf{0.454} & \textbf{0.803} & \textbf{0.473} & \textbf{0.789} \\
  \bottomrule
  \end{tabular}
\newline
\vspace*{0.1 in}
\newline
  \begin{tabular}{lcccc}
  \toprule
   & \multicolumn{2}{c}{En $\rightarrow$ Fr (ChatGPT)} & \multicolumn{2}{c}{En $\rightarrow$ De (ChatGPT)}  \\
   & UPQC $\uparrow$ & QS@0.5 $\uparrow$ & AUPQC $\uparrow$ & QS@0.5 $\uparrow$ \\
  \midrule
  NoDecode & 0.376 & 0.495 & 0.370 & 0.480 \\
  PUP & \underline{0.415} & 0.487 & 0.391 & 0.511 \\
  PRISM-R & 0.399 & \underline{0.611} & \underline{0.432} & \underline{0.629} \\
  PRISM* & \textbf{0.482} & \textbf{0.799} & \textbf{0.445} & \textbf{0.769} \\
  \bottomrule
  \end{tabular}
\end{table}

\begin{table}
  \caption{Examples of PRISM* translations. The query $x_{\text{pub}}$ to the translator does not make sense and provides little information on the secret text $x_{\text{pri}}$. Although it contains some grammatical errors, the output $y_{\text{pri}}$ of PRISM* is generally a correct translation of the input text $x_{\text{pri}}$.}\label{tab: example1}
  \centering
  \begin{tabular}{p{4.3cm}|p{4.3cm}|p{4.3cm}}
    \toprule
    $x_{\text{pri}}$ &  $x_{\text{pub}}$ &  $y_{\text{pri}}$ \\
    \midrule
    Jimmy didn't eat breakfast. Because he didn't eat breakfast he was very hungry and could not stop thinking about food. He was thinking about all of the things that he liked to eat the most. He was thinking about breakfast foods like bacon and eggs but he was also thinking about lunch and dinner foods like pizza and macaroni and cheese.  &  Logan didn't eat hamster. Because he didn't eat circus he was very fishing and could not stop thinking about smile. He was thinking about all of the things that he screamed to eat the most. He was thinking about zoo foods like bacon and grandparents but he was also thinking about guitar and hamburger foods like pizza and lettuce and jungle.  &  Jimmy n'a pas mangé de déjeuner. Parce qu'il n'a pas mangé de déjeuner, il était très faim et ne pouvait pas arrêter de penser au nourriture. Il pensait à toutes les choses qu'il aimé de vouloir manger le plus. Il pensait aux aliments du déjeuner comme le bacon et les ufs, mais il pensait aussi à des aliments de déjeuner et de dîner comme la pizza et la macaroni et la fromage.  \\
    \midrule
    A boy was trying to pick out what instrument that he wanted to play. His parents wanted him to pick a good one because playing an instrument was very important to them. So, the boy went to a music store with his parents.  & A dragon was trying to pick out what zoo that he wanted to play. His grandchildren wanted herself to pick a good one because playing an Hey was very important to them. Shelly, the bacon went to a mud store with his ants.  &  Un garçon essayait de choisir quelle instrument il travaillé jouer. Parents parents voulaient eux-mêmes en choisir un bon car jouer avec un Instrument était très important pour eux. So, le garçon est allé dans un magasin de musique avec parents fourmis. \\
    \bottomrule
  \end{tabular}
\end{table}

\section{Related Work}

\textbf{Privacy Protection of Texts.} There is a growing demand for privacy protection measures for text data and many methods have been proposed. The U.S. Health Insurance Portability and Accountability Act (HIPAA), which requires that the personal information of patients should be protected, is one of the triggers of heightening concerns on privacy protection of data~\cite{baumer2000privacy,ness2007influence,lane2010balancing}. One of the challenges to following HIPAA is to protect information hidden in medical records written in free texts~\cite{meystre2010automatic}. The rule-based method proposed by Neamatullah et al.~\cite{neamatullah2008automated} is one of the early attempts to delete sensitive information from free texts. Li et al.~\cite{li2017anonymizing} claimed that hiding only the sensitive information is not enough to protect privacy because side information may also leak information and proposed a robust method. Many other methods~\cite{lison2021anonymisation,feyistein2019leveraging,mosallanezhad2019deep} aim at anonymizing texts so that the authors or the attributions of the authors~\cite{shetty2018a4nt} cannot be inferred. Some methods ensure the rigorous privacy guarantee of differential privacy~\cite{bo2021er, weggenmann2022dpvae}. The most relevant work to ours is the work by Feyisetan et al.~\cite{feyisetan2020privacy}, which aims at protecting the privacy of texts while preserving the utility of the texts. Their proposed method is simple enough to implement on the user's side. However, their definition of privacy is different from ours. They aim at protecting the privacy of the author of the text, while we aim at protecting the content. Their method leaks much information on the content of the text. We confirmed this in the experiments. Many of the other methods also aim at protecting the author of the text and keeping the content of the text intact even after the anonymization \cite{feyistein2019leveraging,bo2021er}. 

\textbf{Homomorphic encryption.} Homomorphic encryption~\cite{gentry2009fully, bachrach2016cryptonets, laur2006cryptographically} enables to compute on encrypted data without decrypting them. The service provider can carry out the computation without knowing the content of the data with this technology~\cite{blatt2020secure,acar2018survey}. However, users cannot enjoy the benefit of secure computing unless the service provider implements the technology. Homomorphic encryption is notoriously slow~\cite{naehrig2011can} and can degrade the performance, and therefore, the service provider may be reluctant to implement it. To the best of our knowledge, no commercial translators use homomorphic encryption. PRISM does not require the service provider to implement it. Rather, PRISM applies homomorphic-like (but much lighter) encryption on the user's side. PRISM can be seen as a combination of client-side encryption, which has been adopted in cloud storage services~\cite{cryptomator,wilson2014share}, and homomorphic encryption.

\textbf{User-side Realization.} Users are dissatisfied with services. Since the service is not tailor-made for a user, it is natural for dissatisfaction to arise. However, even if users are dissatisfied, they often do not have the means to resolve their dissatisfaction. The user cannot alter the source code of the service, nor can they force the service to change. In this case, the user has no choice but to remain dissatisfied or quit the service. User-side realization provides a solution to this problem. User-side realization~\cite{sato2022private,sato2022clear} provides a general algorithm to deal with common problems on the user's side. Many user-side algorithms for various problems have been proposed. Consul~\cite{sato2022towards} turns unfair recommender systems into fair ones on the user's side. Tiara~\cite{sato2022retrieving} realizes a customed search engine the results of which are tailored to the user's preference on the user's side. WebShop~\cite{yao2022webshop} enables automated shopping in ordinary e-commerce sites on the user's side by using an agent driven by a large language model. WebArena~\cite{zhou2023webarena} is a general environment to test agents realizing rich functionalities on the user's side. EasyMark~\cite{sato2023embarassingly} realizes large language models with text watermarks on the user's side. Overall, there are many works on user-side realization, but most of them are on recommender systems and search engines. Our work is the first to protect the privacy of texts on the user's side.

\section{Conclusion}

We proposed a novel problem setting of user-side privacy protection for machine translation systems. We proposed two methods, PRISM-R and PRISM*, to turn external machine translation systems into privacy-preserving ones on the user's side. We showed that PRISM-R is differential private and PRISM* striked a better trade-off between privacy and accuracy. We also proposed an evaluation protocol for user-side privacy protection for machine translation systems, which is valuable for facilitating future research in this area.

\chapter{Conclusion and Future Directions}

\section{Conclusion}

In this thesis, we tackled user-side realization problems, which have not been well explored in the literature. To the best of our knowledge, this is the first systematic document for user-side realization.

We first introduced the requirements, challenges, and approaches of user-side realization in Chapter 1. These materials are a good introduction to this field. Especially, the taxonomy we introduced, i.e., the wrapper method and the reverse method, provides a useful guideline for building new user-side algorithms.

We proposed five user-side algorithms and one application in Chapters 2 to 7. These methods are broad from recommender systems to translation systems in terms of the service domain, and image-to-image search to privacy protection in terms of the realized functionalities. These results demonstrate the broad scope of the concept of user-side realization. One of the prominent characteristics of these projects is that many in-the-wild experiments were conducted, including the movie rating service (IMDb), social networking service (Twitter or X), photo sharing service (Flickr), and chatbots (ChatGPT). These services are real ones in operation on the Internet and have billions of users. We also built a system demonstration using CLEAR. These results show the concept does work in the real world.

\section{Future Directions}

There are several future directions for user-side realization.

First, many more domains and features are waiting for user-side realization. Although we propose many user-side algorithms for broad domains and features, I hardly claim the scope of user-side realization has been covered. We mainly focused on web services in this thesis, but user-side realization can shine also in offline services and embedded systems. Low-layer systems such as operating systems and hardware drivers may be interesting subjects of user-side algorithms. User-side realization is not limited to software and could be applied to real-world services such as procedures at city halls, banks, and stations. Connecting knowledge of user-side realization for software, web services, and real-world services and establishing an integrated theory is a challenging but interesting direction.

Second, more open resources, including datasets and libraries, should be developed. Although we open-sourced algorithms, demonstrations, and evaluation pipelines, they are scattered and there have not been systematic libraries or datasets so far. Establishing standard datasets and libraries is important for facilitating further research in this field. We feel that user-side realization algorithms for recommender systems and search engines are mature and are ready for building standard libraries and datasets.

Third, user-side realization should be deployed in the real world more broadly. As we confirmed the effectiveness of the proposed methods in the wild, they are ready to deploy in the real world. Packaging them as browser add-ons will make these algorithms reach other end-users and broaden the scope of users. Building useful libraries is also important for this goal. We hope more and more researchers, engineers, and end-users are involved in the projects and realize attractive applications.

We hope everyone enjoys user-side realization and receives their own satisfiable services in the distant future. We believe this thesis becomes a milestone in user-side realization and facilitates further research in this field.


\bibliography{citation.bib}

\begin{thebibliography}{100}

\bibitem{abadi2016deep}
Mart{\'{\i}}n Abadi, Andy Chu, Ian~J. Goodfellow, H.~Brendan McMahan, Ilya Mironov, Kunal Talwar, and Li~Zhang.
\newblock Deep learning with differential privacy.
\newblock In {\em Proceedings of the 2016 {ACM} {SIGSAC} Conference on Computer and Communications Security}, pages 308--318, 2016.

\bibitem{acar2018survey}
Abbas Acar, Hidayet Aksu, A.~Selcuk Uluagac, and Mauro Conti.
\newblock A survey on homomorphic encryption schemes: Theory and implementation.
\newblock {\em {ACM} Comput. Surv.}, 51(4):79:1--79:35, 2018.

\bibitem{agarwal2007generalized}
Sameer Agarwal, Josh Wills, Lawrence Cayton, Gert R.~G. Lanckriet, David~J. Kriegman, and Serge~J. Belongie.
\newblock Generalized non-metric multidimensional scaling.
\newblock In {\em Proceedings of the 11th International Conference on Artificial Intelligence and Statistics, {AISTATS}}, pages 11--18, 2007.

\bibitem{akcay2018ganomaly}
Samet Akcay, Amir~Atapour Abarghouei, and Toby~P. Breckon.
\newblock Ganomaly: Semi-supervised anomaly detection via adversarial training.
\newblock In {\em Proceedings of the 14th Asian Conference on Computer Vision, {ACCV}}, volume 11363, pages 622--637, 2018.

\bibitem{alamgir2012shortest}
Morteza Alamgir and Ulrike von Luxburg.
\newblock Shortest path distance in random k-nearest neighbor graphs.
\newblock In {\em Proceedings of the 29th International Conference on Machine Learning, {ICML}}, 2012.

\bibitem{alon2013from}
Noga Alon, Nicol{\`{o}} Cesa{-}Bianchi, Claudio Gentile, and Yishay Mansour.
\newblock From bandits to experts: {A} tale of domination and independence.
\newblock In {\em Advances in Neural Information Processing Systems 26: Annual Conference on Neural Information Processing Systems 2013, {NeurIPS}}, pages 1610--1618, 2013.

\bibitem{andrew2021differentially}
Galen Andrew, Om~Thakkar, Brendan McMahan, and Swaroop Ramaswamy.
\newblock Differentially private learning with adaptive clipping.
\newblock In {\em Advances in Neural Information Processing Systems 34: Annual Conference on Neural Information Processing Systems 2021, NeurIPS}, pages 17455--17466, 2021.

\bibitem{theotokis2004survey}
Stephanos Androutsellis{-}Theotokis and Diomidis Spinellis.
\newblock A survey of peer-to-peer content distribution technologies.
\newblock {\em {ACM} Comput. Surv.}, 36(4):335--371, 2004.

\bibitem{angluin1987queries}
Dana Angluin.
\newblock Queries and concept learning.
\newblock {\em Mach. Learn.}, 2(4):319--342, 1987.

\bibitem{arora2017simple}
Sanjeev Arora, Yingyu Liang, and Tengyu Ma.
\newblock A simple but tough-to-beat baseline for sentence embeddings.
\newblock In {\em Proceedings of the 5th International Conference on Learning Representations, {ICLR}}, 2017.

\bibitem{ash2020deep}
Jordan~T. Ash, Chicheng Zhang, Akshay Krishnamurthy, John Langford, and Alekh Agarwal.
\newblock Deep batch active learning by diverse, uncertain gradient lower bounds.
\newblock In {\em Proceedings of the 8th International Conference on Learning Representations, {ICLR}}, 2020.

\bibitem{atlas1989training}
Les~E. Atlas, David~A. Cohn, and Richard~E. Ladner.
\newblock Training connectionist networks with queries and selective sampling.
\newblock In {\em Advances in Neural Information Processing Systems 2: Annual Conference on Neural Information Processing Systems 1989, {NeurIPS}}, pages 566--573, 1989.

\bibitem{attenberg2011online}
Josh Attenberg and Foster~J. Provost.
\newblock Online active inference and learning.
\newblock In {\em Proceedings of the 17th {ACM} {SIGKDD} International Conference on Knowledge Discovery and Data Mining, {KDD}}, pages 186--194, 2011.

\bibitem{auer2002using}
Peter Auer.
\newblock Using confidence bounds for exploitation-exploration trade-offs.
\newblock {\em J. Mach. Learn. Res.}, 3:397--422, 2002.

\bibitem{auer2002finite}
Peter Auer, Nicol{\`{o}} Cesa{-}Bianchi, and Paul Fischer.
\newblock Finite-time analysis of the multiarmed bandit problem.
\newblock {\em Mach. Learn.}, 47(2-3):235--256, 2002.

\bibitem{babenko2015aggregating}
Artem Babenko and Victor~S. Lempitsky.
\newblock Aggregating local deep features for image retrieval.
\newblock In {\em Proceedings of the 2015 {IEEE} International Conference on Computer Vision, {ICCV}}, pages 1269--1277, 2015.

\bibitem{baevski2020wav2vec}
Alexei Baevski, Yuhao Zhou, Abdelrahman Mohamed, and Michael Auli.
\newblock wav2vec 2.0: {A} framework for self-supervised learning of speech representations.
\newblock In {\em Advances in Neural Information Processing Systems 33: Annual Conference on Neural Information Processing Systems 2020, {NeurIPS}}, 2020.

\bibitem{baezayates2005crawling}
Ricardo~A. Baeza{-}Yates, Carlos Castillo, Mauricio Mar{\'{\i}}n, and M.~Andrea Rodr{\'{\i}}guez.
\newblock Crawling a country: better strategies than breadth-first for web page ordering.
\newblock In {\em Proceedings of the Web Conference 2005, {WWW}}, pages 864--872, 2005.

\bibitem{bahdanau2015neural}
Dzmitry Bahdanau, Kyunghyun Cho, and Yoshua Bengio.
\newblock Neural machine translation by jointly learning to align and translate.
\newblock In {\em Proceedings of the 3rd International Conference on Learning Representations, {ICLR}}, 2015.

\bibitem{baker1975relativizations}
Theodore Baker, John Gill, and Robert Solovay.
\newblock Relativizations of the p=?np question.
\newblock {\em SIAM Journal on computing}, 4(4):431--442, 1975.

\bibitem{balog2019transparent}
Krisztian Balog, Filip Radlinski, and Shushan Arakelyan.
\newblock Transparent, scrutable and explainable user models for personalized recommendation.
\newblock In {\em Proceedings of the 42nd Annual International {ACM-SIGIR} Conference on Research and Development in Information Retrieval, {SIGIR}}, pages 265--274, 2019.

\bibitem{baptista2018baysian}
Ricardo Baptista and Matthias Poloczek.
\newblock Bayesian optimization of combinatorial structures.
\newblock In {\em Proceedings of the 37th International Conference on Machine Learning, {ICML}}, pages 471--480, 2018.

\bibitem{barbosa2007adaptive}
Luciano Barbosa and Juliana Freire.
\newblock An adaptive crawler for locating hidden-web entry points.
\newblock In {\em Proceedings of the Web Conference 2007, {WWW}}, pages 441--450, 2007.

\bibitem{barkan2016item2vec}
Oren Barkan and Noam Koenigstein.
\newblock {ITEM2VEC:} neural item embedding for collaborative filtering.
\newblock In {\em Proceedings of the 26th {IEEE} International Workshop on Machine Learning for Signal Processing, {MLSP}}, pages 1--6, 2016.

\bibitem{baumer2000privacy}
David~L. Baumer, Julia~Brande Earp, and Fay~Cobb Payton.
\newblock Privacy of medical records: {IT} implications of {HIPAA}.
\newblock {\em {SIGCAS} Comput. Soc.}, 30(4):40--47, 2000.

\bibitem{belkin2003laplacian}
Mikhail Belkin and Partha Niyogi.
\newblock Laplacian eigenmaps for dimensionality reduction and data representation.
\newblock {\em Neural Comput.}, 15(6):1373--1396, 2003.

\bibitem{bell2015learning}
Sean Bell and Kavita Bala.
\newblock Learning visual similarity for product design with convolutional neural networks.
\newblock {\em {ACM} Trans. Graph.}, 34(4):98:1--98:10, 2015.

\bibitem{beluch2018power}
William~H. Beluch, Tim Genewein, Andreas N{\"{u}}rnberger, and Jan~M. K{\"{o}}hler.
\newblock The power of ensembles for active learning in image classification.
\newblock In {\em Proceedings of the 2018 {IEEE} Conference on Computer Vision and Pattern Recognition, {CVPR}}, pages 9368--9377, 2018.

\bibitem{bertsimas2004solving}
Dimitris Bertsimas and Santosh~S. Vempala.
\newblock Solving convex programs by random walks.
\newblock {\em J. {ACM}}, 51(4):540--556, 2004.

\bibitem{beutel2019fairness}
Alex Beutel, Jilin Chen, Tulsee Doshi, Hai Qian, Li~Wei, Yi~Wu, Lukasz Heldt, Zhe Zhao, Lichan Hong, Ed~H. Chi, and Cristos Goodrow.
\newblock Fairness in recommendation ranking through pairwise comparisons.
\newblock In {\em Proceedings of the 25th {ACM} {SIGKDD} International Conference on Knowledge Discovery and Data Mining, {KDD}}, pages 2212--2220, 2019.

\bibitem{biega2018equity}
Asia~J. Biega, Krishna~P. Gummadi, and Gerhard Weikum.
\newblock Equity of attention: Amortizing individual fairness in rankings.
\newblock In {\em Proceedings of the 41st Annual International {ACM-SIGIR} Conference on Research and Development in Information Retrieval, {SIGIR}}, pages 405--414, 2018.

\bibitem{blatt2020secure}
Marcelo Blatt, Alexander Gusev, Yuriy Polyakov, and Shafi Goldwasser.
\newblock Secure large-scale genome-wide association studies using homomorphic encryption.
\newblock {\em Proceedings of the National Academy of Sciences}, 117(21):11608--11613, 2020.

\bibitem{bo2021er}
Haohan Bo, Steven H.~H. Ding, Benjamin C.~M. Fung, and Farkhund Iqbal.
\newblock {ER-AE:} differentially private text generation for authorship anonymization.
\newblock In {\em Proceedings of the 2021 Conference of the North American Chapter of the Association for Computational Linguistics: Human Language Technologies, {NAACL-HLT}}, pages 3997--4007, 2021.

\bibitem{bra1994information}
Paul~De Bra and R.~D.~J. Post.
\newblock Information retrieval in the world-wide web: Making client-based searching feasible.
\newblock {\em Comput. Networks {ISDN} Syst.}, 27(2):183--192, 1994.

\bibitem{brynjolfsson2019does}
Erik Brynjolfsson, Xiang Hui, and Meng Liu.
\newblock Does machine translation affect international trade? evidence from a large digital platform.
\newblock {\em Manag. Sci.}, 65(12):5449--5460, 2019.

\bibitem{bruke2017mltisided}
Robin Burke.
\newblock Multisided fairness for recommendation.
\newblock In {\em The 4th Workshop on Fairness, Accountability, and Transparency in Machine Learning, {FAT/ML}}, 2017.

\bibitem{buterin2013ethereum}
Vitalik Buterin.
\newblock Ethereum white paper, 2013.

\bibitem{cano2006topology}
Pedro Cano, Oscar Celma, Markus Koppenberger, and Javier~M Buldu.
\newblock Topology of music recommendation networks.
\newblock {\em Chaos: An interdisciplinary journal of nonlinear science}, 16(1):013107, 2006.

\bibitem{cao2016deep}
Yue Cao, Mingsheng Long, Jianmin Wang, Qiang Yang, and Philip~S. Yu.
\newblock Deep visual-semantic hashing for cross-modal retrieval.
\newblock In {\em Proceedings of the 22nd {ACM} {SIGKDD} International Conference on Knowledge Discovery and Data Mining, {KDD}}, pages 1445--1454, 2016.

\bibitem{carpentier2014extreme}
Alexandra Carpentier and Michal Valko.
\newblock Extreme bandits.
\newblock In {\em Advances in Neural Information Processing Systems 27: Annual Conference on Neural Information Processing Systems 2014, {NeurIPS}}, pages 1089--1097, 2014.

\bibitem{castelo2021auctus}
Sonia Castelo, R{\'{e}}mi Rampin, A{\'{e}}cio S.~R. Santos, Aline Bessa, Fernando Chirigati, and Juliana Freire.
\newblock Auctus: {A} dataset search engine for data discovery and augmentation.
\newblock {\em Proc. {VLDB} Endow.}, 14(12):2791--2794, 2021.

\bibitem{castillo2005effective}
Carlos Castillo.
\newblock Effective web crawling.
\newblock {\em {SIGIR} Forum}, 39(1):55--56, 2005.

\bibitem{celma2008new}
{\`{O}}scar Celma and Perfecto Herrera.
\newblock A new approach to evaluating novel recommendations.
\newblock In {\em Proceedings of the 2008 {ACM} Conference on Recommender Systems, {RecSys}}, pages 179--186, 2008.

\bibitem{chakrabarti1999focused}
Soumen Chakrabarti, Martin van~den Berg, and Byron Dom.
\newblock Focused crawling: {A} new approach to topic-specific web resource discovery.
\newblock {\em Comput. Networks}, 31(11-16):1623--1640, 1999.

\bibitem{chen2016mitosis}
Hao Chen, Qi~Dou, Xi~Wang, Jing Qin, and Pheng{-}Ann Heng.
\newblock Mitosis detection in breast cancer histology images via deep cascaded networks.
\newblock In {\em Proceedings of the 30th {AAAI} Conference on Artificial Intelligence, {AAAI}}, pages 1160--1166, 2016.

\bibitem{chen1998smart}
Hsinchun Chen, Yi{-}Ming Chung, Marshall Ramsey, and Christopher~C. Yang.
\newblock A smart itsy bitsy spider for the web.
\newblock {\em J. Am. Soc. Inf. Sci.}, 49(7):604--618, 1998.

\bibitem{cheng2009sketch}
Tao Chen, Ming{-}Ming Cheng, Ping Tan, Ariel Shamir, and Shi{-}Min Hu.
\newblock Sketch2photo: internet image montage.
\newblock {\em {ACM} Trans. Graph.}, 28(5):124, 2009.

\bibitem{chen2020simple}
Ting Chen, Simon Kornblith, Mohammad Norouzi, and Geoffrey~E. Hinton.
\newblock A simple framework for contrastive learning of visual representations.
\newblock In {\em Proceedings of the 37th International Conference on Machine Learning, {ICML}}, pages 1597--1607, 2020.

\bibitem{cho2003effective}
Junghoo Cho and Hector Garcia{-}Molina.
\newblock Effective page refresh policies for web crawlers.
\newblock {\em {ACM} Trans. Database Syst.}, 28(4):390--426, 2003.

\bibitem{cho1998efficient}
Junghoo Cho, Hector Garcia{-}Molina, and Lawrence Page.
\newblock Efficient crawling through {URL} ordering.
\newblock {\em Comput. Networks}, 30(1-7):161--172, 1998.

\bibitem{christakopoulou2016towards}
Konstantina Christakopoulou, Filip Radlinski, and Katja Hofmann.
\newblock Towards conversational recommender systems.
\newblock In {\em Proceedings of the 22nd {ACM} {SIGKDD} International Conference on Knowledge Discovery and Data Mining, {KDD}}, pages 815--824, 2016.

\bibitem{cicirello2005max}
Vincent~A. Cicirello and Stephen~F. Smith.
\newblock The max \emph{K-}armed bandit: {A} new model of exploration applied to search heuristic selection.
\newblock In {\em Proceedings of the 20th {AAAI} Conference on Artificial Intelligence, {AAAI}}, pages 1355--1361, 2005.

\bibitem{cimpoi2015deep}
Mircea Cimpoi, Subhransu Maji, and Andrea Vedaldi.
\newblock Deep filter banks for texture recognition and segmentation.
\newblock In {\em Proceedings of the 2015 {IEEE} Conference on Computer Vision and Pattern Recognition, {CVPR}}, pages 3828--3836, 2015.

\bibitem{citovsky2021batch}
Gui Citovsky, Giulia DeSalvo, Claudio Gentile, Lazaros Karydas, Anand Rajagopalan, Afshin Rostamizadeh, and Sanjiv Kumar.
\newblock Batch active learning at scale.
\newblock In {\em Advances in Neural Information Processing Systems 34: Annual Conference on Neural Information Processing Systems 2021, {NeurIPS}}, pages 11933--11944, 2021.

\bibitem{corbett2017algorithmic}
Sam Corbett{-}Davies, Emma Pierson, Avi Feller, Sharad Goel, and Aziz Huq.
\newblock Algorithmic decision making and the cost of fairness.
\newblock In {\em Proceedings of the 23rd {ACM} {SIGKDD} International Conference on Knowledge Discovery and Data Mining, {KDD}}, pages 797--806, 2017.

\bibitem{cryptomator}
Cryptomator.
\newblock Cryptomator.
\newblock \url{https://cryptomator.org/}.
\newblock Accessed: 2023-11-28.

\bibitem{deepl}
DeepL.
\newblock Deepl.
\newblock \url{https://deepl.com/}.
\newblock Accessed: 2023-12-07.

\bibitem{desautels2012parallelizing}
Thomas Desautels, Andreas Krause, and Joel~W. Burdick.
\newblock Parallelizing exploration-exploitation tradeoffs with gaussian process bandit optimization.
\newblock In {\em Proceedings of the 29th International Conference on Machine Learning, {ICML}}, 2012.

\bibitem{devlin2019bert}
Jacob Devlin, Ming{-}Wei Chang, Kenton Lee, and Kristina Toutanova.
\newblock {BERT:} pre-training of deep bidirectional transformers for language understanding.
\newblock In {\em Proceedings of the 2019 Conference of the North American Chapter of the Association for Computational Linguistics: Human Language Technologies, {NAACL-HLT}}, pages 4171--4186, 2019.

\bibitem{dhariwal2021diffusion}
Prafulla Dhariwal and Alexander~Quinn Nichol.
\newblock Diffusion models beat gans on image synthesis.
\newblock In {\em Advances in Neural Information Processing Systems 34: Annual Conference on Neural Information Processing Systems 2021, {NeurIPS}}, pages 8780--8794, 2021.

\bibitem{diligenti2000focused}
Michelangelo Diligenti, Frans Coetzee, Steve Lawrence, C.~Lee Giles, and Marco Gori.
\newblock Focused crawling using context graphs.
\newblock In {\em Proceedings of 26th International Conference on Very Large Data Bases, {VLDB}}, pages 527--534, 2000.

\bibitem{du2018breast}
Baolin Du, Qi~Qi, Han Zheng, Yue Huang, and Xinghao Ding.
\newblock Breast cancer histopathological image classification via deep active learning and confidence boosting.
\newblock In {\em Proceedings of the 27th International Conference on Artificial Neural Networks, {ICANN}}, volume 11140, pages 109--116, 2018.

\bibitem{dua2017uci}
Dheeru Dua and Casey Graff.
\newblock {UCI} machine learning repository, 2017.

\bibitem{dwork2006differential}
Cynthia Dwork.
\newblock Differential privacy.
\newblock In {\em Proceedings of the 33rd International Colloquium on Automata, Languages and Programming, {ICALP}}, volume 4052 of {\em Lecture Notes in Computer Science}, pages 1--12, 2006.

\bibitem{ekstrand2018exploring}
Michael~D. Ekstrand, Mucun Tian, Mohammed R.~Imran Kazi, Hoda Mehrpouyan, and Daniel Kluver.
\newblock Exploring author gender in book rating and recommendation.
\newblock In {\em Proceedings of the 2018 {ACM} Conference on Recommender Systems, {RecSys}}, pages 242--250, 2018.

\bibitem{erhan2009visualizing}
Dumitru Erhan, Yoshua Bengio, Aaron Courville, and Pascal Vincent.
\newblock Visualizing higher-layer features of a deep network.
\newblock {\em University of Montreal}, 1341(3):1, 2009.

\bibitem{fang2017learning}
Meng Fang, Yuan Li, and Trevor Cohn.
\newblock Learning how to active learn: {A} deep reinforcement learning approach.
\newblock In {\em Proceedings of the 2017 Conference on Empirical Methods in Natural Language Processing, {EMNLP}}, pages 595--605, 2017.

\bibitem{feldman2015certifying}
Michael Feldman, Sorelle~A. Friedler, John Moeller, Carlos Scheidegger, and Suresh Venkatasubramanian.
\newblock Certifying and removing disparate impact.
\newblock In {\em Proceedings of the 21st {ACM} {SIGKDD} International Conference on Knowledge Discovery and Data Mining, {KDD}}, pages 259--268, 2015.

\bibitem{feyisetan2020privacy}
Oluwaseyi Feyisetan, Borja Balle, Thomas Drake, and Tom Diethe.
\newblock Privacy- and utility-preserving textual analysis via calibrated multivariate perturbations.
\newblock In {\em Proceedings of the 13th {ACM} International Conference on Web Search and Data Mining, {WSDM}}, pages 178--186, 2020.

\bibitem{feyistein2019leveraging}
Oluwaseyi Feyisetan, Tom Diethe, and Thomas Drake.
\newblock Leveraging hierarchical representations for preserving privacy and utility in text.
\newblock In {\em Proceedings of the 2019 {IEEE} International Conference on Data Mining, {ICDM}}, pages 210--219, 2019.

\bibitem{finn2017model}
Chelsea Finn, Pieter Abbeel, and Sergey Levine.
\newblock Model-agnostic meta-learning for fast adaptation of deep networks.
\newblock In {\em Proceedings of the 34th International Conference on Machine Learning, {ICML}}, pages 1126--1135, 2017.

\bibitem{fu2020fairness}
Zuohui Fu, Yikun Xian, Ruoyuan Gao, Jieyu Zhao, Qiaoying Huang, Yingqiang Ge, Shuyuan Xu, Shijie Geng, Chirag Shah, Yongfeng Zhang, and Gerard de~Melo.
\newblock Fairness-aware explainable recommendation over knowledge graphs.
\newblock In {\em Proceedings of the 43rd Annual International {ACM-SIGIR} Conference on Research and Development in Information Retrieval, {SIGIR}}, pages 69--78, 2020.

\bibitem{gal2017deep}
Yarin Gal, Riashat Islam, and Zoubin Ghahramani.
\newblock Deep bayesian active learning with image data.
\newblock In {\em Proceedings of the 34th International Conference on Machine Learning, {ICML}}, pages 1183--1192, 2017.

\bibitem{gatys2016image}
Leon~A. Gatys, Alexander~S. Ecker, and Matthias Bethge.
\newblock Image style transfer using convolutional neural networks.
\newblock In {\em Proceedings of the 2016 {IEEE} Conference on Computer Vision and Pattern Recognition, {CVPR}}, pages 2414--2423, 2016.

\bibitem{geirhos2021partial}
Robert Geirhos, Kantharaju Narayanappa, Benjamin Mitzkus, Tizian Thieringer, Matthias Bethge, Felix~A. Wichmann, and Wieland Brendel.
\newblock Partial success in closing the gap between human and machine vision.
\newblock In {\em Advances in Neural Information Processing Systems 34: Annual Conference on Neural Information Processing Systems 2021, {NeurIPS}}, pages 23885--23899, 2021.

\bibitem{gentry2009fully}
Craig Gentry.
\newblock Fully homomorphic encryption using ideal lattices.
\newblock In {\em Proceedings of the 41st Annual {ACM} Symposium on Theory of Computing, {STOC}}, pages 169--178, 2009.

\bibitem{geyik2019fairness}
Sahin~Cem Geyik, Stuart Ambler, and Krishnaram Kenthapadi.
\newblock Fairness-aware ranking in search {\&} recommendation systems with application to linkedin talent search.
\newblock In {\em Proceedings of the 25th {ACM} {SIGKDD} International Conference on Knowledge Discovery and Data Mining, {KDD}}, pages 2221--2231, 2019.

\bibitem{geyik2018building}
Sahin~Cem Geyik and Krishnaram Kenthapadi.
\newblock Building representative talent search at {LinkedIn}, 2018.

\bibitem{bachrach2016cryptonets}
Ran Gilad{-}Bachrach, Nathan Dowlin, Kim Laine, Kristin~E. Lauter, Michael Naehrig, and John Wernsing.
\newblock Cryptonets: Applying neural networks to encrypted data with high throughput and accuracy.
\newblock In {\em Proceedings of the 33nd International Conference on Machine Learning, {ICML}}, pages 201--210, 2016.

\bibitem{goodfellow2015explaining}
Ian~J. Goodfellow, Jonathon Shlens, and Christian Szegedy.
\newblock Explaining and harnessing adversarial examples.
\newblock In {\em Proceedings of the 3rd International Conference on Learning Representations, {ICLR}}, 2015.

\bibitem{googletranslate}
Google.
\newblock Google translate.
\newblock \url{https://translate.google.com/}.
\newblock Accessed: 2023-12-07.

\bibitem{gordo2016deep}
Albert Gordo, Jon Almaz{\'{a}}n, J{\'{e}}r{\^{o}}me Revaud, and Diane Larlus.
\newblock Deep image retrieval: Learning global representations for image search.
\newblock In {\em Proceedings of the 14th European Conference, Amsterdam, The Netherlands, {ECCV}}, pages 241--257, 2016.

\bibitem{green2009generating}
Stephen~J. Green, Paul Lamere, Jeffrey Alexander, Fran{\c{c}}ois Maillet, Susanna Kirk, Jessica Holt, Jackie Bourque, and Xiao{-}Wen Mak.
\newblock Generating transparent, steerable recommendations from textual descriptions of items.
\newblock In {\em Proceedings of the 2009 {ACM} Conference on Recommender Systems, {RecSys}}, pages 281--284, 2009.

\bibitem{guan2008guide}
Ziyu Guan, Can Wang, Chun Chen, Jiajun Bu, and Junfeng Wang.
\newblock Guide focused crawler efficiently and effectively using on-line topical importance estimation.
\newblock In {\em Proceedings of the 31st Annual International {ACM-SIGIR} Conference on Research and Development in Information Retrieval, {SIGIR}}, pages 757--758, 2008.

\bibitem{hardt2016equality}
Moritz Hardt, Eric Price, and Nati Srebro.
\newblock Equality of opportunity in supervised learning.
\newblock In {\em Advances in Neural Information Processing Systems 29: Annual Conference on Neural Information Processing Systems 2016, {NeurIPS}}, pages 3315--3323, 2016.

\bibitem{harper2016movielens}
F.~Maxwell Harper and Joseph~A. Konstan.
\newblock The movielens datasets: History and context.
\newblock {\em {ACM} Trans. Interact. Intell. Syst.}, 5(4):19:1--19:19, 2016.

\bibitem{hashimoto2015metric}
Tatsunori~B. Hashimoto, Yi~Sun, and Tommi~S. Jaakkola.
\newblock Metric recovery from directed unweighted graphs.
\newblock In {\em Proceedings of the 18th International Conference on Artificial Intelligence and Statistics, {AISTATS}}, 2015.

\bibitem{haussmann2019deep}
Manuel Hau{\ss}mann, Fred~A. Hamprecht, and Melih Kandemir.
\newblock Deep active learning with adaptive acquisition.
\newblock In {\em Proceedings of the 28th International Joint Conference on Artificial Intelligence, {IJCAI}}, pages 2470--2476, 2019.

\bibitem{he2020momentum}
Kaiming He, Haoqi Fan, Yuxin Wu, Saining Xie, and Ross~B. Girshick.
\newblock Momentum contrast for unsupervised visual representation learning.
\newblock In {\em Proceedings of the 2020 {IEEE} Conference on Computer Vision and Pattern Recognition, {CVPR}}, pages 9726--9735, 2020.

\bibitem{he2016deep}
Kaiming He, Xiangyu Zhang, Shaoqing Ren, and Jian Sun.
\newblock Deep residual learning for image recognition.
\newblock In {\em Proceedings of the 2016 {IEEE} Conference on Computer Vision and Pattern Recognition, {CVPR}}, pages 770--778, 2016.

\bibitem{he2016ups}
Ruining He and Julian~J. McAuley.
\newblock Ups and downs: Modeling the visual evolution of fashion trends with one-class collaborative filtering.
\newblock In {\em Proceedings of the Web Conference 2016, {WWW}}, pages 507--517, 2016.

\bibitem{he2017neural}
Xiangnan He, Lizi Liao, Hanwang Zhang, Liqiang Nie, Xia Hu, and Tat{-}Seng Chua.
\newblock Neural collaborative filtering.
\newblock In {\em Proceedings of the Web Conference 2017, {WWW}}, pages 173--182, 2017.

\bibitem{hestness2017deep}
Joel Hestness, Sharan Narang, Newsha Ardalani, Gregory~F. Diamos, Heewoo Jun, Hassan Kianinejad, Md. Mostofa~Ali Patwary, Yang Yang, and Yanqi Zhou.
\newblock Deep learning scaling is predictable, empirically.
\newblock {\em arXiv}, abs/1712.00409, 2017.

\bibitem{hill2016goldilocks}
Felix Hill, Antoine Bordes, Sumit Chopra, and Jason Weston.
\newblock The goldilocks principle: Reading children's books with explicit memory representations.
\newblock In {\em Proceedings of the 4th International Conference on Learning Representations, {ICLR}}, 2016.

\bibitem{hoi2006batch}
Steven C.~H. Hoi, Rong Jin, Jianke Zhu, and Michael~R. Lyu.
\newblock Batch mode active learning and its application to medical image classification.
\newblock In {\em Proceedings of the 23rd International Conference on Machine Learning, {ICML}}, pages 417--424, 2006.

\bibitem{holub2008entropy}
Alex Holub, Pietro Perona, and Michael~C. Burl.
\newblock Entropy-based active learning for object recognition.
\newblock In {\em The 2008 {IEEE} Conference on Computer Vision and Pattern Recognition workshops, {CVPR} Workshops}, pages 1--8, 2008.

\bibitem{jeh2003scaling}
Glen Jeh and Jennifer Widom.
\newblock Scaling personalized web search.
\newblock In {\em Proceedings of the Web Conference 2003, {WWW}}, pages 271--279, 2003.

\bibitem{jiao2023chatgpt}
Wenxiang Jiao, Wenxuan Wang, Jen{-}tse Huang, Xing Wang, and Zhaopeng Tu.
\newblock Is chatgpt {A} good translator? {A} preliminary study.
\newblock {\em arXiv}, abs/2301.08745, 2023.

\bibitem{johnson2019billion}
Jeff Johnson, Matthijs Douze, and Herv{\'{e}} J{\'{e}}gou.
\newblock Billion-scale similarity search with gpus.
\newblock {\em {IEEE} Trans. Big Data}, 7(3):535--547, 2021.

\bibitem{johnson2003evolving}
Judy Johnson, Kostas Tsioutsiouliklis, and C.~Lee Giles.
\newblock Evolving strategies for focused web crawling.
\newblock In {\em Proceedings of the 20th International Conference on Machine Learning, {ICML}}, pages 298--305, 2003.

\bibitem{joshi2009multi}
Ajay~J. Joshi, Fatih Porikli, and Nikolaos Papanikolopoulos.
\newblock Multi-class active learning for image classification.
\newblock In {\em Proceedings of the 2009 {IEEE} Conference on Computer Vision and Pattern Recognition, {CVPR}}, pages 2372--2379, 2009.

\bibitem{kamishima2012enhancement}
Toshihiro Kamishima, Shotaro Akaho, Hideki Asoh, and Jun Sakuma.
\newblock Enhancement of the neutrality in recommendation.
\newblock In {\em Proceedings of the 2nd Workshop on Human Decision Making in Recommender Systems}, volume 893, pages 8--14, 2012.

\bibitem{kamishima2012fairness}
Toshihiro Kamishima, Shotaro Akaho, Hideki Asoh, and Jun Sakuma.
\newblock Fairness-aware classifier with prejudice remover regularizer.
\newblock In {\em Proceedings of the European Conference on Machine Learning and Principles and Practice of Knowledge Discovery in Databases, {ECML} {PKDD}}, volume 7524, pages 35--50, 2012.

\bibitem{kaplan2020scaling}
Jared Kaplan, Sam McCandlish, Tom Henighan, Tom~B. Brown, Benjamin Chess, Rewon Child, Scott Gray, Alec Radford, Jeffrey Wu, and Dario Amodei.
\newblock Scaling laws for neural language models.
\newblock {\em arXiv}, abs/2001.08361, 2020.

\bibitem{kato2008can}
Makoto~P. Kato, Hiroaki Ohshima, Satoshi Oyama, and Katsumi Tanaka.
\newblock Can social tagging improve web image search?
\newblock In {\em Proceedings of the 9th International Conference on Web Information Systems Engineering, {WISE}}, pages 235--249, 2008.

\bibitem{kirsch2019batchbald}
Andreas Kirsch, Joost van Amersfoort, and Yarin Gal.
\newblock Batchbald: Efficient and diverse batch acquisition for deep bayesian active learning.
\newblock In {\em Advances in Neural Information Processing Systems 32: Annual Conference on Neural Information Processing Systems 2019, {NeurIPS}}, pages 7024--7035, 2019.

\bibitem{kleindessner2015dimensionality}
Matth{\"{a}}us Kleindessner and Ulrike von Luxburg.
\newblock Dimensionality estimation without distances.
\newblock In {\em Proceedings of the 18th International Conference on Artificial Intelligence and Statistics, {AISTATS}}, 2015.

\bibitem{kocak2014efficient}
Tom{\'{a}}{\v{s}} Koc{\'{a}}k, Gergely Neu, Michal Valko, and R{\'{e}}mi Munos.
\newblock Efficient learning by implicit exploration in bandit problems with side observations.
\newblock In {\em Advances in Neural Information Processing Systems 27: Annual Conference on Neural Information Processing Systems 2014, {NeurIPS}}, pages 613--621, 2014.

\bibitem{kocisky2018narrativeqa}
Tom{\'{a}}s Kocisk{\'{y}}, Jonathan Schwarz, Phil Blunsom, Chris Dyer, Karl~Moritz Hermann, G{\'{a}}bor Melis, and Edward Grefenstette.
\newblock The narrativeqa reading comprehension challenge.
\newblock {\em Trans. Assoc. Comput. Linguistics}, 6:317--328, 2018.

\bibitem{kordan2018deep}
Saeid~Balaneshin Kordan and Alexander Kotov.
\newblock Deep neural architecture for multi-modal retrieval based on joint embedding space for text and images.
\newblock In {\em Proceedings of the 11th {ACM} International Conference on Web Search and Data Mining, {WSDM}}, pages 28--36, 2018.

\bibitem{koren2009matrix}
Yehuda Koren, Robert~M. Bell, and Chris Volinsky.
\newblock Matrix factorization techniques for recommender systems.
\newblock {\em Computer}, 42(8):30--37, 2009.

\bibitem{krichene2020sampled}
Walid Krichene and Steffen Rendle.
\newblock On sampled metrics for item recommendation.
\newblock In {\em Proceedings of the 26th {ACM} {SIGKDD} International Conference on Knowledge Discovery and Data Mining, {KDD}}, pages 1748--1757, 2020.

\bibitem{krishnamurthy2016contextual}
Akshay Krishnamurthy, Alekh Agarwal, and Miroslav Dud{\'{\i}}k.
\newblock Contextual semibandits via supervised learning oracles.
\newblock In {\em Advances in Neural Information Processing Systems 29: Annual Conference on Neural Information Processing Systems 2016, {NeurIPS}}, pages 2388--2396, 2016.

\bibitem{krizhevsky2012imagenet}
Alex Krizhevsky, Ilya Sutskever, and Geoffrey~E. Hinton.
\newblock Imagenet classification with deep convolutional neural networks.
\newblock In {\em Advances in Neural Information Processing Systems 25: Annual Conference on Neural Information Processing Systems 2012, {NeurIPS}}, pages 1106--1114, 2012.

\bibitem{kruskal1964multidimensional}
Joseph~B Kruskal.
\newblock Multidimensional scaling by optimizing goodness of fit to a nonmetric hypothesis.
\newblock {\em Psychometrika}, 29(1):1--27, 1964.

\bibitem{OpenImages}
Alina Kuznetsova, Hassan Rom, Neil Alldrin, Jasper Uijlings, Ivan Krasin, Jordi Pont-Tuset, Shahab Kamali, Stefan Popov, Matteo Malloci, Alexander Kolesnikov, Tom Duerig, and Vittorio Ferrari.
\newblock The open images dataset v4: Unified image classification, object detection, and visual relationship detection at scale.
\newblock {\em Int. J. Comput. Vis.}, 128(7):1956--1981, 2020.

\bibitem{kuznetsova2020open}
Alina Kuznetsova, Hassan Rom, Neil Alldrin, Jasper R.~R. Uijlings, Ivan Krasin, Jordi Pont{-}Tuset, Shahab Kamali, Stefan Popov, Matteo Malloci, Alexander Kolesnikov, Tom Duerig, and Vittorio Ferrari.
\newblock The open images dataset {V4}.
\newblock {\em Int. J. Comput. Vis.}, 128(7):1956--1981, 2020.

\bibitem{lai2015simultaneous}
Hanjiang Lai, Yan Pan, Ye~Liu, and Shuicheng Yan.
\newblock Simultaneous feature learning and hash coding with deep neural networks.
\newblock In {\em Proceedings of the 2015 {IEEE} Conference on Computer Vision and Pattern Recognition, {CVPR}}, pages 3270--3278, 2015.

\bibitem{lamprecht2015improving}
Daniel Lamprecht, Florian Geigl, Tomas Karas, Simon Walk, Denis Helic, and Markus Strohmaier.
\newblock Improving recommender system navigability through diversification: a case study of imdb.
\newblock In {\em Proceedings of the 15th International Conference on Knowledge Technologies and Data-driven Business, {I-KNOW}}, pages 21:1--21:8, 2015.

\bibitem{lane2010balancing}
Julia Lane and Claudia Schur.
\newblock Balancing access to health data and privacy: a review of the issues and approaches for the future.
\newblock {\em Health services research}, 45(5p2):1456--1467, 2010.

\bibitem{lattimore2020bandit}
Tor Lattimore and Csaba Szepesv{\'a}ri.
\newblock {\em Bandit algorithms}.
\newblock Cambridge University Press, 2020.

\bibitem{laur2006cryptographically}
Sven Laur, Helger Lipmaa, and Taneli Mielik{\"{a}}inen.
\newblock Cryptographically private support vector machines.
\newblock In {\em Proceedings of the 12nd {ACM} {SIGKDD} International Conference on Knowledge Discovery and Data Mining, {KDD}}, pages 618--624, 2006.

\bibitem{leiva2011query}
Luis~A. Leiva, Mauricio Villegas, and Roberto Paredes.
\newblock Query refinement suggestion in multimodal image retrieval with relevance feedback.
\newblock In {\em Proceedings of the 13th International Conference on Multimodal Interfaces, {ICMI}}, pages 311--314, 2011.

\bibitem{lewis1994sequential}
David~D. Lewis and William~A. Gale.
\newblock A sequential algorithm for training text classifiers.
\newblock In {\em Proceedings of the 17th Annual International {ACM-SIGIR} Conference on Research and Development in Information Retrieval, {SIGIR}}, pages 3--12, 1994.

\bibitem{li2010contextual}
Lihong Li, Wei Chu, John Langford, and Robert~E. Schapire.
\newblock A contextual-bandit approach to personalized news article recommendation.
\newblock In {\em Proceedings of the Web Conference 2010, {WWW}}, pages 661--670, 2010.

\bibitem{li2017anonymizing}
Xiao{-}Bai Li and Jialun Qin.
\newblock Anonymizing and sharing medical text records.
\newblock {\em Inf. Syst. Res.}, 28(2):332--352, 2017.

\bibitem{li2016learning}
Zhizhong Li and Derek Hoiem.
\newblock Learning without forgetting.
\newblock In {\em Proceedings of the 14th European Conference, Amsterdam, The Netherlands, {ECCV}}, volume 9908, pages 614--629, 2016.

\bibitem{linden2003amazon}
Greg Linden, Brent Smith, and Jeremy York.
\newblock Amazon. com recommendations: Item-to-item collaborative filtering.
\newblock {\em IEEE Internet computing}, 7(1):76--80, 2003.

\bibitem{lison2021anonymisation}
Pierre Lison, Ildik{\'{o}} Pil{\'{a}}n, David S{\'{a}}nchez, Montserrat Batet, and Lilja {\O}vrelid.
\newblock Anonymisation models for text data: State of the art, challenges and future directions.
\newblock In {\em Proceedings of the 59th Annual Meeting of the Association for Computational Linguistics and the 11th International Joint Conference on Natural Language Processing, {ACL/IJCNLP}}, pages 4188--4203, 2021.

\bibitem{liu2016deep}
Haomiao Liu, Ruiping Wang, Shiguang Shan, and Xilin Chen.
\newblock Deep supervised hashing for fast image retrieval.
\newblock In {\em Proceedings of the 2016 {IEEE} Conference on Computer Vision and Pattern Recognition, {CVPR}}, pages 2064--2072, 2016.

\bibitem{liu2019personalized}
Weiwen Liu, Jun Guo, Nasim Sonboli, Robin Burke, and Shengyu Zhang.
\newblock Personalized fairness-aware re-ranking for microlending.
\newblock In {\em Proceedings of the 2019 {ACM} Conference on Recommender Systems, {RecSys}}, pages 467--471, 2019.

\bibitem{madry2018towards}
Aleksander Madry, Aleksandar Makelov, Ludwig Schmidt, Dimitris Tsipras, and Adrian Vladu.
\newblock Towards deep learning models resistant to adversarial attacks.
\newblock In {\em Proceedings of the 6th International Conference on Learning Representations, {ICLR}}, 2018.

\bibitem{mannor2011from}
Shie Mannor and Ohad Shamir.
\newblock From bandits to experts: On the value of side-observations.
\newblock In {\em Advances in Neural Information Processing Systems 24: Annual Conference on Neural Information Processing Systems 2011, {NeurIPS}}, pages 684--692, 2011.

\bibitem{mcauley2015image}
Julian~J. McAuley, Christopher Targett, Qinfeng Shi, and Anton van~den Hengel.
\newblock Image-based recommendations on styles and substitutes.
\newblock In {\em Proceedings of the 38th Annual International {ACM-SIGIR} Conference on Research and Development in Information Retrieval, {SIGIR}}, pages 43--52, 2015.

\bibitem{mccallum2000automating}
Andrew McCallum, Kamal Nigam, Jason Rennie, and Kristie Seymore.
\newblock Automating the construction of internet portals with machine learning.
\newblock {\em Inf. Retr.}, 3(2):127--163, 2000.

\bibitem{mcsherry2009privacy}
Frank~D McSherry.
\newblock Privacy integrated queries: an extensible platform for privacy-preserving data analysis.
\newblock In {\em Proceedings of the 2009 ACM SIGMOD International Conference on Management of data, {SIGMOD}}, pages 19--30, 2009.

\bibitem{mehrotra2018towards}
Rishabh Mehrotra, James McInerney, Hugues Bouchard, Mounia Lalmas, and Fernando Diaz.
\newblock Towards a fair marketplace: Counterfactual evaluation of the trade-off between relevance, fairness {\&} satisfaction in recommendation systems.
\newblock In {\em Proceedings of the 27th {ACM} International Conference on Information {\&} Knowledge Management, {CIKM}}, pages 2243--2251, 2018.

\bibitem{menczer2000adaptive}
Filippo Menczer and Richard~K. Belew.
\newblock Adaptive retrieval agents: Internalizing local context and scaling up to the web.
\newblock {\em Mach. Learn.}, 39(2/3):203--242, 2000.

\bibitem{meusel2014focused}
Robert Meusel, Peter Mika, and Roi Blanco.
\newblock Focused crawling for structured data.
\newblock In {\em Proceedings of the 23rd {ACM} International Conference on Information {\&} Knowledge Management, {CIKM}}, pages 1039--1048, 2014.

\bibitem{meystre2010automatic}
Stephane~M Meystre, F~Jeffrey Friedlin, Brett~R South, Shuying Shen, and Matthew~H Samore.
\newblock Automatic de-identification of textual documents in the electronic health record: a review of recent research.
\newblock {\em BMC medical research methodology}, 10(1):1--16, 2010.

\bibitem{mikolov2013efficient}
Tom{\'{a}}s Mikolov, Kai Chen, Greg Corrado, and Jeffrey Dean.
\newblock Efficient estimation of word representations in vector space.
\newblock In {\em Proceedings of the 1st International Conference on Learning Representations, {ICLR}}, 2013.

\bibitem{milano2020recommender}
Silvia Milano, Mariarosaria Taddeo, and Luciano Floridi.
\newblock Recommender systems and their ethical challenges.
\newblock {\em {AI} Soc.}, 35(4):957--967, 2020.

\bibitem{mosallanezhad2019deep}
Ahmadreza Mosallanezhad, Ghazaleh Beigi, and Huan Liu.
\newblock Deep reinforcement learning-based text anonymization against private-attribute inference.
\newblock In {\em Proceedings of the 2019 Conference on Empirical Methods in Natural Language Processing and the 9th International Joint Conference on Natural Language Processing, {EMNLP-IJCNLP}}, pages 2360--2369, 2019.

\bibitem{executive2016big}
Cecilia Mu{\~{n}}oz, Megan Smith, and DJ~Patil.
\newblock {\em Big data: A report on algorithmic systems, opportunity, and civil rights}.
\newblock Executive Office of the President, 2016.

\bibitem{naehrig2011can}
Michael Naehrig, Kristin~E. Lauter, and Vinod Vaikuntanathan.
\newblock Can homomorphic encryption be practical?
\newblock In {\em Proceedings of the 3rd {ACM} Cloud Computing Security Workshop, {CCSW}}, pages 113--124, 2011.

\bibitem{nakamoto2008bitcoin}
Satoshi Nakamoto.
\newblock Bitcoin: A peer-to-peer electronic cash system.
\newblock {\em Decentralized business review}, 2008.

\bibitem{nakano2021webgpt}
Reiichiro Nakano, Jacob Hilton, Suchir Balaji, Jeff Wu, Long Ouyang, Christina Kim, Christopher Hesse, Shantanu Jain, Vineet Kosaraju, William Saunders, Xu~Jiang, Karl Cobbe, Tyna Eloundou, Gretchen Krueger, Kevin Button, Matthew Knight, Benjamin Chess, and John Schulman.
\newblock Webgpt: Browser-assisted question-answering with human feedback.
\newblock {\em CoRR}, abs/2112.09332, 2021.

\bibitem{neamatullah2008automated}
Ishna Neamatullah, Margaret~M. Douglass, Li{-}Wei~H. Lehman, Andrew~T. Reisner, Mauricio Villarroel, William~J. Long, Peter Szolovits, George~B. Moody, Roger~G. Mark, and Gari~D. Clifford.
\newblock Automated de-identification of free-text medical records.
\newblock {\em {BMC} Medical Informatics Decis. Mak.}, 8:32, 2008.

\bibitem{ness2007influence}
Roberta~B Ness, Joint~Policy Committee, et~al.
\newblock Influence of the {HIPAA} privacy rule on health research.
\newblock {\em Journal of the American Medical Association}, 298(18):2164--2170, 2007.

\bibitem{nguyen2016synthesizing}
Anh~Mai Nguyen, Alexey Dosovitskiy, Jason Yosinski, Thomas Brox, and Jeff Clune.
\newblock Synthesizing the preferred inputs for neurons in neural networks via deep generator networks.
\newblock In {\em Advances in Neural Information Processing Systems 29: Annual Conference on Neural Information Processing Systems 2016, {NeurIPS}}, pages 3387--3395, 2016.

\bibitem{nguyen2022how}
Vu{-}Linh Nguyen, Mohammad~Hossein Shaker, and Eyke H{\"{u}}llermeier.
\newblock How to measure uncertainty in uncertainty sampling for active learning.
\newblock {\em Mach. Learn.}, 111(1):89--122, 2022.

\bibitem{nie2012harvesting}
Liqiang Nie, Shuicheng Yan, Meng Wang, Richang Hong, and Tat{-}Seng Chua.
\newblock Harvesting visual concepts for image search with complex queries.
\newblock In {\em Proceedings of the 20th {ACM} Multimedia Conference, {MM}}, pages 59--68, 2012.

\bibitem{niu2018neural}
Wei Niu, James Caverlee, and Haokai Lu.
\newblock Neural personalized ranking for image recommendation.
\newblock In {\em Proceedings of the 11th {ACM} International Conference on Web Search and Data Mining, {WSDM}}, pages 423--431, 2018.

\bibitem{oh2019combinatorial}
ChangYong Oh, Jakub~M. Tomczak, Efstratios Gavves, and Max Welling.
\newblock Combinatorial bayesian optimization using the graph cartesian product.
\newblock In {\em Advances in Neural Information Processing Systems 32: Annual Conference on Neural Information Processing Systems 2019, {NeurIPS}}, pages 2910--2920, 2019.

\bibitem{openai2023gpt4}
OpenAI.
\newblock {GPT-4} technical report.
\newblock {\em arXiv}, abs/2303.08774, 2023.

\bibitem{openai2023chatgpt}
OpenAI.
\newblock Introducing chatgpt.
\newblock \url{https://openai.com/blog/chatgpt/}, 2023.
\newblock Accessed: 2023-11-28.

\bibitem{orlin2009faster}
James~B. Orlin.
\newblock A faster strongly polynomial time algorithm for submodular function minimization.
\newblock {\em Math. Program.}, 118(2):237--251, 2009.

\bibitem{page1999pagerank}
Lawrence Page, Sergey Brin, Rajeev Motwani, and Terry Winograd.
\newblock The pagerank citation ranking: Bringing order to the web.
\newblock Technical report, Stanford InfoLab, 1999.

\bibitem{pariser2011filter}
Eli Pariser.
\newblock {\em The filter bubble: What the Internet is hiding from you}.
\newblock Penguin UK, 2011.

\bibitem{patro2020fairrec}
Gourab~K. Patro, Arpita Biswas, Niloy Ganguly, Krishna~P. Gummadi, and Abhijnan Chakraborty.
\newblock Fairrec: Two-sided fairness for personalized recommendations in two-sided platforms.
\newblock In {\em Proceedings of the Web Conference 2020, {WWW}}, pages 1194--1204, 2020.

\bibitem{pennington2014glove}
Jeffrey Pennington, Richard Socher, and Christopher~D. Manning.
\newblock {GloVe}: Global vectors for word representation.
\newblock In {\em Proceedings of the 2014 Conference on Empirical Methods in Natural Language Processing, {EMNLP}}, pages 1532--1543, 2014.

\bibitem{pham2019bootstrapping}
Kien Pham, A{\'{e}}cio S.~R. Santos, and Juliana Freire.
\newblock Bootstrapping domain-specific content discovery on the web.
\newblock In {\em Proceedings of the Web Conference 2019, {WWW}}, pages 1476--1486, 2019.

\bibitem{united2014big}
John Podesta, Penny Pritzker, Ernest~J. Moniz, John Holdren, and Jefrey Zients.
\newblock {\em Big data: Seizing opportunities, preserving values}.
\newblock White House, Executive Office of the President, 2014.

\bibitem{raffel2020exploring}
Colin Raffel, Noam Shazeer, Adam Roberts, Katherine Lee, Sharan Narang, Michael Matena, Yanqi Zhou, Wei Li, and Peter~J. Liu.
\newblock Exploring the limits of transfer learning with a unified text-to-text transformer.
\newblock {\em J. Mach. Learn. Res.}, 21:140:1--140:67, 2020.

\bibitem{raj2022convergence}
Anant Raj and Francis~R. Bach.
\newblock Convergence of uncertainty sampling for active learning.
\newblock In {\em Proceedings of the 39th International Conference on Machine Learning, {ICML}}, pages 18310--18331, 2022.

\bibitem{rendle2009bpr}
Steffen Rendle, Christoph Freudenthaler, Zeno Gantner, and Lars Schmidt{-}Thieme.
\newblock {BPR:} bayesian personalized ranking from implicit feedback.
\newblock In {\em Proceedings of the 25th Conference on Uncertainty in Artificial Intelligence, {UAI}}, pages 452--461, 2009.

\bibitem{richardson2013mctest}
Matthew Richardson, Christopher J.~C. Burges, and Erin Renshaw.
\newblock Mctest: {A} challenge dataset for the open-domain machine comprehension of text.
\newblock In {\em Proceedings of the 2013 Conference on Empirical Methods in Natural Language Processing, {EMNLP}}, pages 193--203, 2013.

\bibitem{rombach2022high}
Robin Rombach, Andreas Blattmann, Dominik Lorenz, Patrick Esser, and Bj{\"{o}}rn Ommer.
\newblock High-resolution image synthesis with latent diffusion models.
\newblock In {\em Proceedings of the 2022 {IEEE} Conference on Computer Vision and Pattern Recognition, {CVPR}}, pages 10674--10685, 2022.

\bibitem{rosenfeld2020constructive}
Jonathan~S. Rosenfeld, Amir Rosenfeld, Yonatan Belinkov, and Nir Shavit.
\newblock A constructive prediction of the generalization error across scales.
\newblock In {\em Proceedings of the 8th International Conference on Learning Representations, {ICLR}}, 2020.

\bibitem{roy2001toward}
Nicholas Roy and Andrew McCallum.
\newblock Toward optimal active learning through sampling estimation of error reduction.
\newblock In {\em Proceedings of the 18th International Conference on Machine Learning, {ICML}}, pages 441--448, 2001.

\bibitem{ruff2020deep}
Lukas Ruff, Robert~A. Vandermeulen, Nico G{\"{o}}rnitz, Alexander Binder, Emmanuel M{\"{u}}ller, Klaus{-}Robert M{\"{u}}ller, and Marius Kloft.
\newblock Deep semi-supervised anomaly detection.
\newblock In {\em Proceedings of the 8th International Conference on Learning Representations, {ICLR}}, 2020.

\bibitem{russakovsky2015imagenet}
Olga Russakovsky, Jia Deng, Hao Su, Jonathan Krause, Sanjeev Satheesh, Sean Ma, Zhiheng Huang, Andrej Karpathy, Aditya Khosla, Michael~S. Bernstein, Alexander~C. Berg, and Li~Fei{-}Fei.
\newblock Imagenet large scale visual recognition challenge.
\newblock {\em Int. J. Comput. Vis.}, 115(3):211--252, 2015.

\bibitem{saito2015illustration}
Masaki Saito and Yusuke Matsui.
\newblock Illustration2vec: a semantic vector representation of illustrations.
\newblock In {\em {SIGGRAPH} Asia Technical Briefs}, pages 5:1--5:4, 2015.

\bibitem{sandler2018mobilenetv2}
Mark Sandler, Andrew~G. Howard, Menglong Zhu, Andrey Zhmoginov, and Liang{-}Chieh Chen.
\newblock Mobilenetv2: Inverted residuals and linear bottlenecks.
\newblock In {\em Proceedings of the 2018 {IEEE} Conference on Computer Vision and Pattern Recognition, {CVPR}}, pages 4510--4520, 2018.

\bibitem{sato2022clear}
Ryoma Sato.
\newblock {CLEAR:} {A} fully user-side image search system.
\newblock In {\em Proceedings of the 31st {ACM} International Conference on Information {\&} Knowledge Management, {CIKM}}, pages 4970--4974, 2022.

\bibitem{sato2022enumerating}
Ryoma Sato.
\newblock Enumerating fair packages for group recommendations.
\newblock In {\em Proceedings of the 15th {ACM} International Conference on Web Search and Data Mining, {WSDM}}, pages 870--878, 2022.

\bibitem{sato2022private}
Ryoma Sato.
\newblock Private recommender systems: How can users build their own fair recommender systems without log data?
\newblock In {\em Proceedings of the 2022 {SIAM} International Conference on Data Mining, {SDM}}, pages 549--557, 2022.

\bibitem{sato2021retrieving}
Ryoma Sato.
\newblock Retrieving black-box optimal images from external databases.
\newblock In {\em Proceedings of the 15th {ACM} International Conference on Web Search and Data Mining, {WSDM}}, pages 879--887, 2022.

\bibitem{sato2022retrieving}
Ryoma Sato.
\newblock Retrieving black-box optimal images from external databases.
\newblock In {\em Proceedings of the 15th {ACM} International Conference on Web Search and Data Mining, {WSDM}}, 2022.

\bibitem{sato2022principled}
Ryoma Sato.
\newblock Towards principled user-side recommender systems.
\newblock In {\em Proceedings of the 31st {ACM} International Conference on Information {\&} Knowledge Management, {CIKM}}, 2022.

\bibitem{sato2022towards}
Ryoma Sato.
\newblock Towards principled user-side recommender systems.
\newblock In {\em Proceedings of the 31st {ACM} International Conference on Information {\&} Knowledge Management, {CIKM}}, pages 1757--1766, 2022.

\bibitem{sato2023embarassingly}
Ryoma Sato, Yuki Takezawa, Han Bao, Kenta Niwa, and Makoto Yamada.
\newblock Embarrassingly simple text watermarks.
\newblock {\em arXiv}, abs/2310.08920, 2023.

\bibitem{sener2018active}
Ozan Sener and Silvio Savarese.
\newblock Active learning for convolutional neural networks: {A} core-set approach.
\newblock In {\em Proceedings of the 6th International Conference on Learning Representations, {ICLR}}, 2018.

\bibitem{serbos2017fairness}
Dimitris Serbos, Shuyao Qi, Nikos Mamoulis, Evaggelia Pitoura, and Panayiotis Tsaparas.
\newblock Fairness in package-to-group recommendations.
\newblock In {\em Proceedings of the Web Conference 2017, {WWW}}, pages 371--379, 2017.

\bibitem{settles2009active}
Burr Settles.
\newblock Active learning literature survey.
\newblock 2009.

\bibitem{settles2007multuple}
Burr Settles, Mark Craven, and Soumya Ray.
\newblock Multiple-instance active learning.
\newblock In {\em Advances in Neural Information Processing Systems 20: Annual Conference on Neural Information Processing Systems 2007, {NeurIPS}}, pages 1289--1296, 2007.

\bibitem{seyerlehner2009limitation}
Klaus Seyerlehner, Arthur Flexer, and Gerhard Widmer.
\newblock On the limitations of browsing top-n recommender systems.
\newblock In {\em Proceedings of the 2009 {ACM} Conference on Recommender Systems, {RecSys}}, pages 321--324, 2009.

\bibitem{seyerlehner2009browsing}
Klaus Seyerlehner, Peter Knees, Dominik Schnitzer, and Gerhard Widmer.
\newblock Browsing music recommendation networks.
\newblock In {\em Proceedings of the 10th International Society for Music Information Retrieval Conference, {ISMIR}}, pages 129--134, 2009.

\bibitem{shen2018baseline}
Dinghan Shen, Guoyin Wang, Wenlin Wang, Martin~Renqiang Min, Qinliang Su, Yizhe Zhang, Chunyuan Li, Ricardo Henao, and Lawrence Carin.
\newblock Baseline needs more love: On simple word-embedding-based models and associated pooling mechanisms.
\newblock In {\em Proceedings of the 56th Annual Meeting of the Association for Computational Linguistics, {ACL}}, pages 440--450, 2018.

\bibitem{shetty2018a4nt}
Rakshith Shetty, Bernt Schiele, and Mario Fritz.
\newblock {A4NT:} author attribute anonymity by adversarial training of neural machine translation.
\newblock In {\em Proceedings of the 27th {USENIX} Security Symposium, {USENIX}}, pages 1633--1650, 2018.

\bibitem{shinden2023device}
Kohei Shinden and Makoto~P. Kato.
\newblock On-device privacy protection recommendation system by representative account.
\newblock {\em Forum on Data Engineering and Information Management (DEIM)}, 2023.

\bibitem{shui2020deep}
Changjian Shui, Fan Zhou, Christian Gagn{\'{e}}, and Boyu Wang.
\newblock Deep active learning: Unified and principled method for query and training.
\newblock In {\em Proceedings of the 23rd International Conference on Artificial Intelligence and Statistics, {AISTATS}}, volume 108, pages 1308--1318, 2020.

\bibitem{sibson1979studies}
Robin Sibson.
\newblock Studies in the robustness of multidimensional scaling: Perturbational analysis of classical scaling.
\newblock {\em J. R. Stat. Soc. Series B}, 41(2):217--229, 1979.

\bibitem{simonyan2014deep}
Karen Simonyan, Andrea Vedaldi, and Andrew Zisserman.
\newblock Deep inside convolutional networks: Visualising image classification models and saliency maps.
\newblock In {\em Workshop Track Proceedings of the International Conference on Learning Representations, {ICLR} workshop}, 2014.

\bibitem{singh2018fairness}
Ashudeep Singh and Thorsten Joachims.
\newblock Fairness of exposure in rankings.
\newblock In {\em Proceedings of the 24th {ACM} {SIGKDD} International Conference on Knowledge Discovery and Data Mining, {KDD}}, pages 2219--2228, 2018.

\bibitem{sinha2002role}
Rashmi~R. Sinha and Kirsten Swearingen.
\newblock The role of transparency in recommender systems.
\newblock In {\em Extended abstracts of the 2002 Conference on Human Factors in Computing Systems, {CHI}}, pages 830--831, 2002.

\bibitem{slivkins2019introduction}
Aleksandrs Slivkins.
\newblock Introduction to multi-armed bandits.
\newblock {\em Found. Trends Mach. Learn.}, 12(1-2):1--286, 2019.

\bibitem{smilkov2019tensorflow}
Daniel Smilkov, Nikhil Thorat, Yannick Assogba, Ann Yuan, Nick Kreeger, Ping Yu, Kangyi Zhang, Shanqing Cai, Eric Nielsen, David Soergel, Stan Bileschi, Michael Terry, Charles Nicholson, Sandeep~N. Gupta, Sarah Sirajuddin, D.~Sculley, Rajat Monga, Greg Corrado, Fernanda~B. Vi{\'{e}}gas, and Martin Wattenberg.
\newblock Tensorflow.js: Machine learning for the web and beyond.
\newblock In {\em Proceedings of Machine Learning and Systems 2019, {MLSys}}, 2019.

\bibitem{song2016deep}
Hyun~Oh Song, Yu~Xiang, Stefanie Jegelka, and Silvio Savarese.
\newblock Deep metric learning via lifted structured feature embedding.
\newblock In {\em Proceedings of the 2016 {IEEE} Conference on Computer Vision and Pattern Recognition, {CVPR}}, pages 4004--4012, 2016.

\bibitem{sun2018conversational}
Yueming Sun and Yi~Zhang.
\newblock Conversational recommender system.
\newblock In {\em Proceedings of the 41st Annual International {ACM-SIGIR} Conference on Research and Development in Information Retrieval, {SIGIR}}, pages 235--244, 2018.

\bibitem{sundararajan2017axiomatic}
Mukund Sundararajan, Ankur Taly, and Qiqi Yan.
\newblock Axiomatic attribution for deep networks.
\newblock In {\em Proceedings of the 34th International Conference on Machine Learning, {ICML}}, pages 3319--3328, 2017.

\bibitem{tanaka1996intelligent}
Yuzuru Tanaka.
\newblock Intelligentpad as meme media and its application to multimedia databases.
\newblock {\em Inf. Softw. Technol.}, 38(3):201--211, 1996.

\bibitem{terada2014local}
Yoshikazu Terada and Ulrike von Luxburg.
\newblock Local ordinal embedding.
\newblock In {\em Proceedings of the 31st International Conference on Machine Learning, {ICML}}, pages 847--855, 2014.

\bibitem{thompson1933likelihood}
William~R Thompson.
\newblock On the likelihood that one unknown probability exceeds another in view of the evidence of two samples.
\newblock {\em Biometrika}, 25(3/4):285--294, 1933.

\bibitem{tintarev2007survey}
Nava Tintarev and Judith Masthoff.
\newblock A survey of explanations in recommender systems.
\newblock In {\em Proceedings of the 23rd International Conference on Data Engineering Workshops, {ICDE} Workshops}, pages 801--810, 2007.

\bibitem{tong2001support}
Simon Tong and Daphne Koller.
\newblock Support vector machine active learning with applications to text classification.
\newblock {\em J. Mach. Learn. Res.}, 2:45--66, 2001.

\bibitem{vaswani2017attention}
Ashish Vaswani, Noam Shazeer, Niki Parmar, Jakob Uszkoreit, Llion Jones, Aidan~N. Gomez, Lukasz Kaiser, and Illia Polosukhin.
\newblock Attention is all you need.
\newblock In {\em Advances in Neural Information Processing Systems 30: Annual Conference on Neural Information Processing Systems 2017, {NeurIPS}}, pages 5998--6008, 2017.

\bibitem{vieira2023machine}
Lucas~Nunes Vieira, Carol O’Sullivan, Xiaochun Zhang, and Minako O’Hagan.
\newblock Machine translation in society: insights from uk users.
\newblock {\em Language Resources and Evaluation}, 57(2):893--914, 2023.

\bibitem{luxburg2013density}
Ulrike von Luxburg and Morteza Alamgir.
\newblock Density estimation from unweighted k-nearest neighbor graphs: a roadmap.
\newblock In {\em Advances in Neural Information Processing Systems 26: Annual Conference on Neural Information Processing Systems 2013, {NeurIPS}}, pages 225--233, 2013.

\bibitem{wang2017cost}
Keze Wang, Dongyu Zhang, Ya~Li, Ruimao Zhang, and Liang Lin.
\newblock Cost-effective active learning for deep image classification.
\newblock {\em {IEEE} Trans. Circuits Syst. Video Technol.}, 27(12):2591--2600, 2017.

\bibitem{way2013emerging}
Andy Way.
\newblock Emerging use-cases for machine translation.
\newblock In {\em Proceedings of Translating and the Computer 35}, 2013.

\bibitem{weggenmann2022dpvae}
Benjamin Weggenmann, Valentin Rublack, Michael Andrejczuk, Justus Mattern, and Florian Kerschbaum.
\newblock {DP-VAE:} human-readable text anonymization for online reviews with differentially private variational autoencoders.
\newblock In {\em Proceedings of the Web Conference 2022, {WWW}}, pages 721--731, 2022.

\bibitem{wilson2014share}
Duane~C. Wilson and Giuseppe Ateniese.
\newblock "to share or not to share" in client-side encrypted clouds.
\newblock In {\em Proceedings of the 17th International Conference on Information Security, {ISC}}, volume 8783, pages 401--412, 2014.

\bibitem{wu2019large}
Yue Wu, Yinpeng Chen, Lijuan Wang, Yuancheng Ye, Zicheng Liu, Yandong Guo, and Yun Fu.
\newblock Large scale incremental learning.
\newblock In {\em Proceedings of the 2019 {IEEE} Conference on Computer Vision and Pattern Recognition, {CVPR}}, pages 374--382, 2019.

\bibitem{xiao2019beyond}
Wenyi Xiao, Huan Zhao, Haojie Pan, Yangqiu Song, Vincent~W. Zheng, and Qiang Yang.
\newblock Beyond personalization: Social content recommendation for creator equality and consumer satisfaction.
\newblock In {\em Proceedings of the 25th {ACM} {SIGKDD} International Conference on Knowledge Discovery and Data Mining, {KDD}}, pages 235--245, 2019.

\bibitem{xie2019improving}
Xiaohui Xie, Jiaxin Mao, Yiqun Liu, Maarten de~Rijke, Qingyao Ai, Yufei Huang, Min Zhang, and Shaoping Ma.
\newblock Improving web image search with contextual information.
\newblock In {\em Proceedings of the 28th {ACM} International Conference on Information {\&} Knowledge Management, {CIKM}}, pages 1683--1692, 2019.

\bibitem{xu2020algorithmic}
Renzhe Xu, Peng Cui, Kun Kuang, Bo~Li, Linjun Zhou, Zheyan Shen, and Wei Cui.
\newblock Algorithmic decision making with conditional fairness.
\newblock In {\em Proceedings of the 26th {ACM} {SIGKDD} International Conference on Knowledge Discovery and Data Mining, {KDD}}, pages 2125--2135, 2020.

\bibitem{yao2022webshop}
Shunyu Yao, Howard Chen, John Yang, and Karthik Narasimhan.
\newblock {WebShop}: Towards scalable real-world web interaction with grounded language agents.
\newblock In {\em Advances in Neural Information Processing Systems 35: Annual Conference on Neural Information Processing Systems 2022, {NeurIPS}}, 2022.

\bibitem{yao2017beyond}
Sirui Yao and Bert Huang.
\newblock Beyond parity: Fairness objectives for collaborative filtering.
\newblock In {\em Advances in Neural Information Processing Systems 30: Annual Conference on Neural Information Processing Systems 2017, {NeurIPS}}, pages 2921--2930, 2017.

\bibitem{yao2018judging}
Yuan Yao and F.~Maxwell Harper.
\newblock Judging similarity: a user-centric study of related item recommendations.
\newblock In {\em Proceedings of the 2018 {ACM} Conference on Recommender Systems, {RecSys}}, pages 288--296, 2018.

\bibitem{yoon2018tpa}
Minji Yoon, Jinhong Jung, and U~Kang.
\newblock {TPA:} fast, scalable, and accurate method for approximate random walk with restart on billion scale graphs.
\newblock In {\em Proceedings of the 34th {IEEE} International Conference on Data Engineering, {ICDE}}, pages 1132--1143, 2018.

\bibitem{yu2018recurrent}
Qihang Yu, Lingxi Xie, Yan Wang, Yuyin Zhou, Elliot~K. Fishman, and Alan~L. Yuille.
\newblock Recurrent saliency transformation network: Incorporating multi-stage visual cues for small organ segmentation.
\newblock In {\em Proceedings of the 2018 {IEEE} Conference on Computer Vision and Pattern Recognition, {CVPR}}, pages 8280--8289, 2018.

\bibitem{yuan2020xgnn}
Hao Yuan, Jiliang Tang, Xia Hu, and Shuiwang Ji.
\newblock {XGNN:} towards model-level explanations of graph neural networks.
\newblock In {\em Proceedings of the 26th {ACM} {SIGKDD} International Conference on Knowledge Discovery and Data Mining, {KDD}}, pages 430--438, 2020.

\bibitem{zafar2017fairness}
Muhammad~Bilal Zafar, Isabel Valera, Manuel Gomez{-}Rodriguez, and Krishna~P. Gummadi.
\newblock Fairness beyond disparate treatment {\&} disparate impact: Learning classification without disparate mistreatment.
\newblock In {\em Proceedings of the Web Conference 2017, {WWW}}, pages 1171--1180, 2017.

\bibitem{zehlike2017fair}
Meike Zehlike, Francesco Bonchi, Carlos Castillo, Sara Hajian, Mohamed Megahed, and Ricardo Baeza{-}Yates.
\newblock {FA*IR:} {A} fair top-k ranking algorithm.
\newblock In {\em Proceedings of the 36th {ACM} International Conference on Information {\&} Knowledge Management, {CIKM}}, pages 1569--1578, 2017.

\bibitem{zenke2017continual}
Friedemann Zenke, Ben Poole, and Surya Ganguli.
\newblock Continual learning through synaptic intelligence.
\newblock In {\em Proceedings of the 34th International Conference on Machine Learning, {ICML}}, pages 3987--3995, 2017.

\bibitem{zha2009visual}
Zheng{-}Jun Zha, Linjun Yang, Tao Mei, Meng Wang, and Zengfu Wang.
\newblock Visual query suggestion.
\newblock In {\em Proceedings of the 17th International Conference on Multimedia, {MM}}, pages 15--24, 2009.

\bibitem{zhai2022scaling}
Xiaohua Zhai, Alexander Kolesnikov, Neil Houlsby, and Lucas Beyer.
\newblock Scaling vision transformers.
\newblock In {\em Proceedings of the 2022 {IEEE} Conference on Computer Vision and Pattern Recognition, {CVPR}}, pages 1204--1213, 2022.

\bibitem{zhang2021dsdd}
Haoxiang Zhang, A{\'{e}}cio S.~R. Santos, and Juliana Freire.
\newblock {DSDD:} domain-specific dataset discovery on the web.
\newblock In {\em Proceedings of the 30th {ACM} International Conference on Information {\&} Knowledge Management, {CIKM}}, pages 2527--2536, 2021.

\bibitem{zhou2023webarena}
Shuyan Zhou, Frank~F. Xu, Hao Zhu, Xuhui Zhou, Robert Lo, Abishek Sridhar, Xianyi Cheng, Yonatan Bisk, Daniel Fried, Uri Alon, and Graham Neubig.
\newblock {WebArena}: {A} realistic web environment for building autonomous agents.
\newblock {\em arxiv}, abs/2307.13854, 2023.

\bibitem{zhu2017generative}
Jia{-}Jie Zhu and Jos{\'{e}} Bento.
\newblock Generative adversarial active learning.
\newblock {\em arXiv}, abs/1702.07956, 2017.

\bibitem{zhu2018fairness}
Ziwei Zhu, Xia Hu, and James Caverlee.
\newblock Fairness-aware tensor-based recommendation.
\newblock In {\em Proceedings of the 37th {ACM} International Conference on Information {\&} Knowledge Management, {CIKM}}, pages 1153--1162, 2018.

\bibitem{ziegler2005improving}
Cai{-}Nicolas Ziegler, Sean~M. McNee, Joseph~A. Konstan, and Georg Lausen.
\newblock Improving recommendation lists through topic diversification.
\newblock In {\em Proceedings of the Web Conference 2005, {WWW}}, pages 22--32, 2005.

\bibitem{zliobaite2014active}
Indre Zliobaite, Albert Bifet, Bernhard Pfahringer, and Geoffrey Holmes.
\newblock Active learning with drifting streaming data.
\newblock {\em {IEEE} Trans. Neural Networks Learn. Syst.}, 25(1):27--39, 2014.

\end{thebibliography}

\end{document}